\documentclass[11pt,reqno]{article}
\usepackage{geometry}
\usepackage{amsmath}
\usepackage{amssymb}
\usepackage{amsfonts}
\usepackage{amsthm}
\usepackage{graphicx}
\usepackage{subcaption}
\usepackage{tabularx}
\usepackage{booktabs}
\usepackage{enumitem}
\usepackage{wasysym}
\usepackage{hyperref}
\usepackage{algorithm}
\usepackage{tikz}
\usetikzlibrary{arrows.meta}
\usepackage{pgfplots}
\pgfplotsset{compat=1.18}

\numberwithin{equation}{section}

\theoremstyle{plain}
\newtheorem{theorem}{Theorem}[section]
\theoremstyle{definition}
\newtheorem{definition}[theorem]{Definition}

\newtheorem{remark}[theorem]{Remark}
\newtheorem{example}[theorem]{Example}
\newtheorem{proposition}[theorem]{Proposition}

\newtheorem{lemma}[theorem]{Lemma}
\newtheorem{corollary}[theorem]{Corollary}

\newcommand{\N}{\mathbb{N}}
\newcommand{\Z}{\mathbb{Z}}
\newcommand{\Q}{\mathbb{Q}}
\newcommand{\R}{\mathbb{R}}
\newcommand{\JD}{J\!D}
\newcommand{\JDN}{J\!D\!N}
\newcommand{\JDE}{J\!D\!E}
\newcommand{\TT}{\mathrm{TT}}
\newcommand{\TAI}{\mathrm{TAI}}
\newcommand{\UTone}{\mathrm{UT1}}
\newcommand{\UTC}{\mathrm{UTC}}
\newcommand{\DeltaT}{\Delta\mathrm{T}}
\newcommand{\ru}{P}
\newcommand{\rv}{Q}
\newcommand{\ix}{I}
\newcommand{\moon}{\mathrm{moon}}
\renewcommand{\sun}{\mathrm{sun}}
\newcommand{\sgang}{d}
\newcommand{\md}{t_0}
\newcommand{\ms}{\mu}

\newcommand{\eqmoon}{\texttt{moon\_equ}}
\newcommand{\eqsun}{\texttt{sun\_equ}}
\newcommand{\moonTab}{\texttt{moon\_tab}}
\newcommand{\sunTab}{\texttt{sun\_tab}}
\newcommand{\trueDate}{\texttt{true\_date}}
\newcommand{\meanDate}{\texttt{mean\_date}}

\newcommand{\Tab}{\texttt{Tab}}
\newcommand{\san}{r} 
\newcommand{\den}{\operatorname{den}}

\newcommand{\lcm}{\operatorname{lcm}}

\newcommand{\reformspec}[8]{%
\begin{center} \fbox{%
\begin{minipage}{0.95\textwidth}\small 
\begin{tabular}{@{}p{0.22\textwidth}p{0.73\textwidth}@{}} 
\emph{Goal} & #1\\[4pt] 
\emph{Precision target} & #2\\[4pt] 
\emph{Month module} & #3\\[4pt] 
\emph{Day module} & #4\\[4pt] 
\emph{Civil-day trigger} & #5\\[4pt] 
\emph{Time package} & #6\\[4pt] 
\emph{Numerical contract} & #7\\[4pt] 
\emph{Intended use} & #8 
\end{tabular} 
\end{minipage}} \end{center} 
}

\title{Possible Reforms of the Tibetan Lunisolar Calendar}
\author{Tsogtgerel Gantumur\\[0.5ex]
\small McGill University, Montr\'{e}al, QC, Canada\\
\small National University of Mongolia, Ulaanbaatar, Mongolia\\
\small Mongolian Academy of Sciences, Institute of Mathematics and Digital Technology\\[0.5ex]
\small \texttt{gantumur.tsogtgerel@mcgill.ca}}

\date{Vernal Equinox, 2026}

\begin{document}

\maketitle

\begin{abstract}
The family of Tibetan lunisolar calendars, which formalized a principle found in the
Kālacakra Tantra, operates on a common arithmetic axiom (\(67\) lunar months \(=65\)
solar months) that gives the tradition its rigid and predictable structure but also produces
an observable seasonal drift. The present study deconstructs the Tibetan calendar through a 
progressive analytical sequence: it first presents the calendar as an explicit computational 
procedure for leap months and lunar-day numbering, then isolates its structural core of 
incidence rules and mean-motion models. This separation clarifies which features are 
structurally forced and which are tradition-dependent, allowing the calendar's true 
inaccuracies to be rigorously decomposed into distinct sources: internal arithmetic drift, 
long-term seasonal misalignment of the sidereal framework, and anomaly-phase defects. 
Crucially, alongside these inaccuracies, an exhaustive computational analysis of the system also reveals a remarkable historical robustness: the specific discrete arithmetic of the traditional day rules renders boundary tie-cases operationally absent on historical timescales, while a combination of large internal temporal buffers and the inherent multi-hour inaccuracy of the classical lunar model historically buffered the calendar against moderate geographic variation.

On this basis, the paper develops a stratified reform space rather than a single replacement
proposal.  The resulting standards range from conservative rational repairs that preserve a
strongly traditional arithmetic character to increasingly explicit astronomical and numerical
reconstructions, culminating in fully dynamical calendar models based on true solar and
lunar motion.  Throughout, the guiding question is how far astronomical correction can be
carried without discarding the specifically Tibetan calendrical identity embodied in the
structural rules for month and day labeling.

A further theme of the paper is that calendric reform is not only a question of formulas and
constants, but also of numerical semantics and reproducibility.  The proposed standards are
therefore formulated not merely as abstract models but as executable and comparable
specifications, suitable for implementation, validation, and long-term transmission across
different computational environments.
\end{abstract}

\tableofcontents

\section{Introduction}
\label{s:intro}

The Tibetan calendar, originating from the Indian Kālacakra Tantra (translated into Tibetan c.\ 11th century), represents not a single entity but a family of closely related lunisolar systems integral to Tibetan culture and its sphere of influence \cite{schuh,henning,janson}.
Standardized in Tibet by the 13th century, its various traditions—most notably the Phugpa and Tsurphu schools—share core principles but differ in computational details.
A central calendrical anchor across Tibetan and Himalayan Buddhist communities is the New Year, Losar, often described as the most important celebration in the Tibetan calendar and observed, with regional variations, in Tibet as well as in Bhutan, Nepal, and India.
Variants of this calendrical tradition remain significant today in both religious and civil life.
For example, the Tögsbuyant (New Genden) version is used in Mongolia, where the New Year festival Tsagaan Sar is a major national public holiday;
Mongolia also pegs certain state observances to the traditional lunisolar calendar, notably National Pride Day, placed on the first day of the first winter month by the lunar calendar.
Similarly, related traditions are used by Tibetan Buddhist communities in Russia—for instance among Buryats and Tuvinians (Tögsbuyant) and Kalmyks (Phugpa)—and in Bhutan, where an official calendar tradition is used for dating major religious observances and public holidays.
This breadth of religious practice, cultural tradition, and (in some settings) official civil use, together with regional variation in the underlying rules, makes it essential to understand both the shared structure and the points where traditions genuinely diverge, and to ask in a disciplined way which aspects of that structure are candidates for preservation under reform.

\begin{figure}[ht]
\centering
\includegraphics[width=.7\textwidth]{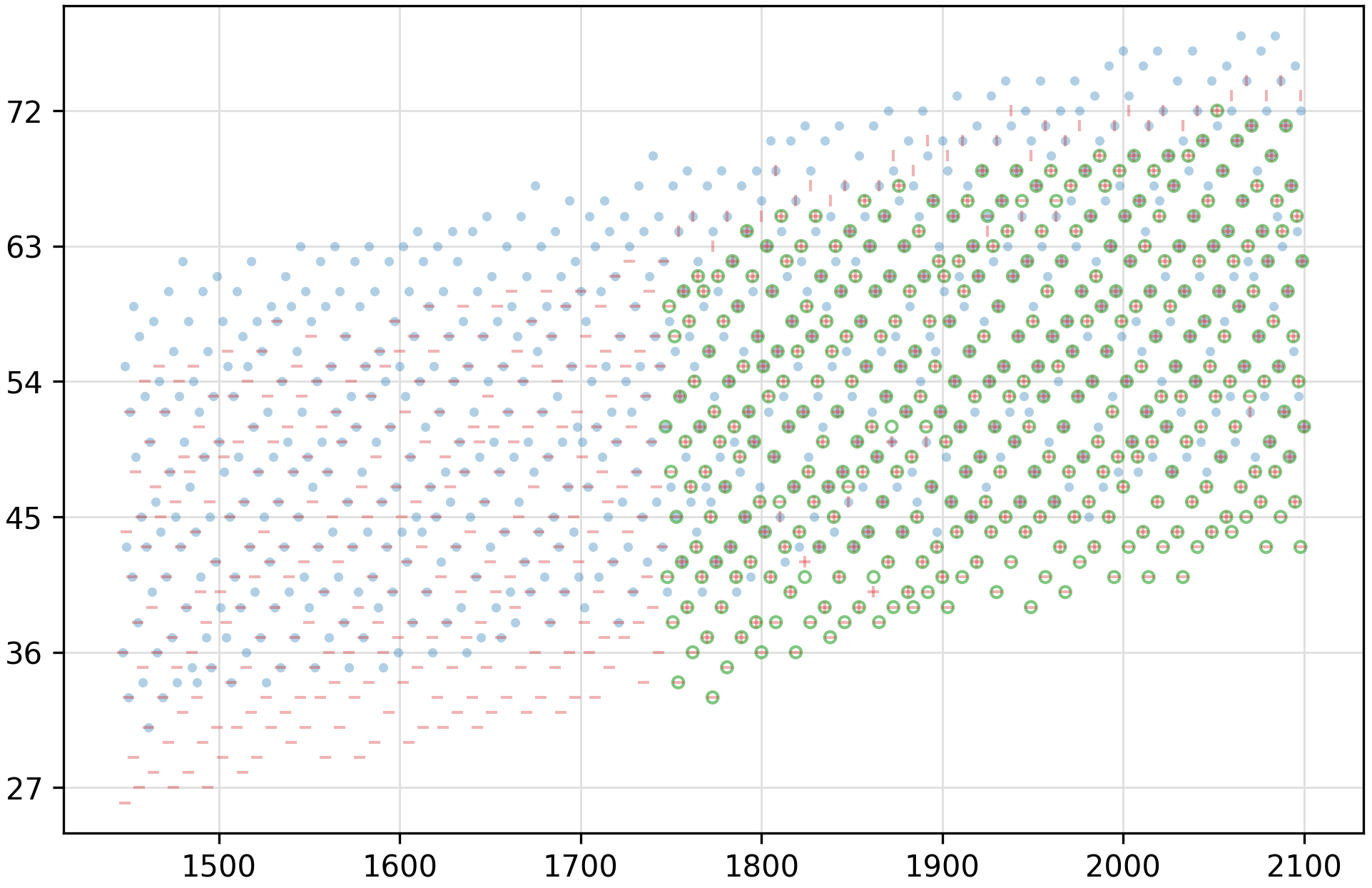}
\caption{Losar / Tsagaan Sar date (days since winter solstice) versus Gregorian year.
Markers: $\bullet$ Phugpa; $-$ Tsurphu; $|$ Bhutan; $\circ$ Mongol.}
\label{fig:losar}
\end{figure}

The central issue motivating this paper is the inherent seasonal drift shared by this family of calendars.
As lunisolar systems, they reconcile the incommensurable cycles of the synodic month and the tropical year.
Across the major variants (Phugpa, Tsurphu, Bhutanese, and Mongolian), this reconciliation is enforced by
a fixed arithmetic axiom \cite{schuh,henning,janson}:
\emph{67 mean lunar months are taken to equal 65 mean solar months.}
The axiom is elegant and computationally rigid, but it is not astronomically exact; the resulting small mismatch
accumulates over centuries and shifts calendrical dates progressively later with respect to the seasons.
This drift is directly visible in the gradual postponement of New Year, as seen in the shifting dates of Losar and Tsagaan~Sar.
The first recorded occurrence of Losar in March within the computed range appears in 1843 for Phugpa, in 1911 for Bhutan, and in 2025 for the Tsurphu and Mongol traditions.
Figure~\ref{fig:losar} illustrates both the shared long-term trend and the small, systematic offsets between traditions that arise from differences in intercalation, epoch choice, and day-counting conventions. 
These mechanisms are unpacked in \S\ref{s:analysis}-\S\ref{s:inaccuracies}.
The computations generating Figure~\ref{fig:losar} are fully reproducible; indeed, tracking these long-term divergences serves as a prime example of how implementation-level diagnostics can illuminate the structural differences between traditions.

Our strategy is to separate \emph{what is structurally forced} from \emph{what is tradition-dependent}.
We first present each calendar as an explicit arithmetic procedure, suitable for implementation and cross-checking.
We then reinterpret these procedures as incidence relations and inverse mean-motion models, isolating a small set of discrete
and continuous parameters that control the leap-month pattern, day numbering, and long-term drift.
This two-level description makes comparisons transparent, provides a common language for assessing where (and how)
changes to the arithmetic or the underlying astronomical model propagate through the system, and also makes it possible
to express both traditional calendars and reform proposals as executable, reproducible standards rather than as prose recipes alone.

The result is not a single reform proposal but a stratified reform space.
Some proposals preserve a strongly traditional arithmetic character while correcting only the most consequential defects;
others move toward increasingly explicit astronomical and numerical standardization.
A central theme of the paper is that these choices need not be framed as a binary opposition between
``traditional'' and ``astronomical'' calendars: the Tibetan calendar admits a layered decomposition, and reform can act on different layers with different degrees of commitment.

Given the interdisciplinary nature of this study, readers may wish to navigate directly to sections matching their primary objectives:
\begin{itemize}
    \item {Implementers and software developers} seeking the raw algorithms to compute traditional dates should focus on the operational recipes in Section~\ref{s:tib_principles} and the data tables in the Appendices.
    \item {Astronomers and historians of science} interested in the kinematic structure, implicit geography, and error budget of the historical calendar will find the core analytical deconstruction in Sections~\ref{s:analysis} and \ref{s:inaccuracies}.
    \item {Policymakers, clergy, and civil authorities} looking for the actual reform proposals can turn directly to Section~\ref{s:reforms}, which catalogs the specific modernization standards and their intended uses.
\end{itemize}

To keep the analysis modular, the remainder of the paper unfolds in a progressive sequence: from operational deconstruction and structural analysis to diagnostic repair and, ultimately, formal standard-setting.
Section~\ref{s:tib_principles} presents the Tibetan calendar strictly in its operational form: it establishes the foundational chronological elements, mean motions, and the leap-month rule (\S\ref{ss:leap_rule}), and details the evaluation of true celestial longitudes alongside the explicit day-counting algorithm (\S\ref{ss:day_calculation}).
Section~\ref{s:analysis} then opens the box and isolates the small number of structural inputs that make these discrete recipes
so rigid and computable: \S\ref{subsec:dual} formulates the abstract incidence mechanisms (containment versus inheritance)
and their duality; \S\ref{ss:mean_sun_models} derives the leap-month cycle from a linear mean-Sun model with \(12\)
definition points and explains the intercalation-index shortcut; and \S\ref{ss:day_models} reconstructs the day algorithm as
an exact inverse formulation of a first-anomaly kinematic model. 
This extensive subsection 
derives the continuous underlying constants, evaluates them against modern astronomical 
benchmarks, and culminates in an exhaustive tie-case search proving that boundary ambiguities 
are effectively eliminated by the rigid arithmetic structure of the traditional rules. 
Building on this, \S\ref{ss:traditions} analyzes the calendar's epoch constants, 
rigorously demonstrating how a combination of massive internal temporal buffers and the 
inherent multi-hour inaccuracy of the classical lunar model historically insulated the 
system from geographic variance, despite its implicit anchor near Lhasa.

Having isolated this mathematical skeleton, Section~\ref{s:inaccuracies} systematically deconstructs the calendar's
astronomical error budget and lays the technical groundwork for modernization. Rather than merely cataloging defects, 
this section formulates explicit solutions: it quantifies the internal arithmetic drift and precessional misalignment, 
derives highly accurate rational intercalation schemes (such as a proposed 334-year cycle), and specifies the precise 
physical corrections---ranging from secondary lunar inequalities and anomaly-phase alignment to spherical sunrise geometry 
and the equation of time---required to achieve multi-century precision.

Drawing on these modular components, Section~\ref{s:reforms} develops a ladder of reform standards ranging from
conservative rational repairs to fully astronomical realizations, together with low-commitment alternatives and questions of enactment.
The software and reproducibility dimension of this program is addressed explicitly in \S\ref{ss:reference_impl}, where
the proposed standards are related to a reference implementation and validation framework.
Finally, Section~\ref{s:conclusion} summarizes the structural findings, the resulting reform space, and the broader methodological lesson that a modern calendric standard must now be understood as a combination of formulas, constants, conventions, numerical semantics, and executable validation tools.

The appendices centralize the concrete infrastructure underlying this analysis, ensuring that extensive reference constants, parameters, algorithms, and theoretical derivations do not distract from the main narrative flow. While many of these values are introduced conceptually in the main body, collecting them here provides an easily searchable, unified reference. This material includes traditional tables, modern astronomical formulas, low-commitment arithmetic baselines, and strictly reproducible numerical functions. Notably, Appendix~\ref{app:reference_modules} acts as a comprehensive technical specification library, cleanly recording the exact rational constants and reusable computational modules that formally define the proposed reform tiers.

\section{Tibetan Calendrical Rules and Guiding Principles}
\label{s:tib_principles}

The Tibetan calendrical traditions implement a deterministic arithmetic pipeline: starting from a small set of mean-motion constants and an epoch, they decide which lunations are regular or intercalary and then assign month and day labels (with the familiar repeated/skipped days) to the resulting sequence of boundary times.  For the purposes of later “reform” questions, it is helpful to keep two layers conceptually separate: the underlying celestial model (how one produces mean Sun/Moon phases and anomaly corrections) versus the incidence conventions that turn those phases into month and day labels.  In this section we therefore present the month and day rules in a mostly operational form, while the accompanying “theoretical basis” discussions indicate which parts of the computation are forced by simple model structures and which are merely conventions—pointing toward what might plausibly be regarded as the calendar’s structural “essence” and what might be changed without altering that essence.

\subsection{The leap month rule: The common arithmetic foundation}
\label{ss:leap_rule}

The cornerstone of the structure shared by the Phugpa, Tsurphu, and other traditional calendars is the arithmetic rule governing the insertion of leap months (intercalation). This rule is a formalization of an approximation found in the foundational Kālacakra Tantra. While the original Tantra, as a practical \textit{karana} text, treated the relationship as a useful but inexact guide requiring observational correction, the later \textit{siddhānta} traditions elevated it to a foundational axiom \cite{schuh,henning,janson}:
\begin{quote}
\centering
\textbf{67 mean lunar months = 65 mean solar months.}
\end{quote}
This relation is treated as definitionally exact within the traditional logic and directly implies a celestial model based on {\em mean} solar and lunar motions. 
The primary calendrical consequence is the mandated frequency of leap months: exactly 2 must occur for every 65 regular months, resulting in a predictable 65-year cycle containing 24 leap months.

\subsubsection{Computational algorithm}
\label{sss:comp_algo}

In practice, the placement of leap months in all major Tibetan-derived
calendars is governed by a single, purely arithmetic procedure.
A \emph{number of solar months} $M^{*}$ is counted from a fixed epoch
$(Y_{0},M_{0})$.  For a given month $M$ in year $Y$, one defines
\begin{equation}
  M^{*} = 12\,(Y-Y_{0}) + (M-M_{0}).
  \label{eq:Mstar}
\end{equation}
Here we use the standard convention (see e.g. \cite{janson}): $Y$ is the \emph{lunar-year label} (the year that begins at Losar/Tsagaan~Sar), and $M\in\{1,\dots,12\}$ is the lunar month number within that lunar year. 
Thus a lunar year $Y$ may overlap two civil (solar/Gregorian) years; for example, its final months can fall in early $Y\!+\!1$ in the civil calendar.
Now from $M^*$, an \emph{intercalation index} $\ix$ is computed as\footnote{Janson \cite{janson} denotes the intercalation index by $ix$.}
\begin{equation}
  \ix = (67 M^{*} + \beta^{*}) \bmod 65
     = (2 M^{*} + \beta^{*}) \bmod 65,
  \label{eq:ix}
\end{equation}
where $\beta^{*}$ is an integer constant determined by the chosen epoch
and calendrical tradition \cite{janson}.
A leap month is then inserted whenever $\ix$ falls within a small tradition-specific set of ``trigger'' values,
meaning that the calendar contains two consecutive lunar months carrying the same month label
(either $M$ or $M-1$, depending on the tradition; see Remark~\ref{rem:leap-naming}).

\begin{table}[h]
\centering
\begin{tabular}{llll}
\toprule
Tradition & $(Y_0,M_0)$ & $\beta^\ast$ & Trigger set\\
\midrule
Phugpa (E1987) & $(1987,3)$ & $0$  & $\ix\in\{48,49\}$\\
Tsurphu (E1732) & $(1732,3)$ & $59$ & $\ix\in\{0,1\}$\\
Tsurphu (E1852) & $(1852,3)$ & $14$ & $\ix\in\{0,1\}$\\
Bhutan (E1754) & $(1754,3)$ & $2$ & $\ix\in\{59,60\}$\\
Mongol (E1747) & $(1747,3)$ & $10$ & $\ix\in\{46,47\}$\\
\bottomrule
\end{tabular}
\caption{Leap-month parameters for the intercalation-index rule.}
\label{tab:ix_rules}
\end{table}

Although the astronomical constants (such as the mean solar and lunar motions)
are shared across several traditions, different choices of epoch
$(Y_{0},M_{0})$, constants $\beta^{*}$, and threshold values for $\ix$
lead to distinct but perfectly predictable leap-month sequences.
Representative choices for the four principal traditions discussed
in this paper are summarized in Table~\ref{tab:ix_rules}.

\begin{example}
\label{eg:mongol-month-leap}
Using the Mongol parameters from Table~\ref{tab:ix_rules},
for $(Y,M)=(2026,1)$ we get
\[
M^*=12(2026-1747)+(1-3)=3346,
\]
and
\[
\ix=(2M^*+\beta^*)\bmod 65=(2\cdot 3346+10)\bmod 65=6702\bmod 65=7.
\]
Since $7\notin\{46,47\}$, this month is regular.
\end{example}

\begin{example}
Using the Phugpa parameters, for $(Y,M)=(2024,6)$ we compute
\[
M^*=12(2024-1987)+(6-3)=447,
\]
and
\[
\ix=(2M^*+\beta^*)\bmod 65=(2\cdot 447)\bmod 65=894\bmod 65=49.
\]
Since $49\in\{48,49\}$, the intercalation rule triggers: the month number $M=6$ occurs twice in succession (one regular and one leap, cf. Remark~\ref{rem:leap-naming}).
\end{example}

\begin{figure}[ht]
\centering
\includegraphics[width=\textwidth]{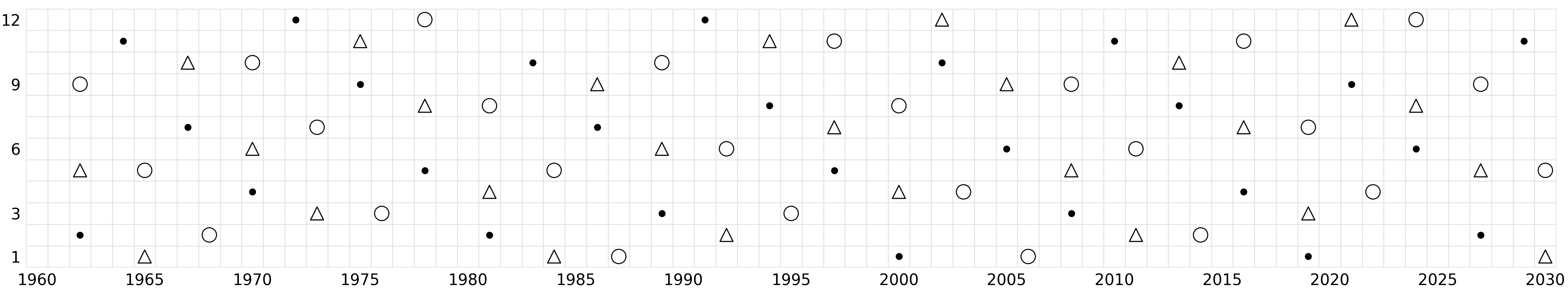}
\caption{Leap-month placements (1960–2030). Each square represents a month $(Y,M)$. 
Filled dots: Phugpa; open circles: Tsurphu/Mongol; triangles: Bhutan.}
\label{fig:leapmonth_grid}
\end{figure}

\begin{remark}
Fix the normalization $M_0=3$ as in Table~\ref{tab:ix_rules} and \cite{janson}, and write
\[
M^*_{Y_0}(Y,M)=12(Y-Y_0)+(M-3),
\qquad
\ix_{Y_0,\beta^*}(Y,M)\equiv 2M^*_{Y_0}(Y,M)+\beta^* \pmod{65}.
\]
A direct computation gives the \emph{epoch-translation formula}
\begin{equation}
\ix_{Y_1,\beta_1^*}(Y,M)-\ix_{Y_2,\beta_2^*}(Y,M)
\equiv 24\,(Y_2-Y_1)+(\beta_1^*-\beta_2^*) \pmod{65},
\label{eq:ix_translation}
\end{equation}
which is independent of $(Y,M)$.

\begin{enumerate}[label=(\alph*), leftmargin=2.2em, itemsep=4pt, topsep=2pt]
\item 
We show that the Mongol and Tsurphu calendars have identical leap months.
Applying \eqref{eq:ix_translation} with $(Y_1,\beta_1^*)=(1747,10)$ and
$(Y_2,\beta_2^*)=(1852,14)$ yields
\[
\ix_{1747,10}(Y,M)-\ix_{1852,14}(Y,M)\equiv 24\cdot(1852-1747)+(10-14)
\equiv 46 \pmod{65}.
\]
Hence $\ix_{1747,10}\equiv \ix_{1852,14}+46 \pmod{65}$, so
\[
\ix_{1852,14}(Y,M)\in\{0,1\}\quad\Longleftrightarrow\quad
\ix_{1747,10}(Y,M)\in\{46,47\}.
\]
Therefore the Tsurphu leap rule ($\ix\in\{0,1\}$) and the Mongol leap rule
($\ix\in\{46,47\}$) select the same leap months as pairs $(Y,M)$.
See Figure~\ref{fig:leapmonth_grid}.
\item Now let us look at the two Tsurphu epochs.
Applying \eqref{eq:ix_translation} with $(Y_1,\beta_1^*)=(1732,59)$ and
$(Y_2,\beta_2^*)=(1852,14)$ yields
\[
\ix_{1732,59}(Y,M)-\ix_{1852,14}(Y,M)\equiv 24\cdot(1852-1732)+(59-14)
\equiv 0 \pmod{65}.
\]
Thus $\ix_{1732,59}(Y,M)=\ix_{1852,14}(Y,M)$ for all $(Y,M)$, and the two published
Tsurphu epochs give identical leap-month placements.
\end{enumerate}
\end{remark}

\begin{remark}\label{rem:leap-naming}
Assume that the intercalation index triggers for the labeled month $(Y,M)$.
Then one extra lunation is inserted at this point, so two consecutive months appear where normally there would be one.
Traditions differ only in how the \emph{labels} of these two months are assigned.
In the standard Tibetan convention used here (Phugpa, Tsurphu, Mongol), it is the label $M$ that repeats:
\[
\cdots,\ (Y,M\!-\!1),\ (Y,M)_{\mathrm{leap}},\ (Y,M)_{\mathrm{reg}},\ (Y,M\!+\!1),\ \cdots .
\]
In the Bhutanese convention the repeated label is attached to the preceding month:
\[
\cdots,\ (Y,M\!-\!1)_{\mathrm{reg}},\ (Y,M\!-\!1)_{\mathrm{leap}},\ (Y,M),\ (Y,M\!+\!1),\ \cdots .
\]
Equivalently, one may keep the slogan ``repeat the current label'' by shifting the parameter $\beta^*$.
Since $ix\equiv 2M^*+\beta^*\pmod{65}$, replacing $\beta^*$ by $\beta^*+2$ shifts the trigger condition back by one
month. 
Alternatively, one may keep $\beta^*$ fixed and shift only the trigger residues.
For Bhutan, this means using $\beta^*=4$ (instead of $2$) and keeping the trigger set $T=\{59,60\}$, 
or keeping $\beta^*=2$ and replacing $T$ by $T-2=\{57,58\}$.
With this reparametrization, the leap month is taken to be the \emph{later} of the two consecutive months carrying the repeated label.
Schuh~\cite{schuh} notes that some published Tsurphu almanacs insert the leap month one month later than the convention described above (e.g.\ in 1964/65 and 1970/71).
\end{remark}

\begin{remark}
\label{rem:true-month-index}
It is often necessary to convert a labeled month $(Y,M)$ into the corresponding \emph{true-month index} $n$, the running count of lunations from a chosen epoch month.
This provides a single, unambiguous month counter across leap-month repetitions.

Since the intercalation rule is purely arithmetic and $65$-periodic, this conversion can be written in closed form.
Recall that a leap month occurs precisely when $\ix\equiv 2M^*+\beta^* \pmod{65}$ falls in the tradition-dependent trigger set
$T=\{\tau,\tau+1\}\subset\Z/65\Z$.
Let $\gamma\in\{0,1,\dots,64\}$ be the least nonnegative residue satisfying
\[
\gamma\equiv -\tau \pmod{65}.
\]
Thus $(Y,M)$ is a trigger label precisely when
\[
2M^*+\beta^*+\gamma\equiv 0 \text{ or } 1 \pmod{65}.
\]
Define
\[
n_+(Y,M)
= \Big\lfloor \frac{67M^*+\beta^*+\gamma}{65}\Big\rfloor
= M^* + \Big\lfloor \frac{2M^*+\beta^*+\gamma}{65}\Big\rfloor.
\]
If $(Y,M)$ is a \emph{non-trigger} label, then the month $(Y,M)$ occurs once and its true-month index is
$n(Y,M)=n_+(Y,M)$.
If $(Y,M)$ is a \emph{trigger} label, then two consecutive lunations carry the same label in the standard Tibetan
convention used here (Remark~\ref{rem:leap-naming}); the later copy has index $n_+(Y,M)$ and the earlier
(intercalary) copy has index $n_-(Y,M)=n_+(Y,M)-1$.

For Phugpa, $T=\{48,49\}$ so $\gamma=17$, and this agrees with Janson's formula \cite[(5.10)]{janson}
after translating normalizations.  For Tsurphu, $T=\{0,1\}$ and thus $\gamma=0$, so in particular
the epoch month $(Y_0,3)$ has $n=0$, in agreement with the traditional Tsurphu rounding rule.
In the Bhutanese convention the repeated label is the \emph{preceding} month.
If we adopt the reparametrization of Remark~\ref{rem:leap-naming} by keeping $\beta^*$ fixed
and replacing the trigger set $\{59,60\}$ by $\{57,58\}$,
then the \emph{repeated} label is again the trigger label and the same conversion applies verbatim:
a trigger label $(Y,M)$ corresponds to two consecutive lunations with indices
$n_-(Y,M)=n_+(Y,M)-1$ and $n_+(Y,M)$, but now the \emph{later} copy is designated as the leap month
(and the earlier as the regular month).
\end{remark}

\begin{example}
\label{eg:true-month-index}
Using the Mongol parameters from Table~\ref{tab:ix_rules}, for $(Y,M)=(2026,1)$ we have
\[
M^* \;=\; 12(2026-1747)+(1-3)\;=\;3346.
\]
For Mongol, $\beta^*=10$ and $T=\{46,47\}$, hence $\tau=46$ and $\gamma=65-\tau=19$.
First check that $(2026,1)$ is a non-trigger label:
\[
\ix \equiv 2M^*+\beta^* \equiv 2\cdot 3346 + 10 \equiv 6702 \equiv 7 \pmod{65}\notin\{46,47\}.
\]
Therefore $n(2026,1)=n_+(2026,1)$, and by Remark~\ref{rem:true-month-index},
\[
n(2026,1)
= 3346 + \Bigl\lfloor \frac{2\cdot 3346 + 10 + 19}{65}\Bigr\rfloor
= 3346 + \Bigl\lfloor \frac{6721}{65}\Bigr\rfloor
= 3346 + 103
= 3449.
\]
\end{example}

\subsubsection{Conceptual basis: \textit{Sgang} and leap months}
\label{ss:sgang}

While the intercalation index gives an efficient computational rule, its logic comes from a more basic
mean-Sun criterion.  In the traditional description, lunar months are regulated by the passage of the
\emph{mean} Sun through a fixed set of twelve reference longitudes on the ecliptic, called \emph{definition points}
(\emph{sgang}, short for \emph{dbugs sgang}) \cite{henning,schuh,janson}.
These markers are closely related\footnote{Schuh discusses the Chinese \emph{qi} terminology used in Tibetan sources: the ``centers'' (\emph{zhongqi})
correspond to Tibetan \emph{sgang}, while the related ``knots'' (\emph{jieqi}) correspond to Tibetan \emph{dbugs}
\cite{schuh}.  For a modern overview of the Chinese month rule in terms of principal terms, see \cite{aslaksen}.}
 (historically and logically) to the Chinese system of principal solar terms,
though Tibetan traditions typically use longitudes that are shifted relative to standard sign boundaries. 
For example,
in the Phugpa system the first month is associated to the mean Sun passing roughly $8^\circ$ Aquarius, then $8^\circ$ Pisces, and so on, cf. Remark~\ref{r:sgang-trad} and \cite{henning}.

A lunation is assigned the month number $M$ if the mean Sun crosses the $M$-th definition point during that lunation.
If no definition point is crossed, the lunation is intercalary.  
This is the same intercalary lunation in all traditions; they differ only in how it is \emph{named}:
in the Bhutanese convention it is labeled by the preceding month, whereas in the non-Bhutanese traditions it is labeled by the following month.
The arithmetic index $\ix$ is a computational shortcut:
it packages this crossing test into a purely periodic congruence, so one does not need to perform the geometric
comparison month by month \cite{janson}.  A more detailed derivation of $\ix$ from the mean-Sun model is given in
\S\ref{ss:mean_sun_models}.

As a concrete example, Figure~\ref{fig:month-diag} shows the Phugpa leap month of 2024, with the solar longitude understood throughout as the \emph{calendar's own computed mean solar longitude}. 
The lunation before the shaded interval contains the calendar's crossing of the fifth definition point, so it is labeled month~5. 
During the shaded lunation, the computed mean Sun moves from just after \(\mathrm{Sg}\,5\) to just before \(\mathrm{Sg}\,6\), without crossing any definition point; it is therefore intercalary. 
In the non-Bhutanese convention, such an intercalary lunation takes the label of the following month, so the shaded lunation is the leap month~6, and the next lunation, which contains the calendar's crossing of \(\mathrm{Sg}\,6\), is the regular month~6.

\begin{figure}[ht]
\centering
\includegraphics[width=.7\textwidth]{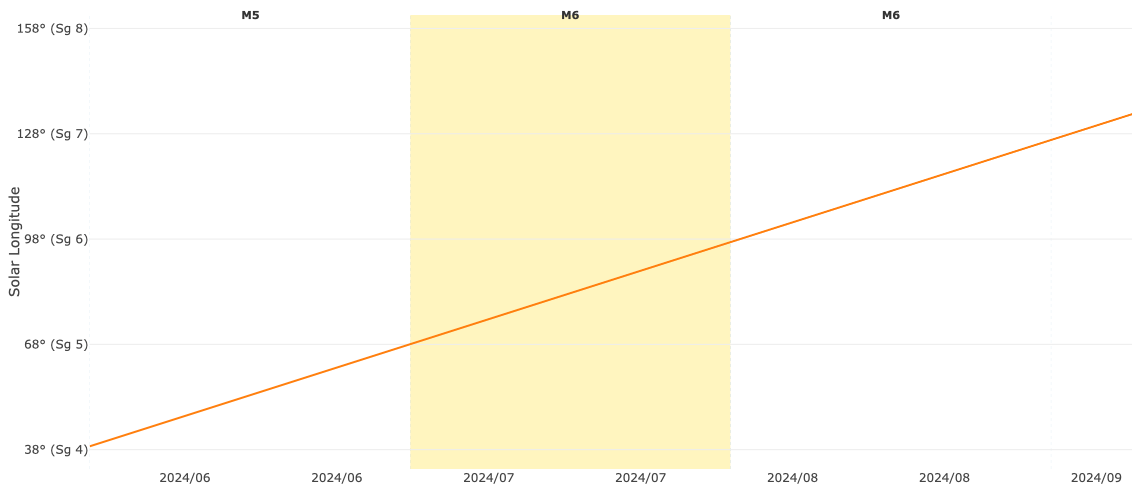}
\caption{Phugpa leap month of 2024.}
\label{fig:month-diag}
\end{figure}

This structure can be contrasted with the neighboring Indian and Chinese systems, which use different logical rules and celestial models to regulate their months:
\begin{itemize}
\item \emph{Indian system (inheritance):} Months are regulated by 12 solar divisions given by the zodiac
(r\=a\'si). A lunation is named by the r\=a\'si of the Sun at a distinguished lunar phase
(new moon in the am\=anta system, full moon in the p\=urnim\=anta system). Modern Indian
practice uses the true Sun and Moon. The scheme is typically sidereal (star-fixed), hence it
drifts against the seasons due to precession \cite{dershowitz,sewell}.

\item \emph{Chinese system (containment):} Months are regulated by 12 principal solar terms (zh\=ongq\`i),
anchored to the solstices and equinoxes (tropical, seasonally fixed). A lunation is named by
the principal term it contains; a lunation containing no principal term is intercalary \cite{aslaksen,doggett}.

\item \emph{Tibetan system (containment):} Months are regulated by the 12 definition points (\emph{sgang}),
and the month rule is expressed using the \emph{mean} Sun rather than the true Sun \cite{henning,schuh,janson}.
Using the mean Sun makes the leap-month pattern perfectly regular and therefore reducible to a purely arithmetic rule (the index $\ix$);
schemes based on the true Sun typically lead to an irregular intercalation pattern.
\end{itemize}

\begin{figure}[ht]
\centering
\begin{tikzpicture}[x=1cm,y=1cm, font=\small]

\tikzset{
  bgbox/.style={draw=black!55, fill=black!6, line width=0.45pt},
  baseline/.style={draw=black!35, line width=0.6pt},
  tick/.style={draw=black!75, line width=0.8pt},
  post/.style={draw=black!85, line width=1.2pt},
  lab/.style={text=black!85},
}

\def\W{6.6}
\def\Gap{1.2}

\begin{scope}[shift={(0,0)}]

  \draw[bgbox] (0.00,1.55) rectangle (1.80,1.95);
  \draw[bgbox] (1.80,1.55) rectangle (3.60,1.95);
  \draw[bgbox] (3.60,1.55) rectangle (5.00,1.95);
  \draw[bgbox] (5.00,1.55) rectangle (\W,1.95);

  \node[lab] at (0.90,1.75) {$A$};
  \node[lab] at (2.70,1.75) {$B$};
  \node[lab] at (4.30,1.75) {$C$};
  \node[lab] at (5.80,1.75) {$D$};

  \draw[baseline] (0.00,0.90) -- (\W,0.90);

  \foreach \t in {0.30,1.90,3.50,5.10}{
    \draw[post] (\t,0.8)--(\t,1.0);
  }

  \node[lab] at (0.30,0.6) {$A$};
  \node[lab] at (1.90,0.6) {$B$};
  \node[lab] at (3.50,0.6) {$B$};
  \node[lab] at (5.10,0.6) {$D$};

\end{scope}

\begin{scope}[shift={(\W+\Gap,0)}]

  \draw[baseline] (0.20,1.55) -- (\W,1.55);

  \foreach \p/\L in {1.5/A, 3.50/B, 4.90/C, 6.5/D}{
    \draw[post] (\p,1.45)--(\p,1.65);
    \node[lab] at (\p,1.82) {$\L$};
  }

  \draw[bgbox] (0.20,0.55) rectangle (1.80,0.95);
  \draw[bgbox] (1.80,0.55) rectangle (3.40,0.95);
  \draw[bgbox] (3.40,0.55) rectangle (5.00,0.95);
  \draw[bgbox] (5.00,0.55) rectangle (6.60,0.95);

  \node[lab] at ({0.5*(0.2+1.8)},0.75) {$A$};
  \node[lab, font=\itshape] at ({0.5*(1.8+3.4)},0.75) {};
  \node[lab] at ({0.5*(3.4+5.0)},0.75) {$B,C$};
  \node[lab] at ({0.5*(5.0+6.6)},0.75) {$D$};

\end{scope}

\end{tikzpicture}

\caption{Inheritance (left) and containment (right): posts inherit interval labels vs.\ intervals are labeled by the post(s) they contain (empty or multiple $\Rightarrow$ anomalous labeling).}
\label{f:cont-inh}
\end{figure}
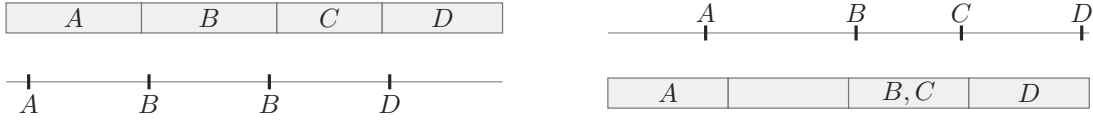

\noindent
The labels \emph{inheritance} and \emph{containment} in the bullets above refer to two complementary ways of attaching
names to lunations, illustrated in Figure~\ref{f:cont-inh}.  In an \emph{inheritance} rule, labels live on background
intervals (e.g.\ solar divisions) and are inherited by foreground posts (e.g.\ new/full moons) according to which
interval contains the post.  In a \emph{containment} rule, labels live on background posts (e.g.\ principal terms) and
are assigned to foreground intervals according to which labeled posts they contain.  
We formalize these notions in \S\ref{subsec:dual} using a simple incidence framework.

A useful difference between the two schemes is where the \emph{conventional choice} enters. In an \emph{inheritance} rule, skipped labels are unambiguous: if no foreground post falls in a given background interval, then that interval's label is simply absent from the month sequence. By contrast, when two consecutive foreground posts fall in the same background interval, they inherit the same label, so one must decide which of the two is to be designated as the ``extra'' or ``leap'' occurrence. In Indian lunisolar practice this convention is standard: the first of the two equal labels is the \emph{adhika} month and the second is the ordinary one. For a \emph{containment} rule, the situation is complementary. 
If a foreground interval contains no labeled background post, then the interval is unambiguously the ``extra'' one; but one must decide which label it receives, namely whether it is named from the preceding or the following post. 
If a foreground interval contains two labeled posts, then it is again clear that one label must be skipped, but a convention is needed to determine which of the two available labels is retained by that interval and which one becomes the skipped label. 
Thus an inheritance rule needs a convention only for repeated labels, whereas a containment rule may require conventions for both repeated and skipped labels. 
In the traditional Tibetan month rule only the first issue arises, because the use of the mean Sun makes skipped months impossible.
One therefore needs only the convention that distinguishes the Bhutanese and non-Bhutanese naming of the intercalary month.

\subsection{Day calculation and numbering}
\label{ss:day_calculation}

Day numbering exhibits the same kinds of irregularities as month naming---notably repeated and skipped
labels---but on a finer time scale.  We describe the Tibetan day rule in the same informal, operational
style as \S\ref{ss:leap_rule}, deferring the abstract incidence framework to \S\ref{subsec:dual} and the
celestial-model interpretation to \S\ref{ss:day_models}.

The daily structure is governed by computations that model the Moon's motion relative to the Sun.
The basic theoretical unit is the \emph{lunar day} (\emph{tshes-zhag}; also called \emph{tithi}),
defined in terms of elongation (the angular separation between the Sun and Moon): one lunar day is
the time needed for the elongation to increase by $12^\circ$ ($=360^\circ/30$). Because the Moon's
apparent speed varies, a lunar day has variable length (about $21.5$ to $25.7$ hours). The
calendrical problem is to translate this continuous sequence of lunar-day boundaries into the
discrete sequence of civil days (\emph{nyin-zhag}), counted from dawn to dawn.

In the Tibetan calendar, the naming rule is \cite{schuh,henning,janson}:
\begin{quote}\centering
\textbf{A civil day is assigned the number of the lunar day that is current at its beginning (dawn).}
\end{quote}
Figure~\ref{fig:tithi-to-civil} shows this inheritance rule schematically.  If a lunar day is long enough to cover two dawns,
then two consecutive civil days inherit the same label (a repeated day).  If a lunar day begins and ends between two dawns,
then no civil day begins during it and its label never appears (a skipped day).

\begin{figure}[ht]
\centering
\begin{tikzpicture}[x=1cm,y=1cm, font=\small]

\tikzset{
  bgbox/.style={draw=black!55, fill=black!6, line width=0.45pt},
  baseline/.style={draw=black!35, line width=0.6pt},
  tick/.style={draw=black!75, line width=0.8pt},
  post/.style={draw=black!85, line width=1.2pt},
  lab/.style={text=black!85},
}

\def\W{12}
\def\Gap{1.2}

\begin{scope}[shift={(0,0)}]

  \draw[bgbox] (0.00,1.55) rectangle (2.00,1.95);
  \draw[bgbox] (2.00,1.55) rectangle (4.05,1.95);
  \draw[bgbox] (4.05,1.55) rectangle (6.15,1.95);
  \draw[bgbox] (6.15,1.55) rectangle (8.15,1.95);
  \draw[bgbox] (8.15,1.55) rectangle (10.02,1.95);
  \draw[bgbox] (10.02,1.55) rectangle (\W,1.95);

  \node[lab] at (1.00,1.75) {$30$};
  \node[lab] at (3.025,1.75) {$1$};
  \node[lab] at (5.10,1.75) {$2$};
  \node[lab] at (7.15,1.75) {$3$};
  \node[lab] at (9.10,1.75) {$4$};
  \node[lab] at (11.06,1.75) {$5$};

  \draw[baseline] (0.00,0.90) -- (\W,0.90);

  \foreach \t in {0.10,2.10,4.10,6.10,8.10,10.10}{
    \draw[post] (\t,0.8)--(\t,1.0);
  }

  \node[lab] at (0.28,0.6) {$30$};
  \node[lab] at (2.20,0.6) {$1$};
  \node[lab] at (4.20,0.6) {$2$};
  \node[lab] at (6.20,0.6) {$2$};
  \node[lab] at (8.20,0.6) {$3$};
  \node[lab] at (10.20,0.6) {$5$};

\end{scope}

\end{tikzpicture}

\caption{Inheritance for days: the foreground civil days (lower row, dawn-to-dawn) inherit the label of the background lunar day
interval (upper row) that contains their start (dawn).}
\label{fig:tithi-to-civil}
\end{figure}
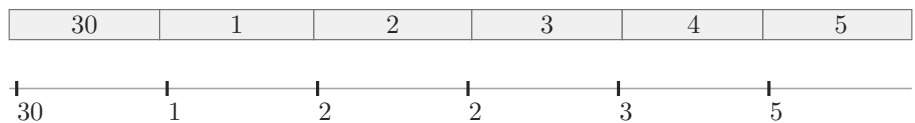

A concrete example is shown in Figure~\ref{fig:day-diag}.  The horizontal axis shows the sequence of Gregorian civil days, while the vertical axis shows the lunar phase, measured in units of lunar days).  The orange line is the continuously varying lunar phase \emph{as computed by the calendar itself}, not necessarily the true astronomical phase.
In the displayed Bhutanese example, the successive civil-day labels are \(12,14,15,15,16\).  Thus lunar day \(13\) is skipped: the orange phase line rises from level \(12\) to level \(14\) between two successive dawns, so no civil day begins while lunar day \(13\) is current.  By contrast, lunar day \(15\) is repeated: after the computed phase reaches level \(15\), it remains between \(15\) and \(16\) across two successive dawns, so two consecutive civil days inherit the same label \(15\).  The sequence \(12,14,15,15,16\) therefore illustrates directly how skipped and repeated dates arise from the interaction between the calendar's computed lunar phase and the dawn-based civil-day rule.

\begin{figure}[ht]
\centering
\includegraphics[width=.9\textwidth]{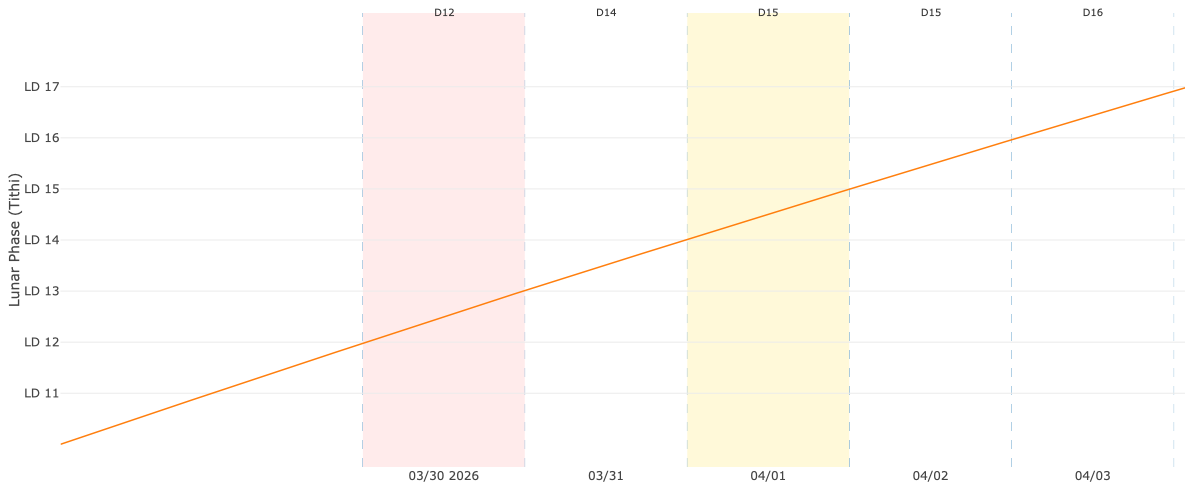}
\caption{Bhutanese repeated and skipped days around 1~April~2026.}
\label{fig:day-diag}
\end{figure}

\subsubsection{Computational algorithm}
\label{sss:day_algo}

To make the dawn-based naming rule operational, one computes the boundary times of the successive
lunar days within a given lunar month and then compares those times with the sequence of dawn instants.
We denote by
\[
t = \trueDate(d,n)
\]
the value of a \emph{local Julian day count} for the instant at which lunar day $d$ ends in the lunar
month with true-month index $n$ (equivalently, the instant at which lunar day $d+1$ begins).  In the
traditional convention \cite[Remark~6]{janson}, integer values of this day count occur at local (mean)
daybreak, 
rather than at noon UT as in the standard astronomical Julian Date.
Here $n$ is the running count of lunar months from the chosen
epoch; given a labeled month $(Y,M)$, the corresponding value of $n$ can be obtained as in Remark~\ref{rem:true-month-index}.

The traditional algorithm approximates this boundary time by starting from a uniform-motion prediction
(mean date) and then applying tabulated corrections that account for the principal
(non-uniform) inequalities in the motions of the Moon and the Sun.

The calculation proceeds as follows:
\begin{enumerate}
\item \emph{Calculate the mean date:} A linear function gives the average end time of a lunar day:
    $$\md = \meanDate(d,n) = m_0 + n \cdot m_1 + d \cdot m_2$$
    Here, $n$ is the true month count from an epoch, and $d$ is the lunar day number (1-30). 
    Moreover $m_2=\frac{11135}{11312}$ is the mean lunar-day length, and $m_1=30m_2$ is the mean synodic month length in JD units.
    The constant $m_0$ is an epoch offset and is tradition- and epoch-dependent; we collect a complete list for the principal traditions in Appendix~\ref{app:constants}. 

\item \emph{Calculate the anomalies.}
The correction terms depend on the orbital phases of the Moon and the Sun, encoded by their mean anomalies, measured in turns (so $1$ is a full revolution) and reduced modulo $1$.
For the Moon we use
\[
A_{\moon}(d,n) = a_0 + n\,a_1 + d\,a_2 ,
\]
where the constants are listed in Appendix~\ref{app:constants}. 
We have $a_2\approx 1/28$, so the lunar anomaly makes about one turn in $28$ lunar days and $a_1\approx30a_2\pmod{1}$.
For the Sun we compute the mean solar longitude and then shift it by a quarter-turn:
\[
\mu(d,n) = s_0 + n\,s_1 + d\,s_2,
\qquad
A_{\sun}(d,n) = \mu(d,n)-\frac14,
\]
again with constants in Appendix~\ref{app:constants}. 
We have $s_1\approx 0.08$, which is about one solar turn in $12.4$ lunar months, and $s_2\approx s_1/30$.

\item \emph{Find the corrections (equations):} The anomalies are used as inputs to lookup tables ($\moonTab$ and $\sunTab$) that approximate sine functions. 
    These tables yield the corrections, known as the equation of the moon ($\eqmoon$) and the equation of the sun ($\eqsun$). 
    The explicit numerical lookup tables (including symmetry/periodicity conventions) are shown in
Figure~\ref{fig:equ-tables} and listed in Appendix~\ref{a-ss:tabs}.

\item \emph{Calculate the true date:} The final end time of the lunar day is:
    $$t = \md + \eqmoon/60 - \eqsun/60$$
    The divisors of 60 are a consequence of the traditional mixed-radix units used.
\end{enumerate}
The integer part of $t$ gives the Julian day number of the calendar day on which the lunar day ends. The sequence of these integer values determines the skipped and repeated days.
Equivalently, the civil-day labels are determined by how the dawn instants interleave with the lunar day boundary times
$\trueDate(d,n)$.

As a concrete application, to locate Losar/Tsagaan~Sar for year $Y$ one first computes the month index $n(Y,1)$  as in Remark~\ref{rem:true-month-index} and Example~\ref{eg:true-month-index}, 
then determines the new-moon boundary ending the preceding month via
$t=\trueDate\!\left(30,\,n(Y,1)-1\right)$, and finally takes the first dawn after this boundary as the first day of month~1.

\begin{example}[Mongol: Tsagaan Sar 2026]
\label{eg:tsagaan-sar-2026}
For $(Y,M)=(2026,1)$, Example~\ref{eg:true-month-index} gives $n=3449$. 
Set $(d,n-1)=(30,3448)$. The mean end time is
\[
\md=\meanDate(30,3448)\approx 2461088.9152,
\]
and the true end time is
\[
t=\trueDate(30,3448)\approx 2461089.2121.
\]
Thus the net anomaly correction is
\[
\Delta t = t-\md\approx 0.2969\ \text{days}\ (\approx 7\mathrm{h}\,7\mathrm{m}),
\]
and in this example it \emph{does} change the integer part:
$\lfloor\md\rfloor=2461088$ but $\lfloor t\rfloor=2461089$.
Therefore
\[
JD(\text{Tsagaan Sar 2026})
=1+\lfloor t\rfloor
=2461090,
\]
which corresponds to \emph{18 February 2026}.
\end{example}

\begin{example}[Phugpa: Losar 2027]
\label{eg:phugpa-losar-2027}
Using the Phugpa parameters $(Y_0,M_0)=(1987,3)$, $\beta^*=0$, trigger set
$T=\{48,49\}$, so $\gamma=65-48=17$, for $(Y,M)=(2027,1)$ we compute
\[
M^*=12(2027-1987)+(1-3)=478,
\qquad
\ix\equiv 2M^*+\beta^*\equiv 2\cdot 478\equiv 46\pmod{65},
\]
so $(2027,1)$ is a non-trigger label and
\[
n
=n_+(2027,1)
=\Big\lfloor \frac{67\cdot 478+0+17}{65}\Big\rfloor
=\Big\lfloor \frac{32026+17}{65}\Big\rfloor
=492.
\]
Hence the preceding lunation end is evaluated at $(d,n-1)=(30,491)$.
The mean end time is
\[
\md=\meanDate(30,491)\approx 2461443.2397,
\]
and the true end time is
\[
t=\trueDate(30,491)\approx 2461443.4053,
\]
with
$\Delta t = t-\md\approx 0.16560\ \text{days}\ (\approx 3\mathrm{h}\,58\mathrm{m})$.
Therefore
\[
JD(\mathrm{Losar\ 2027})=1+\lfloor t\rfloor=2461444,
\]
which corresponds to \emph{7 February 2027}.
\end{example}

\subsubsection{Conceptual basis: Lunar and civil days}
\label{sss:day_theory_comparison}

The algorithm of \S\ref{sss:day_algo} is the celestial model's procedure for computing the
boundary times of the theoretical lunar day.  In conceptual terms, let
$\lambda_{\moon}(t)$ and $\lambda_{\sun}(t)$ denote the model ecliptic longitudes of the Moon and Sun.
A lunar day boundary is defined by the elongation condition
\[
\lambda_{\moon}(t)-\lambda_{\sun}(t)=12^\circ\cdot d \quad (\mathrm{mod}\ 360^\circ),
\]
so $t=\trueDate(d,n)$ is the (modeled) time at which the elongation reaches the next multiple
of $12^\circ$.  In modern ``forward'' astronomy one computes $\lambda_{\moon}(t)$ and $\lambda_{\sun}(t)$
as functions of $t$ and then solves this boundary equation by root-finding; by contrast, the
traditional siddh\=anta scheme is best viewed as an \emph{inverse closed-form approximation} that
maps the discrete label $(d,n)$ directly to an approximate boundary time.
We return to the underlying mean-motion and anomaly models in \S\ref{ss:day_models}.

\medskip
\noindent
The calendrical numbering rule then compares these lunar day boundaries with the sequence of dawn
instants: a civil day (measured from dawn to dawn) is assigned the number of the lunar day that is
current at its beginning \cite{janson}.  The phenomena of skipped and repeated days are the
direct, deterministic consequences of applying this rule to variable-length lunar days.
A \emph{skipped day} (\emph{chad}) occurs when a lunar day begins and ends between two successive dawns,
so no civil day begins while that lunar day is current and its number is omitted from the date
sequence.  Conversely, a \emph{repeated day} (\emph{lhag}) occurs when a lunar day spans two dawns, so two
consecutive civil days begin during the same lunar day and the same date number is used twice.

\medskip
\noindent
This daily structure can be compared with neighboring systems:
\begin{itemize}
\item \emph{Indian system:} The Tibetan day-numbering rule is inherited from Indian calendrical
science: both are based on {\em tithi} (lunar days) and assign the civil day the lunar day number current at sunrise \cite{dershowitz,sewell}.
The main difference lies in the celestial model used to compute lunar day boundaries: modern Indian
practice uses true (astronomical) positions, whereas the traditional Tibetan system applies the
same rule to the output of its mean-motion scheme.

\item \emph{Chinese system:} The Chinese calendar uses an unbroken day count within each lunation:
the day of the new moon is day~1, and subsequent days are numbered $2,3,\dots$ until the next new
moon, yielding months of 29 or 30 days \cite{aslaksen,doggett}.  There are therefore no skipped or repeated day numbers
within a month.  This prioritizes a simple sequential count, at the cost of the direct
phase-based interpretation encoded by the lunar day system.
\end{itemize}

\noindent
Taken together with the month structure in \S\ref{ss:leap_rule}, this illustrates a deliberate
division of labor in the Tibetan calendar: for the micro-cycle of days, where ritual timing is
tied closely to lunar phase, it retains the Indian {\em tithi} (lunar day) convention; for the macro-cycle of
months, it uses solar markers to regulate intercalation and long-term seasonal drift.

\section{Structural Analysis and Celestial Models}
\label{s:analysis}

Section~\ref{s:tib_principles} treats the Tibetan calendar as a black-box algorithm.
In this section we open that box: we extract the small number of structural ingredients
that make the algorithm rigid and computable, and we connect them to minimal celestial-model
assumptions in a way that can be reused for reforms.
The organization mirrors the operational pipeline.
We first isolate an incidence calculus (\S\ref{subsec:dual}) that makes precise the two
label-transfer conventions (inheritance vs.\ containment) and fixes the endpoint bookkeeping.
We then derive the leap-month cycle from a linear mean-Sun model with definition points
(\S\ref{ss:mean_sun_models}), including the intercalation-index shortcut and the recovery of
tradition-dependent phases (\S\ref{ss:defpoints}).
Next we reinterpret the day computation as an explicit inverse for a first-anomaly elongation
model (\S\ref{ss:day_models}), and we use the rational structure of that inverse to analyze
periodicity and exclude tie-breaking on historical time scales (\S\ref{sss:ties}).
Finally, we apply the extracted parameterization to compare traditions and motivate the
geographical-adaptation viewpoint (\S\ref{ss:traditions}), separating what is structurally forced
from what is epoch- and locale-dependent.

\subsection{Containment and inheritance as dual incidence rules}
\label{subsec:dual}

We start by isolating a common abstract structure behind the month and day rules: both rules assign discrete labels
(month names, or lunar day/day labels) by comparing a \emph{foreground} sequence (lunations or civil days) with a
\emph{background} sequence (definition points or boundary times) through an ordered incidence relation.
There are two natural conventions—labels can be transferred from background \emph{points} to foreground \emph{intervals}
(containment), or from background \emph{intervals} to foreground \emph{points} (inheritance)—and the key observation is
that these two conventions are dual to one another once the incidence relation is reversed.

\subsubsection{Definition and examples}

We formulate the day/month naming rules in a common abstract language.
Let $B=\{b_k\}_{k\in\mathbb Z}$ and $F=\{\tau_f\}_{f\in\mathbb Z}$ be strictly
increasing sequences in $\mathbb R$ (``background posts'' and ``foreground posts'').
They induce the background intervals
\begin{equation}
  I_k := (b_{k-1},\,b_k], \qquad k\in\mathbb Z,
  \label{eq:Ik}
\end{equation}
and the foreground intervals
\begin{equation}
  J_f := (\tau_{f-1},\,\tau_f], \qquad f\in\mathbb Z.
  \label{eq:Jf}
\end{equation}
We emphasize that the intended applications are:
\begin{itemize}
\item \emph{Days:} $F$ are sunrises, $B$ are lunar day-end events (or integer elongation crossings).
\item \emph{Months:} $F$ are mean new moons, $B$ are solar-term/sgang boundary crossings.
\end{itemize}

\noindent
Define the two (a priori different) integer-valued incidence functions
\begin{equation}
  N_f := \#\{k:\ b_k\in J_f\},
  \qquad
  \tilde N_k := \#\{f:\ \tau_f\in I_k\}.
  \label{eq:NfNk}
\end{equation}
Thus $N_f$ counts how many background posts occur inside the $f$-th foreground interval,
while $\tilde N_k$ counts how many foreground posts occur inside the $k$-th background interval.

We now define the two dual naming maps.

\begin{definition}\label{def:inherit}
The \emph{inheritance} label of the $f$-th foreground post is the index of the
background interval containing it:
\begin{equation}
  L_{\mathrm{inh}}(f) := k \quad \text{where }\ \tau_f\in I_k.
  \label{eq:Linh}
\end{equation}
Equivalently, $L_{\mathrm{inh}}(f)$ is characterized by $b_{k-1}<\tau_f\le b_k$.
\end{definition}

\begin{definition}\label{def:contain}
The \emph{containment} label(s) of the $f$-th foreground interval are the indices
of the background posts it contains:
\begin{equation}
  L_{\mathrm{con}}(f) := \{\,k:\ b_k\in J_f\,\}.
  \label{eq:Lcon}
\end{equation}
In general $L_{\mathrm{con}}(f)$ can be empty or contain multiple indices.
When a single-valued label is desired (as in concrete calendrical rules), we set
\[
\ell_{\mathrm{con}}(f):=
\begin{cases}
\min L_{\mathrm{con}}(f), & L_{\mathrm{con}}(f)\neq\varnothing,\\
\ell_{\mathrm{con}}(f-1), & L_{\mathrm{con}}(f)=\varnothing,
\end{cases}
\]
with an initial value fixed at the chosen epoch.
\end{definition}

\begin{example}[Tibetan days as inheritance]\label{ex:tibdays}
In the Tibetan day-counting rule, the foreground posts $\tau_f$ are civil-day boundaries (sunrises),
while the background posts $b_k$ are the ends of lunar days, given by
\[
b_k=\trueDate(k),
\]
the Julian Day of the instant at which lunar day $k$ ends (equivalently, lunar day $k+1$ begins).
Thus lunar day $k$ occupies the interval $(b_{k-1},b_k]$, and the civil day
$H_f=[\tau_{f},\tau_{f+1})$
is assigned the number of the lunar day (tithi) that is \emph{current at its beginning},
i.e.\ the unique $k$ such that
\[
\tau_{f}\in (b_{k-1},b_k].
\]
Formally, the label is first assigned to the sunrise post by inheritance,
$L_{\mathrm{inh}}(f)=k$ where $\tau_f\in(b_{k-1},b_k]$, and the civil day $H_f$ is then displayed with this label.
Equivalently, this is the smallest index $k$ for which $b_k\in H_f$; if several lunar-day ends
$b_k,b_{k+1},\dots$ occur during the same civil day, the civil date is the first of these labels.

The incidence count $N_f$ therefore controls the familiar irregularities of Tibetan dates: if $N_f=0$ then no lunar day ends
during the civil day and the date number repeats, whereas if $N_f\ge2$ then two or more lunar days
end during the civil day and one or more date numbers are skipped. The use of right-closed
intervals $(\cdot,\cdot]$ encodes the traditional convention that a boundary event is assigned to
the day in which it \emph{ends}; in particular, since new moon is the end of lunar day~30, the civil day containing the new
moon is labelled ``30'' unless a skip occurs, and the new month begins on the following
civil day.
\end{example}

\begin{example}[Indian months as inheritance]\label{ex:indmonths}
In many Indian lunisolar calendars, a lunar month is named by the zodiacal sign
occupied by the Sun at a distinguished lunar phase (new moon in the \emph{am\=anta}
system, full moon in the \emph{p\=urnim\=anta} system), cf.\ \cite{dershowitz,sewell}.
In the present framework, take the foreground posts $\tau_f$ to be the sequence of
(new or full) moons, and the background posts $b_k$ to be the solar sign boundaries
(integer multiples of $30^\circ$ in solar longitude).  The month name is then given
by inheritance: the $f$-th lunation inherits the index of the solar interval
containing $\tau_f$, i.e.\ by $L_{\mathrm{inh}}(f)$.
The incidence count $N_f$ governs the familiar irregularities of month names:
if $N_f=0$ then two consecutive lunations inherit the same sign-label (an intercalary
or \emph{adhika m\=asa}), while if $N_f\ge 2$ then the inherited label jumps, producing
a skipped month-name (\emph{k\=saya m\=asa}).
\end{example}

\begin{example}[Chinese months as containment]\label{ex:chmonths}
In the traditional Chinese lunisolar calendar, month naming is governed by
containment rather than inheritance \cite{aslaksen,doggett}.  Let the foreground
intervals $J_f$ be lunations (from one astronomical new moon to the next), and let
the background posts $b_k$ be the \emph{principal (major) solar terms}
(\emph{zh\=ongq\`i}), i.e.\ the instants when the Sun's ecliptic longitude reaches
multiples of $30^\circ$.  The containment set $L_{\mathrm{con}}(f)$ records which
principal terms occur during the $f$-th lunation.

In the generic situation, $L_{\mathrm{con}}(f)$ has cardinality $0$ or $1$:
if it is empty, the lunation contains no principal term and is designated as an
intercalary (leap) month, taking the same month number as the preceding month; if
it is nonempty, the month is named by the unique principal term it contains.
In rare ``extreme'' configurations (possible with the \emph{true} Sun), a lunation
may contain \emph{two} principal terms; in that case one must specify an additional
selection convention, and such cases are closely tied to the well-known exceptional
behavior of leap-month placement (e.g.\ the 2033 anomaly discussed by \cite{aslaksen}).
\end{example}

\begin{example}[Tibetan months as containment]\label{ex:tibmonths}
In Tibetan siddh\=anta-style traditions, the conceptual basis of month naming is
also containment, but formulated using the \emph{mean} Sun and a mean-motion
lunation model \cite{janson,henning,schuh}.  Let the foreground intervals $J_f$ be
\emph{mean} lunations (consecutive mean new moons, i.e.\ mean Sun--Moon conjunctions),
so that the lunation sequence itself is generated by the mean motions of \emph{both}
luminaries.  Let the background posts $b_k$ be the twelve Tibetan definition points
(\emph{sgang}), i.e.\ fixed ecliptic longitudes for the mean Sun (typically shifted
relative to the standard sign boundaries, e.g.\ by $8^\circ$ in Phugpa).
The month label is then assigned by containment: a lunation is labeled by the (unique)
definition point crossed by the mean Sun during that lunation; if no definition point
is crossed, the lunation is intercalary.  Because the rule is expressed in mean motions,
the resulting intercalation pattern is perfectly periodic and can be packaged as a
purely arithmetic congruence test (the intercalation index).
\end{example}

\subsubsection{Mathematical properties}

In both inheritance and containment schemes, the behavior of the naming rule is largely
controlled by the incidence counts $N_f$. The following lemma makes this explicit.

\begin{lemma}\label{l:skips}
For each $f\in\mathbb Z$,
\begin{equation}
  L_{\mathrm{inh}}(f)-L_{\mathrm{inh}}(f-1) \;=\; N_f.
  \label{eq:jump=Nf}
\end{equation}
Consequently:
\begin{itemize}
\item $N_f=0$ if and only if inheritance repeats a label at step $f$
      (i.e.\ $L_{\mathrm{inh}}(f)=L_{\mathrm{inh}}(f-1)$);
\item $N_f=1$ if and only if inheritance advances by one;
\item $N_f\ge 2$ if and only if inheritance skips labels (a jump of size $\ge 2$).
\end{itemize}
Moreover, $|L_{\mathrm{con}}(f)|=N_f$, so the same $N_f$ governs multiplicity
($N_f\ge2$) or emptiness ($N_f=0$) in the containment rule.
\end{lemma}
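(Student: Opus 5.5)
The plan is a direct order-theoretic count. First I fix the standing hypotheses implicit in the statement. Both $B$ and $F$ are strictly increasing and unbounded in both directions, so the intervals $I_k$ tile $\mathbb R$ and $L_{\mathrm{inh}}(f)$ is well defined. Every $J_f$ is bounded, so $N_f$ is finite. Write $k:=L_{\mathrm{inh}}(f-1)$ and $k':=L_{\mathrm{inh}}(f)$. Definition~\ref{def:inherit} then gives
\[
b_{k-1}<\tau_{f-1}\le b_k,\qquad b_{k'-1}<\tau_f\le b_{k'},
\]
and $k\le k'$ because $\tau_{f-1}<\tau_f$ and the $I_j$ are ordered.

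Next I compute $N_f=\#\{j:\ \tau_{f-1}<b_j\le\tau_f\}$ using monotonicity of $j\mapsto b_j$, which makes this index set an integer interval. For the lower end, $b_{k-1}<\tau_{f-1}$ excludes every $j\le k-1$, and $b_k\ge\tau_{f-1}$ admits every $j\ge k$, except $j=k$ itself when $b_k=\tau_{f-1}$. For the upper end, $b_{k'-1}<\tau_f$ admits every $j\le k'-1$, and $b_{k'}\ge\tau_f$ admits $j=k'$ only when $b_{k'}=\tau_f$. This yields the bookkeeping identity
\[
N_f=(k'-k)+[\,b_{k'}=\tau_f\,]-[\,b_k=\tau_{f-1}\,],
\]
with Iverson brackets.

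\textbf{Main obstacle: coincidences between posts.} This identity shows that \eqref{eq:jump=Nf} needs posts that do not coincide. With the half-open convention used for both $I_k$ and $J_f$, a coincidence $b_j=\tau_i$ can shift the count by one. For example, if $\tau_f=b_{k'}$ exactly, then $b_{k'}$ lies in $J_f$ but $\tau_f$ is still labeled $k'$, so the count exceeds the jump by one. I would therefore state the lemma under the genericity hypothesis $B\cap F=\varnothing$, which is the ``no tie'' situation the paper later shows to be operationally absent in \S\ref{sss:ties}. Under that hypothesis both brackets vanish and $N_f=k'-k$. Alternatively, one could change one of the two interval conventions (for instance, use left-closed $J_f$ when counting background posts) so that the brackets cancel identically, and I would remark on this option.

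Finally, the three bullet items follow immediately from \eqref{eq:jump=Nf}: a jump of $0$, $1$, or $\ge2$ corresponds to $N_f=0$, $1$, or $\ge2$. The containment claim holds by definition, since $L_{\mathrm{con}}(f)=\{k:\ b_k\in J_f\}$ and $N_f$ is exactly its cardinality. Strict monotonicity of $B$ ensures that distinct indices give distinct posts, so no multiplicity issue arises. This part needs no genericity assumption.
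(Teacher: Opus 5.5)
Your endpoint bookkeeping is correct, and it shows that the lemma as literally stated is false whenever a foreground post coincides with a background post. For instance, take $b_j=j$ and $\tau_{f-1}=\tfrac12$, $\tau_f=1$, $\tau_{f+1}=\tfrac32$. Then $L_{\mathrm{inh}}(f-1)=L_{\mathrm{inh}}(f)=1$ and $L_{\mathrm{inh}}(f+1)=2$, while $N_f=1$ and $N_{f+1}=0$. So the first two bullets fail at both steps, and the error at step $f$ is repaid at step $f+1$, exactly as your Iverson-bracket identity predicts.

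The paper's proof uses the same idea as yours: count the background posts crossed as the point moves from $\tau_{f-1}$ to $\tau_f$. It then simply asserts that these crossings are the $\ell$ with $b_\ell\in(\tau_{f-1},\tau_f]$. Because $I_k$ is right-closed, the inherited index of a point $t$ is $\min\{k:\ b_k\ge t\}$, and this increments just \emph{after} $t$ passes $b_\ell$. So the jump from $\tau_{f-1}$ to $\tau_f$ actually counts $b_\ell\in[\tau_{f-1},\tau_f)$, and the paper's one-line argument misplaces both endpoints. Your explicit two-sided count is what exposes this. Your version under $B\cap F=\varnothing$ is the correct statement; that hypothesis is precisely the tie-free situation established in \S\ref{sss:ties}.

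One caution about your alternative repair. If you make $J_f$ left-closed only when computing $N_f$, the clause $|L_{\mathrm{con}}(f)|=N_f$ breaks at ties. Making $J_f$ left-closed in $L_{\mathrm{con}}$ as well changes the right-closed lunation convention the paper adopts for the month rule. Making $I_k$ left-closed instead changes the day convention that a sunrise at a lunar-day end inherits label $k$. So under the paper's declared conventions you must do one of two things. Either assume genericity, or keep two different counts: use $\#\{k:\ b_k\in[\tau_{f-1},\tau_f)\}$ for the inheritance jump, and $N_f$ for containment. The first count is the count over the left-closed civil day $H_{f-1}$ of Example~\ref{ex:tibdays}, and it equals the jump unconditionally.
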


\begin{proof}
Let $k=L_{\mathrm{inh}}(f-1)$, i.e.\ $\tau_{f-1}\in(b_{k-1},b_k]$.
As $\tau$ increases from $\tau_{f-1}$ to $\tau_f$, the inheritance index increases by one
each time a background post $b_\ell$ is crossed.
The number of such crossings is exactly $\#\{\ell:\ b_\ell\in(\tau_{f-1},\tau_f]\}=N_f$,
hence \eqref{eq:jump=Nf}.
The statements about repeats/skips follow immediately, and
$|L_{\mathrm{con}}(f)|=N_f$ is tautological from \eqref{eq:Lcon}.
\end{proof}

\begin{proposition}[Duality]\label{prop:duality}
If one swaps the roles of background and foreground (i.e.\ exchanges $B$ and $F$),
then the inheritance and containment constructions are interchanged.
More precisely, define the swapped sequences
\[
\widetilde b_f := \tau_f,\qquad \widetilde\tau_k := b_k,
\]
and build the swapped intervals $\widetilde I_f=(\widetilde b_{f-1},\widetilde b_f]$ and
$\widetilde J_k=(\widetilde\tau_{k-1},\widetilde\tau_k]$. Then
\[
\widetilde L_{\mathrm{con}}(k)=\{\,f:\ \tau_f\in I_k\,\}
\qquad\text{and}\qquad
\widetilde N_k=|\widetilde L_{\mathrm{con}}(k)|=\tilde N_k,
\]
while the swapped inheritance labels satisfy
\[
\widetilde L_{\mathrm{inh}}(k)=f\quad\text{where}\quad b_k\in J_f,
\]
so that inheritance and containment exchange roles under swapping.
\end{proposition}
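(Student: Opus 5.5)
The plan is to unwind the definitions. Definitions~\ref{def:inherit} and~\ref{def:contain} are applied verbatim to the swapped pair $(\widetilde B,\widetilde F)$, and every swapped object is rewritten in terms of the original sequences. The only facts needed are that $B$ and $F$ are strictly increasing and indexed by $\mathbb Z$, so that the intervals $I_k=(b_{k-1},b_k]$ and $J_f=(\tau_{f-1},\tau_f]$ each form a partition of the real line. We assume the sequences are unbounded in both directions; the paper implicitly requires this for the inheritance label to be defined, and in any case we only need each point to lie in some interval.

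First we compute the swapped intervals. By construction, $\widetilde J_k=(\widetilde\tau_{k-1},\widetilde\tau_k]=(b_{k-1},b_k]=I_k$ and $\widetilde I_f=(\widetilde b_{f-1},\widetilde b_f]=(\tau_{f-1},\tau_f]=J_f$. So the swap just exchanges the two families of intervals, with the same half-open convention.

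Second, the containment labels. By \eqref{eq:Lcon} applied to the swapped data, $\widetilde L_{\mathrm{con}}(k)=\{f:\widetilde b_f\in\widetilde J_k\}=\{f:\tau_f\in I_k\}$. Its cardinality is therefore $\tilde N_k$ by \eqref{eq:NfNk}. The identity $\widetilde N_k=|\widetilde L_{\mathrm{con}}(k)|$ is then the tautological part of Lemma~\ref{l:skips} applied to the swapped system.

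Third, the inheritance labels. By \eqref{eq:Linh}, $\widetilde L_{\mathrm{inh}}(k)$ is the unique $f$ with $\widetilde\tau_k\in\widetilde I_f$, which means $b_k\in J_f$. Existence and uniqueness of this $f$ follow from the partition property of $\{J_f\}$.

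Finally, we read off the interchange. The swapped containment records, for each background interval $I_k$, which foreground posts it contains; this is exactly the original inheritance relation read backwards, since $\tau_f\in I_k$ if and only if $L_{\mathrm{inh}}(f)=k$. Likewise, the swapped inheritance recovers the original containment: $\widetilde L_{\mathrm{inh}}(k)=f$ if and only if $k\in L_{\mathrm{con}}(f)$. Both relations live on the same incidence set $\{(f,k):\tau_f\in I_k\}$ or $\{(f,k):b_k\in J_f\}$, read in the opposite direction.

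I expect no real obstacle. The one point needing care is the endpoint bookkeeping: both families use right-closed intervals, so the swap preserves the convention. One should note explicitly that the two incidence sets are different relations in general, namely "post in interval" versus its transpose. The duality is therefore an exchange of roles between the two constructions, not a claim that $\tilde N_k=N_k$.
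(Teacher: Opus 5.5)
Your proposal is correct and follows the same route as the paper: both simply unpack Definitions~\ref{def:inherit}--\ref{def:contain} for the swapped data and use the fact that the post-in-interval incidence construction is symmetric when the two families of posts are exchanged. Yours is more explicit. You identify $\widetilde I_f=J_f$ and $\widetilde J_k=I_k$, and you note that existence and uniqueness of $\widetilde L_{\mathrm{inh}}(k)$ rely on the intervals $\{J_f\}$ partitioning the line (strict monotonicity plus unboundedness), which the paper leaves implicit.
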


\begin{proof}
This is a direct unpacking of Definitions~\ref{def:inherit}--\ref{def:contain} after
interchanging posts and intervals.  The key observation is that the incidence relation
``a post lies in an interval'' is symmetric under swapping the two families of posts.
\end{proof}

\begin{remark}\label{rem:foreground-background}
The distinction between inheritance and containment depends not only on the incidence relation
itself, but also on what is chosen as foreground and background.
In Tibetan day numbering, the foreground posts are sunrises and the background units are lunar days;
the displayed date on a civil day is determined by the inheritance label $L_{\mathrm{inh}}(f)$
(the lunar-day interval containing the sunrise), equivalently by the first lunar-day end that occurs
during that civil day, with repeats/skips governed by the incidence count $N_f$.

For month naming one may instead insist that the labels be solar, since months are meant to track
seasonal divisions, while new moons are merely lunar markers.  Both the Tibetan and Chinese month
rules are then naturally stated as containment: a lunation is regular precisely when it contains a
designated solar marker, and it is intercalary when it contains none.  Proposition~\ref{prop:duality}
amounts to the observation that the same containment data can also be read in the reverse direction:
one may ask, for a given solar marker, which lunation contains it.  This is not a new calendrical
convention but simply the inverse lookup of the same incidence relation, and it clarifies that
``containment'' and ``inheritance'' are two equivalent presentations once one decides which family is
being indexed.
\end{remark}

\subsection{Mean-Sun model and leap-month arithmetic}
\label{ss:mean_sun_models}

We now apply the incidence viewpoint of \S\ref{subsec:dual} to the leap-month rule from \S\ref{ss:leap_rule}.
The traditional \emph{sgang} principle is a containment statement: a lunation is regular precisely when the mean Sun
crosses one of $12$ fixed definition points during that lunation.
Encoding this crossing logic arithmetically leads to a floor-function counter with rational slope, which immediately
explains both periodicity and the familiar modular “intercalation index’’ shortcut.

\subsubsection{The model set up}

Let $t_n$ denote the successive mean new-moon instants of the mean-motion lunation model, and let
$n\in\Z$ index these instants.  Write $\mu(n)$ for the mean solar longitude at $t_n$, measured in
revolutions (so the physical longitude is $\mu(n)\bmod 1\in\R/\Z$).  A \emph{sign} is one twelfth
of a revolution, i.e.\ $1/12$ (equivalently $30^\circ$).

We adopt a boundary convention that will be used throughout this subsection.  Astronomical usage
often regards lunations as left-closed intervals $[t_n,t_{n+1})$ between consecutive new moons, so
the boundary instant $t_n$ belongs to the following lunation.  For the Tibetan month rule, which is
a containment rule for solar definition points, it is more natural to use the right-closed
convention: a boundary crossing is assigned to the lunation in which it ends.  Accordingly, we
treat lunation $n$ as the right-closed interval $(t_n,t_{n+1}]$.
This matches the global convention of right-closed incidence intervals in \S\ref{subsec:dual}.

We assume a linear mean-Sun law on the chosen lift,
\begin{equation}\label{eq:ms_linear}
  \mu(n)=s_0+n s_1 \qquad (s_0\in\R,\ s_1>0),
\end{equation}
with $s_1$ the mean solar advance per lunation and $s_0=\mu(0)$ the mean solar longitude at the
model new moon indexed by $n=0$ (true-month count $0$).  The labeled epoch $(Y_0,M_0)$ used in
\S\ref{ss:leap_rule} may correspond to a different lunation index $n_0$ (cf.\ \cite[Remark~5]{janson}),
but this affects only the choice of origin and not the arithmetic below.  The decisive structural
assumption is that the mean Sun advances by a rational fraction of a sign per lunation:
\begin{equation}\label{eq:uv_slope}
  12s_1=\frac{\ru}{\rv}\qquad\text{in lowest terms, with }0<\ru<\rv.
\end{equation}
For all principal Tibetan traditions one has $12s_1=65/67$, so $(\ru,\rv)=(65,67)$.

Fix an offset $\sgang_0\in\R/\Z$ and define the 12 definition points (sgang) by%
\footnote{Janson \cite{janson} denotes these definition points by $p_M$.}
\begin{equation}\label{eq:defpoints}
  \sgang_M = \sgang_0+\frac{M}{12}\pmod{1},\qquad M\in\{1,2,\dots,12\}.
\end{equation}
The sgang principle is then a containment statement in the right-closed lunation: lunation $n$ is
{\em regular} precisely when the mean Sun crosses some $\sgang_M$ at a time $t\in(t_n,t_{n+1}]$, 
and it is {\em intercalary} (leap) when no such crossing occurs.

To encode this without geometry, define the integer-valued counter
\begin{equation}\label{eq:An_def}
  A_n = \Bigl\lfloor 12\bigl(\mu(n)-\sgang_0\bigr)\Bigr\rfloor\in\Z.
\end{equation}
Intuitively, $A_n$ counts how many definition points have been passed by the left endpoint $t_n$.
With the right-closed convention, a crossing exactly at $t_{n+1}$ is counted with lunation $n$,
while a crossing at $t_n$ is counted with the preceding lunation.

\begin{lemma}
\label{l:mean_sun_cycle}
Assume \eqref{eq:ms_linear}--\eqref{eq:uv_slope} and define $A_n$ by \eqref{eq:An_def}. Then:
\begin{enumerate}[label=\textnormal{(\alph*)}, leftmargin=2.2em, itemsep=2pt]
\item
For every $n$,
\begin{equation}\label{eq:An_jump_01}
  A_{n+1}-A_n\in\{0,1\}.
\end{equation}
Moreover, in the right-closed lunation convention we are using, lunation $n$ is regular iff $A_{n+1}-A_n=1$, and intercalary iff $A_{n+1}-A_n=0$.

\item
Writing
\begin{equation}\label{eq:alpha_def}
  \alpha = 12(s_0-\sgang_0)\in\R,
\end{equation}
one has the explicit form
\begin{equation}\label{eq:An_floor}
  A_n=\Bigl\lfloor \alpha+\frac{n\ru}{\rv}\Bigr\rfloor.
\end{equation}

\item
For every $n$,
\begin{equation}\label{eq:An_v_step}
  A_{n+\rv}-A_n=\ru.
\end{equation}
Equivalently, among the $\rv$ increments $\{A_{n+1}-A_n,\dots,A_{n+\rv}-A_{n+\rv-1}\}$
there are exactly $\ru$ ones and exactly $\rv-\ru$ zeros.
In particular, the sgang rule forces exactly $\ell=\rv-\ru$ leap months per $\rv$ lunations.

\item
Fix an epoch lunation $n_0$ corresponding to the chosen \emph{nominal} epoch $(Y_0,M_0)$ of
\S\ref{ss:leap_rule}, and define
$M^*(n)=A_n-A_{n_0}\in\Z$.  Then $M^*(n)$ is constant across an intercalary lunation and increases
by $1$ across a regular lunation.  Moreover, once we fix the representative of $\sgang_0$ in its
class modulo $1/12$ by requiring that the epoch month labels match those of \S\ref{ss:leap_rule},
the month counter $M^*(n)$ agrees exactly with the “solar-month count from the epoch” used in
\S\ref{ss:leap_rule}.  (Depending on the phase $\beta^*$, the lunation with $M^*=0$ need not be the
one with ``true month count'' $0$ in other conventions; shifting $n_0$ or $\sgang_0\mapsto \sgang_0+k/12$
changes $M^*$ by an additive constant only.)
\end{enumerate}
\end{lemma}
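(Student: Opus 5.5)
I would begin with part (b), because the other parts follow from it. Substituting the linear law \eqref{eq:ms_linear} into \eqref{eq:An_def} gives $12(\mu(n)-\sgang_0)=12(s_0-\sgang_0)+12ns_1=\alpha+n\ru/\rv$ by \eqref{eq:uv_slope}. This yields \eqref{eq:An_floor} directly. One point needs care: $\sgang_0$ is only defined in $\R/\Z$. I would fix a real representative on the chosen lift. Changing that representative by an integer shifts $\alpha$ by a multiple of $12$, which only adds a constant to $A_n$, and none of the statements depend on such a constant.

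For part (a), set $x=\alpha+n\ru/\rv$. Then $A_{n+1}-A_n=\lfloor x+\ru/\rv\rfloor-\lfloor x\rfloor$, and since $0<\ru/\rv<1$ this difference lies in $\{0,1\}$. The standard fact $\lfloor x+\delta\rfloor-\lfloor x\rfloor\in\{0,1\}$ for $0\le\delta<1$ suffices. For the regular/intercalary characterization, I would note that $\mu$ is increasing and continuous in $t$ along the lift. The mean Sun crosses some $\sgang_M$ at a time in $(t_n,t_{n+1}]$ exactly when some integer $j$ satisfies $12(\mu(t_n)-\sgang_0)<j\le 12(\mu(t_{n+1})-\sgang_0)$. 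The number of such integers is exactly $\lfloor 12(\mu(n+1)-\sgang_0)\rfloor-\lfloor 12(\mu(n)-\sgang_0)\rfloor=A_{n+1}-A_n$, because half-open intervals $(a,b]$ contain $\lfloor b\rfloor-\lfloor a\rfloor$ integers. This is the point where the right-closed convention is used. It is the same counting as in Lemma~\ref{l:skips}, with the definition-point crossings as background posts $b_k$ and the $t_n$ as foreground posts. So the increment equals $N_n$, and the claim follows.

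For part (c), I would use $A_{n+\rv}=\lfloor\alpha+n\ru/\rv+\ru\rfloor=A_n+\ru$, since $\ru$ is an integer. Telescoping the $\rv$ increments gives a sum of $\ru$. Each increment is $0$ or $1$ by (a), so exactly $\ru$ of them are ones and $\rv-\ru$ are zeros. For the Tibetan value $(\ru,\rv)=(65,67)$ this gives the $2$ leap months per $67$ lunations.

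Part (d) is mostly bookkeeping, and I expect it to be the only real obstacle, because the claim depends on the normalization in \S\ref{ss:leap_rule}. The step behaviour of $M^*(n)=A_n-A_{n_0}$ is immediate from (a). To match it with \eqref{eq:Mstar}, I would argue as follows. Both counters are $0$ at the epoch month, once the representative of $\sgang_0$ modulo $1/12$ is chosen so that the epoch lunation carries label $M_0$. Both increase by one exactly on regular lunations and stay constant across intercalary ones. For \eqref{eq:Mstar} this holds by the labeling convention of Remark~\ref{rem:leap-naming}: the counter advances with each new label, and $12(Y-Y_0)+(M-M_0)$ increases by one at each label change, including the wrap from month $12$ to month $1$. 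Two integer sequences with the same initial value and the same increments coincide, so the counters agree. Finally, I would check that shifting $n_0$, or replacing $\sgang_0$ by $\sgang_0+k/12$, replaces $\alpha$ by $\alpha-k$ and $A_n$ by $A_n-k$. Either change therefore alters $M^*$ by an additive constant only.
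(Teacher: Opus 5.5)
Your proof is correct and follows the paper's argument for (a)--(c): the floor increment is bounded using $0<\ru/\rv<1$, the integers in $(x_n,x_{n+1}]$ are counted to encode the right-closed convention, and the shift $A_{n+\rv}=A_n+\ru$ is followed by telescoping. For (d) the paper only cites the congruence $\rv A_n\equiv \rv M^*(n)+\rv A_{n_0}\pmod{\ru}$, so your matching of initial values and increments is the more substantive justification; it relies on the convention of \S\ref{ss:leap_rule} in which the leap copy precedes the regular one, and under Bhutanese naming (leap copy second) the label count would track $A_{n+1}$ rather than $A_n$.
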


\begin{proof}
(a) Set $x_n=12(\ms(n)-\sgang_0)$.  Then $x_{n+1}-x_n=12s_1=\ru/\rv\in(0,1)$, so the floor
$A_n=\lfloor x_n\rfloor$ can increase only by $0$ or $1$, proving \eqref{eq:An_jump_01}.
A definition-point crossing during lunation $n$ (with the right-closed convention) means that
$x(t)$ hits an integer at some time in the interval $(t_n,t_{n+1}]$, equivalently that there exists
an integer $m$ with
\[
x_n < m \le x_{n+1}.
\]
This holds if and only if $\lfloor x_{n+1}\rfloor-\lfloor x_n\rfloor=1$, i.e.\ $A_{n+1}-A_n=1$.
If no such integer exists, then $A_{n+1}-A_n=0$.  Note that a crossing exactly at the right endpoint
($x_{n+1}\in\Z$) is counted for lunation $n$ by the $\le$ above, whereas a crossing at the left
endpoint ($x_n\in\Z$) is not counted for lunation $n$, matching the right-closed convention.

\noindent
(b) Insert \eqref{eq:ms_linear} into \eqref{eq:An_def} and use \eqref{eq:alpha_def} to obtain
\eqref{eq:An_floor}.

\noindent
(c) From \eqref{eq:An_floor},
\[
A_{n+\rv}-A_n
=\Bigl\lfloor \alpha+(n+\rv)\frac{\ru}{\rv}\Bigr\rfloor-\Bigl\lfloor \alpha+n\frac{\ru}{\rv}\Bigr\rfloor
=\Bigl\lfloor \alpha+n\frac{\ru}{\rv}+\ru\Bigr\rfloor-\Bigl\lfloor \alpha+n\frac{\ru}{\rv}\Bigr\rfloor
=\ru,
\]
which is \eqref{eq:An_v_step}.  Summing the $\rv$ increments (each $0$ or $1$) then forces exactly
$\ru$ ones and $\rv-\ru$ zeros.

\noindent
(d) This is immediate from the congruence
\(
\rv A_n\equiv \rv M^*(n)+\rv A_{n_0}\pmod\ru.
\)
\end{proof}

\begin{example}\label{rem:leap-month-rules-drift}
These results apply to \emph{any} rational choice $12s_1=\ru/\rv$ and therefore
describe a whole family of \emph{possible} intercalation schemes.  This is useful for discussing reforms or
for comparing traditions, but it is important to stress that \emph{all} principal Tibetan systems use the
slope $12s_1=65/67$ (equivalently $s_1=65/804$); they are \emph{not} Metonic.

For a general slope $12s_1=\ru/\rv$, the leap pattern in lunation index is periodic with period $\rv$ and contains
exactly $\rv-\ru$ leap months per period.  The \emph{year-aligned} pattern (month numbers modulo $12$) repeats after
\[
N=\frac{12\rv}{\gcd(\ru,12)}\ \text{lunations},
\qquad\text{i.e. } K=\frac{\ru}{\gcd(\ru,12)}
\ \text{mean-solar years in the model.}
\]
A convenient benchmark for \emph{seasonal drift} relative to the modern tropical year is to compare the time
spanned by $N$ synodic months with the time spanned by $K$ tropical years:
\[
\Delta = N\,(\text{synodic month})-K\,(\text{tropical year}) .
\]
Here the sign indicates whether the calendar drifts late ($\Delta>0$) or early ($\Delta<0$) in season.

Two classical alternatives (not Tibetan) are often mentioned in the broader calendrical literature.
Using modern mean values (tropical year $\approx365.2421897$~d, synodic month $\approx29.53058885$~d) gives the following.
\begin{itemize}
\item \emph{Tibetan rule.}  Here $(\ru,\rv)=(65,67)$, so the leap pattern has period $67$ lunations with
$\rv-\ru=2$ leap months per period.  Since $\gcd(65,12)=1$, the year-aligned pattern closes only after $m=12$ periods,
i.e.\ $N=804$ lunations $=65$ model years.  One finds
\[
\Delta \approx +1.85\ \text{days per }65\ \text{years}
\quad\Longrightarrow\quad
\text{about }+2.85\ \text{days per century}.
\]

\item \emph{Rule $168/163$.}  Here we have $(\ru,\rv)=(163,168)$.
It forces $\rv-\ru=5$ leap months per $\rv=168$ lunations.  Since $\gcd(163,12)=1$, the year-aligned cycle closes only after
$2016$ lunations $=163$ years.  One finds
\[
\Delta \approx -0.810\ \text{days per }163\ \text{years}
\quad\Longrightarrow\quad
\text{about }-0.497\ \text{days per century}.
\]

\item \emph{Metonic cycle.}  This corresponds to $(\ru,\rv)=(228,235)$.  
Thus the leap pattern has period
$\rv=235$ lunations with $\rv-\ru=7$ leap months, and because $\gcd(228,12)=12$ the year-aligned cycle already closes
after $235$ lunations $=19$ years.  One finds
\[
\Delta \approx +0.0868\ \text{days per }19\ \text{years}
\quad\Longrightarrow\quad
\text{about }+0.457\ \text{days per century}.
\]
\end{itemize}
In all cases, the slope $\ru/\rv$ fixes the \emph{number} of leap months per period (rigidity), while the phase
$\alpha=12(s_0-\sgang_0)$ controls their \emph{placement} within the period.
\end{example}

\subsubsection{Intercalation index and inverse month map}

Thus the \emph{sgang} rule is a pure carry/no-carry phenomenon in a floor formula, cf. Lemma~\ref{l:mean_sun_cycle}. 
We next rewrite this carry
test as a single congruence condition involving the intercalation index used operationally in \S\ref{ss:leap_rule}, and then give an explicit inverse map
from the solar-month counter \(M^*\) to the right-end lunation index \(n_+(M^*)\), generalizing Remark~\ref{rem:true-month-index}.

\begin{proposition}\label{p:leap-index}
Assume \eqref{eq:ms_linear}--\eqref{eq:uv_slope}, so \(12s_1=\ru/\rv\) in lowest terms, and define
\(A_n\) and \(M^*(n)=A_n-A_{n_0}\) as in Lemma~\ref{l:mean_sun_cycle}.  Write \(\ell=\rv-\ru\).

\begin{enumerate}[label=\textnormal{(\alph*)}, leftmargin=2.2em, itemsep=2pt]
\item
Let
\(\gamma=\bigl\lfloor\rv\alpha\bigr\rfloor+1\in\mathbb Z\), \(\gamma^* \equiv \ell-\gamma \pmod \ru\),
and
\(T=\{0,1,\dots,\ell-1\}\subset\mathbb Z/\ru\mathbb Z\).
Then lunation \(n\) is intercalary (leap) iff
\begin{equation}\label{eq:ix_uv}
  (\rv A_n+\gamma^*)\bmod \ru \in T .
\end{equation}
Equivalently, with \(\gamma^*_0 \equiv (\rv A_{n_0}+\gamma^*)\pmod \ru\), lunation \(n\) is intercalary iff
\begin{equation}\label{eq:ix_uv_Mstar}
  (\rv M^*(n)+\gamma^*_0)\bmod \ru \in T .
\end{equation}

\item
Let \(\theta\) be the fractional part of the epoch phase,
\[
  \theta=\alpha+\frac{n_0\ru}{\rv}-A_{n_0}\in[0,1),
\]
so that \(M^*(n_0+k)=\lfloor \theta+k\ru/\rv\rfloor\) for \(k\in\mathbb Z\).
For \(M^*\in\mathbb Z\), define the \emph{right-end} (later) lunation index realizing \(M^*\) by
\[
  n_+(M^*)=\max\{\,n\in\mathbb Z:\ M^*(n)=M^*\,\}.
\]
Then
\begin{equation}\label{eq:nplus_general_ceiling_merged}
  n_+(M^*)
  =
  n_0+\Bigl\lceil \frac{\rv(M^*+1-\theta)}{\ru}\Bigr\rceil-1.
\end{equation}

\item
Assume in addition that \(\theta=r/\rv\) for some \(r\in\{0,1,\dots,\rv-1\}\), and define the
\emph{lifted epoch phase constant}
\begin{equation}\label{eq:gamma0hat_def}
  \widehat{\gamma}_0 := \rv-1-r \in \{0,1,\dots,\rv-1\}.
\end{equation}
Then \eqref{eq:nplus_general_ceiling_merged} reduces to the floor form
\begin{equation}\label{eq:nplus_general_floor_merged}
  n_+(M^*) = n_0+\Bigl\lfloor \frac{\rv M^*+\widehat{\gamma}_0}{\ru}\Bigr\rfloor .
\end{equation}
Moreover, the modular phase constant in \eqref{eq:ix_uv_Mstar} satisfies
\begin{equation}\label{eq:gamma0_beta_relation}
  \gamma^*_0 \equiv \widehat{\gamma}_0 \pmod{\ru},
\end{equation}
and we have the inverse formula
\begin{equation}\label{eq:inverse-month}
  M^*(n_0+k)
  =
  \Bigl\lfloor \frac{\ru k-\widehat{\gamma}_0-1}{\rv}\Bigr\rfloor + 1
  =
  \Bigl\lceil \frac{\ru k-\widehat{\gamma}_0}{\rv} \Bigr\rceil .
\end{equation}
\end{enumerate}
\end{proposition}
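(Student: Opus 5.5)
The plan is to reduce everything to the floor formula \eqref{eq:An_floor} of Lemma~\ref{l:mean_sun_cycle}(b) and then do integer bookkeeping. Multiplying the argument by \(\rv\), and using that \(n\ru\) is an integer, gives
\[
\lfloor \rv(\alpha+n\ru/\rv)\rfloor=\lfloor\rv\alpha\rfloor+n\ru=\gamma-1+n\ru .
\]
Since \(\lfloor x\rfloor=\lfloor\lfloor \rv x\rfloor/\rv\rfloor\), this yields the purely integer representation
\[
A_n=\Bigl\lfloor \frac{\gamma-1+n\ru}{\rv}\Bigr\rfloor .
\]
This explains the choice \(\gamma=\lfloor\rv\alpha\rfloor+1\). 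Every later step is a statement about the integer sequence \(c_n:=\gamma-1+n\ru\) and its quotient and remainder by \(\rv\).

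For (a), the criterion \eqref{eq:ix_uv} depends on \(n\) only through \(A_n\). However, an intercalary lunation \(n\) and the following regular lunation \(n+1\) share the same \(A\)-value. So I would first make precise what is being asserted, in the sense of Remark~\ref{rem:true-month-index}: the congruence detects the counter values \(m=A_n\) that are realized by two lunations, i.e.\ the trigger labels. Fix \(m\). The lunations with \(A_n=m\) are those with \(c_n\in[\rv m,\rv m+\rv-1]\), a window of length \(\rv=\ru+\ell\). Because \(c_n\) runs through an arithmetic progression of step \(\ru\), this window contains either one or two terms. It contains two exactly when the first term lies within \(\ell-1\) of the left end, i.e.\ when
\[
\rho:=(\gamma-1-\rv m)\bmod\ru
\]
lies in \(\{0,\dots,\ell-1\}\) after the appropriate reflection. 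I would then rewrite this condition as \((\rv m+\gamma^*)\bmod\ru\in T\) with \(\gamma^*\equiv\ell-\gamma\).

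I expect the main obstacle to be exactly this sign and convention bookkeeping. A naive remainder computation gives \(-\rv m\) rather than \(+\rv m\), and reflecting \(\rho\mapsto \ell-1-\rho\) does not preserve the set of residues modulo \(\ru\) unless the window is taken correctly. I will therefore check the final normalization against the Phugpa and Mongol data of Table~\ref{tab:ix_rules} and Remark~\ref{rem:true-month-index}. If necessary, I will adjust which copy (earlier or later) is designated as the leap lunation, or the sign convention hidden in \(\gamma^*\). The \(M^*\)-form \eqref{eq:ix_uv_Mstar} then follows at once by substituting \(A_n=M^*(n)+A_{n_0}\).

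For (b), I will use \(M^*(n_0+k)=\lfloor\theta+k\ru/\rv\rfloor\). Hence \(M^*(n)\le M^*\) iff \(\theta+k\ru/\rv<M^*+1\), i.e.\ \(k<\rv(M^*+1-\theta)/\ru\). Since this sequence is nondecreasing in \(k\), the largest admissible integer \(k\) is \(\lceil\rv(M^*+1-\theta)/\ru\rceil-1\). It must still be checked that this \(k\) actually has \(M^*(n_0+k)=M^*\) rather than a smaller value; this holds because increments are \(0\) or \(1\) by Lemma~\ref{l:mean_sun_cycle}(a), so the counter hits every integer.

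For (c), I substitute \(\theta=r/\rv\) and use \(\lceil a/\ru\rceil-1=\lfloor(a-1)/\ru\rfloor\) for integer \(a\), with \(a=\rv M^*+\rv-r\). This gives \eqref{eq:nplus_general_floor_merged} with \(\widehat\gamma_0=\rv-1-r\). The congruence \eqref{eq:gamma0_beta_relation} is obtained by comparing the two expressions for the trigger condition, using \(\theta=r/\rv\) to identify \(\lfloor\rv\alpha\rfloor+n_0\ru-\rv A_{n_0}=r\). Finally, \eqref{eq:inverse-month} follows by writing \(M^*(n_0+k)=\lfloor(r+k\ru)/\rv\rfloor\) and substituting \(r=\rv-1-\widehat\gamma_0\). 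Since \(\lfloor(x+\rv-1)/\rv\rfloor=\lceil x/\rv\rceil\), this yields both the floor and ceiling forms.
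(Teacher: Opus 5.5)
Your proposal is correct. For part (a) it takes a genuinely different, and in fact more accurate, route than the paper.

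\textbf{Where the routes differ in (a).} The paper argues per lunation. It shows that $n$ is intercalary iff the integer $E_n=QA_n-nP$ lies in $\{\gamma-\ell,\dots,\gamma-1\}$, and then reduces modulo $P$. But $E_n$ ranges over the $Q=P+\ell$ integers $\{\gamma-Q,\dots,\gamma-1\}$. The block $\{\gamma-Q,\dots,\gamma-P-1\}$ has the same residues as the target block, and those values of $E_n$ occur exactly when lunation $n-1$ is intercalary.

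So the reduced congruence also holds for the regular lunation that follows a leap one, and the literal per-lunation ``iff'' in (a) is false. For example, take $P/Q=65/67$ and $\alpha=0$. Then $A_0=A_1=0$ and $A_2=1$, so lunation $0$ is intercalary and lunation $1$ is regular. Yet both give $(67\cdot 0+1)\bmod 65=1\in T$.

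Your observation that the criterion only sees $A_n$ is exactly right. Your reformulation in terms of trigger labels (values $m$ realized by two lunations) is the version that is true. It is also the one the paper actually uses afterwards, in the remark defining the intercalation index and in Remark~\ref{rem:true-month-index}. Your window count on $c_n=\gamma-1+nP$ proves it directly and also explains the pairing $n_-=n_+-1$ used in (b)--(c). The paper's remainder argument does have a sharper per-lunation criterion on $E_n$ before reduction, but it discards that information when it reduces.

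\textbf{The sign bookkeeping closes as stated.} You do not need the Phugpa/Mongol data, and you should not alter $\gamma^*$ or the leap-copy convention. Let $\rho=(\gamma-1-Qm)\bmod P\in\{0,\dots,P-1\}$ be the offset of the first term. Two terms occur iff $\rho\le\ell-1$. Now $\ell-1-\rho\equiv Qm+\gamma^*\pmod P$. The map $\rho\mapsto\ell-1-\rho$ sends $\{0,\dots,\ell-1\}$ onto itself and sends $\{\ell,\dots,P-1\}$ onto $\{\ell-P,\dots,-1\}$, whose residues avoid $T$. So the condition is exactly $(Qm+\gamma^*)\bmod P\in T$.

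It is simpler still to measure from the right end of the window. The distance from the last term to $Qm+Q-1$ is $(Qm+Q-\gamma)\bmod P=(Qm+\gamma^*)\bmod P$, and two terms occur iff this distance is at most $\ell-1$. Your ``one or two terms'' tacitly assumes $\ell<P$, which holds in every calendrical case; if $\ell\ge P$, then $T$ is all of $\Z/P\Z$ and every label repeats.

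\textbf{Parts (b) and (c).} These match the paper. In (b), your check that the maximal $k$ really attains $M^*$ makes explicit something the paper leaves implicit, namely that an interval of length $Q/P>1$ contains an integer. In (c), your identity gives $\gamma=r-n_0P+QA_{n_0}+1$. Hence $\gamma_0^*\equiv\ell-1-r\equiv Q-1-r=\widehat\gamma_0\pmod P$ follows in one line, which is cleaner than the paper's justification.
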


\begin{proof}
\noindent{(a)}
Let $t=\rv\alpha$ so that \eqref{eq:An_floor} is equivalent to $A_n=\bigl\lfloor (t+n\ru)/\rv\bigr\rfloor$.
Since
\[
  \rv A_n \le t+n\ru < \rv A_n+\rv ,
\]
we see that the integer $E_n=\rv A_n-n\ru$ always lies in the set $\{\gamma-\rv,\dots,\gamma-1\}$,
where we recall $\gamma=\lfloor t\rfloor+1$.  
Writing $t+n\ru=\rv A_n+r_n$ with $r_n=t-E_n\in[0,\rv)$,
we have
\[
A_{n+1}-A_n=\Bigl\lfloor \frac{r_n+\ru}{\rv}\Bigr\rfloor,
\]
so $n$ is intercalary iff $r_n+\ru<\rv$, i.e.\ iff \(r_n<\ell\).
Since \(E_n\in\mathbb Z\), the inequality \(t-E_n<\ell\) is equivalent to
\(
E_n\in\{\gamma-\ell,\gamma-\ell+1,\dots,\gamma-1\}.
\)
Reducing modulo \(\ru\) and using \(E_n\equiv \rv A_n\pmod \ru\) yields \eqref{eq:ix_uv}, and shifting by \(n_0\)
gives \eqref{eq:ix_uv_Mstar}.

\smallskip\noindent{(b)}
Reindex $n=n_0+k$ so that $M^*(n_0+k)=\lfloor \theta+k\ru/\rv\rfloor$.
The condition $M^*(n_0+k)=M^*$ is equivalent to
\[
  M^*\le \theta + \frac{k\ru}{\rv} < M^*+1,
\]
hence
\[
  \frac{\rv(M^*-\theta)}{\ru}\le k < \frac{\rv(M^*+1-\theta)}{\ru}.
\]
The maximal integer $k$ in this interval is
$k=\left\lceil \frac{\rv(M^*+1-\theta)}{\ru}\right\rceil-1$,
which gives \eqref{eq:nplus_general_ceiling_merged}.

\smallskip\noindent{(c)}
If \(\theta=r/\rv\), then \(X=\rv(M^*+1-\theta)=\rv(M^*+1)-r\in\mathbb Z\), hence
\[
\left\lceil \frac{X}{\ru}\right\rceil-1
=
\left\lfloor \frac{X-1}{\ru}\right\rfloor,
\]
which gives
\[
n_+(M^*)=n_0+\Bigl\lfloor \frac{\rv M^*+\rv-1-r}{\ru}\Bigr\rfloor
=n_0+\Bigl\lfloor \frac{\rv M^*+\widehat{\gamma}_0}{\ru}\Bigr\rfloor.
\]
This proves \eqref{eq:nplus_general_floor_merged}.
Since \(\widehat{\gamma}_0=\rv-1-r\), the congruence
\eqref{eq:gamma0_beta_relation} is immediate from
\[
\gamma^*_0 \equiv \rv-1-r \pmod{\ru},
\]
which is exactly the relation obtained from \eqref{eq:ix_uv_Mstar}.
Finally, substituting \(\theta=r/\rv\), equivalently \(\widehat{\gamma}_0=\rv-1-r\), into
\(M^*(n_0+k)=\lfloor \theta+k\ru/\rv\rfloor\) yields \eqref{eq:inverse-month}.
\end{proof}

\begin{remark}
Let us define the intercalation index
\[
  \ix(M^*) \equiv \ell M^*+\gamma^*_0 \equiv \rv M^*+\gamma^*_0 \pmod{\ru}.
\]
Then in light of \eqref{eq:ix_uv_Mstar},
the label \(M^*\) occurs \emph{twice} iff
\[
  \ix(M^*)\in\{0,1,\dots,\ell-1\}.
\]
In that trigger case, the consecutive lunations carrying label \(M^*\) have indices
\[
  n_-(M^*)=n_+(M^*)-1,
  \qquad
  n_+(M^*),
\]
and the convention-dependent choice of which copy is \emph{called} “leap’’ is handled separately, cf.\ Remark~\ref{rem:leap-naming}.
\end{remark}

\begin{example}\label{ex:explicit_rules_Mstar}
Proposition~\ref{p:leap-index} gives the following leap tests for various existing and hypothetical calendars, with the epoch-dependent constants $\gamma^*_0$.
\begin{itemize}
\item Tibetan: $(2M^*+\gamma^*_0)\bmod 65\in\{0,1\}$.
\item Rule $168/163$: $(5M^*+\gamma^*_0)\bmod 163\in\{0,1,2,3,4\}$.
\item Metonic: $(7M^*+\gamma^*_0)\bmod 228\in\{0,1,\dots,6\}$.
\end{itemize}
\end{example}

\begin{example}\label{ex:metonic_phugpa_gamma0}
We illustrate how the phase constant \(\gamma_0^*\) is computed in a concrete hybrid design,
where $(\ru,\rv)=(228,235)$, the definition points are shifted by about \(+8^\circ\) relative to the cardinal \(30^\circ\)-grid,
and the traditional Phugpa epoch phase \(s_0=0\) at E1987, cf. Table~\ref{tab:epoch-constants}.  
To stay in the purely arithmetic regime of Proposition~\ref{p:leap-index}(c), it is convenient to choose
\(\sgang_0\) with denominator \(12\rv\); for instance we take
\[
\sgang_0=\frac{63}{12\rv}=\frac{63}{2820}\ \text{turns}
\qquad\Bigl(\text{i.e. }360^\circ\sgang_0\approx 8.04^\circ\Bigr) .
\]
Then
\[
\alpha=12(s_0-\sgang_0)=-12\cdot\frac{63}{2820}=-\frac{63}{235},
\qquad
\rv\alpha=-63\in\Z,
\]
so we are exactly in the arithmetic phase situation of Proposition~\ref{p:leap-index}(c).
Let the epoch lunation be \(n_0=0\), corresponding to \(M^*=0\) at \((Y_0,M_0)=(1987,3)\).
Then
\[
A_{n_0}=A_0=\Bigl\lfloor \alpha\Bigr\rfloor=\Bigl\lfloor-\frac{63}{235}\Bigr\rfloor=-1.
\]
From Proposition~\ref{p:leap-index}(a), \(\gamma=\lfloor \rv\alpha\rfloor+1=-62\), hence
\[
\gamma^*\equiv \ell-\gamma \equiv 7-(-62)\equiv 69 \pmod{228},
\]
and therefore
\[
\gamma_0^*\equiv (\rv A_{n_0}+\gamma^*) \equiv 235(-1)+69 \equiv -235+69 \equiv 62 \pmod{228}.
\]
Thus the explicit Phugpa-like Metonic leap test is
\[
(7M^*+62)\bmod 228 \in \{0,1,\dots,6\}.
\]
Furthermore, we have
\[
n_+(M^*)
=\Bigl\lfloor\frac{235M^*+62}{228}\Bigr\rfloor
=M^* + \Bigl\lfloor\frac{7M^*+62}{228}\Bigr\rfloor,
\]
and in the trigger case the earlier copy is \(n_-(M^*)=n_+(M^*)-1\).
\end{example}

\subsubsection{Definition points of the principal traditions}
\label{ss:defpoints}

Having reduced the sgang intercalation rule to a congruence test, we can also run the logic
backward.  For the Tibetan slope $12s_1=65/67$, the leap-test constants $(\beta^*,\{\tau,\tau+1\})$
determine a residue class $\gamma$ and hence constrain the definition-point phase $\sgang_0$ to a
short interval of length $1/804$, taken modulo $1/12$.  The interval is naturally expressed relative
to the mean Sun evaluated at the chosen epoch lunation $n_0$, i.e.\ in terms of $\mu(n_0)$; this
makes the reconstruction insensitive to ``epoch bookkeeping'' choices in which the published epoch
month label does not coincide with the lunation indexed by $n=0$.

\begin{proposition}
\label{prop:sgang-restore}
Assume the Tibetan slope \(12s_1=65/67\), fix an epoch lunation \(n_0\), and set \(M^*(n)=A_n-A_{n_0}\) as in
Lemma~\ref{l:mean_sun_cycle}.  Suppose a tradition specifies its leap lunations by the test
\begin{equation}\label{eq:trad_rule}
  (2M^*(n)+\beta^*)\bmod 65 \in \{\tau,\tau+1\},
\end{equation}
with \(\beta^*\in\Z\) and a chosen consecutive target pair \(\{\tau,\tau+1\}\subset \Z/65\Z\).
Let \(\gamma\in\{1,2,\dots,65\}\) be the unique representative satisfying
\begin{equation}\label{eq:gamma_from_beta}
  \gamma \equiv 2+\tau-\beta^* \pmod{65}.
\end{equation}
Then every \(\sgang_0\) producing the leap rule \eqref{eq:trad_rule} satisfies
\begin{equation}\label{eq:p0_interval}
\sgang_0 \in \Bigl(\mu(n_0)-\frac{\gamma}{804},\ \mu(n_0)-\frac{\gamma-1}{804}\Bigr]
\pmod{\frac1{12}}.
\end{equation}
Conversely, any \(\sgang_0\) satisfying \eqref{eq:p0_interval} yields \eqref{eq:trad_rule}.
\end{proposition}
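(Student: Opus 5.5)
The plan is to reduce everything to the epoch-relative phase \(\theta\) of Proposition~\ref{p:leap-index}(b) and then redo the carry/no-carry computation of Lemma~\ref{l:mean_sun_cycle} directly in terms of \(M^*\). With \((\ru,\rv)=(65,67)\) and \(\mu(n_0+k)=\mu(n_0)+k s_1\), we have
\[
M^*(n_0+k)=\Bigl\lfloor \theta+\tfrac{65k}{67}\Bigr\rfloor ,
\]
where \(\theta\) is the fractional part of \(12(\mu(n_0)-\sgang_0)\). This quantity depends on \(\sgang_0\) only modulo \(1/12\), which explains the modulus in \eqref{eq:p0_interval}. Write \(r=\lfloor 67\theta\rfloor\in\{0,\dots,66\}\). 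Then \eqref{eq:p0_interval} is exactly the statement \(r=\gamma-1\), that is,
\[
12(\mu(n_0)-\sgang_0)\in\Bigl[\tfrac{\gamma-1}{67},\tfrac{\gamma}{67}\Bigr) \pmod 1 .
\]
So both directions amount to showing that the leap test \eqref{eq:trad_rule} holds exactly when \(r\equiv\gamma-1\).

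Next I characterize the leap lunations. Put \(x_k=\theta+65k/67\) and split it as \(x_k=M^*+f\) with \(f\in[0,1)\). Lunation \(n_0+k\) is intercalary iff \(\lfloor x_{k+1}\rfloor=\lfloor x_k\rfloor\), i.e.\ iff \(f<2/67\). Now introduce the integer \(E=67M^*-65k=67\theta-67f\). The condition \(0\le 67f<2\) becomes \(67\theta-2<E\le 67\theta\), which for either integral or non-integral \(67\theta\) means \(E\in\{r-1,r\}\). Since \(E\equiv 2M^*\pmod{65}\), the lunation is leap iff \(2M^*\bmod 65\in\{r-1,r\}\). Equivalently,
\[
(2M^*+\beta^*)\bmod 65\in\{r-1+\beta^*,\ r+\beta^*\}.
\]
This is the same computation as in the proof of Proposition~\ref{p:leap-index}(a), specialized and shifted to the epoch.

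The two leap rules coincide iff \(\{r-1+\beta^*,r+\beta^*\}=\{\tau,\tau+1\}\) in \(\Z/65\Z\). Because the two residues differ by \(1\) and \(65>2\), the only possible identification is \(\tau\equiv r-1+\beta^*\). This gives \(r\equiv \tau+1-\beta^*\equiv\gamma-1\pmod{65}\) by \eqref{eq:gamma_from_beta}. To pass from the congruence to an equality, one needs some nonempty set of triggering \(M^*\) values to actually occur. This holds because exactly two leap lunations occur per 67 (Lemma~\ref{l:mean_sun_cycle}(c)), so the trigger residues are genuinely realized. The converse direction is the same chain read backwards.

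The main obstacle is the passage from \(r\equiv\gamma-1\pmod{65}\) to \(r=\gamma-1\). Here \(r\) ranges over \(67\) values, so \(r\in\{65,66\}\) gives the same residue test as \(r\in\{0,1\}\). To exclude these two wrap-around slivers \(\theta\in[65/67,1)\), I would first test whether they truly produce the same leap lunations as functions of \(n\). They do not produce identical \(M^*\)-labelings: when \(\theta\ge 65/67\), the carry at \(k=0\) differs. I would then invoke the normalization of Lemma~\ref{l:mean_sun_cycle}(d), which fixes the representative of \(\sgang_0\) modulo \(1/12\) so that the epoch month label is the one prescribed in \S\ref{ss:leap_rule}. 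I expect this normalization to force the epoch lunation to be non-intercalary with the correct label, and hence to rule out \(r\ge 65\). If it does not, the honest fallback is to state \eqref{eq:p0_interval} with \(\gamma\) read in \(\{1,\dots,67\}\) as an alternative for \(\gamma\in\{1,2\}\). Settling this boundary bookkeeping is where I expect the real care to be needed.
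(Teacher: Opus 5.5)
Your computation is the paper's argument rewritten relative to the epoch. The paper invokes Proposition~\ref{p:leap-index}(a), normalizes $A_{n_0}=0$ so that $\alpha_0=12(\mu(n_0)-\sgang_0)\in[0,1)$, matches $2-\gamma\equiv\beta^*-\tau \pmod{65}$, and reads off $\lfloor 67\alpha_0\rfloor=\gamma-1$. One step of yours needs repair. The integer $E=67M^*-65k$ takes all $67$ values $r-66,\dots,r$, and $E\in\{r-66,r-65\}$ (the same residues as $r-1,r$) occurs exactly on the \emph{later} copy of a repeated label, which is a regular lunation. So ``lunation $n_0+k$ is leap iff $2M^*\bmod 65\in\{r-1,r\}$'' is false as a statement about lunations. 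It is true about labels: a label is repeated iff its earliest lunation is intercalary, and on the earliest lunation $f<65/67$, so $E\in\{r-64,\dots,r\}$, where residues mod $65$ determine $E$. Proposition~\ref{p:leap-index}(a) as stated in the paper has the same looseness. To then match the two rules, what you actually use is that $M^*$ takes every integer value and $2$ is a unit mod $65$, not merely that triggers exist.

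The real gap is the wrap-around, and your proposed fix cannot close it. Lemma~\ref{l:mean_sun_cycle}(d) only chooses the representative of $\sgang_0$ modulo $1/12$. That shifts every $A_n$ by the same integer, leaves $M^*$ and $\theta$ unchanged, and is already quotiented out in \eqref{eq:p0_interval}. Requiring the epoch lunation to be non-intercalary would select the wrong sliver. For $r\in\{0,1\}$ one has $f=\theta<2/67$ at $k=0$, so $n_0$ is the intercalary earlier copy of the repeated label $M^*=0$. For $r\in\{65,66\}$, $n_0$ is the regular later copy and $n_0-1$ is the intercalary one. In fact no argument can exclude $r\ge 65$, because the statement fails when $\gamma\in\{1,2\}$, i.e.\ when $\beta^*\bmod 65\in\{\tau,\tau+1\}$; this is exactly when the epoch label is itself repeated. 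Take $\theta\in[(\gamma-1)/67,\gamma/67)$ and $\theta'=\theta+65/67$. Then $M^*_{\theta'}(n_0+k)=M^*_{\theta}(n_0+k+1)$, hence the same repeated labels and the same rule \eqref{eq:trad_rule}, yet $\sgang_0'=\sgang_0-65/804$ lies outside the stated interval modulo $1/12$. Equivalently, calling the later copy of a repeated epoch label $n_0$ leaves $M^*$ and the rule unchanged but moves $\mu(n_0)$ by $65/804$. The paper's proof makes the same silent step when it identifies $\lfloor 67\alpha_0\rfloor+1\in\{1,\dots,67\}$ with the representative of $\gamma$ in $\{1,\dots,65\}$. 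A correct statement needs one of two additions. Either add the hypothesis $\gamma\ge 3$, which every tradition in Remark~\ref{r:sgang-trad} satisfies ($\gamma=57,38,8,53,50,60$). Or adopt the convention of \S\ref{a-ss:epoch} that $n_0$ is the \emph{earlier} copy of a repeated epoch label, which forces $r\in\{0,1\}$. Without either, your fallback is the right conclusion: for $\gamma\in\{1,2\}$ the admissible set is the union of the intervals for $\gamma$ and $\gamma+65$.
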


\begin{proof}
Write \(\alpha=12(s_0-\sgang_0)\), so \(\alpha\) is only defined modulo \(1\) (equivalently, \(\sgang_0\) is only
defined modulo \(1/12\)).
For \((\ru,\rv)=(65,67)\), Proposition~\ref{p:leap-index}\,(a) gives a leap test in the form
\begin{equation}\label{eq:leap_test_M}
  (67M^*(n)+\gamma_0^*)\bmod 65 \in \{0,1\},
\end{equation}
with
\[
  \gamma=\lfloor 67\alpha\rfloor+1,
  \qquad
  \gamma_0^*\equiv (67A_{n_0}+2-\gamma) \pmod{65}.
\]
Note that shifting \(\alpha\mapsto \alpha+k\) by an integer \(k\) (equivalently,
\(\sgang_0\mapsto \sgang_0-k/12\)) does not change the calendar:
it adds \(k\) to every \(A_n\), hence leaves \(M^*(n)=A_n-A_{n_0}\) unchanged.  Under this shift,
\(\gamma\) changes by \(67k\) and \(A_{n_0}\) changes by \(k\), so the combination
\(\gamma_0^*\equiv 67A_{n_0}+2-\gamma\) in \eqref{eq:leap_test_M} is invariant.  Therefore we may choose the
representative of \(\alpha\) modulo \(1\) so that \(A_{n_0}=0\).  
In this normalization we have
\[
  0=A_{n_0}=\Bigl\lfloor 12(\mu(n_0)-\sgang_0)\Bigr\rfloor
  =\lfloor \alpha_{0}\rfloor,
  \qquad
  \alpha_{0}:=12(\mu(n_0)-\sgang_0)=\alpha+n_0\frac{65}{67},
\]
so $\alpha_{0}\in[0,1)$.
Moreover \(\gamma_0^*\equiv2-\gamma\),
and the leap test becomes
\[
  (2M^*(n)+2-\gamma)\bmod 65 \in \{0,1\} ,
\]
where we have also taken into account \(67\equiv 2\pmod{65}\).
Comparing this with the normalized tradition rule \((2M^*(n)+\beta^*-\tau)\bmod 65\in\{0,1\}\) gives
\(\beta^*-\tau\equiv 2-\gamma\pmod{65}\), hence \(\gamma\equiv 2+\tau-\beta^*\pmod{65}\).
With $\gamma\in\{1,\dots,65\}$ chosen as in \eqref{eq:gamma_from_beta}, the relation
$\gamma=\lfloor 67\alpha_{0}\rfloor+1$ is equivalent to
\[
  \alpha_{0}\in\Bigl[\frac{\gamma-1}{67},\frac{\gamma}{67}\Bigr).
\]
Finally, $\alpha_{0}=12(\mu(n_0)-\sgang_0)$ gives \eqref{eq:p0_interval}.
\end{proof}

\begin{remark}
\label{r:sgang-trad}
Working modulo $1/12$, Proposition~\ref{prop:sgang-restore} yields the following allowed intervals
(of length $1/804$) for the phase parameter $\sgang_0$ of the principal traditions, using the epoch
data in Appendix~\ref{app:constants}.  We record the corresponding epoch lunation index $n_0$, since
the reconstruction is expressed in terms of $\mu(n_0)$.

\begin{itemize}
\item \emph{Bhutan:} Here $n_0=0$ and $\mu(n_0)\equiv s_0\equiv \frac{1}{67}$.
The leap test uses $\beta^*=2$ with target $\{57,58\}$, giving
\[
  \sgang_0 \in \Bigl(\frac{22}{804},\frac{23}{804}\Bigr] \pmod{\frac1{12}}
  \qquad (\text{i.e. } 9.85^\circ \text{ to } 10.29^\circ).
\]
Janson \cite[C.6]{janson} states that the value $\sgang_0=10^\circ$ is in fact given in the original text by Lhawang Lodro, 
even though Janson's own calculation somehow yields $\sgang_0=9^\circ$.

\item \emph{Mongol:} Here $n_0=0$ and $\mu(n_0)\equiv s_0\equiv \frac{397}{402}$.
The leap test uses $\beta^*=10$ with target $\{46,47\}$, giving
\[
  \sgang_0 \in \Bigl(\frac{19}{804},\frac{20}{804}\Bigr] \pmod{\frac1{12}}
  \qquad (\text{i.e. } 8.51^\circ \text{ to } 8.96^\circ).
\]
Janson \cite{janson} mentions the value $\sgang_0=8\frac23^\circ$ as a possible choice.

\item \emph{Tsurphu:} Table~\ref{tab:epoch-constants} lists two common epoch choices, and in both cases
we have $n_0=0$ (so $\mu(n_0)\equiv s_0$).
For E1732 ($s_0\equiv -\frac{5983}{108540}$, $\beta^*=59$, target $\{0,1\}$) one obtains
\[
  \sgang_0 \in \Bigl(\frac{991}{54270},\frac{2117}{108540}\Bigr] \pmod{\frac1{12}}
  \qquad (\text{i.e. } 6.57^\circ \text{ to } 7.02^\circ).
\]
For E1852 ($s_0\equiv \frac{23}{27135}$, $\beta^*=14$, target $\{0,1\}$) one obtains the same
interval modulo $1/12$, as expected since these epoch choices give identical leap-month placements.
According to \cite{janson}, the value $\sgang_0=7^\circ$ was found by Henning as a good candidate.

\item \emph{Phugpa:} Two commonly used epochs illustrate the general formula and the ``epoch bookkeeping''
subtlety.  For E1987 we have $n_0=0$ and $\mu(n_0)\equiv s_0\equiv 0$; the leap test uses $\beta^*=0$
with target $\{48,49\}$, giving
\[
  \sgang_0 \in \Bigl(\frac{17}{804},\frac{18}{804}\Bigr] \pmod{\frac1{12}}
  \qquad (\text{i.e. } 7.61^\circ \text{ to } 8.06^\circ).
\]
This is consistent with the value $\sgang_0=8^\circ$ given in \cite{janson}.
For E1927 the published epoch month is $M_0=3$, but the corresponding epoch lunation index is $n_0=1$.
Appendix~\ref{app:constants} gives $s_0\equiv \frac{749}{804}$ at $n=0$, hence with $s_1=\frac{65}{804}$ we have
\[
\mu(n_0)=\mu(1)\equiv s_0+s_1 \equiv \frac{10}{804}\pmod{1}.
\]
The leap test uses $\beta^*=55$ with the same target $\{48,49\}$ (so $\tau=48$), hence
$\gamma\equiv 2+\tau-\beta^*\equiv 60\pmod{65}$ and
\[
\sgang_0 \in \Bigl(\mu(n_0)-\frac{60}{804},\ \mu(n_0)-\frac{59}{804}\Bigr]
=
\Bigl(-\frac{50}{804},-\frac{49}{804}\Bigr]
=
\Bigl(\frac{17}{804},\frac{18}{804}\Bigr]\pmod{\frac1{12}}.
\]
Thus the recovered Phugpa interval agrees with the E1987 computation, as it must.
\end{itemize}
\end{remark}

\begin{remark}\label{r:d0-s0}
While Proposition~\ref{prop:sgang-restore} establishes $d_0$ as the theoretical control for seasonal alignment, historical practice suggests that calendar designers often pragmatically tuned the epoch mean sun $s_0$ directly. This adjustment likely served to counteract accumulated precessional drift or to align the civil year with specific observational norms.

We can quantify this tuning by comparing the actual epoch values $s_0$ (Table~\ref{tab:epoch-constants-806}) against the theoretical target intervals derived in Remark~\ref{r:sgang-trad}. Let $\delta \approx \text{mid}(d_0) - s_0 \pmod{1/12}$ represent the ``phase lag'' of the chosen epoch sun relative to the ideal target. Converting this lag into days ($365.25 \times \delta$), we obtain:
\begin{itemize}
    \item {Tsurphu:} $\delta \approx 0.0006 \implies \approx 0.2$ days delay.
    \item {Mongol:} $\delta \approx 0.0007 \implies \approx 0.3$ days delay.
    \item {Bhutanese:} $\delta \approx 0.0081 \implies \approx 3.0$ days delay.
    \item {Phugpa:} $\delta \approx 0.0168 \implies \approx 6.1$ days delay.
\end{itemize}
A near-zero $\delta$ (as seen in Tsurphu and Mongol) implies the epoch sun was set to the earliest permissible phase, resulting in the earliest possible Losar dates. A larger positive $\delta$ effectively delays the solar phase relative to the calendar, causing the New Year to fall later in the season.

This parameter tuning correlates perfectly with the macroscopic behavior observed in Figure~\ref{fig:losar}. 
The Tsurphu and Mongol traditions, having virtually identical and minimal lags ($\approx 0.2$--$0.3$ days), cluster together in the earliest seasonal band. The Bhutanese tradition occupies a distinct band shifted approximately 3 days later, matching the calculated difference. Finally, the Phugpa tradition appears in the latest band, consistent with its substantial initial design lag of $\approx 6$ days relative to the Tsurphu baseline.
\end{remark}

\subsection{First-anomaly models and the inverse day map}
\label{ss:day_models}

We now turn to the second operational rule from Section~\ref{s:tib_principles}, namely the day-counting map of
\S\ref{ss:day_calculation}: given a discrete label (lunation index $n$ and lunar day index $d$), one outputs an
approximate boundary time via a linear formula plus two table lookups.
In this section we explain why this is a natural design from the
viewpoint of a first-anomaly celestial model.

Throughout we work in \emph{turns} (revolutions) and treat longitudes as real-valued phases.
Reducing a phase modulo $1$ recovers the corresponding angle in $\R/\Z$ (with $1$~turn $=360^\circ$).
The main point is conceptual: in a modern ``forward'' model, lunar day boundaries are defined implicitly by an
elongation equation, whereas the siddh\=anta-style computation is built as an \emph{inverse map} that
approximates the solution time directly from the discrete label, essentially via one first-order correction step (which could be called a Picard step, cf. Appendix~\ref{app:picard}).
We then compare the Tibetan constants and tables with the corresponding modern mean quantities.

\subsubsection{Modern first-anomaly models}\label{sss:modern_first_anomaly}

Let $\lambda_{\moon}(t)$ and $\lambda_{\sun}(t)$ denote (real-valued) geocentric ecliptic longitudes at
time $t$, measured in days.  The corresponding physical angles are $\lambda_{\moon}(t)\bmod 1$ and
$\lambda_{\sun}(t)\bmod 1$, but we keep the real phases as primary quantities.
Define the (real) elongation phase
\[
  E(t)=\lambda_{\moon}(t)-\lambda_{\sun}(t)\in\R.
\]
A \emph{lunar day boundary} is then specified by an integer boundary index $k\in\Z$ via
\begin{equation}\label{eq:tithi_root_turns}
  E(t)=\frac{k}{30},
  \qquad k\in\Z,
\end{equation}
since one lunar day corresponds to $12^\circ=1/30$~turn of elongation.
Reducing \eqref{eq:tithi_root_turns} modulo $1$ recovers the usual congruence
$E(t)\equiv d/30\pmod 1$ with $d\equiv k\pmod{30}$.
In a modern forward computation one specifies the functions $\lambda_{\moon}(t)$ and $\lambda_{\sun}(t)$, and then
solves \eqref{eq:tithi_root_turns} for each $k$ (equivalently, for each $d$ together with the chosen branch).

A minimal ``first-anomaly'' (single-inequality) model writes each longitude as a uniform mean motion plus
a small periodic correction (equation of center):
\begin{equation}\label{eq:first_anom_forward}
  \lambda_i(t)=\bar\lambda_i(t)+\varepsilon_i\sin\bigl(2\pi A_i(t)\bigr)+O(\varepsilon_i^2),
  \qquad i\in\{\moon,\sun\}.
\end{equation}
Here $\bar\lambda_i(t)=\lambda_{i,0}+\omega_i t$ is the linear (mean-motion) part and
$A_i(t)=A_{i,0}+\Omega_i t$ is a linear anomaly phase, with $|\varepsilon_i|\ll 1$.
We regard $\lambda_{i,0},A_{i,0},\varepsilon_i$ as angles and $\omega_i,\Omega_i$ as angular velocities.
In the formulas we use turns, so $\sin(2\pi A_i(t))$ is understood with $A_i(t)$ measured in turns.
For readability, Table~\ref{tab:j2000-firstanom} lists degree representatives at the epoch J2000.0,
with $t$ measured in days from $\mathrm{JD}=2451545.0$ (Terrestrial Time, TT). 
Note that $\sin(2\pi A_i)$ should be understood as $\sin((\pi/180)A_i)$ when $A_i$ is measured in degrees.
The amplitude $\varepsilon_i$ is the dominant (single-harmonic) equation-of-center magnitude in this
first-anomaly truncation.

\begin{table}[ht]
\begin{center}
\renewcommand{\arraystretch}{1.15}
\begin{tabular}{lrrrrr}
\hline
$i$ &
$\bar\lambda_{i,0}$ (deg) & $\omega_i$ (deg/day) &
$A_{i,0}$ (deg) & $\Omega_i$ (deg/day) &
$\varepsilon_i$ (deg)\\
\hline
$\sun$ &
$280.46645$ & $0.9856473602$ &
$357.5291092$ & $0.9856002800$ &
$\approx 1.915$\\
$\moon$ &
$218.3164477$ & $13.1763965268$ &
$134.9633964$ & $13.0649929509$ &
$\approx 6.29$\\
\hline
\end{tabular}
\caption{J2000.0 degree representatives for the parameters in \eqref{eq:first_anom_forward}.}
\label{tab:j2000-firstanom}
\end{center}
\end{table}

With these forward first-anomaly models in hand, we now derive a corresponding \emph{inverse}
approximation for lunar day-boundary times.  For conceptual clarity we allow a real boundary level
$x\in\R$ and define the boundary-time map $t=t(x)$ implicitly by
\begin{equation}\label{eq:tithi_root_lift}
  E(t)=x,\qquad x\in\R .
\end{equation}
For the calendar one takes $x=k/30$ with $k\in\Z$; writing $d\equiv k\pmod{30}$ recovers the usual
lunar day index, and the additional label (lunation index $n$) selects the intended \emph{branch} of the
inverse map.

Under \eqref{eq:first_anom_forward} the elongation has the split
\begin{equation}\label{eq:E_split}
  E(t)=\bar E(t)+\varepsilon_{\moon}\sin\bigl(2\pi A_{\moon}(t)\bigr)
        -\varepsilon_{\sun}\sin\bigl(2\pi A_{\sun}(t)\bigr) + O(\varepsilon^2),
\end{equation}
where $\bar E(t)=\bar\lambda_{\moon}(t)-\bar\lambda_{\sun}(t)$ is linear:
\begin{equation}\label{eq:mean-elong}
  \bar E(t)=\bar E(0) + \omega t,
  \qquad \omega = \omega_{\moon}-\omega_{\sun}>0.
\end{equation}
Notice that $\omega=\omega_{\moon}-\omega_{\sun}$ is the \emph{mean elongation rate}.
In particular $1/\omega$ is the mean synodic month.
Thus the \emph{mean} inverse is explicit: for each $x\in\R$ set
\begin{equation}\label{eq:mean_root}
  t_0(x)=\frac{x-\bar E(0)}{\omega}
  \qquad\Longleftrightarrow\qquad
  \bar E\bigl(t_0(x)\bigr)=x .
\end{equation}
We now correct $t_0(x)$ by one first-order step (cf. Appendix~\ref{app:picard}).

\begin{lemma}\label{lem:one_step_inverse}
Assume \eqref{eq:E_split} with $|\varepsilon_{\moon}|,|\varepsilon_{\sun}|\ll 1$, and suppose that $E:\R\to\R$ is strictly increasing and hence invertible.
Then
\begin{equation}\label{eq:one_step_time}
  t(x)=t_0(x)
  -\frac{\varepsilon_{\moon}}{\omega} \sin\bigl(2\pi A_{\moon}(t_0(x))\bigr)
        +\frac{\varepsilon_{\sun}}{\omega} \sin\bigl(2\pi A_{\sun}(t_0(x))\bigr)        
  +O(\varepsilon^2),
\end{equation}
where $\varepsilon=\max\{|\varepsilon_{\moon}|,|\varepsilon_{\sun}|\}$.
\end{lemma}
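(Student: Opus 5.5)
The plan is a one-step fixed-point (Picard) argument on the implicit equation \eqref{eq:tithi_root_lift}. First I rewrite that equation as a fixed-point problem using the split \eqref{eq:E_split} and the linearity of $\bar E$ in \eqref{eq:mean-elong}. Put
\[
g(t)=\varepsilon_{\moon}\sin\bigl(2\pi A_{\moon}(t)\bigr)-\varepsilon_{\sun}\sin\bigl(2\pi A_{\sun}(t)\bigr),
\]
so that $E(t)=\bar E(0)+\omega t+g(t)+R(t)$ with $R=O(\varepsilon^2)$ uniformly in $t$. The equation $E(t)=x$ then becomes
\[
t=t_0(x)-\frac{g(t)+R(t)}{\omega},
\]
with $t_0(x)$ as in \eqref{eq:mean_root}. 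Strict monotonicity of $E$ gives exactly one solution $t(x)$, so no branch issue arises.

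Next I get an a priori bound on the displacement. Since $|g|\le 2\varepsilon$ and $|R|=O(\varepsilon^2)$, the fixed-point equation immediately gives
\[
|t(x)-t_0(x)|\le \frac{2\varepsilon+O(\varepsilon^2)}{\omega}=O(\varepsilon),
\]
uniformly in $x$. Here $\omega$ is a fixed positive constant, the mean elongation rate, and it is not being made small.

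Then I replace $g(t(x))$ by $g(t_0(x))$. The function $g$ is Lipschitz with constant
\[
2\pi\bigl(|\varepsilon_{\moon}|\,|\Omega_{\moon}|+|\varepsilon_{\sun}|\,|\Omega_{\sun}|\bigr)=O(\varepsilon),
\]
because each anomaly phase $A_i$ is linear in $t$ with fixed rate $\Omega_i$. Combining this with the $O(\varepsilon)$ displacement bound gives
\[
|g(t(x))-g(t_0(x))|=O(\varepsilon)\cdot O(\varepsilon)=O(\varepsilon^2).
\]
Substituting into the fixed-point equation, and absorbing $R/\omega$ into the error, yields $t(x)=t_0(x)-g(t_0(x))/\omega+O(\varepsilon^2)$. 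Expanding $g$ gives exactly \eqref{eq:one_step_time}, with the correct signs: minus for the Moon, plus for the Sun.

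The main obstacle is being precise about what $O(\varepsilon^2)$ means. The constants $\omega$ and $\Omega_i$ must be held fixed while $\varepsilon\to0$, and the remainder in \eqref{eq:E_split} must be understood as uniform in $t$; both are implicit in the statement. I would state these uniformity assumptions explicitly at the start.

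The monotonicity hypothesis is used only to guarantee that $t(x)$ is well defined. For small $\varepsilon$ it would in any case follow if the derivative of the remainder $R$ is also $O(\varepsilon^2)$, since then $E'=\omega+O(\varepsilon)>0$. I would point out that this is just one Picard step, cf.\ Appendix~\ref{app:picard}: iterating the map $t\mapsto t_0-g(t)/\omega$ once, starting from $t_0$, produces exactly the formula above.
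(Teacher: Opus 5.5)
Your proposal is correct and follows the same approach as the paper's proof. Both write $t=t_0(x)+\delta$, establish $\delta=O(\varepsilon)$, and then replace the sine arguments at $t_0+\delta$ by those at $t_0$ at a cost of $O(\varepsilon\,\delta)=O(\varepsilon^2)$, using linearity of the anomaly phases. One improvement over the paper: you derive $\delta=O(\varepsilon)$ directly from the fixed-point equation via $|g|\le 2\varepsilon$, whereas the paper only asserts it by calling $E^{-1}$ a ``smooth perturbation of the mean inverse''.
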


\begin{proof}
Write $t=t_0(x)+\delta$.  Since $\bar E(t_0(x))=x$, the boundary equation $E(t)=x$ becomes
\[
0=E(t_0+\delta)-x=\omega\delta
+\varepsilon_{\moon}\sin\bigl(2\pi A_{\moon}(t_0+\delta)\bigr)
-\varepsilon_{\sun}\sin\bigl(2\pi A_{\sun}(t_0+\delta)\bigr)
+O(\varepsilon^2) .
\]
Because $t(x)=E^{-1}(x)$ is a smooth perturbation of the mean inverse, one has $\delta=O(\varepsilon)$.
Since each $A_i(t)$ is linear, $A_i(t_0+\delta)=A_i(t_0)+O(\delta)$, and therefore
\[
\sin\bigl(2\pi A_i(t_0+\delta)\bigr)=\sin\bigl(2\pi A_i(t_0)\bigr)+O(\delta)
=\sin\bigl(2\pi A_i(t_0)\bigr)+O(\varepsilon).
\]
Substituting this into the previous display gives
\[
\omega\delta
+\varepsilon_{\moon}\sin\bigl(2\pi A_{\moon}(t_0)\bigr)
-\varepsilon_{\sun}\sin\bigl(2\pi A_{\sun}(t_0)\bigr)
=O(\varepsilon^2),
\]
hence \eqref{eq:one_step_time}.
\end{proof}

\begin{remark}\label{rem:monotone_and_inverse}
For the minimal first-anomaly model \eqref{eq:first_anom_forward}, with the J2000.0 parameters from Table~\ref{tab:j2000-firstanom} yields the uniform estimate
\[
  E'(t)\ \ge\ 12.19075 - 1.46723 \ \approx\ 10.72\qquad\text{deg/day},
\]
hence $E$ is strictly increasing on $\R$ with a wide margin and therefore globally invertible.

Moreover, inserting the explicit mean inverse \eqref{eq:mean_root} into \eqref{eq:one_step_time}
and rewriting $A_i(t_0(x))$ as an affine function of $x$ yields the following convenient closed
inverse template:
\begin{equation}\label{eq:explicit_inverse_constants}
  t(x)\approx \widehat m_0+\widehat m_1 x
   -\widehat b_{\moon}\sin\bigl(2\pi(\widehat a_0+\widehat a_1 x)\bigr)
   +\widehat b_{\sun}\sin\bigl(2\pi(\widehat\san_0+\widehat\san_1 x)\bigr),
\end{equation}
where we recall $\omega=\omega_{\moon}-\omega_{\sun}>0$, and
\[
  \widehat m_1=\frac{1}{\omega},
  \qquad
  \widehat m_0=-\frac{\bar E(0)}{\omega},
  \qquad
  \widehat b_i=\frac{\varepsilon_i}{\omega},
  \qquad
  \widehat a_1=\frac{\Omega_{\moon}}{\omega},
  \qquad
  \widehat \san_1=\frac{\Omega_{\sun}}{\omega},
\]
\[
  \widehat a_0=A_{\moon,0}-\widehat a_1\,\bar E(0),
  \qquad
  \widehat \san_0=A_{\sun,0}-\widehat s_1\,\bar E(0).
\]
Here $\widehat a_0,\widehat a_1,\widehat \san_0,\widehat \san_1$ are understood as phases in turns (i.e.\ modulo $1$).
Table~\ref{tab:j2000-inverse-constants} lists numerical values for these derived parameters at J2000.0.
\end{remark}

\begin{table}[ht]
\begin{center}
\renewcommand{\arraystretch}{1.15}
\begin{tabular}{lrl}
\hline
constant & value & unit/comment \\
\hline
$\omega=\omega_{\moon}-\omega_{\sun}$ & $12.1907491666$ & deg/day \\
$\widehat m_1=1/\omega$ & $29.5305887304$ & days/turn \\
$\widehat m_0=-\bar E(0)/\omega$ & $5.0981282153$ & days \\
$\widehat b_{\moon}=\varepsilon_{\moon}/\omega$ & $0.5159650087$ & days \\
$\widehat b_{\sun}=\varepsilon_{\sun}/\omega$ & $0.1570863262$ & days \\
$\widehat a_1=\Omega_{\moon}/\omega$ & $1.0717137044$ & dimensionless \\
$\widehat \san_1=\Omega_{\sun}/\omega$ & $0.0808482126$ & dimensionless \\
$\widehat a_0$ & $201.5704056^\circ\ (0.5599177933)$ & deg (turns) \\
$\widehat \san_0$ & $2.5538258^\circ\ (0.0070939605)$ & deg (turns) \\
\hline
\end{tabular}
\caption{Derived constants in the explicit inverse approximation \eqref{eq:explicit_inverse_constants},
computed from Table~\ref{tab:j2000-firstanom} (J2000.0, $\mathrm{JD}=2451545.0$ TT).}
\label{tab:j2000-inverse-constants}
\end{center}
\end{table}

The inverse model \eqref{eq:explicit_inverse_constants} is the structural template behind the Tibetan day algorithm:
\emph{(i)} compute an explicit mean solution $t_0$ from $(d,n)$, \emph{(ii)} evaluate two periodic correction terms
at linear phases, and \emph{(iii)} add them with fixed coefficients.

\subsubsection{The Tibetan scheme as an explicit inverse}\label{sss:tibetan_inverse_scheme}

In the Tibetan computation the input is a lunation index $n$ and a lunar day index $d\in\{0,1,\dots,30\}$,
and the output is an approximate boundary time in days.  The mean solution is a linear map
\begin{equation}\label{eq:mean_date_model_new}
  t_0=\texttt{mean\_date}(d,n)=m_0+n m_1+d m_2,
\end{equation}
with $m_1$ the mean synodic month length (in days) and $m_2=m_1/30$ the mean lunar day length.
The anomalies (linear phases) are likewise computed linearly from $(d,n)$:
\begin{equation}\label{eq:anom_linear_new}
  A_{\moon}(d,n)=a_0+n a_1+d a_2,\qquad
  A_{\sun}(d,n)= -\textstyle\frac14 + s_0 + n s_1 + d s_2,
\end{equation}
where $A_{\moon},A_{\sun}$ are taken modulo $1$ (turns), and $s_2=s_1/30$.
The correction terms are not literal trigonometric functions but \emph{odd periodic lookup tables} with sine-like symmetries,
evaluated by linear interpolation between integer arguments (Figure~\ref{fig:equ-tables}):
\begin{equation}\label{eq:tables_as_functions_new}
\begin{split}
  \texttt{moon\_equ}(d,n)&=\texttt{moon\_tab}\bigl(28 A_{\moon}(d,n)\bigr),\\
  \texttt{sun\_equ}(d,n)&=\texttt{sun\_tab}\bigl(12 A_{\sun}(d,n)\bigr).
\end{split}
\end{equation}
Finally the ``true'' boundary time is defined by
\begin{equation}\label{eq:true_date_model_new}
  t_1=\texttt{true\_date}(d,n)
  = t_0+\frac{1}{60}\,\texttt{moon\_equ}(d,n)-\frac{1}{60}\,\texttt{sun\_equ}(d,n).
\end{equation}

\begin{figure}[ht]
\centering

\begin{minipage}{0.48\textwidth}
\centering
\begin{tikzpicture}
\begin{axis}[
  width=\textwidth,
  grid=both,
  ymin=-26, ymax=26,
  xmin=0, xmax=28,
]
\addplot[
  domain=0:28,
  samples=400,
  dotted,
  thick,
  color=black!35,
  mark=none
] {25*sin(deg(2*pi*x/28))};

\addplot[mark=none] coordinates {
(0,0) (1,5) (2,10) (3,15) (4,19) (5,22) (6,24) (7,25)
(8,24) (9,22) (10,19) (11,15) (12,10) (13,5) (14,0)
(15,-5) (16,-10) (17,-15) (18,-19) (19,-22) (20,-24) (21,-25)
(22,-24) (23,-22) (24,-19) (25,-15) (26,-10) (27,-5) (28,0)
};
\end{axis}
\end{tikzpicture}
\end{minipage}\hfill
\begin{minipage}{0.48\textwidth}
\centering
\begin{tikzpicture}
\begin{axis}[
  width=\textwidth,
  grid=both,
  ymin=-12, ymax=12,
  xmin=0, xmax=12,
]
\addplot[
  domain=0:12,
  samples=400,
  dotted,
  thick,
  color=black!35,
  mark=none
] {11*sin(deg(2*pi*x/12))};

\addplot[mark=none] coordinates {
(0,0) (1,6) (2,10) (3,11) (4,10) (5,6) (6,0)
(7,-6) (8,-10) (9,-11) (10,-10) (11,-6) (12,0)
};
\end{axis}
\end{tikzpicture}
\end{minipage}

\caption{Lookup-table corrections in \eqref{eq:tables_as_functions_new}.
Left: lunar table $\moonTab(u)$ with $u=28A_{\rm moon}$; right: solar table $\sunTab(v)$ with $v=12A_{\rm sun}$.
Solid curves show the piecewise-linear interpolation used in the traditional computation; dotted curves show the reference sines
$25\sin(2\pi u/28)$ and $11\sin(2\pi v/12)$ for comparison.
The maximum deviation of the piecewise-linear surrogate from the corresponding sine is about $0.85$ (3.4\%) for the lunar table and
$0.50$ (4.5\%) for the solar table.}
\label{fig:equ-tables}
\end{figure}
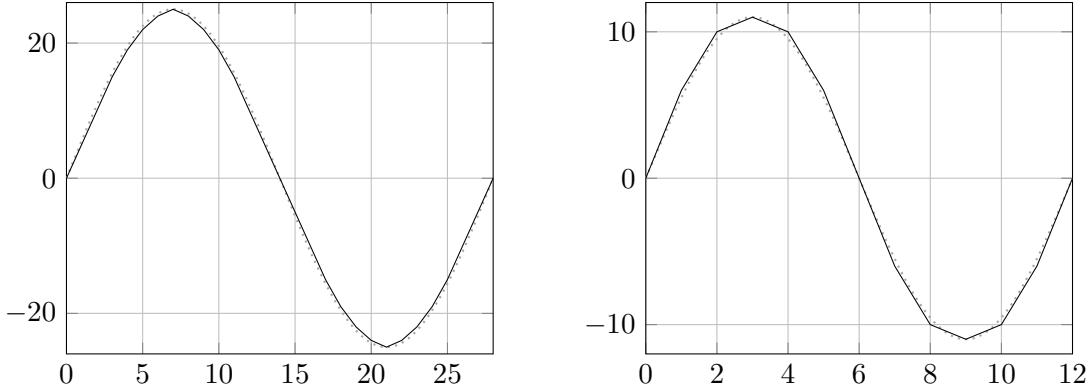

Comparing \eqref{eq:true_date_model_new} with the one-step inverse \eqref{eq:one_step_time}, we may read the
Tibetan design as follows.
\begin{itemize}
\item The linear map \eqref{eq:mean_date_model_new} provides an explicit \emph{mean} boundary time $t_0$ (a closed-form initial guess).
\item The phases \eqref{eq:anom_linear_new} provide the anomaly phases $A_{\rm moon}(t_0)$ and $A_{\rm sun}(t_0)$ (taken modulo one turn).
\item The tables \eqref{eq:tables_as_functions_new} supply explicit piecewise-linear surrogates for the sine corrections in \eqref{eq:one_step_time},
via linear interpolation between integer arguments.
\item The fixed coefficient $1/60$ converts table-units to the time correction in the traditional mixed-radix convention
(equivalently, it encodes an effective inverse scale factor together with the chosen units).
\end{itemize}
Thus the Tibetan day computation is naturally an \emph{inverse approximation} (implemented by a single Picard step)
to the lunar day boundary equation \eqref{eq:tithi_root_turns}:
it maps the discrete label $(d,n)$ directly to an approximate boundary time.

\begin{remark}\label{rem:why_28_12}
The factors $28$ and $12$ in the table evaluations (e.g.\ $u=28A_{\moon}$ and $v=12A_{\sun}$)
should not be read as ``periods in days''.  They are \emph{scaling choices} for the anomaly phases,
chosen so that the periodic ``equations'' can be tabulated on a convenient integer grid and then
evaluated by linear interpolation.

For the Moon, write the (first) anomaly phase in the standard linear form
\[
A_{\moon}(d,n)=a_0+n a_1+d a_2,
\qquad u(d,n)=28A_{\moon}(d,n).
\]
If $a_2=1/28$, then $u(d+1,n)-u(d,n)=1$, so as $d$ advances the table argument is sampled at
\emph{unit-spaced} points.  The initial value $u(0,n)$ need not be an integer, so one does not
``hit the knots'' of the table in general; rather, the fractional part of $u(d,n)$ is constant
within a fixed lunation, and each successive day evaluates the piecewise-linear table at the
same relative position inside the next unit interval.

Across a lunation boundary, however, the indices change by $(d,n)\mapsto(d-30,n+1)$, so the net
increment of the anomaly phase is $\Delta A_{\moon}=a_1+30a_2$.  A natural alternative normalization is
\[
a_2=\frac{1+a_1}{30},
\]
which enforces $\Delta A_{\moon}=1+a_1$, i.e.\ one full anomaly cycle plus a small excess per synodic month,
and thus yields a ``smoother'' evolution of the anomaly phase across month boundaries.  This choice trades the
exact unit-step property $28a_2=1$ for smoother month-to-month phase bookkeeping; in practice, both
normalizations lead to extremely close boundary times, with differences coming only from how the
piecewise-linear proxy is sampled/interpolated.

For the Sun, the scale $12$ plays an analogous role: the solar anomaly phase varies much more slowly,
the solar equation has smaller amplitude, and a coarser tabulation (with the standard sine-like symmetries)
is adequate.
\end{remark}

\begin{remark}\label{rem:numerical_comparisons_constants}
Comparing the traditional parameters against modern benchmarks (see Appendix~\ref{app:modern-periods}), the mean lunation $m_1$ drifts by only $\approx -0.16\,\mathrm{s}$ per month relative to $S_{\mathrm{syn}}$. However, the implied solar year $Y_{\mathrm{model}} = m_1/s_1 \approx 365.27\,\mathrm{d}$ accumulates a significant seasonal error of $\approx 41$ minutes per year ($\approx 2.85$ days per century). The derived anomalistic month deviates from $S_{\mathrm{anom}}$ by $\approx 3.6\,\mathrm{s}$. 
Finally, the equation of center corrections in \eqref{eq:true_date_model_new} have amplitudes of roughly $10$~hours (Moon) and $4.4$ hours (Sun), consistent with the expected magnitude of first-order anomalies.
\end{remark}

\subsubsection{Periodicity and tie cases in rational inverse schemes}
\label{sss:ties}

All boundary times produced by the traditional day-counting rules are obtained by combining affine
phases in $(d,n)$ with exact table lookups (piecewise-affine interpolation on rational knots).
Consequently, every boundary time $t(d,n)$ is a \emph{rational number}.  This raises the logical
possibility of a \emph{tie}: a computed boundary may land \emph{exactly} on a civil-day boundary.
Such a tie would force the calendar rules to invoke a tie-break convention.
In the inheritance formulation this tie-break is exactly the rule for labeling a sunrise that coincides with a lunar-day end ($\tau_f=b_k$):
under our right-closed lunar-day intervals $(b_{k-1},b_k]$, the sunrise post inherits the label $k$, hence the following civil day carries $k$.

We prove that tie cases do not occur on human time scales for any of the principal traditions:
Tsurphu has no ties at all, and the remaining principal traditions admit ties only at intervals of
$L=23{,}873{,}976$ lunar months (about $1.9$ million years).

The key structural point is that tie questions are \emph{finite}.  Writing the true boundary time in
the standard “inverse-scheme” form
\[
  t(d,n)\;=\;M_d(n)\;+\;C_d(n),
\]
where $M_d(n)$ is the affine (identity) contribution and $C_d(n)$ is the correction built from
periodic table terms, Appendix~\ref{app:congruence} shows that for each fixed $d$ the fractional parts of $M_d$ and $C_d$
repeat with explicit periods in $n$.  In particular, for the principal siddh\=anta motions one has
\[
  P_{\rm md}=\den(m_1)=5656,\qquad
  P_{\rm sun}=\den(s_1)=804,\qquad
  P_{\rm moon}=\den(a_1)=3528,
\]
and
\[
  P_{\rm corr}=\lcm(P_{\rm sun},P_{\rm moon})=236376,
\]
so that
\[
  M_d(n+P_{\rm md})\equiv M_d(n)\pmod 1,
  \qquad
  C_d(n+P_{\rm corr})\equiv C_d(n)\pmod 1.
\]
Hence the tie condition
\[
  t(d,n)\in\Z
  \qquad\Longleftrightarrow\qquad
  M_d(n)+C_d(n)\equiv 0\pmod 1
\]
reduces to a finite congruence problem on the product of the two period rings.  Let
\[
  L=\lcm(P_{\rm md},P_{\rm corr})=23873976,\qquad g=\gcd(P_{\rm md},P_{\rm corr})=56.
\]
Then for each fixed $d$ the set of ties is a (possibly empty) union of residue classes $n\bmod L$.

We implement this finite reduction with an exact meet-in-the-middle algorithm.
Fix a tradition and fix $d$.  Precompute the two periodic arrays
\[
  \mathcal M_d[s]=\{M_d(s)\}\in\Q/\Z,\qquad s=0,\dots,P_{\rm md}-1,
\]
\[
  \mathcal C_d[r]=\{C_d(r)\}\in\Q/\Z,\qquad r=0,\dots,P_{\rm corr}-1.
\]
We build a hash map for the \emph{larger} array $\mathcal C_d[r]$, so that the $P_{\rm md}$ queries
coming from $\mathcal M_d[s]$ can be answered in $O(1)$ average time.  To incorporate the Chinese
remainder compatibility condition, we split this hash map by the residue class $r\bmod g$.  
We then scan $s=0,\dots,P_{\rm md}-1$ and, for each $s$, look up
the target value $-\mathcal M_d[s]$ in the \emph{bucket indexed by $s\bmod g$}.  Each match returns some
$r$ with
\[
  \mathcal M_d[s]+\mathcal C_d[r]\equiv 0 \pmod 1
  \qquad\text{and}\qquad
  r\equiv s \pmod g,
\]
together with the congruences
\[
  n\equiv r \pmod{P_{\rm corr}},
  \qquad
  n\equiv s \pmod{P_{\rm md}}.
\]
By the Chinese remainder theorem, every compatible pair $(r,s)$ lifts to a unique residue class
$n\equiv n_0\pmod L$, and the resulting set of $n_0$ is exactly
the set of tie classes for that $d$.  The computation is performed in exact rational arithmetic (no
floating point), so the reported tie classes are mathematically rigorous.

The periodicity reduction and the two congruence prefilters used below are proved in
Appendix~\ref{app:congruence}: the finiteness/periodicity mechanism and the general ``single-table''
localisation principle are in Proposition~\ref{prop:periodicity_prime_obstruction}, the principal
period computation is in Remark~\ref{rem:princ-period}, and the $101$-- and $67$--local prefilters are
in Remarks~\ref{rem:local-101} and~\ref{rem:local-67}, with the conversion to explicit residue classes
via Proposition~\ref{prop:p-local-affine}.

The two appendix filters can be applied as prefilters, further shrinking the practical workload.
The $101$-filter (Remark~\ref{rem:local-101}) forces $n$ into a single residue class modulo $101$ for
each fixed $d$, and the $67$-filter (Remark~\ref{rem:local-67}) forces $n$ into a single residue class
modulo $67$.  Together these restrict $n$ to one class modulo $6767=67\cdot 101$, reducing the
effective search over one full $L$-period to only $L/6767=3528$ candidates per $d$.
In practice, this means the hash over $\mathcal C_d[r]$ is built only on the admissible residue class
$r\equiv \nu_d\pmod{67}$, and the scan over $s$ is restricted to $s\equiv n_d\pmod{101}$.

In all computations we allow $d$ to range over $0,1,\dots,30$, where $d=0$ denotes the month boundary
(Janson uses $d=0$ for “beginning of month” values), and where $(n,30)$ is equivalent to $(n+1,0)$ at
the next month boundary.  Running the above algorithm for the four principal traditions yields the
tie residue classes shown in Table~\ref{tab:tie-fast}.  The Tsurphu tradition produces \emph{no} ties.
By contrast, Phugpa, Mongolian, and Bhutan each produce exactly six tie residue classes, occurring at
the same set of day-indices $d\in\{0,4,10,12,20,24\}$ but at different residues $n\bmod L$.

Finally, note that each residue class $n\equiv n_0\pmod L$ corresponds to an arithmetic progression of
tie events separated by $L$ lunar months (about $1.9$ million years).  For the principal traditions
exhibiting ties, the nearest representatives to the standard eighteenth-century epochs occur tens of
thousands of years away (and the remaining representatives are hundreds of thousands to millions of
years away), so tie-breaking is absent on historical time scales even when ties exist mathematically.

\begin{table}[ht]
\centering
\renewcommand{\arraystretch}{1.12}
\begin{tabular}{lcp{0.54\linewidth}}
\hline
Tradition & epoch (JD) & tie classes $(d,\;n\bmod L)$\\
\hline
Phugpa & $2359237$ &
$(0,16267085)$;\;
$(4,3674149)$;\;
$(10,12833960)$;\;
$(12,7092386)$;\;
$(20,16221971)$;\;
$(24,1064342)$\\
Tsurphu & $2353745$ & none\\
Mongolian & $2359237$ &
$(0,889286)$;\;
$(4,12170326)$;\;
$(10,21330137)$;\;
$(12,15588563)$;\;
$(20,844172)$;\;
$(24,9560519)$\\
Bhutan & $2361807$ &
$(0,18255228)$;\;
$(4,5662292)$;\;
$(10,14822103)$;\;
$(12,9080529)$;\;
$(20,18210114)$;\;
$(24,3052485)$\\
\hline
\end{tabular}
\caption{Tie residue classes computed by the fast meet-in-the-middle + CRT algorithm.  
Each listed pair $(d,r)$
means that every month index $n\equiv r\pmod L$ yields a tie $t(d,n)\in\Z$.}
\label{tab:tie-fast}
\end{table}

\subsection{Variations among the principal traditions}
\label{ss:traditions}

The various modern traditions---such as Phugpa, Tsurphu, Bhutan, and the later Mongol reform---represent a series of distinct mathematical retrofits applied to the ancient Indian \emph{Kālacakra} framework. By disaggregating the calendar's output against modern ephemerides, we can reverse-engineer the specific historical, textual, and geographic imperatives that guided these reforms. This section systematically dissects these variations, moving from the foundational lunar engine and its temporal drift, through the anomaly variance and sidereal solar lag, ultimately culminating in a quantitative evaluation of the geographical adaptation hypothesis.

\subsubsection{Distribution of new moon temporal offsets}
\label{sec:offset_distribution}

To quantify the behavior of the Tibetan traditions, we calculated the precise instant of the true astronomical new moon using the modern JPL DE422 ephemeris \cite{DE422JPL} ($t_{\text{DE422}}$) and compared it with the time predicted by the respective Tibetan tradition ($t_{\text{Tib}}$). The temporal offset is defined simply as $\Delta t = t_{\text{Tib}} - t_{\text{DE422}}$. 

Figure \ref{fig:offset_histograms} presents the distribution of these offsets for the four principal traditions. To capture the state of each system near its historical era of active use, the histograms cover specific 200-year windows: 1750–1950 for the later Mongol reform, and 1450–1650 for the older Phugpa, Tsurphu, and Bhutanese traditions.

\begin{figure}[htpb]
    \centering
    \begin{subfigure}[b]{0.4\textwidth}
        \centering
        \includegraphics[width=\textwidth]{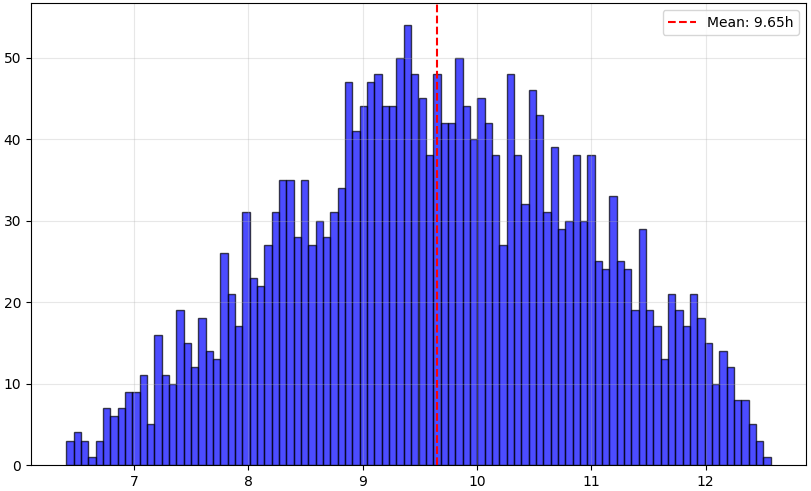} 
        \caption{Phugpa (1450--1650)}
        \label{fig:hist_phugpa}
    \end{subfigure}
    \hfill
    \begin{subfigure}[b]{0.4\textwidth}
        \centering
        \includegraphics[width=\textwidth]{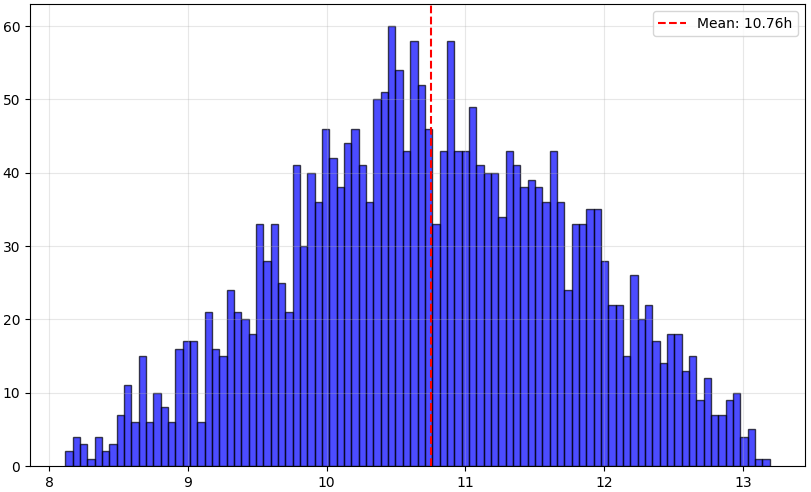}
        \caption{Tsurphu (1450--1650)}
        \label{fig:hist_tsurphu}
    \end{subfigure}
    \begin{subfigure}[b]{0.4\textwidth}
        \centering
        \includegraphics[width=\textwidth]{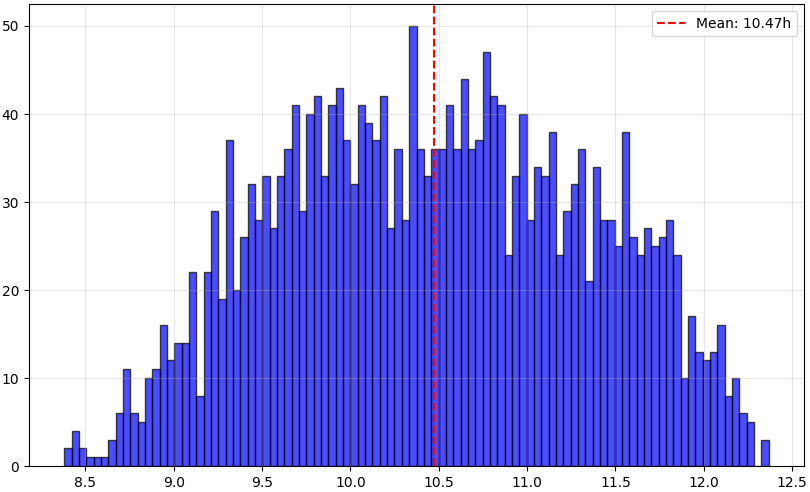}
        \caption{Bhutan (1450--1650)}
        \label{fig:hist_bhutan}
    \end{subfigure}
    \hfill
    \begin{subfigure}[b]{0.4\textwidth}
        \centering
        \includegraphics[width=\textwidth]{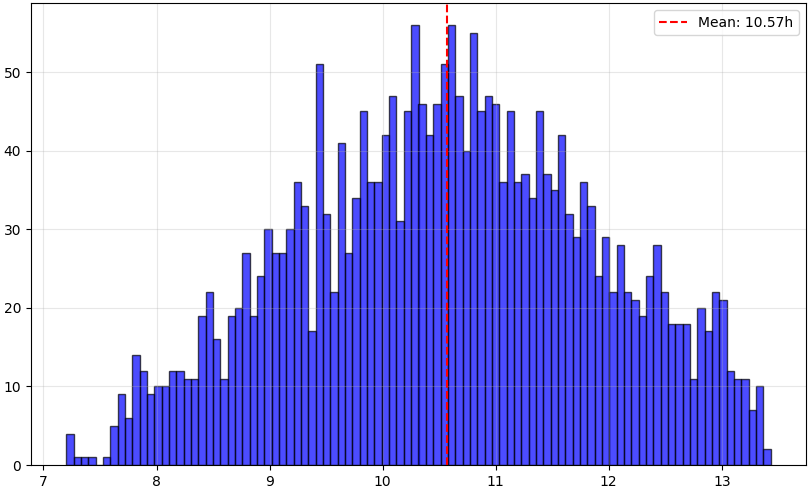}
        \caption{Mongol (1750--1950)}
        \label{fig:hist_mongol}
    \end{subfigure}
    \caption{Temporal offsets ($\Delta t = t_{\text{Tib}} - t_{\text{DE422}}$) in hours for true new moon calculations.}
    \label{fig:offset_histograms}
\end{figure}

Two key features are immediately apparent in these distributions: the spread (variance) and the average location of the peak.

First, the offsets do not form a single sharp peak but rather a spread of values spanning roughly $\pm 2$ hours from the center. This variance is caused by the lunar "equation of center," the correction applied to account for the Moon's elliptical orbit. As illustrated schematically in Figure \ref{fig:anomaly_diagram}, the Moon travels faster near perigee and slower near apogee relative to its mean motion. The spread in the histograms represents the residual error between the Tibetan trigonometric approximation of this anomaly and the actual complex lunar motion.

\begin{figure}[htpb]
    \centering
    \includegraphics[width=0.9\textwidth]{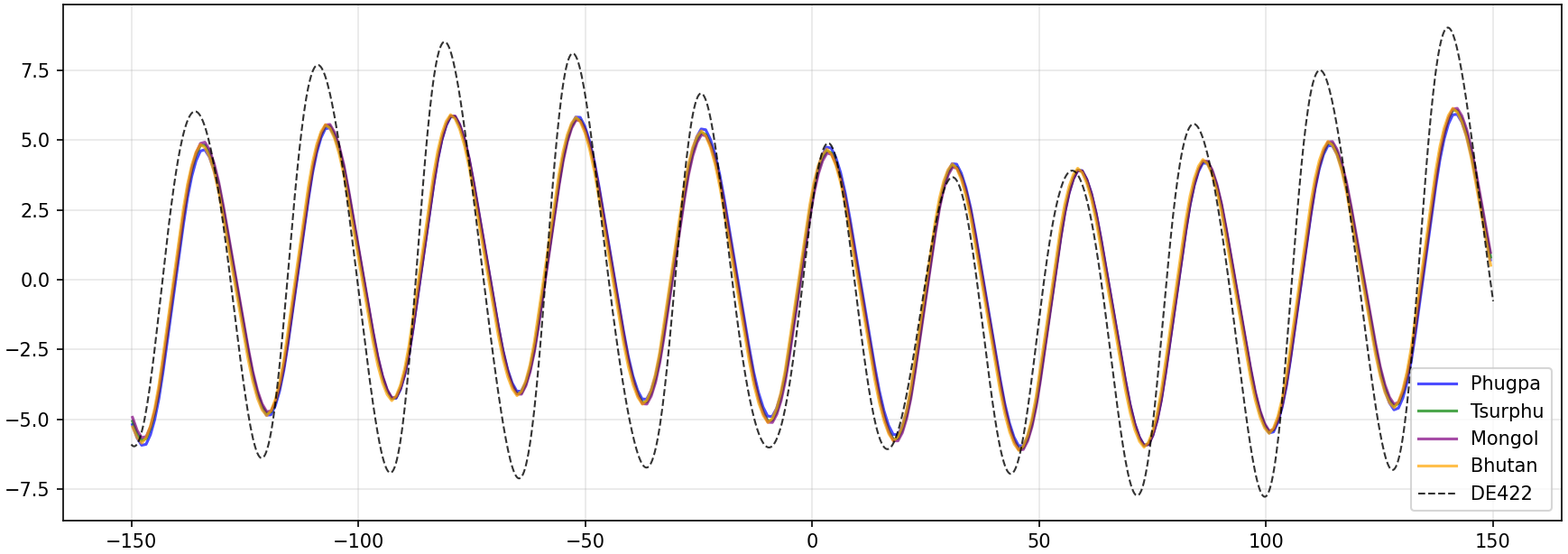} 
    \caption{Lunar anomaly (in degrees) vs. days relative to Feb~1, 2026.}
    \label{fig:anomaly_diagram}
\end{figure}

Second, the average offsets are significantly positive. This does not imply the calendar is ``wrong'' by this amount; rather, it reveals the implicit zero-point of the Tibetan internal clock relative to the civil day, which begins at sunrise. 

Consider the Phugpa tradition (Figure \ref{fig:hist_phugpa}) with its average offset of $+9.65$ hours. Imagine a true astronomical new moon occurring exactly at Greenwich Noon (12:00 GMT). In Lhasa, the geographic reference point for the Phugpa tradition ($91.1^\circ$ E), the Local Mean Time (LMT) shift is $+6.07$ hours, meaning this event physically occurs at 18:04 LMT (18.07 in decimal hours). The Phugpa calendar registers this event with a value of $9.65$ hours. This implies that the Phugpa internal clock's zero-point is set at $18.07 \text{ LMT} - 9.65\text{h} = 08.42 \text{ LMT}$ (or 08:25). If we assume an average sunrise of 05:56 LMT (5.93 decimal hours), the calendar's internal ``start of day'' is functionally delayed by $2.49$ hours (approximately 2 hours and 29 minutes) relative to the actual sunrise. Equivalently, one can view this as the calendar computing the new moon approximately 2.5 hours prematurely relative to the actual dawn. Since the Moon's elongation increases by roughly $0.51^\circ$ per hour relative to the Sun, this temporal shortfall implies the calendar's internal model overestimates the lunar elongation by roughly $1.3^\circ$, mathematically corroborating Janson's observation that the mean elongation is structurally ``about $2^\circ$ too large'' \cite[\S12.1]{janson}.\footnote{The variation between our $\approx 1.3^\circ$ estimate (using 1450--1650 data and a 05:56 dawn) and Janson's $\approx 2^\circ$ figure (using 2013 data and a 05:00 dawn) stems from both secular drift and the chosen sunrise baseline. If we evaluate modern data (where the 1900--2100 average offset shifts to $\approx 9.4$h) against an idealized 05:00 LMT dawn, the functional delay increases to $\approx 3.7$ hours. At a rate of $0.51^\circ/\text{h}$, this produces an elongation error of $\approx 1.9^\circ$, seamlessly reconciling with Janson's observation.}
One hypothesis is that this systematic delay acts as a temporal buffer, potentially keeping calculated new moon times within desired civil day boundaries for religious observance, though whether this was an intentional design choice remains uncertain.

Applying this exact geographic and temporal logic to the other principal traditions reveals their specific local calibrations. 
The results are summarized in Table \ref{tab:local_calibrations}. In each case, the local mean time (LMT) shift of the 12:00 GMT new moon is adjusted by the tradition's average histogram offset to determine its internal zero-point, which is then compared against a standard 05:56 dawn.

\begin{table}[htpb]
    \centering
    \begin{tabular}{lccccc}
        \toprule
        Tradition & Longitude & LMT shift (h) & Avg. offset (h) & Zero-point (LMT) & Delay \\
        \midrule
        Phugpa & $91.1^\circ$ E & $+6.07$ & $+9.65$ & 08:25 & 2h 29m \\
        Tsurphu & $90.4^\circ$ E & $+6.03$ & $+10.76$ & 07:16 & 1h 20m \\
        Bhutan & $89.6^\circ$ E & $+5.97$ & $+10.47$ & 07:30 & 1h 34m \\
        Mongol & $106.9^\circ$ E & $+7.13$ & $+10.57$ & 08:34 & 2h 38m \\
        \bottomrule
    \end{tabular}
    \caption{Geographic offsets and internal zero-points for principal traditions.}
    \label{tab:local_calibrations}
\end{table}

These varying functional delays naturally invite a geographic adaptation hypothesis: that by tuning the epoch constants, the founders of these traditions might have intentionally shifted the calendar's effective geographic meridian to suit their local observer locations. Under this conjecture, such localized calibration would ensure that the distribution of computed new moons fell consistently on a pragmatically safe side of dawn for their specific longitudes. However, as we will quantitatively demonstrate in the following subsections and especially in \S\ref{sec:geo-intercalation}--\ref{sec:geo-epoch}, a purely geographic interpretation faces severe structural challenges. When analyzed against the broader artifacts of the calendar, the data strongly suggests that these temporal shifts are not straightforward spatial coordinate corrections, but are rather systemic historical recalibrations and textual anchors that function largely independently of true observer geography.

\subsubsection{Mean temporal drift and the lunar engine}
\label{sec:lunar_engine_drift}

Having established the distribution of offsets in specific eras, we now examine how the
\emph{mean} offset evolves over long time windows. This reveals the long-term behavior of
the underlying arithmetic engines and helps separate purely calendrical effects from secular
features of the Earth--Moon system and of time-scale conversion.

We begin by isolating a single modern tradition and measuring its raw drift against a modern
ephemeris product. Figure~\ref{fig:mongol_raw_fit} shows the mean temporal offset of the
Mongol tradition from 500~AD to 2000~AD, obtained by subtracting the TT-based DE422 new-moon
times \cite{DE422JPL} from the raw calendar output and fitting the result by a quadratic curve.

\begin{figure}[htpb]
    \centering
    \includegraphics[width=0.8\textwidth]{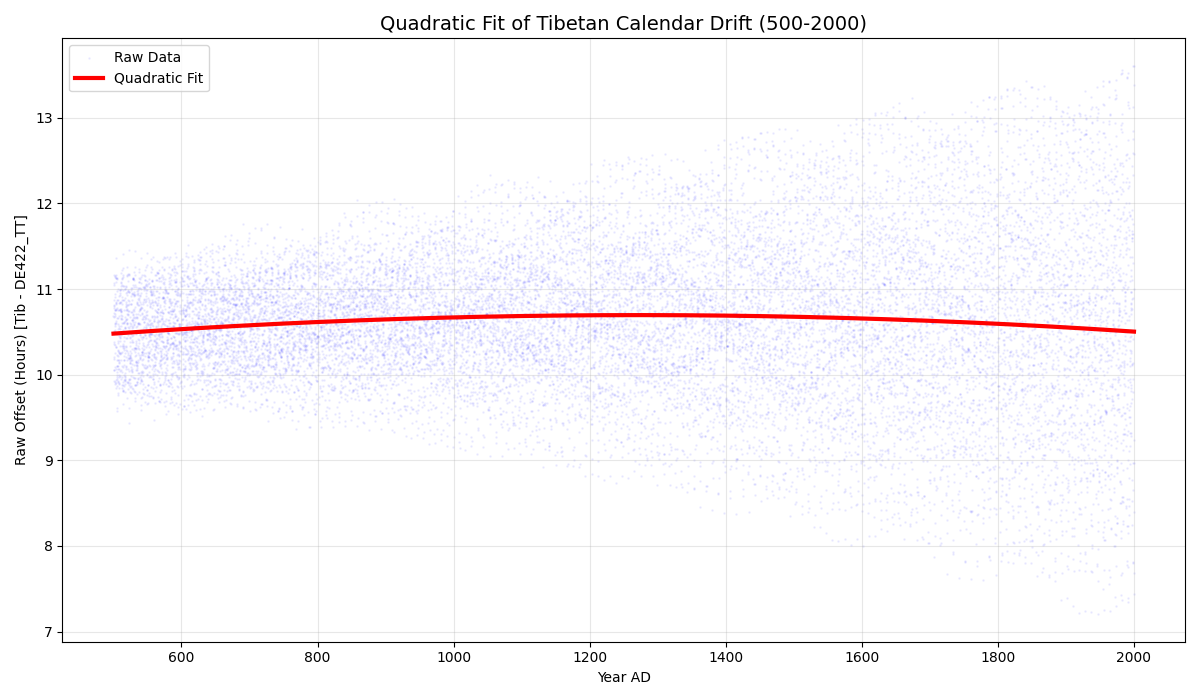}
    \caption{Raw mean temporal offset of the Mongol tradition's new moon calculations compared to DE422 in TT, fitted with a quadratic curve.}
    \label{fig:mongol_raw_fit}
\end{figure}

The fitted curve is visibly parabolic, with empirical quadratic coefficient
$c_2 = 13.07\ \text{s/cy}^2$.
This number should be read as a measured property of the TT-based comparison, not as a
fundamental constant of the Tibetan calendar itself. At the level of mean motions, the
calendar's lunar engine is affine in the month index, with only periodic corrections; it
contains no intrinsic quadratic secular term. The observed curvature therefore arises from
comparing a fixed arithmetic month model, expressed in civil-day units, against a modern
dynamical ephemeris in TT. In particular, it mixes the time-scale discrepancy between TT and
civil time with the genuine slow secular evolution of lunar phases over long intervals.
The fitted vertex of this raw parabola lies near 1240~AD.

The linear term (the ``tilt'' of the parabola) is governed primarily by the calendar's
fundamental synodic-month constant $m_1$. The modern \emph{Grub rtsis} (siddh\=anta-family)
traditions use the highly accurate rational value
\[
m_1=\frac{167025}{5656}\approx 29.530587\ \text{days},
\]
equivalently
\[
m_2=\frac{m_1}{30}=\frac{11135}{11312},
\]
traditionally described as the rule that $11312$ lunar days correspond to $11135$ solar days,
or equivalently that $177$ civil days are omitted per $11312$ lunar days. Any slight mismatch
between this fixed constant and the ephemeris-implied mean synodic month produces a long-run
linear trend in the mean residual.

To remove the time-scale mismatch coming from the use of TT ephemeris values, we next apply
the astronomical $\Delta T$ correction to the DE422 data. Equivalently, this converts the
comparison into one against UT-based DE422.

\begin{figure}[htpb]
    \centering
    \includegraphics[width=0.8\textwidth]{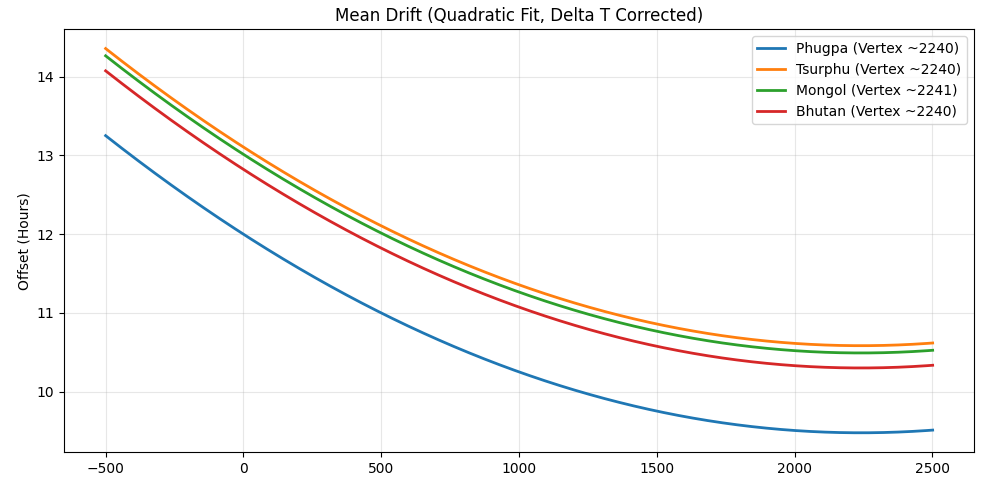}
    \caption{Mean drift after applying the $\Delta T$ correction, equivalently comparing with UT-based DE422.}
    \label{fig:delta_t_corrected}
\end{figure}

As shown in Figure~\ref{fig:delta_t_corrected}, the corrected curve still retains a parabolic
shape, but its geometry changes substantially: the fitted vertex is shifted far into the
future, to about 2240~AD. This shift should not be interpreted as a new physical effect.
Rather, it is the algebraic consequence of removing the TT--civil-time discrepancy while
retaining the small long-run mismatch between the fixed arithmetic constant $m_1$ and the
ephemeris lunar month. After the $\Delta T$ correction, the remaining curvature is therefore
best read as a diagnostic of the interaction between a fixed lunar arithmetic engine and the
slow secular evolution of the true Moon, while the displacement of the vertex is controlled
mainly by the linear tilt.

The significance of the \emph{Grub} value of $m_1$ becomes especially clear when one compares
it with the older K\=alacakra baseline, the \emph{kara\d{n}a} (Tib.\ \emph{byed rtsis}).
Figure~\ref{fig:karana_vs_grub} overlays the corresponding mean drifts.

\begin{figure}[htpb]
    \centering
    \includegraphics[width=0.8\textwidth]{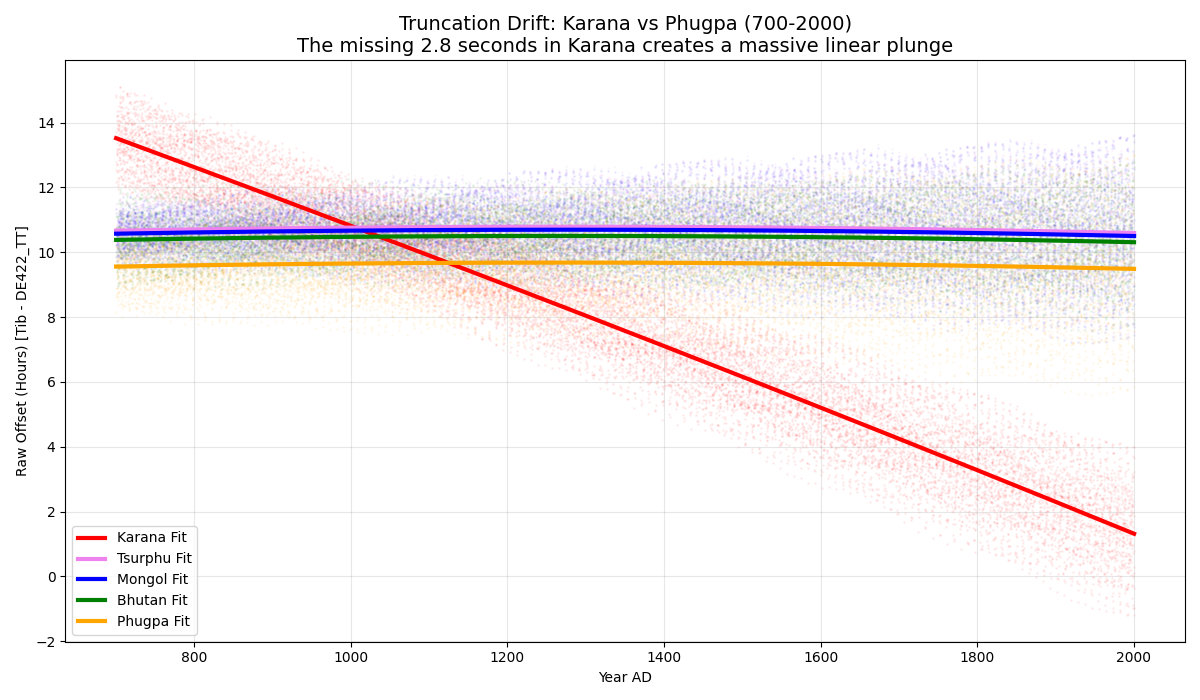}
    \caption{Temporal drift of the \emph{kara\d{n}a} baseline and the modern \emph{Grub} traditions.}
    \label{fig:karana_vs_grub}
\end{figure}

The \emph{kara\d{n}a} framework uses the simplified fraction
\[
m_1^{\mathrm{kar}}=\frac{10631}{360}\approx 29.530555\ \text{days},
\]
which is shorter than the \emph{Grub} value by about $2.7$ seconds per synodic month.
At roughly $1237$ lunations per century, this truncation accumulates to a linear drift of
about $0.93$ hours per century. As Figure~\ref{fig:karana_vs_grub} shows, this much larger
linear defect completely dominates the more delicate quadratic effects visible in
Figures~\ref{fig:mongol_raw_fit}--\ref{fig:delta_t_corrected}. From this perspective, the
historical adoption of
\[
m_1=\frac{167025}{5656}
\]
by the \emph{Grub} reformers should be understood as a highly successful correction of the
dominant linear defect in the older baseline, bringing the lunar engine into a regime where
subtler long-run effects only become visible after comparison with modern ephemerides.

\subsubsection{Anomaly variance and possible phase anchoring}
\label{sec:anomaly_archaeology}

While \(m_1\) dictates the long-term mean drift discussed above, the \emph{spread} or variance of the histograms in Figure~\ref{fig:offset_histograms} is influenced primarily by the anomaly equations. The true new moon timings fluctuate around the mean because of the nonuniform motions of the Sun and Moon, and the calendar models these fluctuations by simplified trigonometric schemes involving apsidal parameters and epoch phase constants. In particular, the quality of fit depends not only on the mean motions but also on how the assumed anomaly phases line up with the actual sky.

Because the calendrical anomaly rates do not perfectly match the physical precession of the lunar and solar apsides, this phase agreement is not stationary over long periods. As the theoretical anomaly model drifts relative to the sky, the spread in the temporal offset changes as well. One therefore expects the variance curve to have a broad minimum in the era where the adopted anomaly phases happen to align comparatively well with the corresponding physical configuration.

\begin{figure}[htpb]
    \centering
    \includegraphics[width=0.8\textwidth]{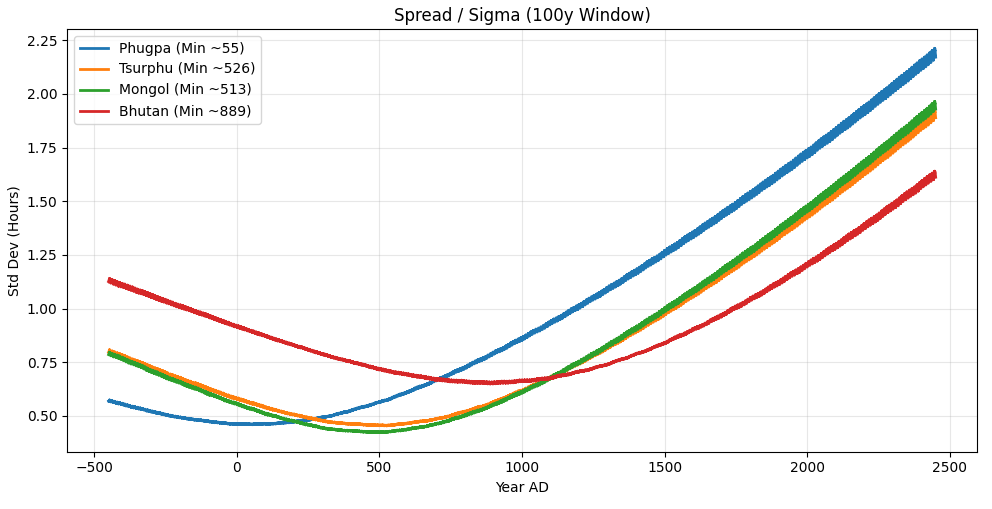}
    \caption{Spread (standard deviation) of the temporal offset in new moon timings.}
    \label{fig:anomaly_spread}
\end{figure}

Figure~\ref{fig:anomaly_spread} shows such broad minima for all four traditions, though at different epochs. This suggests that the variance profile may contain information about the effective phase anchoring of the anomaly model. The point should be treated cautiously: these minima are indirect diagnostics, and their location depends on the adopted astronomical comparison, smoothing window, and the internal form of the calendar model. Still, they offer a useful comparative clue.

For Tsurphu and Mongol, the minima occur in the early sixth century, around 500~AD. This may be consistent with the possibility that these traditions inherited anomaly parameters ultimately tied to an older Indian siddh\=antic layer, or at least to a model whose phase alignment was best suited to that era. Such a reading remains conjectural, but it would help explain why the variance is relatively tight near late antique dates and broader in the periods when these calendars were historically active.

The Bhutanese curve reaches its minimum considerably later, around the late ninth century. This may point to a different effective phase anchoring, perhaps reflecting a later stage in the transmission or adjustment of anomaly constants. It is tempting to connect this with the medieval Indian background of the K\=alacakra material, but the plot by itself does not establish such a link; at most, it suggests that the Bhutanese anomaly model is better aligned with a later sky than the Tsurphu--Mongol branch.

Phugpa is the most difficult case. Its minimum lies much earlier, roughly near the beginning of the Common Era. It would be unwise to interpret this literally as evidence for a direct empirical calibration in that period. More modestly, it indicates that the internal phase structure of the Phugpa anomaly model is such that, when propagated backward, it aligns best with the modern astronomical reference in a much earlier epoch. Whether this reflects deep inheritance, indirect back-projection, or simply the compounded effect of the chosen rates and phase constants is not clear from the variance curve alone.

Accordingly, Figure~\ref{fig:anomaly_spread} is best read as a comparative diagnostic of anomaly-phase behavior rather than as a dating instrument in any strong sense. The broad location of the minima may preserve traces of different underlying phase anchorings, but the historical interpretation should remain tentative.

\subsubsection{The sidereal problem, solar lag, and epoch anchoring}
\label{sec:sidereal_sgang}

Transitioning from temporal offsets to absolute celestial positions exposes the most profound structural vulnerability of the Tibetan calendar: its uncorrected sidereal framework. Because the calendar calculates solar longitude relative to a fixed background of stars rather than the moving tropical Vernal Equinox, it is entirely blind to the precession of the equinoxes.

A natural diagnostic is therefore to evaluate the solar longitude produced by the calendar itself at the instant of the astronomical Vernal Equinox, that is, when tropical longitude is exactly \(0^\circ\). Although this solar longitude is not the primary object used in the operational month-making procedure, it is still a meaningful output of the system, and it provides a direct way to compare the internal solar geometry of different traditions against the seasonal year.

\begin{figure}[htpb]
    \centering
    \includegraphics[width=0.8\textwidth]{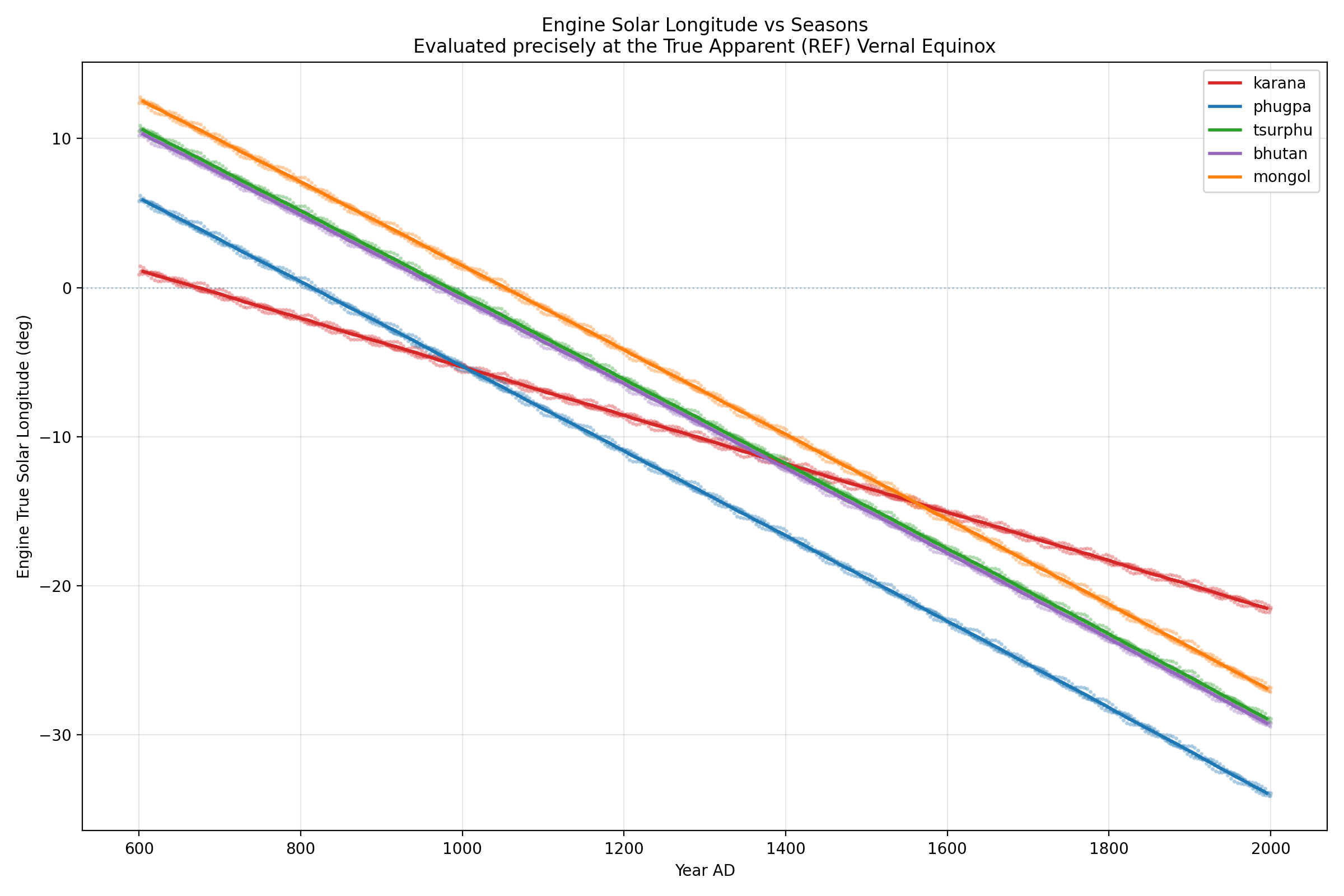}
    \caption{Solar longitude produced by several Tibetan calendar engines, evaluated at the astronomical Vernal Equinox.}
    \label{fig:raw_equinox}
\end{figure}

Figure~\ref{fig:raw_equinox} shows a remarkably stable qualitative picture. All traditions exhibit a nearly linear downward drift, reflecting the fact that a sidereal solar framework, left uncorrected, must gradually slip relative to the tropical equinox. Quantitatively, the common drift is about \(1.4^\circ\) per century, as expected from precession. The exact vertical placement of the lines, however, should not be over-interpreted: it depends on whether one plots mean or true solar longitude, and also on the precise equinox diagnostic. By contrast, the horizontal structure is much more robust. In particular, the relative slopes remain very close, and the main crossing years are largely unchanged under these choices.

One striking feature is that the \emph{karana} line crosses the tropical zero line around the mid-seventh century, roughly near 640~AD. This is at least compatible with the idea that the inherited Indian solar framework was originally calibrated in an earlier epoch when its sidereal zero still lay relatively close to the tropical Vernal Equinox. Whatever the historical mechanism, the plot strongly suggests that the later Tibetan traditions should not be read simply as passive continuations of a single drifting line.

At the same time, the later Tibetan branches are shaped by more than sidereal precession alone. The \emph{Grub} reforms updated the lunar month to the more precise value \(167025/5656\), while retaining the traditional solar ratio \(S_1=65/804\). This combination forces an overlong solar year, about \(365.2706\) days, and hence introduces an additional solar lag independent of precession. The branch geometry in Figure~\ref{fig:raw_equinox} should therefore be read as the combined effect of inherited sidereal phase and reform-era handling of this solar lag.

A useful way to summarize these relative placements is to ask when each later branch would intersect the \emph{karana} baseline under linear extrapolation. These intersection years should not be taken as exact historical dates, but rather as effective mathematical anchor years. In this sense, the Phugpa branch points back to about 1003~AD, Tsurphu to about 1398~AD, Bhutan to about 1358~AD, and Mongol to about 1561~AD. These values are more stable than the absolute vertical offsets, and they help organize the historical interpretation of the figure.

The most striking case is Phugpa. Its intersection with \emph{karana} near the turn of the
first millennium suggests that the Phugpa founders may not have simply inherited the
then-current sidereal drift. One possible reading---which we may call the \emph{backtracking
hypothesis}---is that the Phugpa reform implicitly re-anchored its solar data toward the
older 1024/1027 \emph{karana} horizon rather than passively continuing the contemporary
drift. A related possibility is that some already shifted Indian correction, perhaps connected
with \emph{'khrul sel}, had entered the source material before the Phugpa reform and was
then absorbed into the Tibetan system. The figure does not decide between these scenarios,
but it does suggest that uninterrupted drift alone is unlikely to explain the Phugpa phase
shift.

The Tsurphu and Bhutanese branches suggest a different pattern. Tsurphu lies much closer
to a conservative replacement of \emph{karana}; one may heuristically describe this as a
\emph{hot-swap hypothesis}, meaning a change of engine without a dramatic re-anchoring
to a very early epoch. Bhutan, in turn, tracks Tsurphu so closely that it is natural to regard
it as inheriting essentially the same solar data, or at least the same Kagyu-lineage anchoring,
rather than recalibrating independently to its own contemporary sky.

The Mongol branch stands somewhat apart, beginning higher and crossing later. This pattern
is consistent with a pragmatic late recalibration: neither a return to the old \emph{karana}
horizon nor passive acceptance of the heavily drifted eighteenth-century baseline, but an
intermediate re-anchoring that partly reduced the accumulated mismatch.

Figure~\ref{fig:raw_equinox} is therefore most informative at the structural level, not at the
level of exact intercepts. It shows a common sidereal drift, superimposed solar lag, and
nontrivial phase differences among the traditions. The absolute values on the vertical axis
depend on the diagnostic convention and should not be read as exact recoveries of
historically intended offsets. The drift itself, the crossing pattern, and the effective anchor
years are more robust features, and they likely reflect genuine differences in the solar
baselines transmitted by the traditions. The historical readings suggested above remain
interpretive hypotheses supported by this geometry, not direct historical proofs.

\subsubsection{Intercalation and latitude}
\label{sec:geo-intercalation}

The deliberate mathematical interventions required to manage the solar lag—whether Phugpa's deep historical back-tuning or the Mongol reform's forward drag—demonstrate that these calendar lineages were shaped by active, systemic textual choices rather than passive observation. This structural reality provides the necessary context to evaluate a persistent historical question: to what extent were these differing parameters driven by physical geography? 

As emphasized by Schuh, Henning, and Janson \cite{schuh,henning,janson}, the structural divergences among the major Tibetan calendar lineages are not the result of passive geographical coordinate adjustments, but rather these exact types of active historical recalibrations. While some historical authors invoked local geography to justify their parameters, translating these claims into mathematical reality reveals that the required geographic shifts are physically implausible. 

We can quantitatively formalize this conclusion by first examining the calendar's seasonal tuning, which is often mistakenly attributed to latitudinal climate adaptation. The placement of leap months, and consequently the seasonal position of the New Year (\emph{Losar} or \emph{Tsagaan Sar}), is governed by the alignment of the epoch mean sun ($s_0$) relative to the theoretical intercalation target ($d_0$). As established in Remark~\ref{r:d0-s0}, the effective seasonal tuning is quantified by the phase lag $\delta \approx d_0 - s_0$, which measures how far the calendar's solar cycle is delayed relative to its earliest theoretical limit.

Comparing these phase lags across the principal traditions reveals a pattern that directly contradicts climatic adaptation:
\begin{itemize}
    \item \emph{Mongolia vs.\ Tibet:} Mongolia ($48^\circ$ N) experiences a significantly colder, later spring than Central Tibet. A climatic adaptation would theoretically require a larger $\delta$ to push the Mongolian New Year later into the warming season. However, the Mongol tradition utilizes a minimal lag ($\delta \approx 0.3$ days). Conversely, the Phugpa tradition---native to the milder climate of Lhasa---implements a massive lag ($\delta \approx 6.1$ days), pushing its New Year roughly a week later than the Mongol system on average.
    \item \emph{Bhutan:} Situated south of the Himalayas ($27^\circ$ N), Bhutan experiences the earliest, warmest spring. Yet, its calendar employs a moderate lag ($\delta \approx 3.0$ days), placing its New Year structurally \emph{later} than the colder Mongolian tradition.
    \item \emph{Tsurphu vs.\ Phugpa:} Both traditions originated in Central Tibet and share identical climatic constraints. Despite this, they exhibit the largest divergence in seasonal tuning ($\approx 5.9$ days difference in solar phase lag).
\end{itemize}

As the precise calculations in Remark~\ref{r:d0-s0} demonstrate, these phase parameters fix the macroscopic seasonal behavior of the calendar. The lack of any coherent correlation with latitude or local climate quantitatively confirms the historical consensus \cite{schuh,henning,janson}: these massive, lineage-specific divergences are not geographical adaptations. Instead, they represent deliberate textual recalibrations of the solar epoch, driven by competing historical interpretations of solstice boundaries and seasonal definitions rather than the physical environment of the observer.

\subsubsection{Epoch constants and longitude}
\label{sec:geo-epoch}

A second geographical hypothesis posits that the differences in foundational epoch constants ($m_0, s_0, a_0$) represent longitude corrections (time zone shifts). To test this quantitatively, we analyze the mean epoch constants normalized to a common historical moment. Janson \cite[Table~6]{janson} projects the parameters of the principal traditions back to the standard epoch of 23 March 806 AD (JD 2015531), which we summarize in Table \ref{tab:epoch-constants-806} for convenience.

\begin{table}[ht]
\centering
\begin{tabular}{l c c c}
\toprule
{Tradition} & $m_0$ & $s_0$ & $a_0$ \\
\midrule
Phugpa     & $2.376238$ & $0.004975$ & $0.206349$ \\
Tsurphu    & $2.422338$ & $0.018261$ & $0.210317$ \\
Bhutan     & $2.410537$ & $0.017413$ & $0.220522$ \\
Mongol     & $2.418494$ & $0.023632$ & $0.207200$ \\ \bottomrule
\end{tabular}
\vspace{1mm}
\caption{Epoch constants at JD 2015531 (23 March 806), cf. \cite{janson}.}
\label{tab:epoch-constants-806}
\end{table}

Recall that $m_0$ represents a time offset. Specifically, it is the interval in fractional civil days from the epoch's local dawn to the mean conjunction. The absolute astronomical time of the conjunction is a globally fixed event. However, changing terrestrial longitude inherently shifts the time of local dawn. Because of this, a geographic shift produces a predictable, proportional change in the duration of $m_0$.

Ulaanbaatar ($107^\circ$ E) is roughly $16^\circ$ east of Lhasa ($91^\circ$ E), meaning local dawn occurs approximately 1.07 hours (or $0.044$ civil days) earlier. Because this local dawn occurs earlier, the elapsed time from dawn to the epoch conjunction must increase by this exact amount. Examining Table \ref{tab:epoch-constants-806}, the mean epoch constant ($m_0$) for the Mongol tradition ($2.418$) differs from the Phugpa tradition ($2.376$) by exactly $0.042$ civil days. This 0.042-day difference (roughly 1.01 hours) directly mirrors the rotational longitude difference, making the time-zone adaptation hypothesis initially appear extremely compelling.

However, several structural and historical factors make this purely geographic interpretation highly unlikely. First, re-integrating the older Kagyu lineages complicates the narrative. The Tsurphu tradition's value ($2.422$) is nearly identical to the Mongol value. Because the Tsurphu monastery is co-located with Phugpa in Central Tibet, its $+0.046$ shift relative to Phugpa cannot be a spatial correction. While it remains theoretically possible that Sumpa Khenpo independently calculated a precise geographic longitude correction for the Mongolian steppe that coincidentally mirrored the centuries-old Central Tibetan Tsurphu parameter, it is significantly more likely that the Mongol tradition simply inherited or adapted existing mathematical baselines \cite{janson}.

Second, analyzing the calendar's internal clock (established in Section \ref{sec:offset_distribution}) reveals that the required precision for a time-zone shift is swamped by both the system's structural temporal buffers and its inherent mathematical variance. The implicit zero-points for these calendars occur well after local dawn: roughly 08:25 LMT for Phugpa, 07:16 for Tsurphu, 07:30 for Bhutan, and 08:34 for Mongol. While the shift in $m_0$ does cause the Mongol calendar's internal zero-point (08:34 LMT) to align closely with Phugpa's (08:25 LMT) in their respective local times, both traditions maintain a massive functional delay of over two and a half hours relative to actual sunrise. Crucially, this structural delay operates in tandem with the traditional lunar model's intrinsic error spread, which exhibits a variance of roughly three to four hours for true new moon times. Because a one-hour geographic shift is entirely eclipsed by this massive baseline variance and built-in safety buffer, attempting precise time-zone adjustments within the classical framework is practically redundant.

Finally, a genuine geographic coordinate transformation demands that the epoch constants shift consistently: if $m_0$ is adjusted for a new local dawn, $s_0$ must shift proportionally. Evaluating the mean sun ($s_0$) across traditions completely severs this linkage. A one-hour time zone difference should shift the mean sun's absolute position by only $0.04^\circ$ (about $0.0001$ fractional turns). Instead, the discrepancy between the Phugpa ($s_0 \approx 0.005$) and Mongol ($s_0 \approx 0.024$) epochs is roughly $0.019$ fractional turns. Since a $0.019$ turn of the Sun takes approximately $0.019 \times 365.25 \approx 6.9$ days, this represents nearly a full week of solar motion. This massive displacement confirms that reformers utilized $s_0$ to perform entirely different structural jobs---such as anchoring to earlier standard epochs or tuning solstice definitions (as seen in Section~\ref{sec:geo-intercalation})---fundamentally disconnecting the solar anomaly from pure geographic time shifts.

In summary, while we cannot strictly rule out all geographic motivations, the quantitative signature of these traditions makes passive geographic coordinate transformations highly improbable. The massive solar displacements, the structural temporal buffers, and the shared Kagyu baselines strongly support the established historical consensus \cite{schuh,henning,janson}: these divergences primarily represent active, systemic retunings meant to counteract precessional drift or satisfy specific textual definitions, effectively eclipsing any strict observer-location adjustments.

\section{Astronomical Inaccuracies and Technical Design Space}
\label{s:inaccuracies}

With the arithmetic skeleton and its intended celestial meaning now explicit, we can
separate and quantify the different kinds of drift that accumulate from mismatches between the traditional constants
and modern mean quantities.
While the internal logic of the Tibetan calendar is perfectly consistent, ensuring rigid predictability for civil and religious planning,
its fixed arithmetic foundation creates
an inherent disconnect from the dynamic physical realities of the celestial sphere. 
This section provides an exhaustive analysis of the astronomical inaccuracies embedded within the traditional model, ranging from secular seasonal drift to the nuances of anomalistic phase slippage. Furthermore, it defines the technical design space available for addressing these errors, evaluating the trade-offs between retaining traditional algorithmic forms and adopting modern astronomical precision.
By mapping each inaccuracy to a spectrum of potential numerical and kinematic fixes, we establish the foundation for the concrete reform layers presented 
in Section~\ref{s:reforms}.

\subsection{Seasonal drift and the intercalation design}
\label{ss:seasonal-drift}

At the heart of the calendar's long-term instability is the foundational arithmetic axiom that governs intercalation: the assertion that 67 mean lunar months are exactly equal to 65 mean solar months. In the traditional siddhānta systems, this relation is treated not as an approximation but as a definition. It enforces a rigid intercalation cycle where exactly two leap months are inserted every 65 solar months, creating a repeatable 65-year cycle containing 804 lunations.

While computationally convenient, this ratio implies a mean solar year length that diverges significantly from the tropical year. 
The mean synodic month constant used in the principal traditions is 
$$
m_1 = \frac{167025}{5656} \approx 29.530587 \text{ days}.
$$
This value is remarkably precise, differing from the modern mean synodic month (approximately 29.5305889 days) by only about 0.16 seconds. 
Consequently, the internal drift of the moon's mean elongation is negligible for practical purposes, 
accumulating to only a few minutes over centuries. 

However, the derived mean solar year is 
$$
Y_{\mathrm{model}} = \frac{804}{65} \times m_1 \approx 365.270645 \text{ days},
$$
which is significantly longer than the true tropical year ($\approx 365.24219$ days). The discrepancy is $0.028455$ days per year, which accumulates to approximately {\em 2.85 days per century}, or nearly a full month per millennium. As discussed previously, it is precisely this secular drift that explains the gradual postponement of fixed seasonal festivals, such as Losar and Tsagaan Sar, which move progressively later into the spring (cf. Figure~\ref{fig:losar}).

To contextualize this error, it is instructive to compare the Tibetan rule with other historical intercalation cycles. The Metonic cycle, used in other lunisolar systems, equates 235 lunations to 19 years, implying a year length that drifts by only +0.0868 days per 19 years, or roughly 0.46 days per century—significantly more stable than the Tibetan parameter. Another theoretical alternative, the "168/163 rule," forces 5 leap months every 168 lunations, resulting in a drift of -0.81 days per 163 years, cf. Example~\ref{rem:leap-month-rules-drift}. The Tibetan choice of the 65-year cycle, while culturally entrenched, represents a suboptimal approximation of the tropical year, prioritizing arithmetic simplicity over long-term seasonal fidelity.

For addressing seasonal drift, we consider two complementary design directions.
One is \emph{improved arithmetic intercalation}: keep the traditional idea of a fixed periodic leap-month skeleton,
but replace the $67/65$ commensurability by a more accurate rational approximation to the lunations-per-year ratio.
The other is \emph{dynamic intercalation}: abandon a fixed cycle and determine month structure directly from modeled
solar sign (or definition-point) crossings, as in a fully astronomical scheme.

\subsubsection{Arithmetic intercalation cycles: rational design space}
\label{ss:arith-intercalation}

The arithmetic approach keeps a fixed periodic leap-month skeleton, but replaces the effective
lunations-per-year ratio induced by the traditional $67/65$ scheme by a more accurate rational fit.
Concretely, the classical $67/65$ commensurability is a relation at the month scale, 
and it implies the year-scale ratio
\[
R_{\mathrm{trad}}=\frac{12\cdot 67}{65}=\frac{804}{65}.
\]
In a reform we instead choose a rational year-scale constant
\[
R=\frac{p}{q}\;\approx\;\frac{Y_{\mathrm{trop}}}{S_{\mathrm{syn}}}
\qquad\text{(lunations per tropical year),}
\]
so that over $q$ years there are exactly $p$ lunations and hence $p-12q$ leap months.

If we assume that the approximation $m_1\approx S_{\mathrm{syn}}$ is already sufficiently accurate, and ask only how the choice of $R$ affects seasonal drift, 
then the modeled year length is
\[
Y_{\mathrm{model}}\approx RS_{\mathrm{syn}},
\qquad\text{so}\qquad
Y_{\mathrm{model}}-Y_{\mathrm{trop}}
\approx \Bigl(\frac{p}{q}\Bigr)S_{\mathrm{syn}}-Y_{\mathrm{trop}}.
\]
We record four convenient rational approximants to the benchmark ratio $Y_{\mathrm{trop}}/S_{\mathrm{syn}}$,
with $S_{\mathrm{syn}}\approx 29.53058885$\,d and $Y_{\mathrm{trop}}\approx 365.2421897$\,d, 
chosen from the continued-fraction approximation ladder (including intermediate approximants).

\begin{enumerate}\itemsep2pt
\item \emph{334-year cycle.}
A close rational alternative to the traditional $67/65$ scheme is ${1377}/{1336}$, corresponding to
\[
R=\frac{12\cdot 1377}{1336}=\frac{4131}{334}.
\]
This is a $334$-year cycle with $4131$ lunations, hence $4131-12\cdot 334=123$ leap months.
With the benchmark $S_{\mathrm{syn}}$,
\[
\frac{4131}{334}\cdot S_{\mathrm{syn}}\approx 365.242104\ \text{d},
\qquad
Y_{\mathrm{model}}-Y_{\mathrm{trop}}\approx -7.4\ \text{s/yr},
\]
i.e.\ about $-2.06$ hours per millennium.
\item \emph{353-year cycle.}
The rational
\[
R=\frac{4366}{353}
\]
gives a $353$-year cycle with $4366-12\cdot 353=130$ leap months.
Numerically,
\[
\frac{4366}{353}\cdot S_{\mathrm{syn}}\approx 365.242354\ \text{d},
\qquad
Y_{\mathrm{model}}-Y_{\mathrm{trop}}\approx +14.2\ \text{s/yr},
\]
i.e.\ about $+3.94$ hours per millennium.
\item \emph{687-year cycle.}
The rational
\[
R=\frac{8497}{687}
\]
gives a $687$-year cycle with $8497-12\cdot 687=253$ leap months.
Numerically,
\[
\frac{8497}{687}\cdot S_{\mathrm{syn}}\approx 365.242233\ \text{d},
\qquad
Y_{\mathrm{model}}-Y_{\mathrm{trop}}\approx +3.7\ \text{s/yr},
\]
i.e.\ about $+1.03$ hour per millennium.
\item \emph{1021-year cycle.}
The rational
\[
R=\frac{12628}{1021}
\]
gives a $1021$-year cycle with $12628-12\cdot1021=376$ leap months.
Numerically,
\[
\frac{12628}{1021}\cdot S_{\mathrm{syn}}\approx 365.242190\ \text{d},
\qquad
Y_{\mathrm{model}}-Y_{\mathrm{trop}}\approx +0.03\ \text{s/yr},
\]
i.e.\ about $+26$ seconds per millennium.
\end{enumerate}

\noindent
The benchmark constants used above are not literally constant: the tropical year decreases slowly in time (of order a few seconds per millennium), while the mean synodic month varies more weakly on the same horizons.  Consequently, once an arithmetic cycle is tuned to produce drift at the level of only a few tens of seconds per millennium, the ranking between such cycles depends mildly on the reference epoch and on the precise convention used for the benchmark mean values.  
For this reason, the $1021$-year cycle is best interpreted as ``near-zero drift'' rather than as a qualitatively new regime.  From a design standpoint, the shorter $334$-year cycle already captures most of the gain: 
it captures most of the achievable reduction in drift while keeping the cycle length and bookkeeping overhead moderate; the longer cycles primarily refine the residual drift within a range where secular variability starts to matter.

\subsubsection{Dynamic intercalation}
\label{sss:dynamic-intercal}

A conceptually clean way to eliminate seasonal drift is to abandon a fixed arithmetic intercalation
cycle and instead \emph{determine month structure from solar motion itself}.  In fact, the traditional
``\emph{sgang} rule'' is naturally dynamical in form: fix twelve definition points $\sgang_M$ on the ecliptic,
and label each lunation by the definition point crossed by the Sun during that lunation; if no definition
point is crossed, the lunation is intercalary.  In the language of \S\ref{subsec:dual}, this is a
\emph{containment} rule: the foreground intervals are lunations (new moon to new moon), the background
posts are the solar crossings of the $\sgang_M$, and the month label is determined by which crossing(s) occur
during the lunation \cite{janson,henning,schuh}.

The essential difference from \S\ref{ss:arith-intercalation} is what is treated as fixed \emph{data}.
In a purely arithmetic calendar one hard-codes a commensurability (e.g.\ the $67/65$ scheme together with a
trigger test), so the leap-month skeleton is determined without ever consulting a solar longitude.  In a
dynamical reform, one instead fixes (i) a lunar model (how ``new moon'' is computed), (ii) a solar model
(mean Sun or true Sun), (iii) a choice of definition-point longitudes $\sgang_M$, and (iv) a precise convention
for boundary/degenerate cases; the intercalation pattern is then \emph{recomputed} from these ingredients.

To make the phrase ``crosses a definition point between new moons'' operational, one should treat a crossing
as an \emph{interval event}, not as a point-sampling convention.  Let $t_k$ be successive new-moon instants in
the chosen lunar model, and let $\lambda_\sun(t)$ be the chosen solar longitude.  Define
\[
L_{\mathrm{con}}(k)
:=\Bigl\{\,M:\ \exists\,t\in(t_k,t_{k+1}]\ \text{such that}\ \lambda_\sun(t)\ \text{crosses}\ \sgang_M\,\Bigr\}.
\]
We use a right-closed interval so that a crossing exactly at $t_{k+1}$ is assigned to the lunation that ends
at $t_{k+1}$; any consistent tie-breaking convention would do, but it must be stated.

The regular ``leap-month'' phenomenon corresponds to the empty-set case \(L_{\mathrm{con}}(k)=\varnothing\): no definition point is crossed during the lunation, so the month is unambiguously the extra one.  What is \emph{not} automatic, however, is its name: a convention is still required to decide whether the repeated label is inherited from the preceding or the following definition point, cf.\ Remark~\ref{rem:leap-naming} and \S\ref{ss:sgang}.  In the classical mean-Sun setting, with uniform solar motion and a uniform mean-lunation skeleton, the containment set is constrained by construction and typically satisfies
\(|L_{\mathrm{con}}(k)|\in\{0,1\}\).
Thus skipped labels do not occur, and the leap pattern becomes perfectly regular; indeed, the intercalation index \(\ix\) may be regarded as a compact arithmetic encoding of this crossing test.

A genuinely new regime begins when one replaces the mean Sun by a \emph{true} (non-uniform) solar longitude and/or adopts a lunar model with variable lunation length. Then \(|L_{\mathrm{con}}(k)|\) can vary with \(k\), and in rare boundary situations one may encounter \(|L_{\mathrm{con}}(k)|=2\), meaning that two consecutive definition points fall within a single lunation. 
This is the containment version of a skipped-month phenomenon: one interval is forced to carry only one of the two available labels, while the other label disappears from the month sequence. Accordingly, a fully specified reform must supplement the astronomical model with an explicit naming rule for this case as well, namely a deterministic convention deciding which of the two labels is retained by the lunation and which is treated as the skipped one, cf.\ \S\ref{ss:sgang}. The key point is that improved seasonal fidelity is obtained only by replacing the short periodic arithmetic skeleton with a dynamical one tied to a chosen astronomical model, together with explicit conventions for handling the resulting edge cases.

A dynamical formulation also clarifies what it means to ``restore the traditional seasonal anchors.''  Once the month
structure is defined by actual solar crossings, the longitudes $\sgang_M$ are no longer merely an interpretive gloss on a
congruence table: they become parameters of the calendar.  
In particular, the freedom to choose an overall longitude offset is not unconstrained: preserving a given tradition’s intercalation behavior restricts 
$\sgang_0$ to an admissible interval. We record these admissible $\sgang_0$-ranges for the principal traditions in Remark~\ref{r:sgang-trad}; see also \cite[Appendix C]{janson}.

Finally, dynamic intercalation makes the \emph{choice of coordinate frame} operational rather than interpretive.
The $\sgang_M$ must be interpreted as longitudes in some frame: if taken sidereally (star-fixed), their seasonal meaning
drifts under precession, whereas if taken tropically (equinox-fixed), they remain season-anchored by construction.
Thus, as soon as intercalation is defined by actual sign/definition-point crossings rather than by a frozen congruence
table, ``sidereal vs.\ tropical'' becomes part of the calendar’s mathematical specification; we turn to this design
choice next.

\subsection{Coordinate frame as a design choice}
\label{ss:frame-choice}

The phrase ``sidereal Tibetan calendar'' can mean two different things, depending on which parts of the
pipeline are treated as fixed data.  In the classical arithmetic calendar, the leap-month pattern is
hard-coded by a periodic congruence (the $67/65$ count with a trigger test), and day numbering is governed
by elongation computed in a common frame.  In that regime, changing the longitude origin is essentially a
relabeling: it does not alter the intercalation pattern, and it cancels out of elongation-based day models.

By contrast, once month structure is defined dynamically by \emph{solar crossings of definition points}
(\S\ref{sss:dynamic-intercal}), the coordinate frame becomes part of the specification: the definition points
must be interpreted as longitudes in some frame.  If they are taken sidereally (star-fixed), their seasonal
meaning drifts under precession; if they are taken tropically (equinox-fixed), the month markers remain
season-anchored by construction.

The traditional Tibetan presentation follows its Indian progenitors in using a \emph{sidereal} zodiac, anchored
to the fixed stars rather than to the equinoxes.  The seasons, however, are governed by the \emph{tropical} year,
defined by the Sun's return to the vernal equinox.  Because of axial precession, the equinox drifts westward
through the sidereal zodiac at roughly $1^\circ$ per $72$ years, so a sidereally anchored set of month markers
slowly loses its seasonal meaning.  At present the accumulated precessional offset is about $24^\circ$.  Janson
further observes that at the astronomical vernal equinox (tropical longitude $0^\circ$) the calendar's computed
mean solar longitude is roughly $36^\circ$ smaller \cite{janson}; one may view this as the precessional drift
($\approx 24^\circ$) together with an additional internal lag ($\approx 12^\circ$) coming from the mean-Sun model.

For the \emph{frozen arithmetic} calendar this mismatch is largely interpretive rather than operational.
The leap-month pattern is fixed \emph{once and for all} by the congruence rule and is not recomputed from a solar
longitude tied to the moving equinox.  Likewise, insofar as day calculation depends only on elongation
\[
D(t)=\lambda_{\moon}(t)-\lambda_{\sun}(t),
\]
a global change of longitude origin cancels out of $D(t)$ provided both $\lambda_{\sun}$ and $\lambda_{\moon}$
are computed consistently in the same frame.

Nevertheless, even in an elongation-based pipeline it is good practice to specify (and, in a reform, to update)
\emph{both} $\lambda_{\sun}$ and $\lambda_{\moon}$ to the chosen frame, rather than treating the frame choice as
purely rhetorical.  This avoids hidden mixing of conventions and makes later extensions straightforward; in
particular, any step that uses $\lambda_{\sun}$ or $\lambda_{\moon}$ separately (e.g.\ sunrise geometry or a
true-Sun month rule) immediately forces an explicit and internally consistent frame choice.

The sidereal/tropical choice becomes genuinely substantive as soon as one changes the \emph{logical basis} of month
structure.  If one abandons a fixed intercalation table and instead determines months from (mean or true) solar motion
relative to explicit $30^\circ$ boundaries, then one must decide what those boundaries mean: sidereal boundaries follow
the stars, while tropical boundaries follow the seasons.  In such a dynamical setting, precession is no longer inert;
it enters the definition of the month structure itself.

If one aims for long-term seasonal alignment, it is therefore natural to adopt a \emph{tropical} framework, re-anchoring the definition points (\emph{sgang}) to seasonally meaningful longitudes. Because this transition requires mapping a historically sidereal framework onto a tropical one, there are several justifiable choices for the new anchor point, depending on which traditional characteristic a reform prioritizes. For instance, to strictly preserve the historical ``offset-from-sign-boundaries'' structure, one might place the sequence $8^\circ$ into each tropical sign (e.g., anchoring the first point at $308^\circ$). Alternatively, to better synchronize the calendar's mathematical start with specific seasonal boundaries---as proposed in the specific reform modules later in this paper---one could anchor the sequence at $336^\circ$ (i.e., $6^\circ$ into Pisces). By treating this base longitude as a configurable parameter, a reform can maintain the classical incidence rules of the $\textit{sgang}$-sequence while permanently arresting the slow precessional loss of seasonal meaning.

\subsection{Mean motion constants}
\label{ss:mean-motion}

In the siddh\=anta pipeline the \emph{mean} lunar-day boundary times and the mean solar longitude are affine
functions of the lunar day $d\in\{0,\dots,30\}$ and the true-month index $n\in\mathbb Z$:
\begin{equation}\label{eq:mean-affine}
\md(d,n)=m_0+n\,m_1+d\,m_2,
\qquad
\ms(d,n)\equiv s_0+n\,s_1+d\,s_2 \pmod 1,
\end{equation}
where mean date $\md$ is measured in Julian days, while the mean sun $\ms$ is measured in revolutions.
Traditionally one imposes the internal-consistency constraints
\[
m_2=\frac{m_1}{30},\qquad s_2=\frac{s_1}{30},
\]
encoding the fact that a lunation is divided into $30$ equal lunar-day-steps, and that the mean Sun advances
uniformly across those $30$ steps.

As Janson noted in his foundational analysis of the modern calendar \cite{janson}, the calculated new moons consistently arrive several hours earlier than the true astronomical new moons. At a lunar separation rate of $\approx 0.5^\circ$ per hour, this temporal offset translates directly into a systematic angular error in the calculated elongation. As our macroscopic evaluation demonstrates (\S\ref{sec:offset_distribution} and \S\ref{sec:lunar_engine_drift}), this is not merely a modern artifact, but a persistent, long-term secular drift. Because the calendar's equations of anomaly only oscillate around a central baseline, these macroscopic errors cannot be attributed to periodic trigonometric simplifications (such as the omission of evection). Instead, they are the direct mathematical consequence of this affine framework: rooted entirely in the inherited inaccuracies of the foundational epoch constants ($m_0, s_0$) and the steady accumulation of error from the linear mean motion rates ($m_1, s_1$).

\subsubsection{Lunar mean motion}

For the benchmark mean synodic month $S_{\mathrm{syn}}$, we refer to the modern values detailed in Appendix~\ref{app:modern-periods}. While the traditional Tibetan constant $m_1 = \frac{167025}{5656}$ provides excellent precision (sub-second per lunation), it is not derived from the continued-fraction expansion of the modern benchmark. 

If we treat $m_1$ as a tunable rational parameter, we can derive optimal approximants from the continued-fraction convergents of $S_{\mathrm{syn}}$. The sequence begins:
\[
\frac{59}{2},\ \frac{443}{15},\ \frac{502}{17},\ \frac{1447}{49},\ \frac{25101}{850},\ \frac{51649}{1749},\ \frac{283346}{9595},\ \dots
\]
These provide the ``best rational approximants'' at their respective denominator scales. For the purposes of reform, we compare the traditional constant against three progressively more accurate modern candidates (all assuming $m_2 = m_1/30$).

\begin{center}
\renewcommand{\arraystretch}{1.3}
\begin{tabular}{@{}l l l l@{}}
\toprule
{Parameter} ${m_1}$ & {Description} & {Error} ($\Delta t/\text{lun}$) & {Drift} ($/1000\text{y}$) \\
\midrule
$\frac{167025}{5656}$ & Traditional & $-0.161$ s & $-33.2$ min \\
$\frac{51649}{1749}$ & Low-denom CF & $+0.005$ s & $+0.98$ min \\
$\frac{283346}{9595}$ & High-accuracy CF & $-0.0004$ s & $-5.0$ s \\
$\frac{2756710}{93351}$ & Near-convergent & $-0.00001$ s & $-0.19$ s \\
\bottomrule
\end{tabular}
\end{center}

The analysis shows that there is nothing intrinsically wrong with retaining the traditional $m_1$ for an \emph{intercalation-only} reform; a drift of 30 minutes over a millennium is manageable for approximate civil dating. However, if the goal is to render the residual mean-elongation drift negligible while maintaining a fully rational structure, the convergent $\frac{283346}{9595}$ offers a particularly clean upgrade, providing ``small denominator'' efficiency with ``dynamical timescale'' accuracy.

\subsubsection{Solar mean motion and intercalation consistency}

To ensure the kinematic model remains internally consistent with the chosen arithmetic intercalation scheme, the mean solar motion must be derived directly from the intercalation ratio $R$. Defining $R$ as the number of mean lunations per tropical year:
\begin{equation}
R = \frac{p}{q} \approx \frac{Y_{\mathrm{trop}}}{S_{\mathrm{syn}}} \qquad \text{(lunations per year)},
\end{equation}
the mean Sun must advance by exactly $1/R$ revolutions per lunation. Consequently, the solar rate constants $s_1$ (per lunation) and $s_2$ (per lunar-day) are fixed by the relations:
\begin{equation}
s_1 = \frac{1}{R} = \frac{q}{p}, \qquad s_2 = \frac{s_1}{30} = \frac{q}{30p}.
\end{equation}
To implement the 334-year arithmetic reform proposed in Section \ref{ss:arith-intercalation}, these solar parameters must be updated from the classical baseline. In the traditional 65-year cycle, the intercalation ratio $R = 804/65$ yields a monthly solar advance of $s_1 = 65/804$ revolutions per lunation and a daily advance of $s_2 = 13/4824$ revolutions per day. 

In contrast, the highly accurate 334-year cycle operates on a ratio of $R = 4131/334$. Applying the relations above, this requires the mean Sun to advance by $s_1 = 334/4131$ revolutions per lunation. Dividing this by 30 (and reducing the fraction by the greatest common divisor) gives a daily solar advance of $s_2 = 167/61965$ revolutions per day. Because this 334-year cycle offers a massive improvement in seasonal drift stability while maintaining solar denominators that remain computationally manageable, it serves as the arithmetic foundation for the rational reform modules presented in Section 5.

\subsubsection{Epoch offsets}

Once $(m_1,m_2,s_1,s_2)$ are fixed, the offsets $(m_0,s_0)$ are simply initial conditions specifying the absolute placement of the model on the Julian-day line and the initial solar phase. Concretely, if one declares that the reference month has true-month index $n=0$ and uses $d=0$ at the chosen reference boundary, then \eqref{eq:mean-affine} forces
\[
m_0=\md(0,0),\qquad s_0\equiv \ms(0,0)\pmod 1.
\]
For the epoch used throughout our reform proposals, we select E1987, as it marks the start of the current 60-year cycle (Rabjung 17). At this epoch, the classical Phugpa parameters are (as per Table~\ref{tab:epoch-constants}):
\[
m_0=2446914+\frac{135}{707},\qquad s_0\equiv 0\pmod 1.
\]
In a reform where one updates the mean rates $m_1$ and/or $s_1$, one may choose to preserve these traditional $(m_0,s_0)$ constants to maintain the same classical absolute anchor instant. However, if a different absolute tuning is desired to align with true physical events, one must re-fit $(m_0,s_0)$ to a precise astronomical baseline. 

For the E1987 epoch, an ``astronomically pure'' calibration (derived directly from high-precision modern ephemerides, such as Meeus) yields the precise fractional values:
\[
m_0 = \frac{244691379521131}{100000000} = 2446913.79521131, \qquad s_0 = \frac{128634}{1296000}.
\]
Here, $m_0$ is shown alongside its exact decimal Julian date, while the fractional denominator of $s_0$ corresponds to the $1,296,000$ arcseconds in a full $360^\circ$ circle (meaning the initial solar phase is placed at exactly $128,634$ arcseconds). 

For computational efficiency, these massive exact fractions can be replaced by closely ``optimized'' values possessing much smaller prime denominators, without sacrificing practical precision:
\[
m_0 = \frac{160957989449}{65780}, \qquad s_0 = \frac{3609}{36361}.
\]
By substituting these optimized constants into the E1987 anchor point, a reform achieves strict astronomical alignment at the start of the contemporary calendar cycle while keeping individual mathematical denominators visually and computationally minimal.

However, in certain programming contexts, the ``pure'' unreduced fractions may actually be preferable. If a software implementation relies on exact rational arithmetic and combines multiple constants through successive iterations (such as within an iterative solver), mixing disparate prime denominators will cause the least common multiple (LCM) of the system to explode rapidly. In these scenarios, adhering to standardized, highly composite denominators (such as $10^8$ and $1,296,000$) prevents this LCM explosion and ensures that integer boundaries are not breached during complex mathematical operations.

\subsection{Mean anomalistic periods and phases}
\label{ss:mean-anomaly}

Mean-motion constants control the \emph{average} passage of lunar and solar longitudes, but civil-day labels ultimately depend on \emph{event times} (such as lunar day boundaries and new moons), which are highly sensitive to the non-uniform orbital speeds of the Sun and Moon. In the traditional Tibetan siddh\=anta pipeline, this non-uniformity enters through table-driven ``equations'' evaluated at specific anomaly arguments. For reform design, it is essential to separate two independent computational choices:
\begin{enumerate}\itemsep2pt
    \item the \emph{phase evolution} of the anomaly arguments (perihelion for the Sun; perigee for the Moon), representing their rates relative to time; and
    \item the \emph{shape and content} of the structural corrections (for instance, utilizing a first-harmonic ``equation of center'' versus expanding to a short harmonic series that includes dominant lunar perturbations).
\end{enumerate}

The classical first-anomaly scheme effectively locks the \emph{solar} anomaly phase to the calendar's mean Sun. Modern astronomy, by contrast, physically distinguishes the tropical (seasonal) year from the anomalistic (perihelion-to-perihelion) year, and likewise distinguishes the synodic month (new-moon to new-moon) from the anomalistic month (perigee to perigee). It is important to emphasize that the traditional Tibetan system is \emph{not} oblivious to the latter distinction: it explicitly introduces an independent lunar anomaly phase with its own rate constant \(a_1\), thereby implicitly separating the perigee cycle from the mean lunation count. 

The structural issues lie elsewhere. First, on the solar side, the anomaly argument is hard-wired to the mean Sun, forcing the anomalistic year to strictly follow the calendar's mean seasonal year. Second, on the lunar side, small but nonzero rate errors persist alongside the outright omission of the largest non-Keplerian \(D\)-coupled perturbations.%
\footnote{Here \(D\) denotes the mean elongation \(D=\lambda_{\moon}-\lambda_{\sun}\).} 
For the principal traditions, the adopted value of \(a_1\) implies an anomalistic period that differs from the modern mean anomalistic month by about \(3.6\) seconds. This rate error accumulates into a phase discrepancy of roughly \(0.75^\circ\) per century, contributing about 1 to 2 hours of timing error along the anomalistic cycle over a century \cite{janson,henning}. On the solar side, the anomalistic--tropical mismatch also produces a slow phase slip, though in practice this contribution is largely swamped by the dominant seasonal drift generated by the mean-sun/intercalation layer (as discussed in \S\ref{ss:seasonal-drift}).

However, the most severe timing discrepancies stem from the structural limitations of the classical first-anomaly lunar scheme itself. By neglecting evection—a solar-driven modulation of the Moon's elliptic anomaly with a longitude amplitude of up to \(\pm 1.274^\circ\)—the model permits positional errors that translate into timing shifts of approximately 2.6 hours near syzygies, given that relative elongation changes at about \(0.5^\circ\) per hour. Similarly, the omission of variation—a \(\sin(2D)\)-type term tied to elongation with an amplitude of up to \(\pm 0.66^\circ\)—adds a further 1.4 hours of potential error \cite{Meeus}. Consequently, a reform targeting robust, few-minute accuracy over centuries must inevitably decouple the anomaly phases from the mean Sun and introduce at least these dominant lunar \(D\)-coupled perturbations.

\subsubsection{Solar anomaly}
\label{sss:solar-anomaly}

Instead of defining the solar anomaly argument as a fixed shift of the mean Sun, a modern reform introduces an independent phase:
\[
  A_\sun(d,n)\equiv \san_0+n\san_1+d\san_2 \pmod 1,\qquad
  \san_2=\frac{\san_1}{30},
\]
where \(\san_1\) is chosen so that the implied anomalistic year
\[
  Y_{\rm anom}=\frac{m_1}{\san_1}
\]
matches a modern anomalistic-year benchmark at the chosen reference epoch, \emph{without} forcing \(Y_{\rm anom}\) to equal the calendar's tropical-year model.

A convenient, \emph{compact} rational choice is
\[
\san_1=\frac{122}{1509},\qquad
\san_2=\frac{61}{22635},
\]
while a higher-accuracy, yet still moderate, option is
\[
\san_1=\frac{1689}{20891},\qquad
\san_2=\frac{563}{626730}.
\]

We can evaluate these approximants against the modern benchmark for the solar anomaly rate (roughly \(0.0808482\) turns per lunation). For comparison, the traditional models effectively use the mean solar motion \(s_1 = 65/804\) for the anomaly, which ignores the slow drift of the perihelion. 

\begin{center}
\renewcommand{\arraystretch}{1.3}
\begin{tabular}{llll}
\toprule
Parameter \(\san_1\) & Description & Error (\(\Delta\theta\)/lun) & Drift (/1000y) \\
\midrule
\(\frac{65}{804}\) & Traditional (\(s_1\)) & \(-3.12''\) & \(-10.7^\circ\) \\
\(\frac{122}{1509}\) & Low-denom CF & \(+0.07''\) & \(+0.26^\circ\) \\
\(\frac{1689}{20891}\) & High-accuracy CF & \(+0.03''\) & \(+0.12^\circ\) \\
\bottomrule
\end{tabular}
\end{center}

The epoch phase \(\san_0\) is then fixed by one explicit condition, such as requiring \(A_\sun=0\) at a chosen perihelion reference time, or matching a tabulated perihelion longitude at the epoch. For the E1987 epoch, an ``astronomically pure'' calibration derived from modern ephemerides places this initial anomaly phase at exactly $406,845$ arcseconds:
\[
\san_0 = \frac{406845}{1296000}.
\]
Unlike the massive fractions required for the mean solar epoch, this exact phase constant simplifies cleanly to a manageable rational fraction ($9041/28800$). Therefore, computational implementations can safely utilize this exact value without requiring further rational approximation to prevent denominator overflow.

\subsubsection{Lunar anomaly}
\label{sss:lunar-anomaly}

Similarly, the independent lunar anomaly phase is defined as:
\[
A_\moon(d,n)\equiv a_0+n a_1+d a_2\pmod 1.
\]
A convenient and internally consistent parameterization enforces
\[
a_2=\frac{1+a_1}{30},
\]
so that over one lunation, the anomaly advances by \(1+a_1\) turns. The mathematical design target is therefore:
\[
1+a_1 \approx \frac{m_1}{S_{\rm anom}},
\qquad\text{equivalently}\qquad
a_1 \approx \frac{m_1}{S_{\rm anom}}-1,
\]
where \(S_{\rm anom}\) is the physically accurate anomalistic month. Two practical rational choices include a very small denominator version:
\[
a_1=\frac{18}{251},\qquad
a_2=\frac{269}{7530},
\]
and a high-accuracy version with still-manageable sizes:
\[
a_1=\frac{503}{7014},\qquad
a_2=\frac{7517}{210420}.
\]
An optimal rational ``upgrade'' that balances a modest denominator with high precision is:
\[
  a_1=\frac{4583}{63907},\qquad a_2=\frac{2283}{63907}.
\]

The traditional Tibetan parameter is \(a_1 = 253/3528\). We compare its performance against the modern rational candidates, targeting the benchmark lunar anomaly rate of approximately \(0.0717136\) turns per lunation:

\begin{center}
\renewcommand{\arraystretch}{1.3}
\begin{tabular}{llll}
\toprule
Parameter \(a_1\) & Description & Error (\(\Delta\theta\)/lun) & Drift (/1000y) \\
\midrule
\(\frac{253}{3528}\) & Traditional & \(-2.03''\) & \(-7.0^\circ\) \\
\(\frac{18}{251}\) & Low-denom CF & \(-0.57''\) & \(-2.0^\circ\) \\
\(\frac{503}{7014}\) & High-accuracy CF & \(+0.16''\) & \(+0.57^\circ\) \\
\(\frac{4583}{63907}\) & Optimal upgrade & \(-0.01''\) & \(-0.02^\circ\) \\
\bottomrule
\end{tabular}
\end{center}

The implied anomalistic-month constant is thus:
\[
S_{\rm anom}^{\rm(model)}=\frac{m_2}{a_2}=\frac{m_1}{1+a_1}.
\]
As with \(\san_0\), the epoch phase \(a_0\) is fixed by a single observational condition, such as matching a reference perigee time or longitude at the chosen epoch. For the E1987 epoch, an ``astronomically pure'' calibration derived from modern lunar ephemerides places this initial lunar anomaly phase at exactly $389,900$ arcseconds:
\[
a_0 = \frac{389900}{1296000}.
\]
Just like the solar anomaly phase, this exact fraction simplifies cleanly to a manageable rational number ($3899/12960$). Therefore, computational implementations can utilize this exact astronomical value directly, without requiring a secondary rational approximation to keep denominators small.

\subsection{Beyond the first-anomaly model}
\label{ss:beyond-mean-anomaly}

Once a calendar design moves beyond few-term anomaly models, the reform problem naturally splits into two distinct components: a kinematic model for the true longitudes \(\lambda_\odot(t)\) and \(\lambda_M(t)\), and an inversion method to solve for event times, such as lunar day boundaries.

It is conceptually useful to separate each longitude into an ``elliptic core'' plus additional periodic perturbations:
\[
\lambda(t)=\lambda_{\rm mean}(t) + \Delta\lambda_{\rm ell}(t) + \Delta\lambda_{\rm pert}(t).
\]
Here, \(\Delta\lambda_{\rm ell}\) represents the primary correction obtained by converting mean anomaly to true anomaly (the Keplerian eccentricity effect), while \(\Delta\lambda_{\rm pert}\) collects the dominant additional lunar terms driven by the Sun, alongside any further harmonics included by the reform.

In principle, one can evaluate \(\Delta\lambda_{\rm ell}\) with arbitrary precision by solving Kepler's equation \(E-e\sin E=M\), which yields the mean-to-true anomaly map \(M \mapsto \nu\). However, computational effort here yields diminishing returns, as the perturbation term \(\Delta\lambda_{\rm pert}\) remains an approximation. Furthermore, the overall pipeline ultimately requires solving an equation of the form \(D(t)=12^\circ k\) to determine lunar day boundaries. It is therefore highly efficient to approximate the combined term \(\Delta\lambda_{\rm ell}+\Delta\lambda_{\rm pert}\) directly via a truncated series, and then solve \(D(t)=12^\circ k\) globally using a few iterations of a fixed-point method. This methodology is a direct, natural extension of the classical Tibetan calendar, which classically evaluates a single term for \(\Delta\lambda_{\rm ell}\) (assuming \(\Delta\lambda_{\rm pert}=0\)) and relies on a single iteration of a basic fixed-point method to solve for the boundaries.

A clean strategy to construct the total perturbation model \(\Delta\lambda_{\rm ell}+\Delta\lambda_{\rm pert}\) is to adopt a published, standardized truncated series and state the truncation rule explicitly. Practical solar and lunar longitude algorithms take the form of finite trigonometric sums, as detailed in Appendices~\ref{app:solar-series} and \ref{app:lunar-series} (see also \cite{Meeus}). The amplitude cutoff for these series can be rigorously determined by translating angular residuals into timing residuals. If the tolerated error budget for a lunar day boundary is \(\tau\) hours and the elongation speed is \(\omega\approx 0.5^\circ/{\rm hr}\), then it suffices to bound the residual elongation error by \(\omega\tau\) degrees. For instance, a strict tolerance of \(\tau=5\) minutes requires an angular precision of \(\omega\tau\approx 0.042^\circ\), which immediately dictates the necessary number of harmonic terms to retain.

Finally, to ensure absolute cross-platform reproducibility, the reform must explicitly specify the numerical approximations used to evaluate the transcendental functions (such as sine) appearing in these series, as well as the exact mechanics of the iterative equation solving. These foundational numerical protocols are addressed in detail in \S\ref{ss:numerical} and Appendix~\ref{app:numerical}. The following subsections detail the physical significance and mathematical formulation of the five largest structural corrections required for a modern lunar series.

\subsubsection{Evection and variation}
\label{sss:evection-variation}

The traditional first-anomaly model successfully captures the primary elliptic effect, known as the equation of center. However, it systematically omits the dominant \(D\)-coupled perturbations driven by the solar gravitational field, most notably evection and variation \cite{Meeus}. A minimal harmonic upgrade incorporates these effects by applying a targeted correction to the lunar longitude. Specifically, this correction takes the form:
\begin{equation}
    \Delta\lambda_M = E\sin(2D-M') + V\sin(2D),
\end{equation}
where \(D = \lambda_\moon - \lambda_\sun\) represents the mean elongation and \(M'\) is the lunar mean anomaly. Using standard peak amplitudes of \(E \approx 1.274^\circ\) and \(V \approx 0.66^\circ\), these corrections can be securely translated into explicit rationals in \emph{turns} to support purely arithmetic fixed-point implementations. 
We refer to Table~\ref{tab:lunar-primary} in Appendix~\ref{app:lunar-series} for more accurate constants.

\subsubsection{Secondary inequalities and geometric corrections}
\label{sss:secondary-inequalities}

Beyond the primary solar perturbations of evection and variation, a robust calendrical model must address the next tier of inequalities to suppress residual errors below a ten-minute threshold. The largest of these is the annual equation, an effect driven by the seasonal variation in the Earth-Sun distance, which modulates the solar gravitational pull on the lunar orbit. Peaking at an amplitude of approximately \(0.186^\circ \sin(M)\), its omission introduces a timing shift of roughly 22 minutes into the calendar. Additionally, the second elliptic term, arising from the higher-order expansion of the equation of center with an amplitude of roughly \(0.214^\circ \sin(2M')\), contributes a periodic timing error of about 25 minutes if neglected. 

Finally, a precision calendar must account for the reduction to the ecliptic. Unlike the previous dynamical perturbations, this is a purely geometric projection of the Moon's motion from its tilted orbital plane onto the ecliptic. Modeled generally as \(-k \sin(2(L-F))\), where \(F\) is the argument of latitude, it possesses a peak amplitude of \(0.114^\circ\) and generates a rhythmic error of roughly 13 minutes. 

To accurately compute this reduction to the ecliptic, an independent phase for the argument of latitude must be tracked:
\[
  A_F(d,n)\equiv f_0+n f_1+d f_2\pmod 1.
\]
Consistent with the lunar anomaly, the daily advance is constrained so that the argument completes \(1+f_1\) turns per lunation:
\[
  f_2=\frac{1+f_1}{30}.
\]
The target rate is derived from the modern draconic month \(S_{\rm drac}\) via \(f_1 \approx \frac{m_1}{S_{\rm drac}}-1\). A convenient, low-denominator choice for this fractional phase advance is:
\[
  f_1=\frac{61}{716},\qquad f_2=\frac{259}{7160}.
\]
An optimal rational upgrade offering high precision with a moderate denominator is:
\[
  f_1=\frac{324}{3803},\qquad f_2=\frac{4127}{114090}.
\]

Because traditional day-counting models omit the reduction to the ecliptic, they do not track \(F\) in their primary longitude algorithms. We evaluate the new rational candidates directly against the modern benchmark rate for the argument of latitude (approximately \(1.0851958\) turns per lunation):

\begin{center}
\renewcommand{\arraystretch}{1.3}
\begin{tabular}{llll}
\toprule
Parameter \(f_1\) & Description & Error (\(\Delta\theta\)/lun) & Drift (/1000y) \\
\midrule
\(\frac{61}{716}\) & Low-denom CF & \(-0.41''\) & \(-1.4^\circ\) \\
\(\frac{324}{3803}\) & Optimal upgrade & \(+0.07''\) & \(+0.24^\circ\) \\
\bottomrule
\end{tabular}
\end{center}

As in previous models, the epoch phase \(f_0\) is fixed by a single observational condition, such as matching a reference nodal longitude at the chosen epoch. For the E1987 epoch, an ``astronomically pure'' calibration derived from modern ephemerides places this initial latitude argument phase at exactly $91,591$ arcseconds:
\[
f_0 = \frac{91591}{1296000}.
\]
For computational environments where minimizing prime denominators is prioritized over maintaining standardized rational scales, this exact constant can be replaced by a closely ``optimized'' fraction:
\[
f_0 = \frac{4596}{65033}.
\]
Substituting this optimized fraction into the E1987 anchor allows the calendar to maintain precise nodal alignment while ensuring the mathematical denominators remain tightly bounded.

\begin{remark}
While the traditional calendars ignore these secondary terms entirely, their cumulative neglect can lead to a convergence of errors exceeding four hours. By explicitly incorporating these five primary and secondary inequalities—evection, variation, the annual equation, the second elliptic term, and the reduction to the ecliptic—the residual longitudinal error is suppressed to strictly less than \(0.1^\circ\). This level of precision ensures that syzygy and lunar day boundary timings remain robust to within approximately 12 minutes \cite{Meeus}.
\end{remark}

\subsection{The necessity of geographic specification}
\label{ss:locale}

A defining characteristic of traditional Tibetan systems, such as those derived from the Kālacakra Tantra, is their \emph{location independence}. These calendars operate on abstract arithmetic mean motions that are agnostic to any specific geographical meridian, treating "dawn" merely as a procedural marker to sample the output rather than as a computational input. Historically, the underlying parameters were likely derived from a synthesis of observations across the Indian subcontinent, 
and later, potentially Central Asia, creating a "composite" reference that smoothed over local differences. This approach effectively tuned the system to a virtual average meridian rather than a single specific observatory, occurring under a framework where the strict separation of local versus mean time was not a primary architectural constraint.

However, the moment a reform seeks to interpret "dawn" as a precise physical event rather than a symbolic proxy, geographic dependence becomes a mathematical necessity. Because the Earth rotates at $15^\circ$ of longitude per hour, an observer's local time shifts by four minutes for every degree of deviation from a standard reference. Across the vast longitudinal breadth of the Tibetan cultural sphere and Mongolia (a span of roughly $40^\circ$), this geometric discrepancy exceeds two hours. As established in \S\ref{sec:geo-epoch}, the traditional model successfully absorbed this regional variance for two reasons. First, it possesses a massive built-in temporal safety buffer. Second, the intrinsic error spread of the traditional new moon calculations spans roughly three to four hours. This inherent baseline variance effectively swamps the two-hour geographic discrepancy, rendering localized adjustments practically moot within the classical framework. However, if a modern reform eliminates these classical inaccuracies to track true astronomical dawn with high-precision lunar models, it must actively incorporate geographic longitude. Without a localized longitude correction, the unmasked geographic drift will inevitably cause the systematic misidentification of the lunar day.

To bridge the gap between traditional abstraction and physical reality, a reform must treat longitude $\lambda$ and latitude $\varphi$ as fundamental instance parameters. We distinguish three tiers of implementation, which correspond to the models detailed in Appendix~\ref{app:sunrise}:

\begin{itemize}
    \item \emph{Fixed local proxy:} The day start is defined by a constant time, such as \emph{05:56}. This prioritizes convention while acknowledging the $\sim 4$-minute anticipation of apparent sunrise at the equator due to refraction. It effectively uses a "mean local meridian" but ignores seasonal shifts.
    \item \emph{Geometric local sunrise:} The day start is tied to the \emph{local mean sunrise}, accounting for the observer's latitude $\varphi$ and seasonal solar declination $\delta$. This solves the large errors caused by latitude and seasonality but neglects the $\pm 16$-minute drift of the Sun relative to a uniform clock.
    \item \emph{Astronomical local sunrise:} The day is anchored to the \emph{local true sunrise} using a dynamic ephemeris. This model incorporates the \emph{equation of time} to bridge the gap between apparent solar time and civil mean time, achieving robustness to within seconds.
\end{itemize}

Adopting any of these methods represents a fundamental shift from traditional arithmetic to spatiotemporal specificity. By explicitly choosing a geographic anchor, the reform ensures that the calendar is no longer a detached abstraction, but a model synchronized with the physical horizon of the observer.

\begin{remark}
Beyond the geometric constraints of latitude and axial tilt, the timing of the \emph{visible} sunrise is subject to the chaotic dynamics of the Earth's atmosphere. While standard astronomical models assume a constant refraction lift of $34'$ to account for the Sun appearing above the horizon while it is still geometrically depressed, real-world fluctuations in air temperature, barometric pressure, and humidity cause the actual refraction to vary significantly from day to day.
These stochastic shifts create an inherent ``noise'' in observed timings—often on the order of several minutes—that cannot be eliminated by orbital theory. Consequently, for calendars targeting high precision, the definition of the day must be explicitly decoupled from the \emph{visible} event. Instead, one must adhere strictly to the \emph{standardized} sunrise (calculated with the fixed $34'$ constant), accepting that the calculated civil day will occasionally drift from the observed solar disc by a few minutes due to unmodelable local weather.
\end{remark}

\subsection{Earth’s rotation as a hard limit on precision}
\label{ss:delta-t-barrier}

A fundamental barrier to achieving arbitrary calendrical accuracy over millennial scales is the non-uniformity of the Earth's rotation. While orbital events (such as syzygies) occur in uniform \emph{Terrestrial Time} ($\TT$), horizon-based phenomena (such as sunrises) are anchored to the Earth's physical orientation, measured as \emph{Universal Time} ($\UTone$). The discrepancy between the uniform dynamical time of celestial mechanics and the erratic rotating reference frame of the observer is captured by the parameter $\DeltaT = \TT - \UTone$.

As detailed in Appendix~\ref{app:tt}, the inherent degradation of temporal alignment as we move away from the current epoch is driven by two distinct modes of uncertainty:
\begin{itemize}
    \item \emph{Short-term stochastic noise:} Chaotic fluctuations in the Earth's rotation are driven by geophysical processes, including angular momentum exchange between the core and mantle, seasonal atmospheric mass redistributions, and post-glacial isostatic adjustments. Because these processes are non-deterministic, $\DeltaT$ cannot be calculated purely from first principles. 
    This necessitates the periodic insertion of \emph{leap seconds} to keep \emph{Coordinated Universal Time} ($\UTC$)—the atomic standard used for civil timekeeping—within 0.9 seconds of $\UTone$. 
    Since these are based on observations and announced only months in advance, a guaranteed accuracy of 1 second is physically impossible even for the near future.
    \item \emph{Long-term secular drift:} The Earth's rotation is gradually slowing, primarily due to tidal braking. 
    While this is approximated by a fitted quadratic model \eqref{e:delta-t-quad} from Appendix~\ref{app:tt} (Figure~\ref{fig:Delta-T}), 
    the coefficients themselves are empirical averages with inherent uncertainty. Because this model represents an integrated effect, even a minute uncertainty in the deceleration rate scales with the square of time. This creates a widening ``parabola of uncertainty'': a small precision error in the tidal coefficient today can translate into a divergence of tens of minutes when projected back to the era of the \emph{Kālacakra} or forward into the distant future.
\end{itemize}

\begin{figure}[ht]
\centering
\includegraphics[width=.7\textwidth]{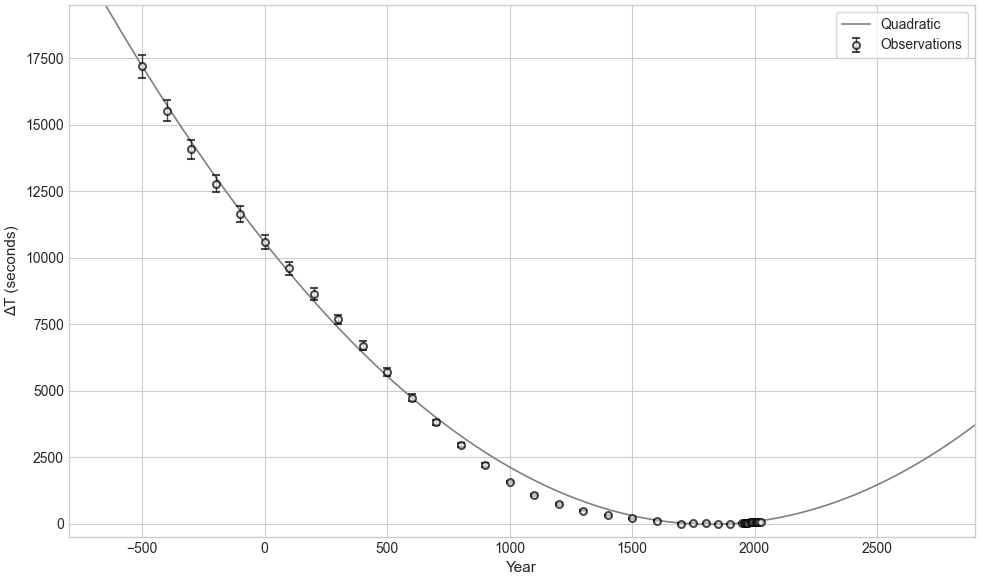}
\caption{Comparison of the quadratic model \eqref{e:delta-t-quad} against historical eclipse records and modern atomic data \cite{Espenak2006,IERS2026}.}
\label{fig:Delta-T}
\end{figure}

Crucially, the uncertainty in $\DeltaT$ does not imply that celestial mechanics themselves are imprecise; in the uniform domain of Terrestrial Time ($\TT$), planetary positions can be calculated with extreme accuracy over vast spans. The fundamental barrier lies strictly in the translation of these precise dynamical events into the observer's erratic local timeline ($\UTone$ or $\UTC$). Any error in estimating the Earth's braking manifests directly as a timing error for every sun-synchronous event. For the purposes of reform, we must acknowledge this ``uncertainty horizon'': while modern models maintain sub-minute precision for the current era, any calendar calculation spanning several centuries remains a high-fidelity estimation rather than a mathematical absolute. One might speculate that in a distant future where humanity expands beyond Earth, $\TT$ could eventually supersede the solar day as the civil standard, rendering the vagaries of Earth's rotation a mere historical curiosity; until then, however, our calendars remain tethered to the unpredictable spin of our planet.

\subsection{Numerical approximation and portability}
\label{ss:numerical}

The choice of numerical technology is as critical to a successful reform as the astronomical
models themselves. In this framework, approximation methods are treated as formal parts of
the specification rather than as mere implementation details. This is essential for true
reproducibility: the same calendric rule should evaluate identically on a low-resource
embedded device, a desktop implementation, and a cloud service. In particular, a modern
calendar standard must specify not only \emph{what} functions are to be evaluated, but also
\emph{how} they are evaluated---whether by exact rational arithmetic, tabular interpolation,
prescribed floating-point approximants, or a fixed-iteration inverse solver.

For rational-style architectures, the natural starting point is an \emph{exact rational} regime built from
piecewise-linear tables and reverse interpolation. This approach is exceptionally easy to
audit, aligns well with traditional computational styles, and completely avoids floating-point
ambiguity. It is especially well suited to models whose accuracy target is already limited by
simplified geometry or by a coarse civil-day trigger. At the same time, its limitations are
clear: linear interpolation introduces visible slope discontinuities, and once one aims at
genuinely few-minute or sub-minute decisions, these discretization effects become part of
the error budget rather than negligible background noise.

A second design constraint is that exact arithmetic by itself is not enough: one must also
control \emph{denominator growth}.
In deep-time evaluations, naive rational secular
terms can cause exact-fraction arithmetic to develop enormous least common multiples,
making continuous evaluation impractical. 
Rational architectures therefore benefit from
synchronizing secular coefficients with the existing harmonic denominator structure of the
calendar, so that exact evaluation remains stable over long horizons rather than exploding
combinatorially. 
This denominator-growth constraint is part of the rationale for the highly
structured rational parameter choices adopted in the exact rational designs considered here.

To reach higher precision while preserving deterministic cross-platform behavior, one naturally
moves from tables to \emph{prescribed minimax polynomials} with stored coefficients.
Angular arguments are reduced by symmetry to a primary interval, and fixed-degree
polynomials are then used for sine- and arctangent-type kernels. This yields smoother
behavior and a nearly uniform approximation error while keeping storage modest. 
For full portability, the coefficients should be frozen in hexadecimal floating-point form, so
that their intended binary64 values become part of the standard itself rather than being left
to host-language decimal parsing.
The concrete approximants and error bounds are recorded in
Appendix~\ref{app:numerical}. The same philosophy extends to auxiliary kernels whenever
bit-level reproducibility matters: if a square root, inverse tangent, or similar subroutine is
part of the calendric logic, its evaluation method should also be standardized rather than
delegated silently to a platform library.

Inverse problems require the same level of standardization as forward evaluation. Sunrise
times and lunar-day boundaries are not obtained by a closed formula, but by solving for the
instant at which a modeled phase quantity reaches a prescribed threshold. For such tasks,
\emph{fixed-iteration solvers} are preferable to tolerance-based loops:
stopping when an error estimate falls below a machine-dependent threshold invites
platform-dependent branch behavior, whereas a prescribed iteration count guarantees that
every conforming implementation executes the same arithmetic path. In the rational and
mid-tier engines, a short Picard iteration is the natural standardized choice; in higher tiers,
one may use a fixed-count Newton or Steffensen refinement, but the count itself must be
part of the specification. The key point is that determinism comes not from the abstract
method name, but from freezing the exact iteration protocol.

Finally, portability requires us to standardize the numbers themselves and the exact rules of
their manipulation. Decimal literals are not enough: even when two systems nominally use
the same constant, the conversion from decimal source text to internal binary format may
depend subtly on compiler, runtime, or host language. 
For floating-point architectures, fundamental constants should therefore preferably be
published in \emph{hexadecimal floating-point notation}, so that their intended bit patterns are
fixed in advance. In principle, decimal literals can also be acceptable when there is only a
single, correctly rounded conversion to IEEE-754 binary64, but hexadecimal notation avoids
placing any burden on decimal-to-binary parsing and makes the intended representation
completely explicit.
Just as importantly, the evaluation order
must be prescribed. Floating-point arithmetic is sensitive to grouping, contraction, and
library-specific optimizations such as fused multiply-add. A portable calendar standard
must therefore fix not only the constants, but also the sequencing of arithmetic operations
and any required rounding conventions, so that a date computed on one IEEE-754 binary64
platform remains identical on another.

In short, the numerical layer is itself part of the calendric contract: some architectures are
best standardized through exact rational semantics, while others require explicitly prescribed
floating-point semantics.

\section{Concrete Reform Proposals}
\label{s:reforms}

We propose a stratified ladder of reform standards (L1--L6), ranging from conservative rational repairs of the traditional system to fully dynamical astronomical realizations. 
Each level embodies a distinct compromise among historical continuity, astronomical fidelity, numerical transparency, and implementation burden.

\subsection{Generalities}
\label{ss:general}

Any serious lunisolar calendar reform must be specified at two levels at once: the level of
\emph{calendric logic}, which determines what the calendar means, and the level of
\emph{numerical semantics}, which determines how that meaning is realized reproducibly in practice.
Without both, one has at most a suggestive model, not a standard that can be implemented,
audited, and published without ambiguity.

At the calendric level, every reform must specify four things.

\begin{enumerate}\itemsep2pt
\item \emph{Intercalation logic.}
One must specify the rule that determines the month skeleton: which solar model is used, how month
labels are assigned, and under what condition an extra lunation is inserted.
This is the level at which one decides, for example, whether the calendar is governed by a fixed
arithmetic congruence, a mean-Sun crossing rule, or a more fully astronomical month definition.

\item \emph{Lunar-day delimitation.}
One must specify how the boundaries of lunar days are computed, i.e.\ how the elongation
\(\lambda_{\moon}-\lambda_{\sun}\) is modeled and how the instants at which it reaches multiples of
\(12^\circ\) are located.
This includes not only the celestial model but also the inversion procedure used to recover the event times.

\item \emph{Civil-day synchronization.}
One must specify how the continuous sequence of lunar-day boundaries is sampled by civil days.
In practice this means choosing the civil-day trigger: a fixed surrogate such as mean sunrise,
or a genuinely geometric local sunrise model tied to a specified location and atmospheric convention.

\item \emph{Temporal reference.}
One must specify the time scale in which the celestial model is formulated, and the rule by which it is
related to civil time.
In particular, any reform that aims at long-term physical fidelity must state how it treats the distinction
between uniform dynamical time and irregular Earth-rotation time.
\end{enumerate}

These four pillars determine the mathematical content of the calendar, but not yet a
\emph{reproducible} calendar. A modern standard must also fix the numerical contract by which
the formulas are evaluated. Expressions such as \(\sin x\), \(\arctan x\), or ``solve for the root''
are not, by themselves, portable specifications: different implementations may diverge because of
library choices, hidden rounding, stopping rules, tie-breaking conventions, or evaluation order.
A reform intended for actual use must therefore prescribe its numerical semantics explicitly.
Depending on the level of the reform, this may mean exact rational arithmetic with tabular
trigonometry, or a fully specified floating-point protocol with frozen constants, prescribed
polynomial kernels, and fixed iteration counts.

Reproducibility also requires explicit verification. A reform proposal should come with benchmark
dates, invariants, and diagnostic checks. Benchmark dates include, for instance, New Year dates
(Losar, Tsagaan Sar), representative skipped and repeated days, and historically published almanac
entries. Invariants include monotonicity of the underlying time model, consistency of month/day
labeling, and robust behavior in near-tie configurations where lunar-day boundaries lie close to the
civil-day trigger. A reform that cannot be tested systematically cannot be trusted as an operational
standard.

Finally, calendric reform is also a question of transparency. A good proposal should separate as
clearly as possible three kinds of change: \emph{conventions} (for example, the reference meridian
or the definition of sunrise), \emph{scientific updates} (for example, improved mean motions or
anomaly terms), and \emph{numerical choices} (for example, exact rational tables versus prescribed
polynomial evaluation). This separation makes it easier to see what is being preserved, what is being
changed, and what belongs merely to implementation discipline rather than calendric doctrine. In
practice, it also means that an adopted standard should be published in versioned form, together
with fixed benchmark outputs sufficient to test conformance across software environments.

\subsection{Design philosophy and error budget}
\label{ss:philosophy}

Our reform program is guided by a simple but far-reaching asymmetry: the Sun is a slow, stable background clock,
whereas the Moon is a fast, sensitive foreground clock.
This suggests that a lunisolar calendar should not be treated as a monolithic object, but designed in layers, with a relatively stable \emph{month skeleton} at the macro level and a more delicate \emph{day texture} at the micro level.
In particular, the model that governs intercalation need not be identical in complexity to the model that governs
lunar-day boundaries and civil-day labels.

This layered viewpoint is not alien to the Tibetan tradition; it is already implicit in the classical system.
The classical system uses a comparatively rigid mean-motion structure to determine leap months, while reserving
anomaly corrections for the day calculation.
We make this implicit architecture explicit and elevate it to a design principle: \emph{normative stability}.
A calendar intended for communal use should not allow its month skeleton to jitter in response to microscopic
physical refinements.
Leap-month placement is a discrete classification problem, and once a community standardizes that classification,
it is usually desirable that tiny perturbations in constants or negligible higher-order terms do not cause the
sequence of months to twitch.
For this reason, many of our proposals intentionally define the month layer using a smoothed or low-order lunisolar model, while allowing a more refined lunar and solar geometry to govern the day layer.

The error budget supports this stratification.
Three characteristic angular rates dominate the problem:
Earth's rotation, about \(15^\circ/\mathrm{h}\);
the synodic elongation, about \(0.5^\circ/\mathrm{h}\);
and the Sun's longitude, about \(0.04^\circ/\mathrm{h}\).
The lesson is immediate.
The timing of conjunctions and lunar-day boundaries is overwhelmingly controlled by the lunar engine, because the Moon
moves much faster than the Sun.
A moderate refinement of the solar model often changes these timings far less than a comparable refinement of the
lunar model.
By contrast, the civil-day labeling problem is intrinsically high-frequency:
one must track a rapidly moving elongation boundary against the discrete ``sampling strobe'' of sunrise.
Small timing errors can therefore move a lunar-day boundary across the civil-day trigger and change whether a date is
skipped or repeated.
At higher levels of precision, even the sunrise model itself becomes part of the decisive error budget, since errors
of a few minutes in the sunrise time are large enough to affect the day label.

This leads naturally to a two-level design philosophy.
At the macro level, one seeks a \emph{canonical month model}: stable, auditable, and resistant to spurious jitter.
At the micro level, one seeks a \emph{physical day model}: accurate enough to track the true texture of the Moon's
motion and the local civil-day boundary.
Different reform levels will combine these two layers in different ways, but all of them are organized by this same
principle.

The same philosophy extends to arithmetic.
For the lower and middle reform levels, reproducibility is best served by exact rational arithmetic together with
fully prescribed discrete trigonometric machinery.
For the higher levels, greater physical fidelity requires floating-point evaluation, but floating point is acceptable
only if it is itself standardized: constants, approximants, iteration counts, and tie-breaking rules must all be part
of the specification.
In other words, numerical semantics are not an afterthought to the calendar; they are part of the calendar.

The concrete proposals in this section are arranged as a ladder of increasing commitment.
Levels L1--L3 retain a predominantly rational and explicitly auditable architecture, while progressively improving
the astronomical content.
Levels L4--L5 move to standardized floating-point semi-analytic standards, trading some arithmetic austerity for
higher physical accuracy.
Level L6 represents the fully astronomical endpoint, in which the calendar is tied directly to a modern ephemeris-style model.
For contrast, low-commitment alternatives are discussed separately in \S\ref{ss:low_commitment}, while the arithmetic L0 baseline is recorded in Appendix~\ref{app:arith-day}.

A useful way to read the reform ladder is as a family of \emph{standard bundles} built from a
small number of reusable computational modules. At the month level, the serious proposals
draw on two main modules: an \emph{arithmetic month module}, in which intercalation is governed
by a fixed rational cycle and congruence rule, and a \emph{rational month module}, in which month
decisions are derived from a low-order conjunction / solar-crossing model. At the day level,
the serious proposals draw on two main modules: a \emph{rational day module} and a
\emph{floating-point day module}, both of which may be viewed as instances of a common day
algorithm template with different numerical contracts, sunrise components, and anomaly terms
activated. The individual reform levels L1--L5 should therefore be understood not as five
wholly unrelated calendars, but as five named selections from this common menu. In the 
subsections that follow, each level is presented with both its design rationale and an explicit 
technical specification card defining its exact selection of components. 
Appendix~\ref{app:reference_modules} serves as the foundational parts library, recording the 
rigorous internal mathematics and rational constants of the shared modules from which these 
standards are built.

\subsection{L1: Modernized traditional standard} 
\label{ss:L1} 

L1 is the most conservative serious reform. Its purpose is not to redesign the Tibetan calendar from the ground up, but to formalize a cleaned-up traditional standard that preserves the familiar discrete architecture while removing the most conspicuous long-term structural defects discussed in \S\ref{ss:mean-anomaly}. In particular, it keeps the classical style of calculation---mean motions, first-anomaly corrections, and a simple local day trigger---but makes the role of locale explicit, updates the constants, and decouples the solar anomaly phase from the tropical year.

\reformspec
{Minimal corrective intervention. Preserve the traditional style of month and day logic while making locale explicit, updating the constants, and removing the gross solar-anomaly phase defect.}
{$\approx3-5$ hours.}
{Arithmetic month module using the 334-year cycle ($p=1336$, $q=1377$). The epoch phase is deduced dynamically from the sky anchored at $d_{1}=336^\circ$.}
{Rational day module. Uses a 1-term solar anomaly and a 1-term lunar anomaly (major equation). Solved via 1 Picard iteration.}
{Constant sunrise calculated as $1/4$ (6:00 AM local mean time) for the target coordinate.}
{Constant $\Delta T$ evaluated at 69 seconds.}
{Exact rational arithmetic. Evaluates all trigonometric functions using the discrete traditional tables from \S\ref{a-ss:tabs}.}
{Communities that want a conservative reform with maximal continuity of procedure and notation.}

\emph{Month layer.} Intercalation remains purely arithmetic, with the month structure governed by the mean-Sun/mean-Moon rule of \S\ref{ss:arith-intercalation}. 
This is the natural choice for a minimally
disruptive reform: it preserves the traditional arithmetic style and the stable skipped-month-free
topology of months, even though the precise long cycle is updated from the classical baseline.

\emph{Day layer.} Lunar-day boundaries are computed with the classical first-anomaly architecture of \S\ref{ss:day_models}, but with modernized mean and anomalistic constants. 
The astronomical repair at this level is the one identified in \S\ref{sss:solar-anomaly}: the solar anomaly is no longer tied to the tropical year. Just as important operationally, however, the day calculation is now placed in an explicit local framework rather than being left only implicitly tied to a place of publication. Thus L1 preserves the traditional inverse-style day calculation while clarifying its locale dependence and removing the slow phase drift built into the older solar model.

\emph{Civil-day boundary.} At this level we do not yet attempt a geometric sunrise model. The civil-day trigger is still a fixed local surrogate, chosen for simplicity and continuity with almanac-style publication practice; but unlike the traditional systems discussed in \S\ref{ss:locale}, the locale is now an explicit part of the specification. Seasonal and latitudinal variations in actual sunrise are therefore not yet modeled, but the calendar is at least unambiguously tied to a stated longitude (and, if desired, latitude).

\emph{Numerical contract.} L1 should be specified in exact rational arithmetic. The point of this level is not merely low computational cost, but full transparency: every constant, table value, and interpolation rule can be published and audited directly. 

\emph{Accuracy and intended use.} This level removes a genuine structural defect but does not yet resolve the dominant lunar and civil-day errors. The omitted lunar inequalities---especially evection and variation---still move lunar-day boundaries by hours, and the fixed civil-day trigger introduces an additional seasonal uncertainty of comparable scale. For that reason L1 should be understood as a conservative ritual/publishing standard, not as a physically faithful astronomical calendar.

\subsection{L2: Evection standard} 
\label{ss:L2} 

L2 is the first level at which the lunar error budget is taken seriously as a design target. The philosophy is to bring the lunar-day engine up to the natural accuracy ceiling imposed by a fixed local civil-day trigger, without yet paying the conceptual or computational cost of geometric sunrise. In that sense L2 is an \emph{error-balancing} reform: it keeps the stable arithmetic month layer of L1, but repairs the dominant missing lunar physics.

\reformspec
{Repair the dominant missing lunar inequalities while staying within the accuracy ceiling imposed by a fixed local civil-day trigger.}
{$\approx 1$--$2$ hours.} 
{Identical to L1.} 
{Rational day module extending the lunar anomaly to 3 terms (adds evection and variation). Solved via 1 Picard iteration.} 
{Identical to L1.} 
{Identical to L1.} 
{Identical to L1.} 
{A balanced rational standard for civil or communal use when full sunrise geometry is not yet desired.} 

\emph{Month layer.} As in L1, the intercalation rule remains purely arithmetic. This is deliberate: the month skeleton is kept stable and auditable, while the improvements are concentrated in the day layer. 

\emph{Day layer.} The distinguishing feature of L2 is the inclusion of the leading missing \(D\)-coupled lunar terms from \S\ref{sss:evection-variation}. The classical first-anomaly lunar model is not merely slightly inaccurate; it omits the largest perturbations beyond the elliptic core. Once evection and variation are restored, the lunar-day computation is no longer dominated by gross lunar kinematic error, and the principal remaining limitation comes from the civil-day trigger rather than from the elongation model itself. 

\emph{Civil-day boundary.} L2 still uses a fixed local surrogate for sunrise. This is the defining tradeoff of the level: one accepts that the civil-day boundary is only approximate, but ensures that the lunar engine is no worse than that approximation. 

\emph{Numerical contract.} This level can still be kept entirely rational. The only real change from L1 is that the lookup or interpolation layer must now be fine enough that table quantization does not erase the gain achieved by the added lunar terms. 

\emph{Accuracy and intended use.} After the inclusion of evection and variation, the dominant residual error is no longer the Moon but the day trigger. At mid-latitudes, the seasonal displacement between true sunrise and a fixed local surrogate is already of order an hour, so there is little point in pushing the lunar series far beyond this level unless the sunrise model is also improved. L2 is therefore a natural ``balanced rational'' proposal for communities that want a visibly improved calendar while remaining within a simple and fully auditable arithmetic framework.

\subsection{L3: Geometric standard} 
\label{ss:L3} 

L3 is the first fully \emph{geometric} rational reform. It keeps the stable arithmetic month skeleton, but upgrades the civil-day trigger from a fixed surrogate to a latitude-dependent sunrise model. This is the point at which geography enters the specification as an essential datum rather than a hidden assumption; see \S\ref{ss:locale}. L3 is therefore the natural culmination of the rational branch of the reform ladder. 

\reformspec
{Achieve geometric consistency while remaining in a fully rational and reproducible arithmetic framework.} 
{$\approx 15$--$30$ minutes.} 
{Identical to L1.} 
{Rational day module extending the solar anomaly to 1 term with secular drift, and the lunar anomaly to 6 terms (adds annual equation, second elliptic, and reduction to ecliptic). Solved via 3 Picard iterations utilizing the decoupled rational preconditioner ($295306/10000$) to prevent lowest common multiple explosion.} 
{Spherical sunrise model. Uses a geometric depression of $h_0 = -1/432$ turns ($-50'$) and an obliquity of $\varepsilon = 4219/64800$ turns ($\approx 23.44^\circ$), evaluating solar coordinates at a baseline of 5:56 AM local mean time ($89/360$ day fraction).}
{Explicit quadratic $\Delta T$ ($-20 + 32u^2$) and exact factorized rational secular solar drift included.} 
{Exact rational arithmetic. Evaluates geometry using the higher-precision sine table from \S\ref{ss:sine-table}.}
{A high-quality rational standard for long-term publication and use where exact reproducibility is paramount.} 

\emph{Month layer.} L3 still prefers the mean-Sun month skeleton. At this level that is no longer because nothing better is available, but because the month layer is intentionally insulated from small high-frequency perturbations. The principle of normative stability remains in force: month topology should not twitch merely because the day engine has become more realistic. 

\emph{Day layer.} The day engine is now explicitly geometric. Sunrise is computed from latitude and solar declination by a rational spherical model, as discussed in \S\ref{ss:locale}. The lunar engine is also strengthened relative to L2, incorporating the main low-order corrections needed for compatibility with the new sunrise precision scale. In particular, once the civil-day boundary has been refined to the tens-of-minutes regime, omitted lunar terms at the level of only a few minutes become relevant and should no longer be ignored. 

\emph{Civil-day boundary.} The decisive innovation of L3 is that the civil-day trigger is no longer a fixed clock surrogate but an actual geometric sunrise model. We still omit a full equation-of-time correction, so the remaining day-boundary uncertainty is largely set by the difference between apparent and mean solar time. At this tier, explicit secular drift in the solar anomaly and an explicit \(\Delta T\) model are best viewed as optional add-ons rather than defining requirements: they improve long-range consistency, but they do not yet dominate the error budget. This is precisely why L3 improves accuracy by roughly an order of magnitude over L2 without yet requiring floating-point machinery.

\emph{Numerical contract.} L3 must remain rigorously rational if it is to deserve its place as the summit of the rational branch. This includes not only the affine predictors but also the trigonometric and inverse-trigonometric layer. In the reference design, the same discrete grid supports both forward and reverse interpolation, including the sunrise inversion. The prime-factor constraint discussed in \S\ref{ss:numerical} belongs naturally here: it is the mechanism that prevents exact rational evaluation from becoming unmanageable over deep time. 

\emph{Accuracy and intended use.} At this level the dominant residual error is no longer missing geography but the omission of the equation of time and higher-order solar/lunar refinements. L3 is therefore a serious publication-grade rational standard: fully portable, fully reproducible, and accurate to a scale that is already small compared with most historical calendric uncertainties.

\subsection{L4: Primary standard} 
\label{ss:L4} 

L4 is the first modern \emph{reproducible floating-point} standard. It is intended as the primary practical standard for contemporary computational environments: substantially more accurate than the rational tiers, yet still light enough to be fast, transparent, and portable. In the logic of the reform ladder, L4 is the point where one accepts that strictly rational arithmetic is no longer the best vehicle for a civil standard, provided that floating-point semantics are themselves frozen and made reproducible. 

\reformspec
{A practical modern standard: few-minute accuracy with fully specified floating-point semantics.} 
{$\approx 3$--$5$ minutes.} 
{Rational month module. Uses 1-term solar and 1-term lunar anomalies, with secular drift on the solar term, solved via 1 Picard iteration, physically anchored at $\sgang_1=336^\circ$. Evaluated using the sine table from \S\ref{ss:sine-table}.}
{Floating-point semi-analytic module. Uses a 2-term solar series and a 14-term lunar series (truncated from Table~\ref{tab:lunar-primary}). Solved via 3 fixed iterations.} 
{Floating-point spherical sunrise ($h_0 = -50'$, $\varepsilon = 23.44^\circ$).} 
{Explicit floating-point quadratic $\Delta T$ and secular solar drift.}
{Reproducible binary64 arithmetic. Day-layer transcendental functions are evaluated using rigidly prescribed 5th-degree hex-float minimax polynomials from \S\ref{ss:minimax}, supported by the deterministic square root algorithm from \S\ref{app:sqrt}.}
{The default modern civil standard for software, publication, and broad public use.}

\emph{Month layer.} Unlike the day layer, the month layer at L4 remains deliberately low-order and effectively rational in structure. The natural L4 choice is a matched first-anomaly month model: if the Sun is promoted from mean motion to a first-anomaly model in the month logic, then the Moon should be promoted as well, so that the conjunction logic is not destabilized by a mismatched order of approximation. This is the first level at which skipped months become a genuine possibility if the chosen solar transit rule permits them.

\emph{Day layer.} The day engine is now a medium-depth floating-point series model. In the implementation architecture this corresponds to a deliberately trimmed semi-analytic engine: a low-order solar model, a 14-term lunar model, and the omission of terms whose contribution lies below the intended few-minute scale over the relevant historical window. This is not a mere computational compromise; it is a standardization choice based on the error budget. 

\emph{Civil-day boundary.} L4 supplements the spherical sunrise model of L3 by incorporating the equation of time. Once the civil-day trigger is being used at the few-minute scale, the distinction between mean and apparent solar time becomes operational and can no longer be ignored. 
At this point both an explicit secular treatment of the solar anomaly and a prescribed \(\Delta T\) model cease to be optional embellishments and become part of the standard itself.

\emph{Numerical contract.} This is where the reproducible-float philosophy first becomes central. Standard library calls are not sufficient. The specification must prescribe the polynomial approximants for the transcendental layer, the floating constants, the iteration count for the fixed-point solver, and the exact order of evaluation. L4 is therefore not ``float'' in the loose engineering sense; it is a rigorously standardized floating-point protocol. 

\emph{Accuracy and intended use.} The intended error scale is a few minutes. At that level the leading residuals come not from gross truncation error but from the deliberate omissions needed to keep the standard lightweight: very small secondary lunar terms, very small solar higher harmonics, and the uncertainties built into long-term \(\Delta T\) modeling. This makes L4 the natural primary standard for modern public use: accurate, portable, fast, and much easier to standardize socially than a full ephemeris calendar.

\subsection{L5: High-precision standard} 
\label{ss:L5} 

L5 is the flagship semi-analytic standard. It accepts the same reproducible floating-point philosophy as L4, but pushes the astronomical model to the point where the limiting errors are no longer mainly numerical or truncational, but arise from the physical fuzziness of the civil-day trigger itself. In other words, L5 is designed so that the mathematics is no longer the main bottleneck. 

\reformspec
{A high-fidelity modern standard: the strongest semi-analytic proposal short of direct ephemerides.} 
{$\approx 30$ seconds in the astronomical kernel, with practical civil accuracy limited by atmosphere and \(\Delta T\).} 
{Rational month module extending the anomalies to 2 solar terms (still with secular drift) and 6 lunar terms. Solved via 2 Picard iterations utilizing the rational preconditioner ($295306/10000$) and the sine table from \S\ref{ss:sine-table}.}
{Deep semi-analytic floating-point module. Employs a 2-term solar series and a 64-term lunar series (Table~\ref{tab:lunar-primary}-\ref{tab:lunar-supp}). Solved via 3 fixed iterations.}
{Identical to L4.}
{Explicit floating-point quadratic $\Delta T$, secular solar drift, and lunar tidal acceleration incorporated via pre-compiled hexadecimal drift constants.}
{Identical to L4.}
{A flagship modern standard for communities that want sub-minute internal consistency without going fully ephemeris-based.}

\emph{Month layer.} L5 continues to respect normative stability. The month layer remains low-order and effectively rational in structure, even though it is more refined than in L4. It is still not allowed to become a noisy mirror of every microscopic perturbation in the physical sky. The point is to give the calendar a stable and standardizable skeleton while allowing the day layer to carry the finer dynamical texture.

\emph{Day layer.} The day engine is now a deep semi-analytic model. This includes a higher-order solar model, a substantially expanded lunar series, and the selective retention of secular terms that matter at the sub-minute level over long time spans. In particular, lunar tidal acceleration first becomes part of the defining specification at this level. At this tier, pruning becomes scientifically nontrivial: one must decide not only what to include, but what to omit on the grounds that it contributes less than the intended physical noise floor.

\emph{Civil-day boundary.} A decisive formalization at L5 is the treatment of sunrise as a \emph{standardized astronomical event} rather than an actual meteorological observation. Once the internal astronomical kernel reaches the tens-of-seconds regime, local weather variability in atmospheric refraction becomes larger than the algorithmic error. For a civil standard, the only reasonable choice is therefore to standardize the atmosphere and define sunrise theoretically rather than observationally. 

\emph{Numerical contract.} The reproducibility requirements of L4 become even stricter here. Polynomial approximants must be chosen so that numerical noise remains comfortably below the physical truncation error of the retained series, and the fixed-point solver must be iterated to a correspondingly tighter tolerance. The important point is again conceptual: these are not merely implementation details, but part of the standard itself. 

\emph{Accuracy and intended use.} The astronomical kernel can be pushed to the \(\sim 30\)-second scale, but the civil calendar as experienced by human observers is already limited by atmospheric and geophysical uncertainties. L5 is therefore best understood as the highest meaningful semi-analytic civil standard: beyond it, additional precision becomes difficult to interpret as observable calendric improvement rather than as internal numerical refinement.

\subsection{L6: Astronomical standard} 
\label{ss:L6} 

L6 is the fully astronomical endpoint of the reform ladder. In its strongest form, it abandons closed-form and truncated semi-analytic models in favor of a directly ephemeris-driven calendar. This is the level at which the physical sky itself becomes the primary reference, offering absolute astronomical fidelity. However, the same architecture still allows a community to pair an ephemeris-driven day layer with a somewhat smoother month layer if normative stability is preferred. Because L6 relies on external data rather than a self-contained algorithmic module, it is defined by interface rules and architectural commitments rather than a rigid parameter specification.

\emph{Month layer.} In its strongest form, L6 accepts the full dynamical consequences of the astronomical model. If the Sun crosses no definition point during a lunation, the month repeats; if it crosses two, a skipped label occurs. These possibilities are no longer viewed as pathologies to be excluded by smoothing, but as genuine features of the astronomical calendar. At the same time, nothing in the architecture forbids a hybrid astronomical reform in which the day layer is ephemeris-driven while the month layer is kept at a slightly lower order for the sake of normative stability.

\emph{Day layer.} The day engine is no longer a closed semi-analytic approximation but a direct numerical evaluation of solar and lunar positions. In that sense, L6 is less a single algorithm than a commitment to an external astronomical standard together with a prescribed interface to calendric logic. 

\emph{Civil-day boundary.} Even at this highest level, the civil-day boundary cannot be reduced to raw observation, because the physical moment of sunrise is weather-dependent. Thus L6 still inherits the principle established in L5: sunrise is a standardized astronomical event defined theoretically in a prescribed atmosphere, not an observational datum that changes from day to day with meteorological conditions. 

\emph{Numerical contract.} The challenge of L6 is not arithmetic portability in the narrow sense, but interface portability. One must specify the ephemeris source and version, the time scales, the interpolation protocol, and the rule by which continuous events are converted into discrete month and day labels. Because any physical standard is evaluated with finite precision against continuous astronomical data, near-boundary events require an explicit canonical policy. If a new moon, a definition-point crossing, or a sunrise falls within a prescribed tolerance of a discrete boundary, the standard must specify how the tie is resolved. This is not a defect of L6, but part of what it means to turn a continuous astronomical model into a discrete civil standard.

\emph{Accuracy and intended use.} L6 offers the greatest physical fidelity, limited mainly by long-term uncertainty in $\Delta T$ and the conventions used to resolve boundary ties. However, it is not automatically the only acceptable civil standard. Once adopted, any of the lower tiers is a standard in its own right, with its own declared balance between stability, transparency, and astronomical realism. The role of L6 is therefore twofold: it is the astronomical endpoint of the reform program, and it provides a high-accuracy reference tier against which lower tiers may be studied and diagnosed when such comparison is useful.

\subsection{Low-commitment alternatives}
\label{ss:low_commitment}

For completeness, we also record several alternatives that improve some aspects of the
traditional calendar while stopping short of a fully balanced reform.  These options are
useful for delineating the design space, and in some settings they may serve as transitional,
diagnostic, or fallback standards.  They fall into two different structural categories.  The
first two are \emph{month-layer} compromises: they modify how the lunation skeleton is
regulated while leaving the day layer largely untouched.  The third is a \emph{day-layer}
simplification: it replaces the traditional lunar-day machinery by a purely arithmetic rule,
while remaining compatible with any chosen month engine.  Thus these alternatives are not
mutually exclusive components of a single ladder; rather, they are partial pathways that may
be combined in different ways.  Nevertheless, they are not our primary recommendations,
because each of them repairs only part of the pipeline while leaving another part
conceptually or numerically misaligned.

A first class of such alternatives changes the \emph{month skeleton} while leaving the rest of
the calendar as untouched as possible.  The most obvious example is to replace the
traditional \(67/65\) intercalation rule by a more accurate rational cycle, as in
\S\ref{ss:arith-intercalation}, while retaining the old day engine, the old anomaly tables,
and the traditional style of publication.  One might, for instance, adopt a \(334\)-year or
\(687\)-year month cycle in place of the \(65\)-year cycle, but continue to compute lunar days
exactly as before.  This can dramatically improve seasonal stability at very low conceptual
cost, and it preserves the attractive feature that leap months remain governed by a short,
rigid arithmetic rule.  However, by itself this only corrects the macro-level drift.  The day
model still inherits the anomaly-phase defects, the missing lunar inequalities, and the
mismatch between a location-independent dawn surrogate and the physical sunrise of an
actual observer.  A cycle-only reform is therefore best understood as a partial repair of the
input data, not as a complete recalibration of the calendar.
It is, in other words, a modification of the \emph{month layer} only, and can in principle be
paired with either a traditional, rational-geometric, or more modern day rule.

A second class keeps the traditional engine essentially intact, but superimposes an
\emph{occasional corrective intervention} on the published calendar.  The idea is to preserve
the inherited month and day machinery most of the time, while inserting or suppressing a
whole month when the accumulated seasonal error is judged to have become too large.
Such a correction could be prescribed by a fixed long cycle---for example, omitting one leap
month every so many centuries---or by an external trigger, such as declaring a correction
when Losar drifts past a chosen seasonal marker in the Gregorian calendar.  This may look
appealing because it leaves the familiar traditional calendar untouched for long stretches and
acts only rarely.  But mathematically it produces a coarse ``sawtooth'' correction profile:
the error is allowed to accumulate continuously and is then repaired in one jump of roughly a
full lunation.  The result is therefore neither a genuinely stable arithmetic calendar nor a
genuinely astronomical one.  It is a patched hybrid whose visible output is governed by the
traditional engine most of the time, but periodically overridden by an external correction
rule.  This weakens transparency, makes long-term behavior harder to analyze, and obscures
the causal link between the published calendar and the underlying calendric model.
Again, this is a month-layer intervention: the day rule may remain entirely traditional, or
may be combined with any of the more refined day-layer proposals discussed elsewhere.

A third alternative, which we regard as an important \emph{baseline} but not as a primary
proposal, is the fully arithmetic mean-elongation day model developed in
Appendix~\ref{app:arith-day}.  This is our L0 level.  Unlike the first two alternatives, it is
not primarily a month-layer proposal at all.  Rather, it is a \emph{day-layer} simplification
that can be combined with essentially any month engine, traditional or reformed.  The day
layer is replaced by a purely arithmetic rule in which the mean elongation advances at a
constant rational rate and day labels are assigned by a simple carry/no-carry test at dawn.
In the most natural parameter range, this produces
\emph{skipped days only} and \emph{no repeated days}.  The result has genuine merits: it is
exceptionally easy to audit, naturally suited to exact rational arithmetic, and useful for
pedagogy, stress testing, and deep-time diagnostics.  It also serves as a clean control model
against which more structured Tibetan or more astronomical proposals may be compared.
At the same time, it abandons a large part of the specifically Tibetan day-calculation
architecture: the first-anomaly inverse day map disappears, and the resulting
skipped-day-only arithmetic is no longer the same phenomenon as the traditional
siddh\=anta pattern of repeated and skipped days.  For that reason we do not elevate L0 to
the same status as levels L1--L6.

These low-commitment alternatives therefore have real uses, but those uses are limited.
The month-layer compromises may function as migration aids or conservative seasonal
repairs, while the L0 day layer provides a clean control model, a fallback arithmetic mode,
and a useful diagnostic baseline.  What they do \emph{not} provide, whether separately or in
combination, is the main goal of this section: a coherent, end-to-end reform in which the
month rule, the day rule, the civil-day trigger, and the numerical semantics are specified
together and judged by a single transparent error budget.  The serious proposals remain the
layered standards L1--L6, while the alternatives collected here show how one may choose to
intervene in only one layer at a time.

\subsection{Reference implementation and diagnostics infrastructure}
\label{ss:reference_impl}

The reform ladder proposed in this paper has been realized as an executable software framework, together with diagnostic and visualization tools. This implementation did not merely illustrate the mathematics after the fact; it clarified several distinctions that are easy to blur on paper but impossible to ignore in code, including the separation of month and day layers, the distinction between mean and true lunation boundaries, the role of locale in sunrise-based calendars, and the need to treat numerical semantics as part of the calendric specification.

\emph{Architecture.}
The reference implementation is organized around the same decomposition that emerges from the mathematical analysis. A \emph{month engine} determines the lunation skeleton and month labels; a \emph{day engine} computes lunar-day boundaries and their interaction with the civil-day trigger; and an \emph{orchestrating calendar layer} synchronizes the two, including any epoch shift between the month-layer lunation index and the day-layer computational index. This separation is not cosmetic. It reflects the paper's central architectural claim that the calendar has a relatively stable macro-structure and a more delicate micro-structure, and that a reform may alter one without altering the other in the same way.

\emph{Declarative specifications.}
The software distinguishes sharply between declarative data and execution logic. A calendar instance is specified by a small structured record of constants and conventions: month rule, day rule, location, epoch offsets, anomaly tables or series, and month-labeling policy. This makes it possible to express traditional calendars and reform tiers within a common framework, compare them systematically, and subject them to the same diagnostic tooling.

\emph{Numerical realization.}
The implementation mirrors the stratified numerical philosophy of \S\ref{ss:numerical}. The rational engines are built on an exact arithmetic layer using Python \texttt{Fraction} arithmetic, discrete trigonometric tables, and reverse interpolation. The floating-point engines use frozen polynomial kernels and fixed-iteration inverse solvers rather than tolerance-driven black-box routines. In both branches, the numerical layer is explicit and reproducible rather than left implicit in the programming environment.

\emph{Tier realization.}
The reference library implements the principal traditional calendars together with the reform tiers L0--L5. The rational branch culminates in a geometric rational engine with explicit sunrise geometry and exact arithmetic. The floating-point branch provides a practical workhorse tier and a higher-precision flagship tier, differing mainly in the depth of the solar and lunar models and in the treatment of smaller secular terms. The fully ephemeris-driven level L6 plays a different role: it serves primarily as a high-accuracy reference against which the analytic tiers can be compared.

\emph{Diagnostics infrastructure.}
The implementation includes a substantial diagnostics layer. At the calendric level, it can enumerate months, repeated and skipped days, New Year dates, and leap-month placements for both traditional and reformed systems. At the structural level, it can inspect month topology, intercalation spacing, and near-tie behavior. At the astronomical level, it can compare predicted event times against modern reference models, fit long-horizon drift curves, reconstruct anomaly profiles, and compute rolling error statistics. The astronomical diagnostics were cross-checked against external reference standards, including JPL DE422 for solar and lunar event timing, the NREL Solar Position Algorithm for solar-position and sunrise calculations, and standard modern \(\Delta T\) references for time-scale conversion \cite{DE422JPL,NRELSPA,IERS2010,Espenak2006,IERS2026}. These tools are what make the comparative numerical study of \S\ref{ss:comparative_diagnostics} possible.

\emph{Public tools.}
The implementation is exposed through a broader public-facing ecosystem. In addition to the Python library itself, there is a command-line interface for inspection and testing, a web calendar capable of rendering repeated/skipped days and intercalary months, and a web diagnostics tool for exploring drift and reform behavior over long time spans. These tools and the underlying library are publicly accessible at \cite{caltib_web}. They make the reform proposals inspectable by users, auditable by specialists, and directly testable by communities that may wish to compare alternatives before adopting a standard.

\emph{What the implementation clarified.}
The existence of a working implementation sharpened several parts of the paper. It forced explicit treatment of trigger labels versus repeated labels, the precise role of the true-month index in day computation, the handling of Bhutanese leap-month naming by reparametrization, the distinction between mean and true lunation boundaries at the epoch, and the fact that sunrise, \(\Delta T\), and tie-breaking conventions cannot remain implicit if a reform is intended to be portable. It also forced the numerical layer to become more explicit than in the initial design sketches: denominator-growth constraints in exact arithmetic, fixed-iteration inverse solvers, and coefficient-level control of floating-point evaluation all had to be treated as part of the implementation contract. In this sense, the software was not merely an appendix to the theory; it was one of the tools by which the theory itself was clarified.

\subsection{Comparative diagnostics and error suppression}
\label{ss:comparative_diagnostics}

The reform ladder can be evaluated at three distinct levels: seasonal placement, anomaly shape, and residual timing error. The first level is visible in the distribution of Losar dates; the second in the reconstructed angular anomaly; the third in the spread and histogram of conjunction offsets against a modern reference.

We begin with the coarsest seasonal diagnostic. Figure~\ref{fig:losar_scatter} compares the long-range Losar scatter of the Phugpa baseline with the first reform tier. The unreformed system occupies a band that drifts steadily later in the year, eventually pushing New Year deep into spring. By contrast, the L1 reform is visually flat on the historical and practical horizon: the Losar pattern stays confined to a stable early-spring window, with no discernible secular rise across the plotted interval. This reflects the fact that the revised month ratio is accurate enough to push any residual seasonal drift out to timescales far beyond ordinary calendrical use. This does not by itself measure conjunction accuracy, but it shows that the reform removes the calendar's most conspicuous large-scale seasonal defect rather than merely rearranging individual month labels.

\begin{figure}[htpb]
    \centering
    \includegraphics[width=0.78\textwidth]{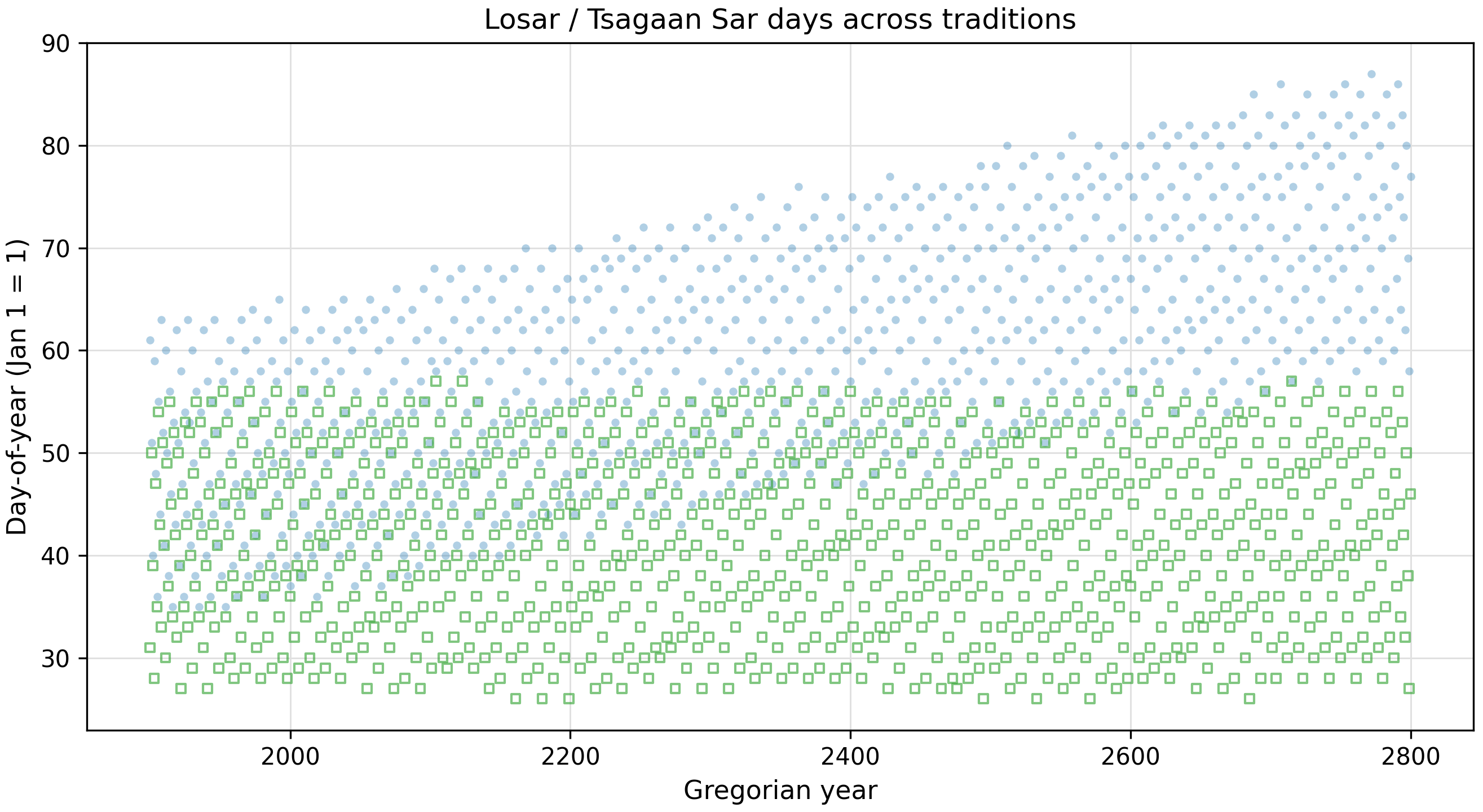}
    \caption{Long-range Losar scatter for the Phugpa baseline and the L1 reform.}
    \label{fig:losar_scatter}
\end{figure}

A more intrinsic test is the anomaly kernel itself. Figure~\ref{fig:anomaly_kernel} compares the reconstructed angular anomaly against the DE422 ephemeris \cite{DE422JPL} on a representative modern window. Here the hierarchy of the ladder is already clear. The traditional branches and L1 retain visible phase and amplitude distortion, with L1 correcting some gross features while still missing the detailed profile. L2 removes the dominant shape error and tracks the reference much more closely. The L3 curve is then visually almost indistinguishable from the reference on the displayed interval. Thus the first major gain of the ladder is not yet extreme numerical precision, but the correction of the anomaly \emph{shape} and \emph{phase}.

\begin{figure}[htpb]
    \centering
    \includegraphics[width=0.86\textwidth]{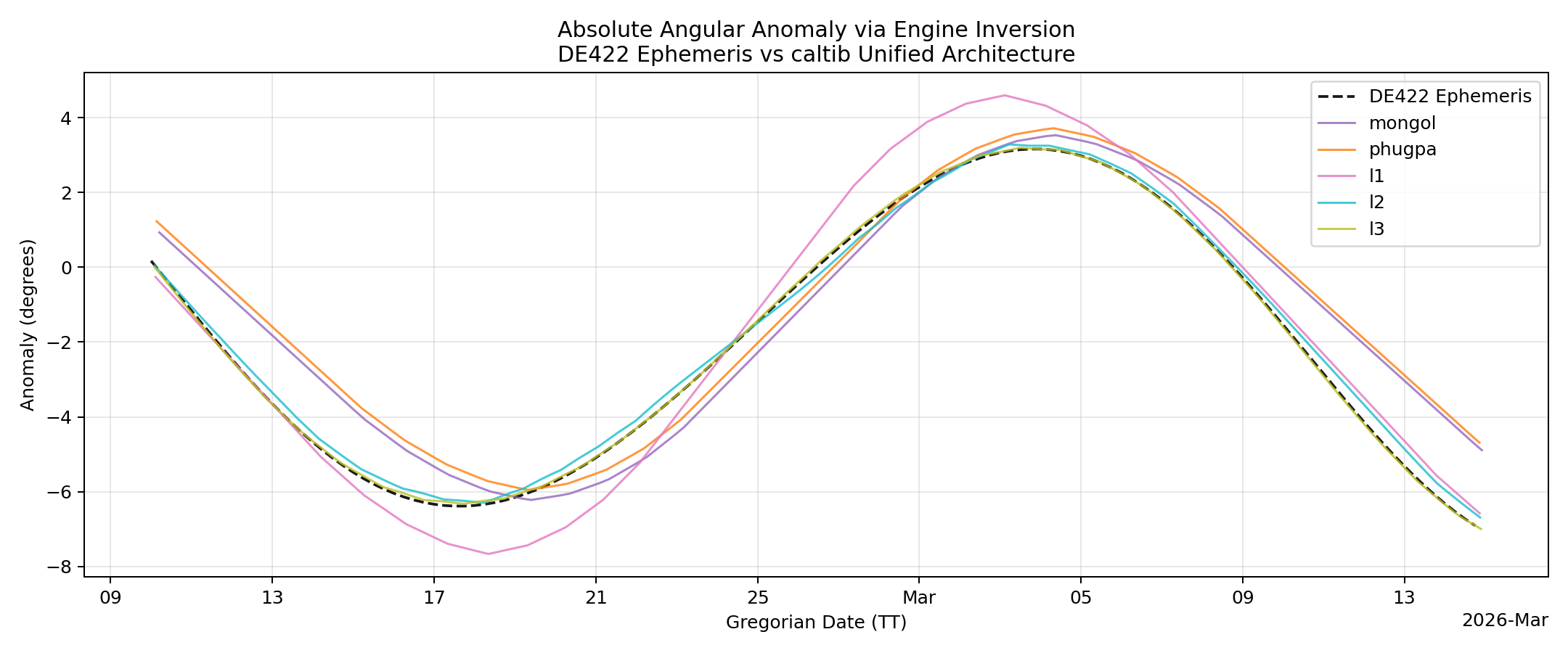}
    \caption{Angular anomaly compared with the DE422 ephemeris.}
    \label{fig:anomaly_kernel}
\end{figure}

The same distinction appears statistically in the rolling standard deviation of conjunction timing error shown in Figure~\ref{fig:rolling_sigma}. Over the post-11th-century window displayed here, all traditional branches already lie on broad rising curves, with standard deviations typically between about \(0.7\) and \(2.5\) hours and worsening steadily into the future. In other words, by the historical era of actual Tibetan and Mongolian usage, the inherited anomaly architecture is already far from its best phase alignment. L1 does not substantially cure this problem: it stabilizes the seasonal skeleton, but the conjunction cloud remains wide, essentially flat, and still at the hour scale. L2 is the first tier that decisively compresses the spread, reducing the rolling standard deviation to roughly the half-hour scale and keeping it nearly constant over long intervals. L3 compresses it further to the \(\sim 0.2\)-hour range. In this sense, the passage from L1 to L2 is the first genuinely dynamical repair of the calendar, while L3 begins to suppress the remaining secondary structure.

\begin{figure}[htpb]
    \centering
    \includegraphics[width=0.82\textwidth]{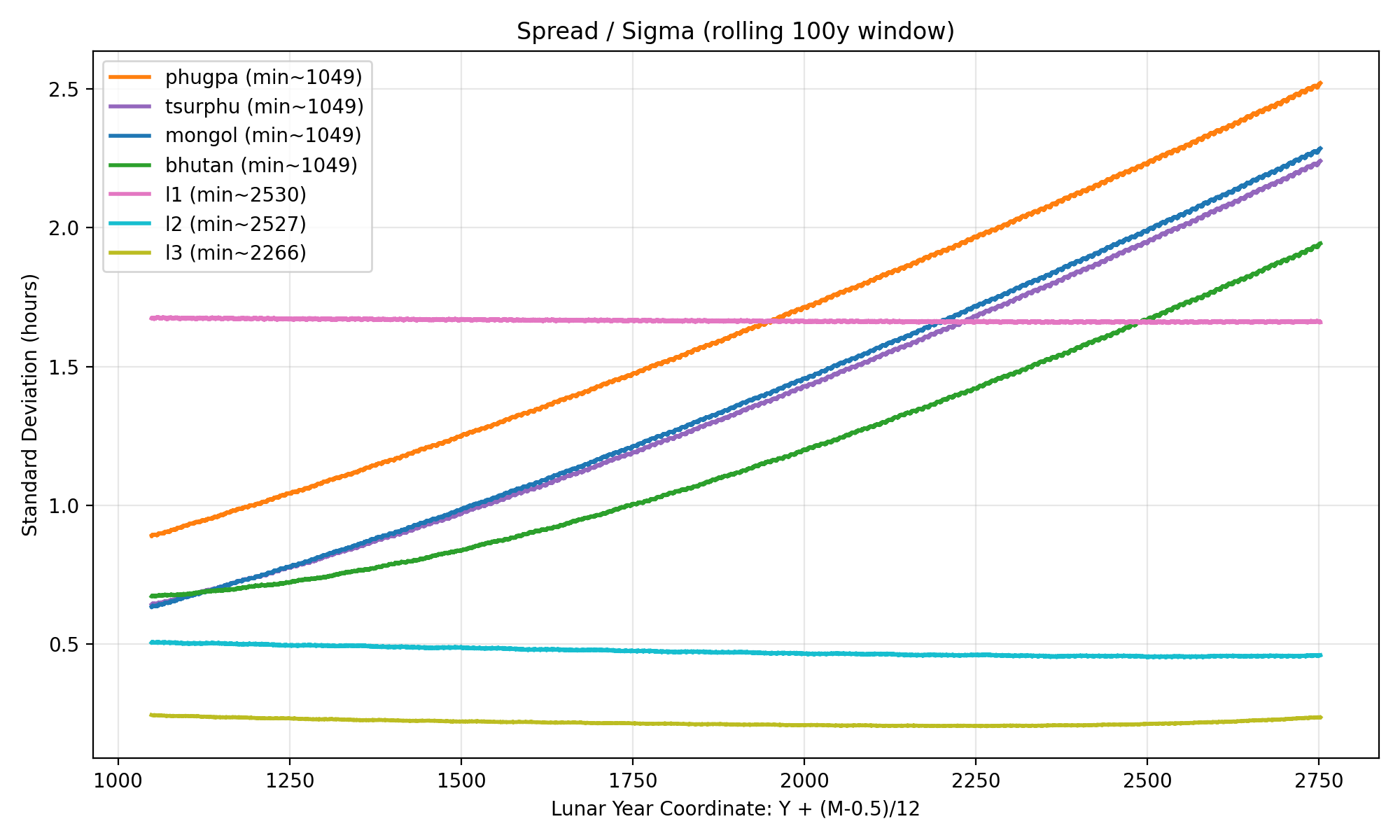}
    \caption{Rolling \(100\)-year standard deviation of conjunction timing error.}
    \label{fig:rolling_sigma}
\end{figure}

The endpoint of this suppression is best seen directly in the offset histograms. Figure~\ref{fig:offset_histograms_reform} shows the physical new-moon timing offsets of L2 and L4 against the DE422 ephemeris. Both distributions are centered essentially at zero mean, so by these tiers the large systematic bias has already been removed. The remaining difference is almost entirely in the width. L2 still has visible hour-scale tails, extending to about \(\pm 1.5\) hours. L4 compresses the same cloud into a narrow band of only a few minutes, with tails of order \(\pm 0.08\) hours. At that stage the dominant classical anomaly error has already been eliminated, and what remains is the residue after higher-order geometric and perturbative corrections.

\begin{figure}[htpb]
    \centering
    \includegraphics[width=0.48\textwidth]{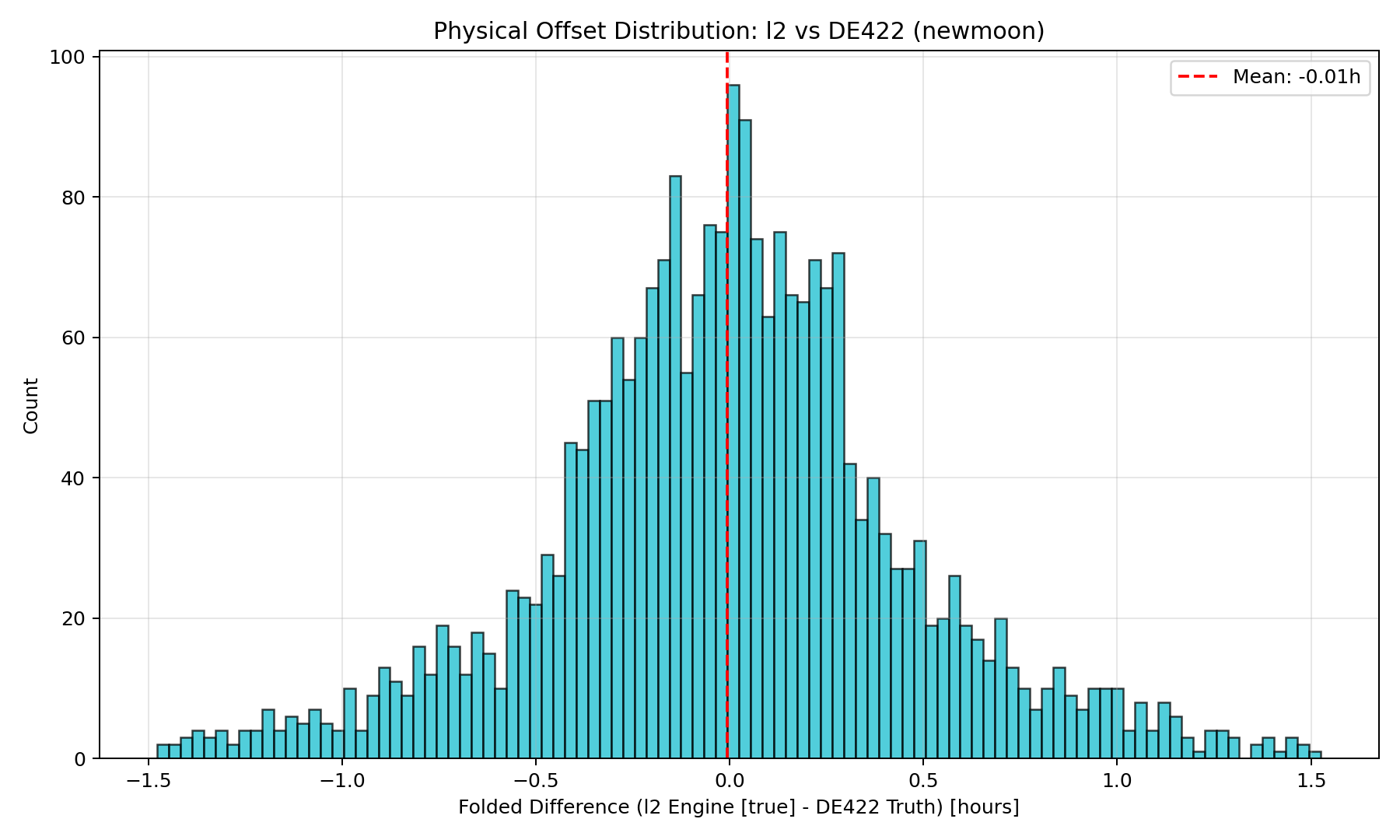}
    \includegraphics[width=0.48\textwidth]{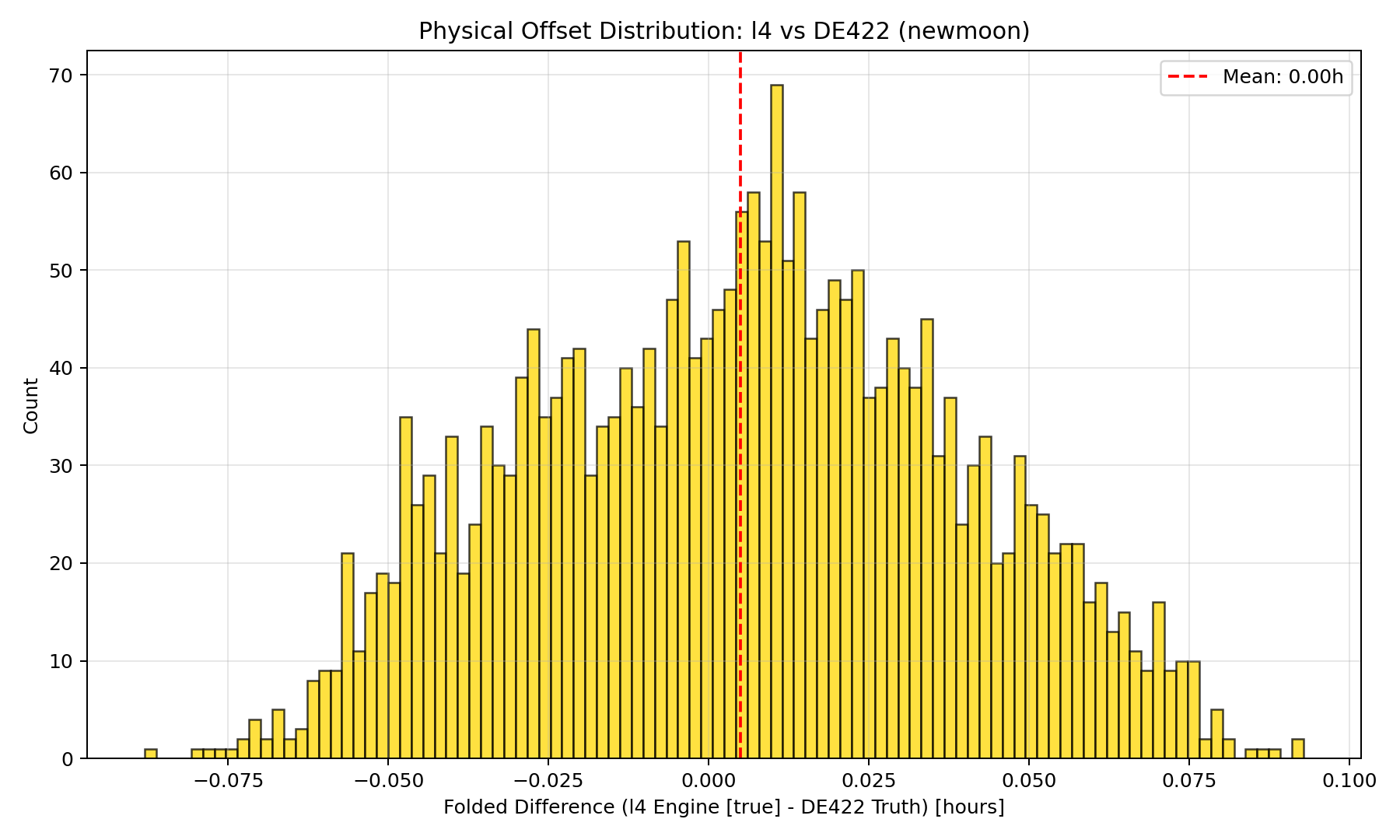}
    \caption{Physical new-moon timing offsets of L2 and L4 against the DE422 ephemeris.}
    \label{fig:offset_histograms_reform}
\end{figure}

Taken together, these diagnostics show that the ladder suppresses error in a definite order. The seasonal envelope is stabilized first; the anomaly kernel is then repaired; the rolling variance collapses; finally the residual timing distribution sharpens from hours to minutes. The comparison also clarifies where the decisive improvements occur. L1 mainly repairs the large-scale seasonal behavior while leaving the anomaly-driven spread mostly intact. L2 provides the first substantial physical repair of the lunar model. L3 and L4 then refine this architecture to the point where the remaining timing error is small on the scale of practical calendrical use. The result is not an all-or-nothing replacement by external ephemerides, but a controlled suppression of distinct error sources, with each reform tier paying only for the fidelity it adds.

\subsection{Enactment and adoption of calendric reforms}

A practical advantage of many lunisolar calendar reforms is that they can often be enacted
cleanly at a calendrical boundary, especially at Losar/Tsagaan Sar. Unlike the Gregorian
reform which required skipping 10 days to realign a solar calendar, such a transition need not require the excision of a block of civil dates. This does
not mean that every reform is socially seamless: month labels, leap-month placement,
publication conventions, and ritual expectations may still shift. But from the narrow
arithmetical point of view, one can often move from one rule set to another without
introducing ``missing'' civil dates in the Gregorian sense.

\section{Conclusion}
\label{s:conclusion}

The Tibetan calendar is often described either as a venerable traditional system or as an
astronomical scheme that has gradually drifted away from the sky. One of the main
conclusions of this paper is that neither description, by itself, is sufficient. What we
actually find is a layered object: a discrete calendric structure, with its own internal logic
of month labels and lunar-day numbering, coupled to a family of celestial models of varying
fidelity. Reform is therefore not a single yes-or-no question, nor a choice between
``keeping tradition'' and ``abandoning tradition.'' It is a structured design problem in which
one must decide which layers of the system are to be preserved, which are to be updated,
and how those decisions are to be specified reproducibly.

\subsection{Structural findings}
\label{ss:conclusion_structure}

A first conclusion is conceptual. The month layer and the day layer obey different
mathematical logics and should not be conflated. Leap months arise from a containment rule
for solar crossings and admit a rigid arithmetic encoding in terms of the intercalation index.
Repeated and skipped day labels, by contrast, arise from sampling a continuous sequence of
lunar-day boundaries by a civil-day trigger. These two phenomena are often discussed
together in traditional presentation because they both produce irregular visible labels, but
their internal mechanisms are fundamentally different. Furthermore, our exhaustive tie-case 
analysis proved that the specific discrete arithmetic governing these traditional day rules 
effectively eliminates boundary ambiguities. Making these structural mechanisms explicit is 
not merely a matter of exposition: it is what allows us to reform one layer while leaving 
another relatively intact.

A second conclusion is historical and geographic. The dominant traditions---Phugpa, Tsurphu, Bhutan,
and Mongol---are not best understood as simple geographical adaptations of a single
underlying calendar to different locations. In fact, our computational analysis demonstrates 
that the historical calendar was remarkably insulated from geographic variance. A combination 
of massive internal temporal buffers and the inherent multi-hour inaccuracy of the classical 
lunar model ensured that longitudinal shifts rarely, if ever, triggered visible calendric changes.
Consequently, the observed differences between traditions are better explained as
lineage-specific recalibrations of epoch constants, intercalation thresholds, and related
computational conventions. Once the problem is written in explicit arithmetic form, it
becomes clear that several of the observed shifts do not have the signature of a longitude or
climate correction, but rather of historical retuning within a sidereal framework that had
already drifted relative to the seasons.

A third conclusion concerns the status of the traditional astronomical model itself. The
classical Tibetan calendar is not ``wrong'' in a single simple way. It contains a hierarchy of
approximations: some are benign and structurally useful, while others become limiting once
one asks for modern reproducibility or stronger seasonal alignment. The \(67/65\) cycle is
one example. The coupling of the solar anomaly phase to the mean Sun is another. The
omission of dominant lunar perturbations is a third. These defects do not all live on the
same scale, and they do not all require the same style of repair. This is why a layered
reform program is more natural than a single once-and-for-all replacement.

\subsection{A ladder of reform standards}
\label{ss:conclusion_ladder}

The central practical result of the paper is that reform is best organized not as a single
proposal but as a \emph{ladder of standards}. The serious proposals L1--L6 do not represent
minor numerical variants of one another; they represent distinct points in the design space,
with different balances of conservatism, transparency, physical fidelity, and implementation
burden.

At one end of the spectrum lie standards that preserve a strongly traditional arithmetic
character. These reforms keep the month layer arithmetic, retain exact rational semantics,
and repair only those aspects of the model whose defects are both large and structurally
unnecessary.
Such standards are attractive when communal continuity, auditability, and
long-term portability are paramount. At the other end lie fully astronomical standards in
which the sky itself becomes the primary reference and the calendar is driven by true or
ephemeris-level solar and lunar motion. These achieve the greatest physical fidelity, but at
the price of heavier specification requirements and a weaker claim to purely arithmetic
transparency.

Between these extremes lie the most interesting possibilities. 
A calendar may preserve a stable arithmetic month rule while making the day layer substantially more physical.
It may adopt
geometric sunrise without yet committing to a fully dynamical month model. It may retain a
low-order, normatively stable month rule while standardizing a deeper floating-point day
engine. The resulting ladder is not a sign of indecision; it is the natural consequence of the
calendar's layered structure. Different communities may rationally prefer different points on
this ladder depending on whether they prioritize traditional style, ease of publication,
astronomical realism, or strict reproducibility.

This also clarifies the role of the low-commitment alternatives discussed in
\S\ref{ss:low_commitment}. Those alternatives are not pointless, but neither are they
complete. Some modify only the month layer, others only the day layer. They are useful as
baselines, migration aids, or conservative stopgaps, but they do not provide the same kind
of coherent end-to-end standard as the main L1--L6 proposals.

\subsection{From reform proposals to executable standards}
\label{ss:conclusion_executable}

A further conclusion, sharpened by the implementation work, is that calendric reform is not
only a matter of formulas and constants. A modern standard must also specify its
\emph{numerical semantics}. It is not enough to say that one should use ``sine,''
``arctangent,'' or ``solve for sunrise.'' One must also state whether these are to be
realized by exact rational arithmetic, prescribed tabular interpolation, fixed floating-point
coefficients, fixed iteration counts, or an external ephemeris interface. Without this layer,
one has at best a family of related implementations, not a reproducible standard.

This is not merely a philosophical point. The reference implementation developed alongside
this paper showed that distinctions easy to blur in prose become decisive in code: the
difference between labeled months and the true-month index, between mean and true
lunation boundaries at the epoch, the role of locale in sunrise-based day computation, the
need for explicit tie-breaking conventions, and the interaction between exact arithmetic and
denominator growth. In this sense, implementation did not merely illustrate the theory; it
helped sharpen it.

The practical consequence is that a calendar standard should now be understood as having at
least four layers:
\begin{enumerate}\itemsep2pt
\item a mathematical specification of month and day logic;
\item an explicit parameter set fixing constants, epochs, and conventions;
\item a numerical contract specifying how the formulas are to be evaluated; and
\item a conformance-tested reference implementation, together with benchmark cases and
diagnostic tools.
\end{enumerate}
This is why the present work formulates its proposals not only in prose, but also as a
technical specification library (Appendix~\ref{app:reference_modules}) supported by a
reproducible software framework. The existence of an executable library, a web calendar,
and diagnostic tools does not decide which reform ought to be adopted. It does, however,
change the nature of the discussion: the options are no longer vague aspirations or isolated
formulas, but concrete, comparable, and reproducible standards.

The broader cultural question remains open, and rightly so. A calendar is not merely a
scientific device; it is also a communal instrument of ritual, civic, and historical continuity.
For that reason, the choice among reform levels cannot be made by astronomical accuracy
alone. But the tradeoffs can now be stated explicitly, and their consequences can be
computed in advance.

We therefore regard the main achievement of this work not as the advocacy of a single final
calendar, but as the establishment of a disciplined reform framework. It isolates the true
structural questions, distinguishes them from merely numerical ones, and provides a ladder
of executable standards ranging from conservative rational repairs to fully astronomical
realizations. That framework, rather than any one particular parameter choice, is what can
support a thoughtful and durable calendric future for Tibetan, Bhutanese, Mongolian, and
related communities.

\section*{Acknowledgements}

I am especially grateful to Svante Janson, whose careful exposition \cite{janson} made the Tibetan calendrical computations accessible to a modern mathematical audience and served as my primary guide while learning the subject. 
I also benefited from the treatments of Schuh, Henning, Reingold–Dershowitz, and Aslaksen \cite{schuh,henning,reingold,aslaksen}, which provide complementary perspectives and documentation.


\appendix
\section{Parameters of the Principal Traditions}
\label{app:constants}

To instantiate a concrete calendar instance in the siddh\=anta-style pipeline, we fix the following inputs.

\medskip\noindent\emph{Tradition-independent (shared by all four traditions):}
\begin{itemize}\itemsep2pt
\item {Mean-motion increments} \(m_1,m_2,s_1,s_2,a_1,a_2\), i.e.\ the universal step sizes in the affine predictors
\eqref{eq:affine-pred-app}, recorded in \S\ref{a-ss:mean-motion}.
\item {True-date lookup tables} \texttt{moon\_tab} and \texttt{sun\_tab} used in the tabular correction step,
recorded in \S\ref{a-ss:tabs}.
\end{itemize}

\noindent\emph{Tradition-dependent (choice of epoch and labeling convention):}
\begin{itemize}\itemsep2pt
\item {Month-label and intercalation data:} the epoch label \((Y_0,M_0)\), the congruence shift \(\beta^*\),
and the trigger set \(T\subset\mathbb Z/65\mathbb Z\), which together determine the \(65\)-periodic leap-month skeleton and
the conversion from a labeled month \((Y,M)\) to the true-month index \(n\) (Remarks~\ref{rem:leap-naming}
and~\ref{rem:true-month-index}).
\item {Epoch offsets:} the absolute-time and phase offsets \((m_0,s_0,a_0)\) that pin the affine predictors to the chosen
epoch, recorded in \S\ref{a-ss:epoch}.
\end{itemize}

\noindent
Given these data, the standard pipeline computes mean and true boundary times for lunar days, maps them to civil days
(to produce repeated/skipped date numbers), and applies the intercalation rule to insert extra lunations and assign month labels.
This appendix records the numerical constants needed to implement, in a uniform way, the four
principal traditions (Phugpa, Tsurphu, Bhutan, Mongol), together with selected karana data included
for completeness and comparison.

\subsection{Leap-month rules}
\label{a-ss:ix}

It is convenient to separate the tradition-dependent inputs into two layers.  The first layer
(Table~\ref{tab:ix-rules-app}) is purely arithmetic and determines \emph{which lunations exist} (i.e.\ where an extra lunation
is inserted) and how they are \emph{labeled} by pairs $(Y,M)$.  The second layer (Table~\ref{tab:epoch-constants}) fixes
\emph{where the resulting calendar sits on the Julian-day line} by specifying the absolute-time and phase offsets used in
the mean/true-date computations of \S\ref{ss:day_calculation}.

\begin{table}[ht]
\centering
\begin{tabular}{lrrc}
\toprule
Tradition & $(Y_0,M_0)$ & $\beta^\ast$ & Repeat-label set\\
\midrule
Karana (E806) & $(806,3)$ & $0$  & $\{63,64\}$\\
Phugpa (E1927) & $(1927,3)$ & $55$  & $\{48,49\}$\\
Phugpa (E1987) & $(1987,3)$ & $0$  & $\{48,49\}$\\
Bhutan (E1754) & $(1754,3)$ & $2$ & $\{57,58\}$\\
Tsurphu (E1732) & $(1732,3)$ & $59$ & $\{0,1\}$\\
Tsurphu (E1852) & $(1852,3)$ & $14$ & $\{0,1\}$\\
Mongol (E1747) & $(1747,3)$ & $10$ & $\{46,47\}$\\
\bottomrule
\end{tabular}
\caption{Leap-month parameters in the reparameterized repeat-label convention.}
\label{tab:ix-rules-app}
\end{table}

Table~\ref{tab:ix-rules-app} specifies the intercalation index in terms of labeled months.  We normalize the labeled-month
count by $M_0=3$ and define
\[
M^*=12(Y-Y_0)+(M-M_0),\qquad \ix \equiv 2M^*+\beta^* \pmod{65}.
\]
In this appendix the final column is written in the reparameterized repeat-label trigger convention:
if $\ix$ lies in the listed set, then the label $(Y,M)$ occurs twice in succession.  The only
remaining tradition-dependent issue is which of the two copies is called \emph{leap}.  For Phugpa,
Tsurphu, and Mongol, the first of the two copies is the leap month; for Bhutan and Karana, the
second copy is called leap.  Thus, for Bhutan and Karana, we are using the reparameterized form of
Remark~\ref{rem:leap-naming}: traditionally the repeated label is the preceding one, so the trigger
set is usually presented with a shift by $2$ modulo $65$, but the location of the inserted lunation
in the sequence of lunations is unchanged.

\subsection{Epoch-dependent offsets}
\label{a-ss:epoch}

For the day-calculation formulas it is most convenient to index lunations by a single running counter
$n\in\Z$ (the \emph{true-month index}).  We fix an epoch lunation boundary and declare it to be $n=0$;
then $n$ increases by $1$ from one lunation to the next, regardless of whether a labeled month is repeated.
The conversion from a labeled month $(Y,M)$ to the corresponding $n$ is a \emph{derived} step, given by
Remark~\ref{rem:true-month-index}.

Published sources often normalize the epoch as ``month $3$'' ({\em nag pa}).  This should be read as a statement
about the intended \emph{labeling convention}, not as a guarantee that the reference lunation $n=0$ necessarily
carries the label $(Y_0,3)$.  For a $65$-periodic arithmetic intercalation rule with trigger pair
$T=\{\tau,\tau+1\}\subset\Z/65\Z$, let
\[
\gamma\in\{0,1,\dots,64\}
\qquad\text{be such that}\qquad
\gamma\equiv -\tau \pmod{65}.
\]
Then, for the epoch label $(Y_0,3)$ (so $M^*=0$), the later lunation carrying that label has index
\[
n_+(Y_0,3)=\Bigl\lfloor \frac{\beta^*+\gamma}{65}\Bigr\rfloor .
\]
Accordingly, if $(Y_0,3)$ is a non-trigger label, then the reference lunation $n=0$ carries the label
$(Y_0,3)$ iff $\beta^*+\gamma<65$; otherwise $n=0$ carries the preceding label $(Y_0,2)$.
If $(Y_0,3)$ is a trigger label, then two consecutive lunations carry the label $(Y_0,3)$, and we take
$n=0$ to be the earlier of the two.

For the epochs listed in Table~\ref{tab:ix_rules}, the label $(Y_0,3)$ is non-trigger, so it occurs uniquely.
Among these examples, the only case in which $(Y_0,3)$ is not the reference lunation is Phugpa~E1927:
there $(Y_0,3)$ has true-month index $n=1$, while the reference lunation $n=0$ carries the preceding label
$(Y_0,2)$.  By contrast, for both Tsurphu epochs one has $\tau=0$ and hence $\gamma=0$, so the published epoch
month $(Y_0,3)$ is indeed the reference lunation $n=0$.

\begin{table}[ht]
\centering
\begin{tabular}{lllcc}
\toprule
Tradition & Epoch (JD) & $m_0$ (JD units) & $s_0$ (mod $1$) & $a_0$ (mod $1$)\\
\midrule
Karana (E806) &
$2015531$ &
$2015531+\frac{1}{2}$ &
$\frac{809}{810}$ &
$\frac{53}{252}$\\[2pt]
Phugpa (E1927) &
$2424972$ &
$2424972+\frac{5457}{5656}$ &
$\frac{749}{804}$ &
$\frac{1741}{3528}$\\[2pt]
Phugpa (E1987) &
$2446914$ &
$2446914+\frac{135}{707}$ &
$0$ &
$\frac{38}{49}$\\[2pt]
Bhutan (E1754) &
$2361807$ &
$2361807+\frac{52}{707}$ &
$\frac{1}{67}$ &
$\frac{17}{147}$\\[2pt]
Tsurphu (E1732) &
$2353745$ &
$2353745+\frac{1795153}{7635600}$ &
$-\frac{5983}{108540}$ &
$\frac{207}{392}$\\[2pt]
Tsurphu (E1852) &
$2397598$ &
$2397598+\frac{1197103}{7635600}$ &
$\frac{23}{27135}$ &
$\frac{1}{49}$\\[2pt]
Mongol (E1747) &
$2359237$ &
$2359237+\frac{2603}{2828}$ &
$\frac{397}{402}$ &
$\frac{1523}{1764}$\\
\bottomrule
\end{tabular}
\caption{Epoch-dependent offsets.}
\label{tab:epoch-constants}
\end{table}

Once the true-month index $n$ is fixed, the affine predictors \eqref{eq:affine-pred-app} require the epoch offsets
$(m_0,s_0,a_0)$, listed in Table~\ref{tab:epoch-constants}.  Since the true-date correction depends on angular
arguments only modulo one full turn, only the congruence classes $s_0\bmod 1$ and $a_0\bmod 1$ are operationally
relevant; this is why Table~\ref{tab:epoch-constants} records only their fractional parts.

The column ``Epoch (JD)'' is the integer part $\lfloor m_0\rfloor$: it is included to make the correspondence with
published epoch dates transparent, but an implementation uses the full real constant $m_0$ (including its fractional
part).  Concretely, $m_0$ is the \emph{mean} lunation-boundary time at $n=d=0$ in the affine predictor
$t_{\rm mean}(d,n)=m_0+n\,m_1+d\,m_2$, i.e.\ the time of the \emph{mean new moon} starting the reference lunation $n=0$
(cf.\ \cite[Remark~16]{janson}).  The corresponding \emph{true} new moon used in the full siddh\=anta pipeline is obtained
only after applying the true-date correction, and may fall on a neighboring civil date even when $m_0$ is fixed.


\subsection{Mean-motion increments}
\label{a-ss:mean-motion}

The four principal traditions considered here share the same mean-motion \emph{increments} in the
siddh\=anta day-calculation scheme.  The basic affine predictors are (cf.\ \S\ref{ss:day_calculation})
\begin{equation}\label{eq:affine-pred-app}
\begin{split}
\md(d,n) &= m_0 + n\,m_1 + d\,m_2, \\
\ms(d,n)  &\equiv s_0 + n\,s_1 + d\,s_2 \pmod 1,\\
A_{\moon}(d,n) &\equiv a_0 + n\,a_1 + d\,a_2 \pmod 1.
\end{split}
\end{equation}
Here $n\in\mathbb Z$ is the \emph{true-month index} (running count of lunations from the chosen epoch),
and $d\in\{1,\dots,30\}$ is the lunar-day number within that lunation; we also allow $d=0$ as a
notational convenience for the lunation boundary at its beginning.

For these four traditions the shared increments are
\begin{align*}
m_1 &= \frac{167025}{5656} \approx 29.53058699, &
m_2 &= \frac{11135}{11312}=\frac{m_1}{30}\approx 0.98435290,\\
s_1 &= \frac{65}{804} \approx 0.080845771, &
s_2 &= \frac{13}{4824}=\frac{s_1}{30}\approx 0.002694859,\\
a_1 &= \frac{253}{3528} \approx 0.071712018, &
a_2 &= \frac{1}{28}\approx 0.035714286.
\end{align*}
An alternative ``exact'' choice proposed by Minling Lochen Dharmashri (1654--1717), and used by
Henning, is
\[
a_2=\frac{1+a_1}{30}=\frac{3781}{105840}
=\frac1{28}+\frac1{105840},
\]
which differs extremely rarely at the level of integer day assignment, cf.\ \cite[Remark~14]{janson}.

For completeness, the karana system uses the same anomaly increments $a_1,a_2$, but replaces the
synodic-month increment by
\[
m_1^{\mathrm{kar}}=\frac{10631}{360}\approx 29.53055556,
\]
with the corresponding day increment $m_2^{\mathrm{kar}}=m_1^{\mathrm{kar}}/30$.  The associated mean-Sun increment is
\[
s_1^{\mathrm{kar}}=\frac{1277}{15795}\approx 0.080848369,
\]
and hence $s_2^{\mathrm{kar}}=s_1^{\mathrm{kar}}/30$. 
Thus the four principal traditions remain the shared siddh\=anta block, while karana differs
essentially in the choice of mean month, with the corresponding solar increment following suit.

\subsection{Lookup tables for the lunar and solar equations}
\label{a-ss:tabs}

The tables are specified by base values and symmetry/periodicity rules, with linear
interpolation between integer arguments.
See Figure~\ref{fig:equ-tables} for illustration.

\begin{itemize}
\item Lunar table $\texttt{moon\_tab}(i)$ for $i=0,\dots,7$:
\[
(0,5,10,15,19,22,24,25),
\]
extended by $\texttt{moon\_tab}(14-i)=\texttt{moon\_tab}(i)$,
$\texttt{moon\_tab}(14+i)=-\texttt{moon\_tab}(i)$, and $\texttt{moon\_tab}(28+i)=\texttt{moon\_tab}(i)$.
\item Solar table $\texttt{sun\_tab}(i)$ for $i=0,\dots,3$:
\[
(0,6,10,11),
\]
extended by $\texttt{sun\_tab}(6-i)=\texttt{sun\_tab}(i)$,
$\texttt{sun\_tab}(6+i)=-\texttt{sun\_tab}(i)$, and $\texttt{sun\_tab}(12+i)=\texttt{sun\_tab}(i)$.
\end{itemize}

\section{Congruence Filters for Tie-Case Search}
\label{app:congruence}

This appendix records the number-theoretic reductions used in the tie-case search of \S\ref{sss:ties}.
A \emph{tie} occurs when the true-date function $t(d,n)$ lands exactly on a civil-day boundary
(normalized to $\Z$).  Since $t(d,n)$ is a finite sum of affine terms and $1$--periodic piecewise-affine
table terms with rational knots, its fractional part is periodic in $(d,n)$ with an explicit period.
Moreover, for suitable primes $p$ one can localize the tie condition to congruence constraints on
distinguished affine phases (notably the mean date $M_d(n)$ and the solar-table argument $v(d,n)$),
yielding strong residue-class filters.  These reductions are abstract and apply to any reform specified
by rational constants, and in \S\ref{sss:ties} they are used to accelerate the computations for the principal
traditions.

If $p$ is a prime, we write
\[
\Z_{(p)}=\left\{\frac ab\in\Q:\gcd(b,p)=1\right\}.
\]
Then for any $x\in \Z_{(p)}+\frac1p\Z$ there is a unique decomposition
\[
x=x_{(p)}+\frac{r_p(x)}{p},
\qquad x_{(p)}\in \Z_{(p)},\ \ r_p(x)\in\{0,1,\dots,p-1\}.
\]
We regard $r_p(x)$ as the $p$--primary fractional residue of $x$; in the tie setting, vanishing of the
integer boundary forces certain $p$--residues to vanish and thus produces congruence constraints.

\begin{proposition}
\label{prop:periodicity_prime_obstruction}
After translating the time origin, assume civil-day boundaries occur at times $\Z$.
Let
\begin{equation}\label{eq:sum_inverse_general}
  t(d,n)=\sum_{j=0}^J F_j\bigl(U_j(d,n)\bigr),
\qquad
U_j(d,n)=u_{j,0}+n u_{j,1}+d u_{j,2},
\end{equation}
where $u_{j,k}\in\Q$ and each $F_j$ is either
\begin{enumerate}[label=\textnormal{(\roman*)}, leftmargin=2.0em, itemsep=2pt]
\item the identity map $F_j(y)=y$, or
\item a $1$--periodic piecewise-affine function on $\R$ obtained by extending a piecewise-affine function
on $[0,1]$ with \emph{rational breakpoints} and \emph{rational vertex values}, evaluated by exact affine
interpolation on each segment.
\end{enumerate}
Then:
\begin{enumerate}[label=\textnormal{(\alph*)}, leftmargin=2.2em, itemsep=2pt]
\item\label{it:periodicity_existsN}
There exists an integer $N\ge 1$ such that $N\,t(d,n)\in\Z$ for all $(d,n)$.
In particular, $\{t(d,n)\}\in \frac1N\Z/\Z$ takes only finitely many values,
and the map $(d,n)\mapsto \{t(d,n)\}$ is periodic in $n$ with period $P_n$ and in $d$ with period $P_d$, where
\begin{equation}\label{eq:periods-general}
  P_n=\mathrm{lcm}_j\bigl(\den(u_{j,1})\bigr),
  \qquad
  P_d=\mathrm{lcm}_j\bigl(\den(u_{j,2})\bigr).
\end{equation}

\item\label{it:rp-general}
Fix $(d,n)$ and assume $t(d,n)\in \Z_{(p)}+\frac1p\Z$ for some prime $p$.
Then a tie $t(d,n)\in\Z$ forces $r_p\bigl(t(d,n)\bigr)=0$.

\item\label{it:single-table-local}
Fix $p$ and decompose
\[
t=\sum_{j=0}^J F_j\bigl(U_j(d,n)\bigr),
\qquad
Q=\sum_{j\neq J} F_j\bigl(U_j(d,n)\bigr),
\qquad\text{so that }t=Q+F_J\bigl(U_J(d,n)\bigr).
\]
Assume $t\in \Z_{(p)}+\frac1p\Z$ and $Q\in \Z_{(p)}$.  Then any tie at $(d,n)$ forces
\begin{equation}\label{eq:rpfj0}
r_p\!\left(F_J\bigl(U_J(d,n)\bigr)\right)=0 .
\end{equation}
Moreover, suppose in addition that there exist $L\ge 1$ and an $L$--periodic function $\Tab:\R\to\Q$ such that
\begin{equation}\label{eq:FJTab}
F_J(y)=\Tab(Ly),
\end{equation}
where $\Tab$ is piecewise-affine on $[0,L]$ with knots at the integers $\{0,1,\dots,L\}$ and is evaluated by exact
linear interpolation between successive knots, and assume that for every $k\in\Z$,
\begin{equation}\label{eq:tab-increments}
\Tab(k)\in \Z_{(p)} \qquad\text{and}\qquad \Tab(k+1)-\Tab(k) \in \Z_{(p)}\setminus p\Z_{(p)} .
\end{equation}
If furthermore
\begin{equation}\label{eq:scaled-arg-local}
L\,U_J(d,n)\in \Z_{(p)}+\frac1p\Z,
\end{equation}
then any tie at $(d,n)$ forces the \emph{scaled argument} to be $p$--local:
\begin{equation}\label{eq:scaled-arg-inZp}
L\,U_J(d,n)\in \Z_{(p)}.
\end{equation}
\end{enumerate}
\end{proposition}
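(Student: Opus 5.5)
The plan is to treat all three parts as statements about $p$-adic valuations and denominators of rational numbers. The one structural fact used repeatedly is that $V_p:=\Z_{(p)}+\frac1p\Z$ is exactly the set of rationals of $p$-adic valuation $\ge -1$. It is therefore a $\Z_{(p)}$-module containing $\Z$, and the map $x\mapsto r_p(x)$ induces an isomorphism of $\Z_{(p)}$-modules $V_p/\Z_{(p)}\cong \Z/p\Z$. I would prove this lemma first. For uniqueness of the decomposition: if $x_{(p)}+r/p=x'_{(p)}+r'/p$, then $(r-r')/p\in\Z_{(p)}$ forces $p\mid r-r'$, hence $r=r'$. For compatibility with the module structure: given $\delta=a/b$ with $p\nmid b$, we have $r_p(\delta x)\equiv a\,b^{-1}\,r_p(x)\pmod p$ and $r_p(x+y)\equiv r_p(x)+r_p(y)$. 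In particular $r_p(\delta x)=0$ iff $r_p(x)=0$ whenever $\delta$ is a $p$-unit.

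For \ref{it:periodicity_existsN}, let $D_j$ be the lcm of the denominators of $u_{j,0},u_{j,1},u_{j,2}$, so that $U_j(d,n)\in\frac1{D_j}\Z$.
\begin{itemize}
\item If $F_j$ is the identity, $F_j(U_j)\in\frac1{D_j}\Z$.
\item If $F_j$ is periodic piecewise-affine, reduce $U_j$ modulo $1$ to a point of $\frac1{D_j}\Z\cap[0,1)$. On the segment containing that point, $F_j(y)=\alpha y+\beta$ with $\alpha,\beta\in\Q$, because the breakpoints and vertex values are rational. There are finitely many segments, so the values of $F_j$ on $\frac1{D_j}\Z$ lie in $\frac1{E_j}\Z$ for some $E_j$. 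A point falling exactly on a breakpoint gives the same value from either side, so no ambiguity arises.
\end{itemize}
Taking $N=\operatorname{lcm}_j E_j$ gives $Nt(d,n)\in\Z$. For the periods, replacing $n$ by $n+P_n$ shifts each $U_j$ by the integer $P_nu_{j,1}$. A periodic $F_j$ is unchanged, and an identity $F_j$ changes by an integer. Hence $\{t\}$ is unchanged, and the same argument handles $P_d$.

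Part \ref{it:rp-general} is immediate from uniqueness: a tie gives $t\in\Z\subset\Z_{(p)}$, so $t=t+0/p$ and $r_p(t)=0$.

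For \ref{it:single-table-local}, note first that $F_J(U_J)=t-Q\in V_p$, and additivity gives $r_p(F_J(U_J))=r_p(t)$, which vanishes at a tie by \ref{it:rp-general}; this is \eqref{eq:rpfj0}. For the localisation claim, write $y=LU_J=k+\theta$ with $k=\lfloor y\rfloor$ and $\theta\in[0,1)$. Since $k\in\Z\subset\Z_{(p)}$, we have $\theta\in V_p$ and $r_p(\theta)=r_p(y)$. Exact linear interpolation gives
\[
F_J(U_J)=\Tab(k)+\theta\,\delta_k,\qquad \delta_k=\Tab(k+1)-\Tab(k).
\]
Here $\Tab(k)\in\Z_{(p)}$ and $\delta_k$ is a $p$-unit by \eqref{eq:tab-increments}. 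Hence
\[
0=r_p(F_J(U_J))=r_p(\theta\delta_k)\equiv \delta_k\, r_p(\theta)\pmod p,
\]
which forces $r_p(y)=r_p(\theta)=0$, i.e.\ $y\in\Z_{(p)}$, proving \eqref{eq:scaled-arg-inZp}.

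The main obstacle is only bookkeeping rather than depth. I would take care to state the valuation description of $V_p$ explicitly, so that products like $\theta\delta_k$ are visibly still in $V_p$. I would also check that $L$-periodicity of $\Tab$ and the choice $k=\lfloor y\rfloor$ remain consistent at the knot $k=L\equiv 0$.
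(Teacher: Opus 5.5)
Your proof is correct, and its outline matches the paper's: you clear denominators in (a), use uniqueness of $x=x_{(p)}+r_p(x)/p$ in (b), and in (c) combine $r_p(F_J(U_J))=r_p(t)$ with the interpolation identity $F_J(U_J)=\Tab(k)+\theta\,\delta_k$. You depart from the paper at exactly the two points where its own argument breaks, and in both places your version is the sound one.

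In (a), the paper takes $N$ clearing the denominators of $u_{j,k}$, $b_{j,m}$ and $a_{j,m}u_{j,k}$. It then writes $F_j(U_j)=a_{j,m}U_j+b_{j,m}$ without first reducing $U_j$ modulo $1$. For periodic $F_j$ the true value is $a_{j,m}\{U_j\}+b_{j,m}$, and the discrepancy $a_{j,m}\lfloor U_j\rfloor$ need not lie in $\frac1N\Z$. For example, take $U=\frac{2n}{3}$ and the periodic tent through $(0,0)$, $(\frac23,\frac12)$, $(1,0)$. The paper's recipe gives $N=6$, yet $F(\frac43)=F(\frac13)=\frac14\notin\frac16\Z$. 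Your reduction to the finite set $\frac1{D_j}\Z\cap[0,1)$ avoids this.

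In (c), the paper asserts that $v=LU_J\in\Z_{(p)}+\frac1p\Z$ forces $\theta=\{v\}\in\frac1p\Z$, and from this concludes $\theta=0$, i.e.\ $v\in\Z$. That step is false: take $p=2$ and $v=\frac13$. The resulting conclusion is also too strong for the paper's own data. For Bhutan ($s_0=\frac1{67}$) and Mongol ($s_0=\frac{397}{402}$), the solar argument has the form $v=-3+(c+390n+13d)/402$ with $6\mid c$. So $v\in\Z$ would require $6\mid d$, which rules out the ties that Table~\ref{tab:tie-fast} reports at $d=4,10,20$.

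Your key lemma avoids this. It says that the rationals of $p$-adic valuation at least $-1$ form a $\Z_{(p)}$-module on which $r_p$ is $\Z_{(p)}$-linear modulo $p$. This gives $0=r_p(\theta\delta_k)\equiv\delta_k\,r_p(\theta)\pmod p$, hence $r_p(v)=0$. That is exactly the stated conclusion $v\in\Z_{(p)}$ and nothing more, and it is all that the $67$-filter of Remark~\ref{rem:local-67} uses. So the preliminary lemma you planned to prove first is precisely what the paper's proof is missing. Writing it out explicitly, as you intend, is the crux of the argument rather than mere bookkeeping.
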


\begin{proof}
\ref{it:periodicity_existsN}
For each $j$, write $F_j$ on $[0,1]$ using rational breakpoints
$0=\xi_{j,0}<\cdots<\xi_{j,M_j}=1$ and rational affine data
\[
F_j(y)=a_{j,m}y+b_{j,m}\qquad (y\in[\xi_{j,m},\xi_{j,m+1}]),
\qquad a_{j,m},b_{j,m}\in\Q.
\]
Choose $N$ to clear all denominators that can appear in the following finite list:
\begin{equation}\label{eq:chooseN}
u_{j,k},\quad b_{j,m},\quad a_{j,m}u_{j,k}
\qquad (j=0,\dots,J,\ k=0,1,2,\ m=0,\dots,M_j-1).
\end{equation}
Such an $N$ exists because the list \eqref{eq:chooseN} is finite and consists of rationals.

Now fix $(d,n)$.  By construction of $N$ we have $U_j(d,n)\in \frac1N\Z$ for every $j$.
On any segment where $F_j(y)=a_{j,m}y+b_{j,m}$ we then have
\[
F_j\bigl(U_j(d,n)\bigr)=a_{j,m}U_j(d,n)+b_{j,m}.
\]
Since $a_{j,m}u_{j,k}\in \frac1N\Z$ for $k=0,1,2$, it follows that
$a_{j,m}U_j(d,n)\in \frac1N\Z$, and since $b_{j,m}\in \frac1N\Z$ we obtain
$F_j(U_j(d,n))\in \frac1N\Z$ for all $j$.  Summing over $j$ yields $t(d,n)\in \frac1N\Z$ or $Nt(d,n)\in\Z$.

The explicit periods \eqref{eq:periods-general} follow from
\[
U_j(d,n+P_n)\equiv U_j(d,n)\pmod{1},
\qquad
U_j(d+P_d,n)\equiv U_j(d,n)\pmod{1},
\]
and $1$--periodicity of all nontrivial $F_j$.

\smallskip\noindent\ref{it:rp-general}
Write $t=t(d,n)$ and decompose $t=t_{(p)}+r_p(t)/p$ as above.
If $t\in\Z$, then $r_p(t)/p=t-t_{(p)}\in \Z-\Z_{(p)}\subset \Z_{(p)}$,
as subtracting an integer preserves membership in $\Z_{(p)}$.
But $r_p(t)/p\in \frac1p\Z$, and the only elements of $\frac1p\Z$ that lie in $\Z_{(p)}$ are the integers;
hence $r_p(t)/p\in\Z$, forcing $r_p(t)=0$.

\smallskip\noindent\ref{it:single-table-local}
Since $Q\in \Z_{(p)}$ we have $r_p(Q)=0$.  As $t=Q+F_J(U_J)$ and $r_p$ is defined on $\Z_{(p)}+\frac1p\Z$,
it follows that
\[
r_p(t)=r_p\!\left(F_J\bigl(U_J(d,n)\bigr)\right).
\]
If a tie occurs, $t\in\Z$, hence \eqref{it:rp-general} gives $r_p(t)=0$, proving \eqref{eq:rpfj0}.
For the final assertion, write $v=L\,U_J(d,n)=k+\theta$ with $k\in\Z$ and $\theta\in[0,1)$.
Exact interpolation on $[k,k+1]$ gives
\[
F_J\bigl(U_J(d,n)\bigr)=\Tab(v)=\Tab(k)+\theta\,\Delta_k,\qquad \Delta_k:=\Tab(k+1)-\Tab(k).
\]
By \eqref{eq:tab-increments}, $\Tab(k)\in \Z_{(p)}$ and $\Delta_k\in \Z_{(p)}\setminus p\Z_{(p)}$, so
\[
r_p\!\left(F_J\bigl(U_J(d,n)\bigr)\right)=r_p(\theta\,\Delta_k).
\]
Moreover, \eqref{eq:scaled-arg-local} implies $\theta=\{v\}\in \frac1p\Z$, so $\theta=r/p$ for some
$r\in\{0,1,\dots,p-1\}$.  If $r\neq 0$, then $(r/p)\Delta_k\notin \Z_{(p)}$ (because $\Delta_k$ has reduced numerator not
divisible by $p$), hence $r_p(\theta\Delta_k)\neq 0$.  Therefore \eqref{eq:rpfj0} forces $r=0$, i.e.\ $\theta=0$, and thus
$v=L\,U_J(d,n)\in\Z\subset \Z_{(p)}$, proving \eqref{eq:scaled-arg-inZp}.
\end{proof}

Let us apply this proposition to the concrete traditions.
We first record the explicit principal periods, then extract the two prime localisations ($p=101$ and $p=67$)
that become the algorithmic prefilters in \S\ref{sss:ties}.

\begin{remark}
\label{rem:princ-period}
In the Tibetan-style true-date setup, take $J=2$ and write
\[
  t(d,n)=F_0\bigl(U_0(d,n)\bigr)+F_1\bigl(U_1(d,n)\bigr)+F_2\bigl(U_2(d,n)\bigr),
\]
with
\[
  F_0(y)=y,\qquad
  F_1(y)=\frac1{60}\,\moonTab(28y),\qquad
  F_2(y)=-\frac1{60}\,\sunTab(12y),
\]
and affine phases
\[
  U_0(d,n)=m_0+nm_1+dm_2,\qquad
  U_1(d,n)=A_{\moon}(d,n),\qquad
  U_2(d,n)=A_{\sun}(d,n),
\]
where $A_{\moon}$ and $A_{\sun}$ are as in \eqref{eq:anom_linear_new}.
The \emph{principal} traditions (Phugpa, Tsurphu, Mongolian, Bhutan) share the parameters
\begin{equation}\label{eq:principal_motions}
m_1=\frac{167025}{5656},\quad
s_1=\frac{65}{804},\quad
a_1=\frac{253}{3528},
\qquad
m_2=\frac{m_1}{30},\quad s_2=\frac{s_1}{30},\quad a_2=\frac{1}{28},
\end{equation}
and differ only in the epoch offsets $(m_0,s_0,a_0)$.

For fixed $d$, the fractional part of the mean date
\[
M_d(n) = U_0(d,n) = m_0+n m_1+d m_2 
\]
has period
\[
P_{\rm md}=\den(m_1)=5656=2^3\cdot 7\cdot 101.
\]
The solar and lunar table arguments advance with step sizes $s_1$ and $a_1$ modulo $1$, hence have
periods
\[
P_\sun=\den(s_1)=804=2^2\cdot 3\cdot 67,
\qquad
P_\moon=\den(a_1)=3528=2^3\cdot 3^2\cdot 7^2.
\]
Therefore the fractional part of the combined correction term
\[
C_d(n)= F_1\bigl(U_1(d,n)\bigr)+F_2\bigl(U_2(d,n)\bigr)
\]
has period
\[
P_{\rm corr}=\lcm(P_\sun,P_\moon)
=\lcm(2^2\cdot 3\cdot 67,\ 2^3\cdot 3^2\cdot 7^2)
=2^3\cdot 3^2\cdot 7^2\cdot 67
=236\,376.
\]
Hence the full true-date fractional part $\{t(d,n)\}=\{M_d(n)+C_d(n)\}$ is periodic in $n$ with period
\[
P_n=\lcm(P_{\rm md},P_{\rm corr}).
\]
Since $\gcd(5656,236\,376)=2^3\cdot 7=56$, we compute explicitly
\[
P_n=\frac{5656\cdot 236\,376}{56}=101\cdot 236\,376=23\,873\,976.
\]
By Proposition~\ref{prop:periodicity_prime_obstruction}\textnormal{(a)}, for each fixed $d$ the tie
condition $t(d,n)\in\Z$ depends only on $n\bmod P_n$ and thus forms a union of residue classes modulo
$P_n=23\,873\,976$.
This corresponds to nearly 2 million years, cf. \cite[\S13]{janson}.
\end{remark}

Hence, for each fixed $d$ the tie problem reduces to a finite search in $n$ modulo an explicit
period, which is the starting point of the meet-in-the-middle/CRT algorithm in \S\ref{sss:ties}.

\begin{remark}
\label{rem:local-101}
Writing $t(d,n)=M_d(n)+C_d(n)$, we have
\[
C_d(n)\in \frac{1}{60\,P_{\rm corr}}\Z.
\]
In the principal case,
\[
60\,P_{\rm corr}=60\cdot 236376
=2^4\cdot 3\cdot (2^3\cdot 3^2\cdot 7^2\cdot 67),
\]
so $101\nmid 60P_{\rm corr}$ and hence $C_d(n)\in\Z_{(101)}$ for all $n$.
Thus any tie $t(d,n)\in\Z$ forces
\[
M_d(n)=t(d,n)-C_d(n) \in\Z_{(101)}.
\]
\end{remark}

\begin{remark}
\label{rem:local-67}
In view of applying Proposition~\ref{prop:periodicity_prime_obstruction}\textnormal{(c)},
we single out the solar term:
\[
  F_2\bigl(U_2(d,n)\bigr)=-\frac1{60}\,\sunTab\bigl(v(d,n)\bigr),
\]
with
\[
v(d,n)=12U_2(d,n)=12\big(\textstyle-\frac14+s_0+n s_1+d s_2\big) ,
\]
and write $t=Q+F_2(U_2(d,n))$.
For the principal case \eqref{eq:principal_motions} we compute
\[
v(d,n)=v_0+\frac{65}{67}\,n+\frac{13}{67\cdot6}\,d .
\]
It is immediate that $v(d,n)\in\Z_{(67)}+\frac1{67}\Z$, and the map $F_2$ does not introduce additional $67$ in the denominator, so $F_2(U_2(d,n))\in\Z_{(67)}+\frac1{67}\Z$.
Moreover, non-solar part $Q$ does not contain $67$ in the denominator, meaning that $Q\in\Z_{(67)}$ and so $t\in\Z_{(67)}+\frac1{67}\Z$.

The table $F_2$ as a function of $v(d,n)$ satisfies the conditions of Proposition~\ref{prop:periodicity_prime_obstruction}\textnormal{(c)}, 
and we conclude that any tie $t(d,n)\in\Z$ forces $v(d,n)\in\Z_{(67)}$.
\end{remark}

The preceding remarks show that a tie $t(d,n)\in\Z$ forces certain distinguished affine quantities
(notably the mean-date phase $M_d(n)$ and the raw solar-table argument $v(d,n)$) to lie in a localisation
ring $\Z_{(p)}$ for suitable primes $p$.  We now turn these membership conditions into explicit residue
class constraints on $n$.  The next proposition is a purely algebraic device: it converts a condition of
the form $X(n)\in\Z_{(p)}$ for an affine function $X(n)=x_0+n x_1$ into a \emph{single linear congruence}
modulo $p$.  This is exactly what is needed to make the $101$-- and $67$--filters algorithmic, i.e.\ to
replace them by explicit residue classes $n\equiv n_d\pmod{101}$ and $n\equiv \nu_d\pmod{67}$ (for each
fixed $d$).

\begin{proposition}[$p$--local affine congruence principle]\label{prop:p-local-affine}
Fix a prime $p$.  Let
\[
X(n)=x_0+n x_1,\qquad x_0,x_1\in\Q,
\]
be an affine function of $n\in\Z$.  Assume that
\begin{equation}\label{eq:x1-onep}
x_1=\frac{A}{pB},\qquad A,B\in\Z,\qquad \gcd(A,p)=\gcd(B,p)=1,
\end{equation}
i.e.\ $x_1$ has \emph{exactly one factor $p$ in the denominator}.  Assume moreover that
\begin{equation}\label{eq:x0-local-plus}
x_0\in \Z_{(p)}+\frac1p\Z .
\end{equation}
Then:
\begin{enumerate}[label=\textnormal{(\alph*)}, leftmargin=2.2em, itemsep=2pt]
\item\label{it:uniqueclass}
There exists a \emph{unique} residue class $\nu\in\{0,1,\dots,p-1\}$ such that
\begin{equation}\label{eq:uniqueclass}
X(n)\in \Z_{(p)} \quad\Longleftrightarrow\quad n\equiv \nu \pmod p.
\end{equation}
Equivalently, the condition $X(n)\in\Z_{(p)}$ is a single linear congruence modulo $p$.

\item\label{it:explicitnu}
More explicitly, write $x_0=x_{(p)}+\frac{r_0}{p}$ with $x_{(p)}\in\Z_{(p)}$ and $r_0\in\{0,1,\dots,p-1\}$,
and write $x_1=\frac{1}{p}\cdot \frac{A}{B}$ with $A,B$ as in \eqref{eq:x1-onep}.  Then the unique class in
\eqref{eq:uniqueclass} is given by
\begin{equation}\label{eq:nu-formula}
\nu \equiv -\,r_0\,(A B^{-1})^{-1} \pmod p,
\end{equation}
where $B^{-1}$ and $(AB^{-1})^{-1}$ denote inverses modulo $p$.
\end{enumerate}
\end{proposition}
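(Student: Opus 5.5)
The plan is to reduce the membership condition $X(n)\in\Z_{(p)}$ to a statement about the $p$--primary fractional residue $r_p$ introduced before Proposition~\ref{prop:periodicity_prime_obstruction}, and then solve a linear congruence. First I will check that $X(n)$ lies in the domain of $r_p$ for every $n$. By \eqref{eq:x1-onep} we can write $x_1=\frac1p\cdot\frac AB$ with $\frac AB\in\Z_{(p)}$. Choose an integer $c$ with $c\equiv AB^{-1}\pmod p$. Then $\frac AB-c\in p\Z_{(p)}$, so
\[
x_1=\frac{c}{p}+y,\qquad y\in\Z_{(p)}.
\]
Combining this with \eqref{eq:x0-local-plus}, written as $x_0=x_{(p)}+r_0/p$, gives
\[
X(n)=\bigl(x_{(p)}+ny\bigr)+\frac{r_0+nc}{p},
\]
which shows $X(n)\in\Z_{(p)}+\frac1p\Z$.

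Second, I will use the elementary fact already invoked in the proof of Proposition~\ref{prop:periodicity_prime_obstruction}(b): an element $z/p$ with $z\in\Z$ lies in $\Z_{(p)}$ if and only if $p\mid z$. Since $x_{(p)}+ny\in\Z_{(p)}$, it follows that
\[
X(n)\in\Z_{(p)}\iff \frac{r_0+nc}{p}\in\Z_{(p)}\iff r_0+nc\equiv 0\pmod p.
\]
In other words, $r_p(X(n))\equiv r_0+nc\pmod p$. This step is the only place where a little care is needed: the decomposition must be independent of the representative $c$. That holds because changing $c$ by a multiple of $p$ only moves an integer multiple of $n$ between the two summands, so the class $r_0+nc$ modulo $p$ is well defined.

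Third, because $\gcd(A,p)=\gcd(B,p)=1$, the residue $c\equiv AB^{-1}$ is a unit modulo $p$. The linear congruence $r_0+nc\equiv0\pmod p$ therefore has exactly one solution class, namely $\nu\equiv -r_0c^{-1}\equiv -r_0(AB^{-1})^{-1}\pmod p$. This proves both the existence and uniqueness claim (a) and the explicit formula \eqref{eq:nu-formula} in (b).

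I expect no serious obstacle; the bookkeeping in the first step, splitting $x_1$ into a $\frac1p\Z$ part plus a $\Z_{(p)}$ part, is the crux.
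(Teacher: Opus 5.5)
Your proof is correct and follows the paper's argument: you split off the $\Z_{(p)}$ part of $X(n)$, reduce membership to the linear congruence $r_0+(AB^{-1})\,n\equiv 0\pmod p$, and solve it using invertibility of $AB^{-1}$. The only difference is cosmetic: you replace the $p$-local unit $A/B$ by an integer representative $c$ so the condition becomes literal integer divisibility, whereas the paper reduces $r_0+nA/B\in p\,\Z_{(p)}$ modulo $p$ directly. Your remark about independence of the choice of $c$ is harmless but not needed.
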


\begin{proof}
By \eqref{eq:x0-local-plus} there exist $x_{(p)}\in\Z_{(p)}$ and $r_0\in\{0,\dots,p-1\}$ such that
$x_0=x_{(p)}+\frac{r_0}{p}$.
Also \eqref{eq:x1-onep} implies $x_1=\frac{1}{p}\cdot \frac{A}{B}$ with $\frac{A}{B}\in\Z_{(p)}$, hence
\[
X(n)=x_{(p)}+\frac{1}{p}\Bigl(r_0+n\frac{A}{B}\Bigr).
\]
Since $x_{(p)}\in\Z_{(p)}$, we have $X(n)\in\Z_{(p)}$ if and only if the parenthesis
$r_0+n\frac{A}{B}$ is divisible by $p$ in the sense of $\Z_{(p)}$, i.e.\ iff
\[
r_0+n\frac{A}{B}\in p\,\Z_{(p)}.
\]
Reducing modulo $p$ (and using that $B$ is invertible modulo $p$) this is equivalent to the linear
congruence
\[
r_0+(A B^{-1})\,n\equiv 0\pmod p.
\]
Because $\gcd(A,p)=1$, the coefficient $AB^{-1}$ is invertible modulo $p$, so this congruence has a unique
solution class $n\equiv \nu\pmod p$, proving \eqref{it:uniqueclass}.  Solving explicitly yields
\eqref{eq:nu-formula}.
\end{proof}

Assume the principal siddh\=anta mean motions as in \eqref{eq:principal_motions}, and normalize civil-day boundaries to $\Z$.
Fix $d$.  By Remarks~\ref{rem:local-101}--\ref{rem:local-67}, any tie at $(d,n)$ forces
\[
M_d(n)\in \Z_{(101)}
\qquad\text{and}\qquad
v(d,n)\in \Z_{(67)},
\]
where $M_d(n)=m_0+n m_1+d m_2$ and $v(d,n)=12(-\tfrac14+s_0+n s_1+d s_2)$
is the \emph{raw} solar-table argument.

\begin{remark}[The $101$-filter for $M_d(n)$]
Write
\[
M_d(n)=x_0(d)+n x_1,\qquad x_0(d)=m_0+d m_2,\qquad x_1=m_1.
\]
For principal parameters one has
\[
m_1=\frac{167025}{5656}=\frac{167025}{101\cdot 56},
\]
so Proposition~\ref{prop:p-local-affine} applies with $p=101$, $A=167025$, $B=56$.
Reducing modulo $101$ gives
\[
A\equiv 72 \pmod{101},\qquad 56^{-1}\equiv 92 \pmod{101},
\qquad AB^{-1}\equiv 72\cdot 92 \equiv 59 \pmod{101},
\]
hence
\[
(AB^{-1})^{-1}\equiv 59^{-1}\equiv 12 \pmod{101}.
\]
Moreover $m_2=m_1/30=\frac{167025}{101\cdot 1680}$, so
\[
r_{101}(m_2)\equiv 167025\cdot (1680^{-1}) \equiv 72\cdot 64^{-1}
\equiv 72\cdot 30 \equiv 39 \pmod{101},
\]
using $1680\equiv 64$ and $64^{-1}\equiv 30 \pmod{101}$.
By additivity of $r_{101}$ on $\Z_{(101)}+\frac1{101}\Z$ we obtain
\[
r_{101}\bigl(x_0(d)\bigr)\equiv r_{101}(m_0)+ d\,r_{101}(m_2)\equiv r_{101}(m_0)+39d \pmod{101}.
\]
Therefore Proposition~\ref{prop:p-local-affine} yields a unique residue class
\[
n\equiv n_d \pmod{101},
\qquad
n_d\equiv -\,r_{101}\bigl(x_0(d)\bigr)\,(AB^{-1})^{-1}
\equiv -12\bigl(r_{101}(m_0)+39d\bigr)
\equiv n_0+37d \pmod{101},
\]
where $n_0\equiv -12\,r_{101}(m_0)\pmod{101}$ and $-12\cdot 39\equiv 37\pmod{101}$.
\end{remark}

\begin{remark}[The $67$-filter for $v(d,n)$]
Write
\[
v(d,n)=y_0(d)+n y_1,
\qquad
y_1=\frac{65}{67},\qquad
y_0(d)=12\Bigl(-\frac14+s_0\Bigr)+12d\,s_2.
\]
For principal parameters $s_2=s_1/30=\frac{65}{804\cdot 30}$, and a direct simplification gives
\[
12d\,s_2=\frac{13}{402}\,d=\frac{13}{67\cdot 6}\,d.
\]
Thus Proposition~\ref{prop:p-local-affine} applies with $p=67$, $A=65$, $B=1$, and $y_0(d)\in \Z_{(67)}+\frac1{67}\Z$.
Since $65\equiv -2\pmod{67}$ one has
\[
65^{-1}\equiv (-2)^{-1}\equiv -34\equiv 33 \pmod{67}.
\]
Also
\[
r_{67}\!\left(\frac{13}{67\cdot 6}\right)\equiv 13\cdot 6^{-1}\equiv 13\cdot 56\equiv 58 \pmod{67},
\]
since $6^{-1}\equiv 56\pmod{67}$.
Noting that $12(-\frac14+s_0)\equiv 12s_0 \pmod{1}$, we obtain
\[
r_{67}\bigl(y_0(d)\bigr)\equiv r_{67}(12s_0)+58d \pmod{67},
\]
hence Proposition~\ref{prop:p-local-affine} gives a unique residue class
\[
n\equiv \nu_d \pmod{67},
\qquad
\nu_d\equiv -\,r_{67}\bigl(y_0(d)\bigr)\,65^{-1}
\equiv -33\bigl(r_{67}(12s_0)+58d\bigr)
\equiv \nu_0+29d \pmod{67},
\]
where $\nu_0\equiv -33\,r_{67}(12s_0)\pmod{67}$ and $-33\cdot 58\equiv 29\pmod{67}$.
\end{remark}

\medskip
\noindent
To summarize: for the principal siddh\=anta parameter set, a tie $t(d,n)\in\Z$ forces the two
localisation conditions
\[
M_d(n)\in\Z_{(101)}
\qquad\text{and}\qquad
v(d,n)\in\Z_{(67)},
\]
hence (by Proposition~\ref{prop:p-local-affine}) forces \emph{congruence} conditions on the month index:
for each fixed $d$ there are unique residues $n_d\in\Z/101$ and $\nu_d\in\Z/67$ such that
\[
t(d,n)\in\Z\quad\Longrightarrow\quad n\equiv n_d \pmod{101}
\qquad\text{and}\qquad
n\equiv \nu_d \pmod{67}.
\]
Moreover, the dependence on $d$ is universal across the principal traditions:
\[
n_d\equiv n_0+37d \pmod{101},
\qquad
\nu_d\equiv \nu_0+29d \pmod{67},
\]
with tradition-dependent intercepts $n_0,\nu_0$ determined by the epoch constants $(m_0,s_0)$.
By the Chinese remainder theorem, for each fixed $d$ these two constraints combine into a \emph{single}
residue class modulo $6767=101\cdot 67$.  This is the key arithmetic prefilter used in the fast
meet-in-the-middle/CRT search of \S\ref{sss:ties}.  For convenience, the explicit intercepts for the
principal traditions are listed in Table~\ref{tab:principal-filters}.

\begin{table}[ht]
\centering
\begin{tabular}{lcc}
\toprule
Tradition & $n\equiv n_0+37d\pmod{101}$ & $n\equiv \nu_0+29d\pmod{67}$\\
\midrule
Phugpa & $n_0=25$ & $\nu_0=21$\\
Tsurphu (E1732) & $n_0=67$ & $\nu_0=10$\\
Tsurphu (E1852) & $n_0=97$ & $\nu_0=46$\\
Bhutan & $n_0=84$ & $\nu_0=6$\\
Mongol & $n_0=82$ & $\nu_0=62$\\
\bottomrule
\end{tabular}
\caption{Explicit residue classes forced by the $101$- and $67$-filters for the principal traditions.
For each fixed $d$, a tie can occur only if $n$ lies simultaneously in these two classes.}
\label{tab:principal-filters}
\end{table}

\section{Mean-Elongation Model and Skipped-Day Arithmetic}
\label{app:arith-day}

We record a simplified day rule obtained by pushing the mean-motion viewpoint of
\S\ref{ss:mean_sun_models} one level down, from months to days.  The idea is to replace the full
day-boundary computation of \S\ref{ss:day_models} by the simplest possible celestial input: assume
that the Sun--Moon elongation increases at a constant mean rate from one dawn to the next, and use
its value at dawn to assign the lunar-day label of the civil day that begins there.

If the mean elongation advance per dawn step is chosen to be a rational number, the resulting day
labels are generated by a rigid arithmetic counter, so the skip pattern is periodic and can be
decided by a single congruence test (a day-level analogue of the intercalation index).  When the
mean advance exceeds one lunar day per civil day, repeated dates are impossible.  Pairing this day
rule with the month model of \S\ref{ss:mean_sun_models} yields a low-commitment calendar variant,
which we call the {\em L0} reform, in the sense of \S\ref{ss:low_commitment}.

\subsection{Model set-up}

Let $j\in\mathbb{Z}$ enumerate successive civil days (dawn to dawn).  Let
$\Delta:\mathbb{Z}\to\mathbb{R}$ be a chosen lift of the mean elongation at dawn $j$, measured in
lunar-day units (so $1$ unit $=1/30$ turn).  We assume a linear mean law
\begin{equation}\label{eq:daymodel_linear}
\Delta(j)=\Delta_0 + j\,\frac{U}{V}
\qquad
(\Delta_0\in\R,\ U,V\in\N),
\end{equation}
where $U/V$ (in lowest terms) is the chosen mean elongation advance per civil day, expressed in
lunar days.
To obtain a ``skips-only'' day rule it is enough to assume
\begin{equation}\label{eq:daymodel_skips_only}
\frac{U}{V}>1,
\end{equation}
since then $\Delta(j+1)-\Delta(j)>1$ and the inherited label advances at least by one at each dawn,
so repeated dates are impossible.  In any reasonable calendar one also has $U/V<2$, which rules out
multiple skipped labels within a single dawn-to-dawn step; we will keep this mild restriction in
place for simplicity.  Write
\begin{equation}\label{eq:daymodel_kappa}
\kappa = U-V \qquad (0<\kappa<V),
\end{equation}
so that $U/V=1+\kappa/V$.

The inherited lunar-day index at dawn $j$ is encoded by the floor counter
\begin{equation}\label{eq:daymodel_A}
A_j = \lfloor \Delta(j)\rfloor
      = \left\lfloor \Delta_0 + j\frac{U}{V}\right\rfloor \in \mathbb{Z}.
\end{equation}

\begin{lemma}\label{lem:daymodel_basic}
Assume \eqref{eq:daymodel_linear}--\eqref{eq:daymodel_A} with $1<U/V<2$ and $\kappa=U-V$.
Then:
\begin{enumerate}
\item[(a)] For every $j$,
\begin{equation}\label{eq:daymodel_inc12}
A_{j+1}-A_j\in\{1,2\}.
\end{equation}
Moreover, the civil day starting at dawn $j$ is \emph{regular} iff $A_{j+1}-A_j=1$, and it \emph{skips}
exactly one lunar-day label iff $A_{j+1}-A_j=2$.

\item[(b)] One has the explicit decomposition
\begin{equation}\label{eq:daymodel_split}
A_j = j + \left\lfloor \Delta_0 + j\frac{\kappa}{V}\right\rfloor.
\end{equation}

\item[(c)] For every $j$,
\begin{equation}\label{eq:daymodel_period}
A_{j+V}-A_j = U.
\end{equation}
Equivalently, among the $V$ increments
$\{A_{j+1}-A_j,\dots,A_{j+V}-A_{j+V-1}\}$ there are exactly $\kappa$ twos and $V-\kappa$ ones.
In particular, the L0 rule forces exactly $\kappa$ skipped labels per $V$ civil days.

\item[(d)] Fix an epoch day $j_0$ and define $D^*(j):=A_j-A_{j_0}\in\mathbb{Z}$.
Then $D^*(j)$ tracks the elapsed (mean-model) lunar-day count from the epoch.
\end{enumerate}
\end{lemma}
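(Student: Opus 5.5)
The plan is to mirror the proof of Lemma~\ref{l:mean_sun_cycle}, replacing the slope $\ru/\rv\in(0,1)$ by $U/V\in(1,2)$. Set $x_j=\Delta(j)$, so that $x_{j+1}-x_j=U/V$ and $A_j=\lfloor x_j\rfloor$.

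For (a), I will use the elementary fact that $\lfloor y+c\rfloor-\lfloor y\rfloor\in\{\lfloor c\rfloor,\lfloor c\rfloor+1\}$. With $c=U/V\in(1,2)$ this gives \eqref{eq:daymodel_inc12}. The interpretation then comes from the inheritance formulation of \S\ref{subsec:dual}, exactly as in Lemma~\ref{l:skips}: the civil day at dawn $j$ carries label $A_j$ (mod $30$), and the next dawn carries $A_{j+1}$. If the jump is $1$, the labels advance normally. If the jump is $2$, exactly one label $A_j+1$ is never inherited, so one lunar-day label is skipped. A jump of $0$ is excluded, so no label is repeated.

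For (b), I write $jU/V=j+j\kappa/V$ and pull the integer $j$ out of the floor, which gives \eqref{eq:daymodel_split} directly.

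For (c), adding $V\cdot U/V=U\in\Z$ to the argument of the floor shifts $A$ by exactly $U$, as in part (c) of Lemma~\ref{l:mean_sun_cycle}. To get the counts, I sum the $V$ increments, each equal to $1$ or $2$. If there are $a$ twos, the sum is $V+a=U$, so $a=\kappa$. By (a), each increment of $2$ is precisely one skipped label.

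For (d), I note that $D^*(j)=\lfloor\Delta(j)\rfloor-\lfloor\Delta(j_0)\rfloor$ is the number of integer crossings of the mean elongation lift in $(\Delta(j_0),\Delta(j)]$ for $j\ge j_0$, and minus the analogous count for $j<j_0$. This is exactly the number of mean lunar-day boundaries elapsed, which is essentially a definitional unpacking.

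I do not expect any real obstacle. The only point needing care is the endpoint convention in (a): a dawn falling exactly on an integer elongation value must inherit the label consistently with the right-closed intervals. I would handle this as in Lemma~\ref{l:mean_sun_cycle}(a), by noting that the floor assigns $x_j\in\Z$ to the new label, and checking that this matches the chosen interval convention up to a global relabeling.
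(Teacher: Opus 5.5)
Your proposal is correct and follows the paper's proof essentially step by step: floor increments for (a), pulling the integer $j$ out of the floor for (b), an integer shift followed by counting ones and twos for (c), and (d) as a definitional unpacking; the only cosmetic difference is that the paper derives (c) from the split (b) as $V+\kappa$, while you shift by $U$ directly. The endpoint issue you flag is moot, because \eqref{eq:daymodel_A} takes the floor counter as the definition of the inherited label (a left-closed convention), so nothing needs reconciling; the check you propose would in fact fail at integer $\Delta(j)$, where $\lfloor\Delta(j)\rfloor$ and the right-closed label $\lceil\Delta(j)\rceil$ do not differ by a constant, but the increment and counting statements hold under either convention.
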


\begin{proof}
(a) Since $\Delta(j+1)-\Delta(j)=U/V\in(1,2)$, the floor can increase only by $1$ or $2$.

(b) Insert $U=V+\kappa$ into \eqref{eq:daymodel_A}:
$\Delta_0+j\frac{U}{V}=\Delta_0 + j + j\frac{\kappa}{V}$, then take floors.

(c) From \eqref{eq:daymodel_split},
\[
A_{j+V}-A_j
=V+\Bigl\lfloor \Delta_0+(j+V)\frac{\kappa}{V}\Bigr\rfloor
 -\Bigl\lfloor \Delta_0+j\frac{\kappa}{V}\Bigr\rfloor
=V+\kappa=U.
\]
Summing $V$ increments each in $\{1,2\}$ forces exactly $\kappa$ twos.

(d) Immediate.
\end{proof}

\begin{example}\label{ex:UV_three}
To give the skip counts in Lemma~\ref{lem:daymodel_basic} a concrete scale, let us compare three choices.  
In the principal Tibetan siddh\=anta (grub-rtsis) choice one has
\[
\frac{U}{V}=\frac{11312}{11135},\qquad \kappa=177.
\]
The K\=alacakra karana (byed-rtsis) truncation gives a slightly different daily advance,
\[
\frac{U}{V}=\frac{10800}{10631},\qquad \kappa=169,
\]
reflecting the well-known karana practice of truncating a more accurate constant for manual
computation; in the K\=alacakra literature this corresponds to truncating the monthly increment
$1;31,50,0,480\,(707)$ to $1;31,50$ (cf.\ \cite{henning}).  Finally, as a modern high-precision
benchmark we may use
\[
\frac{U}{V}=\frac{143925}{141673},\qquad \kappa=2252.
\]
In all three cases, the typical spacing between skipped labels is about $63$ days.
\end{example}

\subsection{Skip-day index (chad-index) and the inverse map}

Skipped labels are exactly the places where the floor counter \eqref{eq:daymodel_A} jumps by $2$.
As in the leap-month discussion of \S3.2, the jump set admits an equivalent \emph{congruence test}
that decides whether a lunar-day label is absent without scanning the timeline.  We call this the
\emph{skip-day index}; in Tibetan terminology, a skipped day is a \emph{chad} (``cut'') day, hence
\emph{chad-index}.

\begin{definition}\label{def:day_index}
Let $U,V$ be as above and define
\begin{equation}\label{eq:daymodel_delta_def}
\delta=-V\Delta_0,\qquad \delta^*=\lceil \delta\rceil-1\in\mathbb{Z}.
\end{equation}
For $K\in\mathbb{Z}$ define the \emph{skip-day index} (chad-index)
\begin{equation}\label{eq:daymodel_chi}
\chi(K)=(VK+\delta^*)\bmod U \ \in \ \{0,1,\dots,U-1\}.
\end{equation}
\end{definition}

\begin{proposition}\label{prop:daymodel_skip_inverse}
Let $S=\{0,1,\dots,\kappa-1\}\subset\mathbb{Z}/U\mathbb{Z}$ where we recall $\kappa=U-V$.
\begin{enumerate}
\item[(a)] A lunar-day label $K\in\mathbb{Z}$ is skipped (i.e.\ does not occur among the $A_j$) iff
\begin{equation}\label{eq:daymodel_skiptest}
\chi(K)\in S.
\end{equation}
Equivalently, fixing an epoch $j_0$ and writing $\delta_0^*\equiv(VA_{j_0}+\delta^*)\pmod U$,
the elapsed label $K^*=K-A_{j_0}$ is skipped iff
\begin{equation}\label{eq:daymodel_skiptest_elapsed}
(VK^*+\delta_0^*)\bmod U \in S.
\end{equation}

\item[(b)] Define the inverse map (label $\to$ civil day) by
\begin{equation}\label{eq:daymodel_inverse}
J(K)=\left\lceil \frac{V(K+1-\Delta_0)}{U}\right\rceil-1.
\end{equation}
If $K$ occurs, then $A_{J(K)}=K$; if $K$ is skipped, then $A_{J(K)}=K-1$.
\end{enumerate}
\end{proposition}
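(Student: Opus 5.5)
The plan is to reduce everything to the question of whether a short half-open real interval contains an integer, exactly as in the proof of Proposition~\ref{p:leap-index}, using the floor counter \eqref{eq:daymodel_A} and the skips-only regime of Lemma~\ref{lem:daymodel_basic}.

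\emph{Step 1 (occurrence as a lattice condition).} Fix $K\in\Z$. The label $K$ occurs iff there is an integer $j$ with $K\le \Delta_0+jU/V<K+1$. Multiplying by $V>0$ and using $\delta=-V\Delta_0$, this is equivalent to
\[
VK+\delta\le Uj<VK+\delta+V .
\]
Since $Uj\in\Z$ and $V\in\Z$, the real window $[VK+\delta,\,VK+\delta+V)$ can be replaced by the integer window $\{\lceil VK+\delta\rceil,\dots,\lceil VK+\delta\rceil+V-1\}$. Because $\lceil VK+\delta\rceil=VK+\lceil\delta\rceil=VK+\delta^*+1$, the window is
\[
\{VK+\delta^*+1,\dots,VK+\delta^*+V\}.
\]
Thus $K$ occurs iff this block of $V$ consecutive integers contains a multiple of $U$.

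\emph{Step 2 (the congruence test).} Set $r=\chi(K)=(VK+\delta^*)\bmod U$. The smallest multiple of $U$ strictly exceeding $VK+\delta^*$ lies at distance $U-r$, which equals $U$ when $r=0$. Hence no multiple of $U$ falls in the block iff $U-r>V$, that is, iff $r<U-V=\kappa$. This gives \eqref{eq:daymodel_skiptest}. The case $r=0$ is harmless because $U>V$ by \eqref{eq:daymodel_skips_only}.

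The elapsed form \eqref{eq:daymodel_skiptest_elapsed} follows by substituting $K=K^*+A_{j_0}$ and absorbing $VA_{j_0}$ into the constant $\delta_0^*$. I expect this step, in particular the careful conversion from $\delta$ to $\delta^*=\lceil\delta\rceil-1$, to be the only delicate point. The strict-versus-nonstrict endpoint bookkeeping here is where an off-by-one error would enter.

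\emph{Step 3 (the inverse map).} From the same inequality, $J(K)$ as defined in \eqref{eq:daymodel_inverse} is the largest integer $j$ with $\Delta_0+jU/V<K+1$, by the same ceiling-minus-one identity used in Proposition~\ref{p:leap-index}(b). Hence $A_{J(K)}\le K$ and $A_{J(K)+1}\ge K+1$.

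If $K$ occurs, the (necessarily unique) $j$ with $A_j=K$ satisfies the defining inequality, and every larger index has $A\ge K+1$. Therefore $j=J(K)$ and $A_{J(K)}=K$.

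If $K$ is skipped, then $A_{J(K)}\neq K$, so $A_{J(K)}\le K-1$. On the other hand, Lemma~\ref{lem:daymodel_basic}(a) gives increments of at most $2$, so $A_{J(K)}\ge A_{J(K)+1}-2\ge K-1$. Combining the two bounds yields $A_{J(K)}=K-1$.
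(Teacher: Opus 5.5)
Your proof is correct and follows essentially the paper's route: occurrence of $K$ is reduced to whether a half-open window determined by $K$ contains a lattice point, and $J(K)$ is identified as the largest integer strictly below the right endpoint $V(K+1-\Delta_0)/U$. Your bookkeeping is slightly cleaner than the paper's, in two places. First, clearing denominators to the block $\{VK+\delta^*+1,\dots,VK+\delta^*+V\}$ avoids the paper's case split on whether $a\in\mathbb{Z}$ and its decomposition $\delta=\delta^*+\theta$ with $\theta\in(0,1]$. Second, in the skipped case your appeal to the increment bound of Lemma~\ref{lem:daymodel_basic}(a) makes explicit the use of $U/V<2$. The paper's step $J(K)=J(K-1)\Rightarrow A_{J(K)}=K-1$ leaves this implicit, since that step needs $K-1$ to occur.
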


\begin{proof}
(a) The label $K$ appears at some dawn $j$ iff $A_j=\lfloor \Delta(j)\rfloor=K$, i.e.
\[
K\le \Delta_0+j\frac{U}{V}<K+1.
\]
This is equivalent to the existence of an integer $j$ in the half-open interval
\(I_K=[a,b)\) with
\[
a=\frac{V(K-\Delta_0)}{U}=\frac{VK+\delta}{U},\qquad b=a+\frac{V}{U}.
\]
If $a\in\Z$ then $a\in I_K$ and $K$ is not skipped.  Otherwise $a=m+r$ with $m\in\Z$ and
$r\in(0,1)$, and the unique candidate integer is $m+1=\lceil a\rceil$; it lies in $I_K$ iff
$\lceil a\rceil<b$, i.e.\ iff $r+V/U>1$.  Thus $K$ is skipped iff $r\le 1-V/U=\kappa/U$.

Now write $\delta=\delta^*+\theta$ with $\delta^*=\lceil \delta\rceil-1\in\Z$ and $\theta\in(0,1]$.
Then
\[
a=\frac{VK+\delta^*}{U}+\frac{\theta}{U},
\qquad
r=\{a\}=\frac{(VK+\delta^*)\bmod U+\theta}{U}.
\]
Since $\theta>0$, the inequality $r\le \kappa/U$ is equivalent to
\[
(VK+\delta^*)\bmod U\in\{0,1,\dots,\kappa-1\}=S,
\]
which is \eqref{eq:daymodel_skiptest}.  Shifting by $A_{j_0}$ replaces $K$ by $K^*=K-A_{j_0}$ and
absorbs the constant $VA_{j_0}$ into $\delta_0^*\equiv(VA_{j_0}+\delta^*)\pmod U$, yielding
\eqref{eq:daymodel_skiptest_elapsed}.

\medskip
(b) Let $b=\frac{V(K+1-\Delta_0)}{U}$ be the right endpoint of $I_K$.  Since $I_K$ is open at $b$,
the correct choice is the largest integer strictly less than $b$, namely $\lceil b\rceil-1$, which
is exactly \eqref{eq:daymodel_inverse}.  If $K$ occurs then $I_K$ contains a unique integer, and
necessarily $J(K)$ is that integer, so $A_{J(K)}=K$.  If $K$ is skipped then $I_K$ contains no
integer, hence the largest integer $<b$ is the same as for $K-1$ (the two intervals have the same
right endpoint), so $J(K)=J(K-1)$ and therefore $A_{J(K)}=K-1$.
\end{proof}

\begin{corollary}\label{cor:L0_29_30}
Let $L\ge 1$ and consider any block of $L$ consecutive lunar-day labels
$\{K,K+1,\dots,K+L-1\}$.  In the L0 model, the number of skipped labels in such a block is either
$\lfloor L\kappa/U\rfloor$ or $\lceil L\kappa/U\rceil$, where $\kappa=U-V$.
In particular, if
\(30\kappa < {U}\),
then every block of $30$ consecutive labels contains at most one skipped label.  Consequently,
when month~$n$ is linearized by the $30$ labels $K=30n,30n+1,\dots,30n+29$, the L0 month length is
always $30$ days (no skip) or $29$ days (one skip).
\end{corollary}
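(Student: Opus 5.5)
The plan is to reduce the count of skipped labels in a block to a difference of a single floor counter, as was done for leap months in Lemma~\ref{l:mean_sun_cycle}, and then use the standard fact that $\lfloor x+y\rfloor-\lfloor x\rfloor\in\{\lfloor y\rfloor,\lceil y\rceil\}$.

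\emph{Step 1: the skip indicator as a carry.} Proposition~\ref{prop:daymodel_skip_inverse}(a) says that $K$ is skipped iff $(VK+\delta^*)\bmod U\in\{0,\dots,\kappa-1\}$. Since $V\equiv -\kappa\pmod U$, I will introduce the auxiliary counter
\[
B_K=\Bigl\lfloor \frac{\kappa K+c}{U}\Bigr\rfloor
\]
for a suitable integer constant $c$, and show that $B_{K+1}-B_K\in\{0,1\}$ and that this increment equals $1$ exactly when $K$ is skipped. There are two natural ways to see this. The first is to note that $(VK+\delta^*)\bmod U<\kappa$ is equivalent to a carry in $\lfloor(-\kappa K+\text{const})/U\rfloor$, and then reflect. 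The second, which I prefer, is to argue directly from Lemma~\ref{lem:daymodel_basic}(b). The counter $A_j$ skips exactly when $\lfloor\Delta_0+j\kappa/V\rfloor$ jumps. Equivalently, the number of skipped labels below level $K$ is $K-\#\{\text{days } j \text{ with } A_j<K\}$, and the number of such days is given by the ceiling expression appearing in $J(K)$ in Proposition~\ref{prop:daymodel_skip_inverse}(b).

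Concretely, let $N(K)=J(K-1)+1$ be the count of days with label $<K$ (up to an additive constant). Then the number of skips among $K,\dots,K+L-1$ is
\[
L-\bigl(N(K+L)-N(K)\bigr),
\qquad
N(K)=\Bigl\lceil \frac{V(K-\Delta_0)}{U}\Bigr\rceil .
\]
This step needs some care with the half-open conventions, but it follows from the interval characterization $I_K=[a,b)$ used in the proof of Proposition~\ref{prop:daymodel_skip_inverse}.

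\emph{Step 2: bound the increment.} The quantity $N(K+L)-N(K)$ is a difference of ceilings of $x+LV/U$ and $x$, so it lies in $\{\lfloor LV/U\rfloor,\lceil LV/U\rceil\}$. Using $LV/U=L-L\kappa/U$ with $L$ an integer, the skip count $L-(N(K+L)-N(K))$ therefore lies in $\{\lfloor L\kappa/U\rfloor,\lceil L\kappa/U\rceil\}$. This is the first claim.

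\emph{Step 3: the corollary.} If $30\kappa<U$, then $30\kappa/U\in(0,1)$, so the floor is $0$ and the ceiling is $1$. Hence any block of $30$ labels contains at most one skip. For the month statement, the civil days of month $n$ are those whose labels lie in $\{30n,\dots,30n+29\}$. With no repeats (by Lemma~\ref{lem:daymodel_basic}(a)), each non-skipped label occurs on exactly one day, so the month has $30$ or $29$ days.

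I expect the main obstacle to be Step 1: getting the exact identification of the skip count with a ceiling difference, with the right-open and ceiling conventions handled correctly. Once that is established, Steps 2 and 3 are routine floor arithmetic.
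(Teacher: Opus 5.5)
Your proposal is correct and takes essentially the paper's route: your $N(K)=J(K-1)+1=\lceil V(K-\Delta_0)/U\rceil$ gives $N(K+L)-N(K)=J(K+L-1)-J(K-1)$. That is exactly the telescoped count of present labels the paper uses, and you then apply the same step $\lceil x+a\rceil-\lceil x\rceil\in\{\lfloor a\rfloor,\lceil a\rceil\}$ with $a=LV/U$. The auxiliary counter $B_K$ in Step~1 is an unneeded detour, though it would also work via $V\equiv-\kappa\pmod U$. The half-open bookkeeping you flag is already handled, since $\lceil y\rceil-\lceil x\rceil$ counts exactly the integers in $[x,y)$.
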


\begin{proof}
Let $J(K)$ be the inverse map from Proposition~\ref{prop:daymodel_skip_inverse}(b).  A label $K$ is
present iff $J(K)-J(K-1)=1$ and skipped iff $J(K)-J(K-1)=0$, so the number of present labels in
$\{K,\dots,K+L-1\}$ telescopes to $J(K+L-1)-J(K-1)$.  Using
$J(K)=\left\lceil \frac{V(K+1-\Delta_0)}{U}\right\rceil-1$ gives
\[
J(K+L-1)-J(K-1)
=
\left\lceil \frac{V(K+L-\Delta_0)}{U}\right\rceil
-
\left\lceil \frac{V(K-\Delta_0)}{U}\right\rceil.
\]
Writing $x=\frac{V(K-\Delta_0)}{U}$ and $a=\frac{LV}{U}$, this equals $\lceil x+a\rceil-\lceil x\rceil$,
which is either $\lfloor a\rfloor$ or $\lceil a\rceil$.  Hence the number of present labels is either
$\lfloor LV/U\rfloor$ or $\lceil LV/U\rceil$, and subtracting from $L$ yields the stated alternatives
$\lfloor L\kappa/U\rfloor$ or $\lceil L\kappa/U\rceil$ for skipped labels.  If $\kappa<U/30$ then
$30\kappa/U<1$, hence $\lceil 30\kappa/U\rceil\le 1$, proving the last claim.
\end{proof}

\begin{example}\label{ex:chi_realizations}
Fix a month count $n$ and consider the labels $K=30n,30n+1,\dots,30n+29$.  Since
\[
\chi(K+1)\equiv \chi(K)+V \equiv \chi(K)-\kappa \pmod U,
\]
writing $r_n:=\chi(30n)\in\{0,1,\dots,U-1\}$ gives, for $d=1,\dots,30$,
\begin{equation}\label{eq:chi_within_month}
\chi(30n+d-1)\equiv r_n-(d-1)\kappa \pmod U.
\end{equation}
Under $30\kappa<U$, at most one value of $d$ can place this
residue in the skip window $\{0,1,\dots,\kappa-1\}$.  If $r_n<30\kappa$, the unique skipped label is
\[
d_{\mathrm{skip}}(n)=\Bigl\lfloor \frac{r_n}{\kappa}\Bigr\rfloor+1,
\]
whereas if $r_n\ge 30\kappa$ then month~$n$ has no skipped label and hence has $30$ civil days.
\end{example}

\subsection{Practitioner pipeline: from month count to day labels}

In applications, the month model of \S\ref{ss:mean_sun_models} supplies the true month count
$n\in\Z$ (epoch $n=0$), and one works with absolute lunar-day labels inside the month.  Given a month
count $n$ and a lunar-day label $d\in\{1,\dots,30\}$, we linearize the pair by
\[
K:=30n+(d-1)\in\Z.
\]
The L0 day layer is then used only to decide whether the label $K$ occurs or is skipped, and (if
desired) to map an occurring label back to the civil-day index.

The day model is specified by (i) a rational mean elongation advance $U/V$ per dawn step, and (ii) a
single phase residue $\eta\in\Z/U\Z$ (equivalently $\delta^*\bmod U$).  In our library, astronomical
epochs are expressed in TT; to align the day rule with UT-based civil days one may first convert the
epoch new-moon time by $m_0^{\mathrm{UT}}=m_0^{\mathrm{TT}}-\Delta T/86400$.  Since L0 is purely
arithmetical, this choice affects only the phase residue $\eta$.

Fix a reference longitude $\lambda$ (in turns, positive east) and a constant mean dawn time
$\sigma$ (we take $\sigma=1/4$ for 6:00am local mean time).  Working on the local dawn-shifted clock,
\[
m_0^{\mathrm{loc}}:=m_0^{\mathrm{UT}}+\lambda+\Bigl(\frac12-\sigma\Bigr),
\qquad
f:=\{m_0^{\mathrm{loc}}\}\in[0,1),
\]
so that $f$ is the fraction of a civil day between the epoch dawn $\lfloor m_0^{\mathrm{loc}}\rfloor$
and the epoch new moon.  In the constant-rate mean model this determines the elongation phase at the
epoch dawn,
\[
\Delta_0=-f\,\frac{U}{V},
\]
and hence the stored integer phase and residue
\[
\delta^*=\lceil -V\Delta_0\rceil-1=\lceil fU\rceil-1,
\qquad
\eta\equiv\delta^*\pmod U.
\]
With these choices the skip-day index is
\[
\chi(K)=(VK+\eta)\bmod U,
\]
and $K$ is skipped iff $\chi(K)\in\{0,1,\dots,U-V-1\}$.

\section{Modern Astronomical Constants and Models}
\label{app:astro}

This appendix records modern benchmark constants and standard computational models that are useful for
(i) validating arithmetic calendars, and (ii) designing reforms with explicit accuracy targets.
Unless stated otherwise, all time arguments are understood on the uniform \emph{Terrestrial Time} (TT) scale,
and angles are in degrees (or turns) reduced modulo one revolution.

\subsection{Astronomical and civil time scales}
\label{app:tt}


Civil timekeeping is based on the rotation of the Earth, which is irregular and gradually slowing down. 
In contrast, the laws of physics (and thus planetary motions) operate on a strictly uniform time scale.
Bridging these two domains requires defining specific time standards \cite{IERS2010}.

\begin{itemize}
    \item {Terrestrial Time ($\TT$):} This is the uniform time scale used for ephemerides. It is independent of Earth's rotation and is related to International Atomic Time ($\TAI$) by a fixed offset: $\TT = \TAI + 32.184\,\mathrm{s}$.
    \item {Universal Time ($\UTone$):} This scale is defined by the actual rotation of the Earth relative to the mean sun. Because the Earth's rotation is erratic, $\UTone$ is not a uniform time scale.
    \item {Coordinated Universal Time ($\UTC$):} This is the basis for civil clocks worldwide. It is an atomic time scale that is kept within $\pm 0.9$ seconds of $\UTone$ by the occasional insertion of leap seconds.
\end{itemize}

The difference between uniform time and rotational time is captured by the correction factor $\Delta T$, defined as $\Delta T = \TT - \UTone$. The value of $\Delta T$ changes irregularly; for example, it was approximately $+63.8\,\mathrm{s}$ in 2000 and is currently approximately $+69.2\,\mathrm{s}$ in early 2026.

For applications requiring only minute-level precision, the irregular fluctuations of Earth's rotation can be approximated by a single unified quadratic model \cite{Morrison2004,Espenak2006}. 
This approach prioritizes long-term secular stability over decadal "jitters," representing the mean rate of tidal deceleration across millennia. 
The following formula remains physically consistent with the tidal braking recorded in ancient eclipse catalogs while maintaining a one-minute error budget for the modern era, cf. Figure~\ref{fig:Delta-T}:
\begin{equation}\label{e:delta-t-quad}
\Delta T \approx -20 + 32 u^2
\end{equation}
where $u = (Y - 1820)/100$ is the time in centuries from the anchor epoch of 1820.

\subsection{The Julian Date}
\label{app:jd}

While $\UTC$ provides a convenient civil clock (hours, minutes, seconds), calculating time intervals across months and years is cumbersome due to the variable lengths of months and leap years. 
To simplify these calculations, astronomers use the {\em Julian Date} $\JD$, a continuous count of days elapsed since a distant epoch.

The integer part of the Julian Date is defined as the {\em astronomical} Julian Day Number $\JDN$. 
While $\JDN$ is often used in historical chronology as a meridian-independent integer label for a "calendar day," astronomical calculations define it strictly as the integer count beginning at Greenwich Mean Noon (12:00 UT). 
Under this convention, the standard epoch J2000.0 (January 1, 2000, 12:00 TT) corresponds to exactly $\JD = 2451545.0$.

Strictly speaking, a Julian Date can be defined for any time scale (e.g., $\JD_{\mathrm{TT}}$ or $\JD_{\mathrm{UTC}}$). For the purposes of civil calendar conversion, we assume the input time argument is Coordinated Universal Time ($\UTC$). It is defined as:
\[
\JD_{\UTC} = \JDN + \frac{\text{seconds since 12{:}00}}{86400}
\]
To perform astronomical calculations requiring a uniform time scale, civil $\JD$ must be converted to Terrestrial Time ($\TT$) by applying the $\Delta T$ correction:
\begin{equation}
\JD_{\TT} \approx \JD_{\UTC} + \frac{\Delta T}{86400}.
\end{equation}
Note that this expression is an approximation within $\pm 0.9$ seconds, as $\Delta T$ is defined relative to $\UTone$ rather than $\UTC$.

\begin{example}
The discrepancy between civil and astronomical dating is most apparent when a single local calendar day spans two different Julian Day Numbers. For an observer in Beijing ($\mathrm{UTC+8}$), the astronomical $\JDN$ remains unchanged throughout the morning and afternoon because Greenwich noon has not yet been reached. At 20:00 (8:00 PM) local time, corresponding exactly to 12:00 UTC, the astronomical $\JDN$ increments to its next integer value. Consequently, a single civil ``calendar day'' contains two distinct astronomical day numbers, with the transition occurring in the evening rather than at local midnight or dawn.
\end{example}

\noindent
For practical implementation, the algebraic conversion from a civil Gregorian date $(Y, M, D)$ to its corresponding Julian Date is given by the following standard algorithm \cite{Meeus}.

\begin{algorithm}[H]
\caption{Conversion from Gregorian to Julian Date}
Given a Gregorian date $(Y, M, D)$, where $D$ may include a fractional day (e.g., $D=17.5$ indicates the noon of day 17):
\begin{enumerate}
    \item If $M \le 2$, set $Y \leftarrow Y-1$ and $M \leftarrow M+12$.
    \item Calculate the calendar correction terms:
    \[
    A = \left\lfloor \frac{Y}{100}\right\rfloor, \qquad B = 2 - A + \left\lfloor \frac{A}{4}\right\rfloor.
    \]
    (Note: For the proleptic Julian calendar, use $B=0$.)
    \item The Julian Date is given by:
    \[
    \JD = \left\lfloor 365.25(Y+4716)\right\rfloor + \left\lfloor 30.6001(M+1)\right\rfloor + D + B - 1524.5.
    \]
\end{enumerate}
\end{algorithm}

\subsection{Local time vs. local mean time}
\label{app:local-time}

In civil contexts, time is regulated by political boundaries known as time zones. 
\emph{Local Civil Time} is determined by adding a fixed integer (or half-integer) offset $z$ to UTC:
\[
\text{Local Civil Time} = \UTC + z .
\]
For example, Eastern Standard Time (EST) is defined as $\UTC - 5$, regardless of the observer's precise location within that zone.

However, astronomical calculations often require \emph{Local Mean Solar Time} (LMT), which is a strictly geometric quantity tied to the observer's longitude. It represents the time as defined by the mean sun crossing the local meridian. For an observer at longitude $\lambda$ (measured in degrees, positive to the East), the Local Mean Time is:
\[
\text{LMT} = \UTC + \frac{\lambda}{15}
\]
where $\lambda/15$ converts the longitude into hours.

\begin{example}
Consider an observer in Lhasa ($\lambda = 91.1^\circ$ E). Although Lhasa is geographically situated in a region corresponding to $\mathrm{UTC+6}$, it officially follows China Standard Time ($z = +8$). If the civil clock reads 12:00 (Noon), the corresponding $\mathrm{UTC}$ is 04:00. The Local Mean Time, however, is determined by the geometric offset of $+91.1/15 \approx +6.07$ hours, or roughly $+6\text{h } 04\text{m}$. Thus, when the clock strikes noon, the Local Mean Time is only $04:00 + 6\text{h } 04\text{m} = 10:04$. This discrepancy of nearly two hours illustrates why civil "noon" often deviates significantly from the sun's highest point in the sky.
\end{example}

\begin{example}
Consider an observer in Ulaanbaatar ($\lambda = 106.9^\circ$ E). While the city is geographically positioned within a region corresponding to $\mathrm{UTC+7}$, it officially observes Mongolia Standard Time ($z = +8$). If the civil clock reads 12:00 (Noon), the corresponding $\mathrm{UTC}$ is 04:00. The Local Mean Time is calculated by the geometric offset of $+106.9/15 \approx +7.127$ hours, or approximately $+7\text{h } 08\text{m}$. Consequently, at the stroke of civil noon, the Local Mean Time is $04:00 + 7\text{h } 08\text{m} = 11:08$. This discrepancy of nearly one hour demonstrates that civil time in Mongolia is significantly "advanced" relative to the local solar cycle.
\end{example}

\begin{example}
Consider an observer in Thimphu, Bhutan ($\lambda = 89.6^\circ$ E). The nation observes Bhutan Time ($z = +6$), which aligns closely with its geographic longitude. If the civil clock reads 12:00 (Noon), the corresponding $\mathrm{UTC}$ is 06:00. The Local Mean Time is determined by the geometric offset of $+89.6/15 \approx +5.973$ hours, or approximately $+5\text{h } 58\text{m}$. Consequently, when the clock strikes civil noon, the Local Mean Time is $06:00 + 5\text{h } 58\text{m} = 11:58$. Unlike the significant offsets found in Lhasa or Ulaanbaatar, Thimphu's civil noon is remarkably synchronized with the solar cycle, deviating by only two minutes.
\end{example}

\subsection{Mean periods and secular trends}
\label{app:modern-periods}

For reference at the J2000 epoch, representative mean values are:
\begin{equation}
\begin{split}
Y_{\mathrm{trop}} &\approx 365.2421897\,\mathrm{d},\qquad
Y_{\mathrm{anom}} \approx 365.2596359\,\mathrm{d},\\
S_{\mathrm{syn}} &\approx 29.53058885\,\mathrm{d},\qquad
S_{\mathrm{anom}} \approx 27.55454988\,\mathrm{d}.
\end{split}
\end{equation}
Here $Y_{\mathrm{trop}}$ is the mean tropical year (equinox to equinox), $Y_{\mathrm{anom}}$ the mean anomalistic year (perihelion to perihelion), $S_{\mathrm{syn}}$ the mean synodic month (new moon to new moon), and $S_{\mathrm{anom}}$ the mean anomalistic month (perigee to perigee) \cite{Meeus}.

For secular trends valid on millennial scales around J2000, we use polynomial fits in fixed-length days of 86400 seconds:
\begin{align*}
Y_{\mathrm{trop}}(T) &= 365.24218967 - 6.15\times 10^{-6}T - 7.29\times 10^{-10}T^2,\\
S_{\mathrm{syn}}(T) &= 29.53058885 + 2.16\times 10^{-7}T - 3.64\times 10^{-10}T^2,
\end{align*}
where $T = (JD_{\mathrm{TT}}-2451545.0)/36525$ is the number of Julian centuries from J2000.0. For completeness, the corresponding secular variations for the anomalistic periods are:
\begin{align*}
Y_{\mathrm{anom}}(T) &= 365.25963586 + 3.04 \times 10^{-6}T, \\
S_{\mathrm{anom}}(T) &= 27.55454988 - 1.039 \times 10^{-5}T - 2.0 \times 10^{-9}T^2.
\end{align*}

\subsection{Solar longitude series}
\label{app:solar-series}

Civil-day labels in lunisolar calendars depend primarily on the Sun's apparent ecliptic longitude $\lambda_{\sun}$ and its resulting elongation from the Moon. For calendar-grade precision, we employ standard truncated series expansions \cite{Meeus}.

Let $T = (JD_{\mathrm{TT}}-2451545.0)/36525$ denote the number of Julian centuries since the epoch J2000.0 (TT). We first compute the geometric mean elements:
\begin{align}
    L_0 &= 280.46646^\circ + 36000.76983^\circ T + 0.0003032^\circ T^2 \pmod{360^\circ}, \\
    M   &= 357.52911^\circ + 35999.05029^\circ T - 0.0001537^\circ T^2 \pmod{360^\circ}.
\end{align}
Here $L_0$ is the mean longitude of the Sun and $M$ is the mean anomaly. The true geometric longitude is
\[
\lambda_{\sun,\mathrm{true}} = L_0 + C_\sun ,
\]
where $C_\sun$ is the Equation of Center.
A three-term approximation is given by
\begin{equation}
\label{eq:solar-eqn-center}
\begin{split}
    C_\sun &= \bigl(1.914602 - 0.004817T - 0.000014T^2\bigr)\sin M \\
            &\quad + \bigl(0.019993 - 0.000101T\bigr)\sin(2M) \\
            &\quad + 0.000289 \sin(3M).
\end{split}
\end{equation}
To achieve higher precision (within $\approx 0.01^\circ$), one must account for the \emph{apparent} longitude $\lambda_{\sun,\mathrm{app}}$ by correcting for aberration and nutation \cite{Meeus}:
\begin{align}
    \Omega &= 125.04^\circ - 1934.136^\circ T, \\
    \lambda_{\sun,\mathrm{app}} &= \lambda_{\sun,\mathrm{true}} - 0.00569^\circ - 0.00478^\circ \sin\Omega,
\end{align}
where $-0.00569^\circ$ represents the constant of aberration and the term involving $\Omega$ (the longitude of the Moon's ascending node) approximates the nutation in longitude.

\subsection{Sunrise models}
\label{app:sunrise}

A calendar pipeline that anchors civil days to sunrise requires an explicit model for the event. We distinguish three levels of precision, ranging from static proxies to dynamic astronomical calculations. 

At the most basic level, one may adopt a \emph{constant-time proxy} where the civil-day boundary is fixed to a static local clock time. In the absence of latitude or seasonal data, sunrise is often idealized as \emph{06:00} Local Mean Time. However, a more physically grounded static proxy is \emph{05:56}. This four-minute anticipation accounts for the fact that a ``visible'' sunrise occurs when the Sun's upper limb first clears the horizon, which occurs while the Sun's center is still geometrically depressed by $h_0 \approx -0.833^\circ$. This threshold incorporates both the solar semidiameter ($\sim 16'$) and the lift provided by atmospheric refraction ($\sim 34'$), which allows the Sun to be seen while it is still below the horizon. At the equator, where the Sun rises vertically at a rate of $0.25^\circ$ per minute, traversing this $0.833^\circ$ gap takes approximately $3.3$ minutes. Rounding this to 4 minutes is a standard convention that ensures the proxy remains conservative even under varying atmospheric conditions \cite{Meeus}.

To move beyond these static approximations, we must account for the Earth's geometry and its axial tilt. The following sections detail a hierarchical approach: first, deriving the \emph{spherical-Earth sunrise} from solar declination to account for latitude and seasonality, and second, incorporating the \emph{equation of time} (EOT) to bridge the gap between uniform mean time and true solar time.

\subsubsection{Spherical-Earth sunrise from declination}

To account for latitude $\varphi$ and seasonal solar declination $\delta$, we compute the semi-diurnal arc. We define the standard apparent-altitude threshold as:
\begin{equation}
h_0 = -0.833^\circ,
\end{equation}
which incorporates the average atmospheric refraction ($34'$) and the solar semidiameter ($16'$). The solar declination $\delta$ is obtained from the Sun's ecliptic longitude $\lambda_{\sun}$ and the obliquity of the ecliptic $\varepsilon \approx 23.44^\circ$:
\begin{equation}
\sin \delta = \sin \varepsilon \sin \lambda_{\sun}.
\end{equation}
The hour angle $H_0$ at sunrise is then derived from the spherical law of cosines:
\begin{equation}
\label{eq:sunrise-cos}
\cos H_0 = \frac{\sin h_0 - \sin \varphi \sin \delta}{\cos \varphi \cos \delta}.
\end{equation}
Then in \emph{Local Apparent Solar Time} (where solar noon is fixed at exactly 12:00), the sunrise time is simply:
\begin{equation}
t_{\mathrm{rise,app}} = 12\mathrm{h} - \frac{H_0}{15^\circ/\mathrm{h}}.
\end{equation}

\begin{remark}
Using this apparent time directly as a civil clock introduces significant seasonal error. Because the Earth's orbital speed varies and the ecliptic is tilted, true solar noon can deviate from mean clock noon by up to $\pm 16$ minutes. If the objective is to match the $\sim 15$-minute error margin inherent in ignoring the equation of time (EOT), the requirements for the solar longitude model are surprisingly relaxed.
In this context, a \emph{Mean Sun} model—using only the linear mean longitude $L_0$ and ignoring the Equation of Center—is sufficient. While the Mean Sun can deviate from the True Sun by up to $\pm 1.9^\circ$, this longitudinal error translates to a timing discrepancy of approximately 8 minutes at the equator. Even at mid-latitudes, this error typically stays within the 15-minute budget set by the omission of the EOT. Similarly, for the obliquity $\varepsilon$, a constant value of $23.44^\circ$ is more than adequate, as its secular variation is measured in arcseconds per century and has no perceptible impact on the 15-minute scale. Consequently, for a model that accepts the seasonal "drift" of solar noon, complex truncated series for $\lambda_{\sun}$ are unnecessary; the Mean Sun provides a mathematically consistent level of precision.
\end{remark}

\begin{figure}[ht]
    \centering
    \includegraphics[width=.7\textwidth]{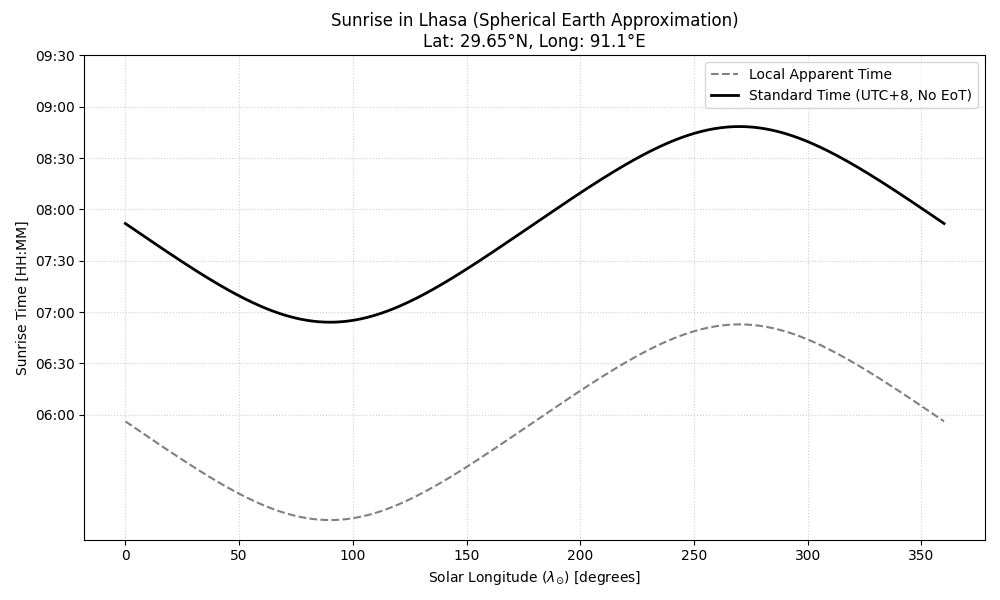}
	\caption{Sunrise in Lhasa ($\varphi = 29.65^\circ$ N) under the spherical-Earth approximation.}
    \label{fig:sunrise-lhasa}
\end{figure}

\begin{example}
Figure~\ref{fig:sunrise-lhasa} illustrates the seasonal variation of sunrise in Lhasa over a full solar year, neglecting the equation of time (EOT). In Local Apparent Time, sunrise ranges from approximately 05:01 at the summer solstice to 06:59 at the winter solstice. However, civil clock time (UTC+8) reflects a significant delay; because Lhasa ($91.1^\circ$ E) sits nearly $29^\circ$ west of its reference meridian ($120^\circ$ E), civil sunrise is shifted roughly $1\text{h } 56\text{m}$ later than Local Mean Time. This model utilizes the mean periods defined in Section~\ref{app:modern-periods} but accepts a seasonal error of up to $\pm 16$ minutes by omitting the EOT, cf. \S\ref{ss:eot}. Furthermore, the calculation assumes a standard depression $h_0 = -0.833^\circ$ to account for atmospheric refraction and solar semidiameter, which effectively advances the visible sunrise by approximately four minutes relative to the geometric horizon \cite{Meeus}.
\end{example}

\begin{remark}
In view of \eqref{eq:sunrise-cos}, the domain of the $\arccos$ function 
naturally identifies the polar phenomena occurring at high latitudes. If the magnitude of the right-hand side in \eqref{eq:sunrise-cos} exceeds unity, the Sun remains strictly above or below the horizon threshold for the entire diurnal cycle:
When $\cos H_0 > 1$, the Sun's maximum altitude never reaches $h_0$, resulting in \emph{Polar Night}.
When $\cos H_0 < -1$, the Sun's minimum altitude never drops to $h_0$, resulting in the \emph{Midnight Sun}.
\end{remark}

\subsubsection{The equation of time}
\label{ss:eot}

To convert the \emph{Local Apparent Time} obtained from the spherical-Earth model into a uniform civil clock (Mean Time), we must apply the {\em equation of time} (EOT). This correction accounts for the combined effects of the Earth's orbital eccentricity and the obliquity of the ecliptic, representing the difference between the Right Ascension of the true Sun ($\alpha_{\sun}$) and the Mean Sun ($L_0$).

Following the definitions in Section~\ref{app:solar-series}, the EoT in minutes is derived as:
\begin{equation}
\Delta t_{\mathrm{EoT}} = 4\left(L_0 - \alpha_{\sun}\right),
\end{equation}
where the factor of 4 converts degrees to minutes of time, reflecting the mean solar rotation rate of $1^\circ$ per $4$ minutes.
The Right Ascension $\alpha_{\sun}$ is computed from the ecliptic longitude $\lambda_{\sun}$ and obliquity $\varepsilon$:
\begin{equation}
\tan \alpha_{\sun} = \cos \varepsilon \tan \lambda_{\sun}.
\end{equation}

We evaluate the cumulative accuracy of the sunrise model when restricted to a \emph{first-order anomaly} (one-term Equation of Center for $\lambda_{\sun}$):

\begin{itemize}
    \item \emph{Solar longitude accuracy:} Utilizing only the first term ($C_{\sun} \approx 1.915^\circ \sin M$) results in a solar longitude error of approximately $\pm 0.034^\circ$, which translates to a negligible $\approx 8$ seconds of time in the solar position itself.
    \item \emph{RA computation:} When this first-order $\lambda_{\sun}$ is projected to find $\alpha_{\sun}$, it captures the vast majority of the EOT's profile. The residual error in Right Ascension relative to higher-order models is less than $0.05^\circ$, maintaining a timing error budget of under 12 seconds.
    \item \emph{Overall sunrise accuracy:} In the spherical-Earth framework, this level of precision for $\lambda_{\sun}$ and $\alpha_{\sun}$ produces a sunrise time that aligns with rigorous astronomical ephemerides to within $\approx 30$ seconds across the year. This error is significantly smaller than the uncertainties introduced by local atmospheric refraction variations, which can fluctuate by several minutes.
    \item \emph{Final clock time:} The integrated formula for civil sunrise is:
    \begin{equation}
    t_{\mathrm{rise,clock}} = 12\mathrm{h} - \frac{H_0}{15^\circ/\mathrm{h}} - \frac{\Delta t_{\mathrm{EoT}}}{60} + \Delta t_{\mathrm{zone}},
    \end{equation}
    where $\Delta t_{\mathrm{zone}}$ is the fixed longitudinal offset defined by the local meridian and the time zone.
\end{itemize}

\begin{remark}
By incorporating the first-order anomaly and the EOT, the model effectively eliminates the $\pm 16$-minute seasonal drift inherent in lower-order apparent-time approximations. The cumulative error in sunrise timing—accounting for the simplified $\varepsilon$, the 1-term $\lambda_{\odot}$, and the spherical-Earth geometry—remains within $\pm 30$ seconds. This level of precision is more than adequate for defining civil day boundaries, as it remains smaller than the typical variance caused by local atmospheric pressure and temperature changes. While Section~\ref{app:modern-periods} provides secular trends for completeness, this first-order approach is sufficient to ensure robust sunrise determinations for calendar-grade applications without the complexity of higher-order series.
\end{remark}

\begin{figure}[ht]
    \centering
    \includegraphics[width=0.85\textwidth]{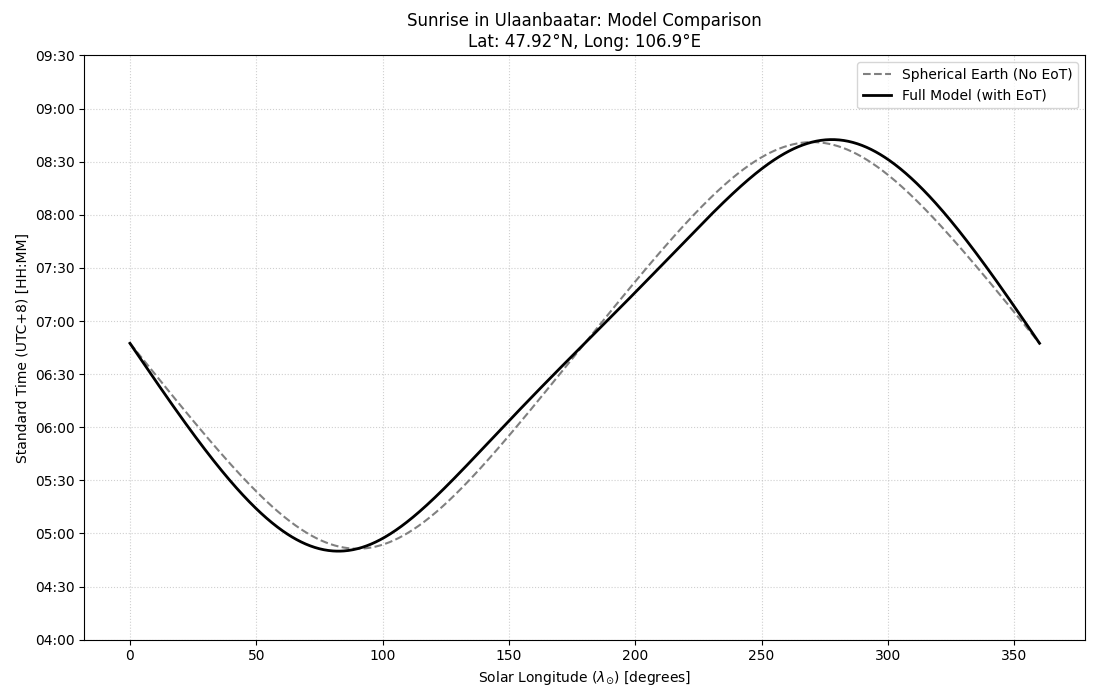}
    \caption{Sunrise in Ulaanbaatar ($\varphi = 47.92^\circ$ N): apparent time vs mean time models.}
    \label{fig:sunrise-ulaanbaatar}
\end{figure}

\begin{example}
Figure~\ref{fig:sunrise-ulaanbaatar} contrasts Ulaanbaatar's seasonal sunrise profile under a spherical-Earth approximation versus a model incorporating the EOT. While the basic model assumes a fixed solar noon, the actual civil sunrise (solid line) deviates by up to $\pm 16$ minutes due to orbital eccentricity and axial tilt \cite{Meeus}. At $47.92^\circ$ N, Ulaanbaatar experiences a pronounced seasonal swing ranging from approximately 05:52 to 08:42. Being situated at $106.9^\circ$ E, its civil offset is roughly 52 minutes relative to the UTC+8 meridian, placing the local cycle closer to the clock than in Lhasa. By utilizing a first-order Equation of Center and Right Ascension projection, this refined model matches rigorous astronomical tables within approximately 30 seconds \cite{Meeus}.
\end{example}

\begin{remark}
While solar longitude serves as a stable geometric coordinate, plotting sunrise against civil calendar dates reveals secular and periodic shifts. Axial precession causes solstices to drift earlier by approximately one day every 71 years, while the forward precession of perihelion slowly shifts the EOT profile relative to the seasons over centuries. On a shorter scale, the misalignment between the tropical and civil years creates a four-year jitter of roughly one minute, periodically reset by leap year intercalations. Furthermore, because longitude is a nonlinear function of time, the Earth's varying orbital speed—governed by the Equation of Center—introduces nonsecular periodic modulations into the sunrise timing that prevent a simple linear mapping between calendar days and solar position.
\end{remark}

\begin{remark}
The precision of the sunrise model is inherently linked to the obliquity of the ecliptic, $\varepsilon$, which undergoes slow secular variations driven by planetary perturbations. 
To discuss this long-term drift, we consider the standard secular polynomial fit \cite{Meeus}:
\begin{equation}
\varepsilon(T) = 23^\circ 26' 21.448'' - 46.8150'' T - 0.00059'' T^2 + 0.001813'' T^3,
\end{equation}
where $T$ represents Julian centuries from J2000.0. The dominant linear term indicates that $\varepsilon$ is currently decreasing by approximately $46.8''$ per century.

For calendar-grade applications, the practical impact of this variation is remarkably constrained. Adopting a constant obliquity of $23.44^\circ$ for the duration of the 21st century introduces a maximum declination error below $0.01^\circ$. When propagated through the spherical sunrise equation, this discrepancy results in a timing shift of only 3--5 seconds at mid-latitudes. Furthermore, while the obliquity-driven component of the EOT varies with $\tan^2(\varepsilon/2)$, the secular decrease in $\varepsilon$ only marginally attenuates the EOT amplitude over millennial scales. Consequently, a constant obliquity remains sufficient to maintain the $\approx 30$-second accuracy target for the standard sunrise implementation.
\end{remark}

\subsection{Lunar longitude series}
\label{app:lunar-series}

We adopt the standard mean arguments from the ELP-2000/82 theory \cite{ELPMPP02}. 
Let $T$ be the number of Julian centuries (36525 days) from the epoch J2000.0 (TT). The fundamental arguments (in degrees) are:
\begin{align}
    L' &= 218.3164477^\circ + 481267.8812342^\circ T - 0.0015786^\circ T^2 \\
    D  &= 297.8501921^\circ + 445267.1114034^\circ T - 0.0018819^\circ T^2 \\
    M  &= 357.5291092^\circ + 35999.0502909^\circ T - 0.0001536^\circ T^2 \\
    M' &= 134.9633964^\circ + 477198.8675055^\circ T + 0.0087414^\circ T^2 \\
    F  &=  93.2720950^\circ + 483202.0175233^\circ T - 0.0036539^\circ T^2
\end{align}
We omitted the higher-order terms in $T^3$ and $T^4$, whose coefficients are generally of the order $10^{-5}$ degrees or smaller.
For instance, the $T^3$ coefficients for $L'$ and $D$ are approximately $1^\circ/540\,000$. 
The only exception is the lunar anomaly $M'$, where the $T^3$ term is larger, roughly equal to $1^\circ/70\,000$, but this remains negligible for our target precision over typical historical intervals.

Physically, these arguments correspond to the primary frequencies of the lunar orbit. The argument $D$ represents the mean elongation of the Moon from the Sun, determining the synodic phase cycle. $M$ is the mean anomaly of the Sun, which governs the Earth's variable orbital speed. Similarly, $M'$ is the mean anomaly of the Moon, measuring the Moon's progress from perigee to apogee. Finally, $F$ denotes the mean argument of latitude, the angular distance of the Moon from the ascending node, which serves as the primary parameter for eclipse prediction.

\begin{remark}
\label{rem:linear-error}
A linear calendar simplifies the model by fixing the mean motions and ignoring the quadratic secular acceleration terms. Over the span of one millennium ($T=10$), this truncation introduces two distinct sources of error. The dominant effect comes from the mean elongation $D$, which possesses a secular term of $-0.00188^\circ T^2$. At $T=10$, this drift accumulates to approximately $-0.188^\circ$. Given the Moon's synodic rate of $\approx 12.19^\circ/\mathrm{day}$, this delay shifts the mean New Moon by roughly 22 minutes. A secondary error arises from the lunar anomaly $M'$, which has a larger secular term of $+0.00874^\circ T^2$. While this does not shift the mean conjunction time directly, it alters the phase of the Major Inequality, inducing a smaller oscillating timing error of roughly $\pm 12$ minutes. Thus, a strictly linear model will inevitably drift by about 20 minutes per millennium.
\end{remark}

\begin{remark}
If the quadratic $T^2$ terms are retained, the remaining errors arise primarily from the omitted cubic terms and the truncation of the periodic series. In the ELP-2000/82 theory, the coefficient for the $T^3$ term in mean elongation is approximately $+0.0066''/\mathrm{cy}^3$. Over a millennium ($T=10$), this contributes an accumulated error of merely $6.6''$, which corresponds to a time drift of approximately 13 seconds. The coefficient for the $T^3$ term in the lunar anomaly $M'$ is significantly larger ($1^\circ/69699 \approx 0.052''/\mathrm{cy}^3$). At $T=10$, this results in an anomaly error of $\Delta M' \approx 52''$. However, this error affects the Moon's true longitude primarily through the Equation of Center, which scales the anomaly by the eccentricity factor $2e \approx 0.11$. Consequently, the effective positional error is reduced to roughly $6''$, resulting in a timing uncertainty of only about 11 seconds. Therefore, once the quadratic secular terms are included, the accuracy of the calendar is limited almost entirely by the number of periodic harmonic terms retained, rather than by the stability of the mean motions.
\end{remark}

The Moon's geocentric longitude $\lambda_{\mathbb{C}}$ is given by the series expansion:
\begin{equation}
\label{eq:lunar_series}
\lambda_{\mathbb{C}} \approx L' + 10^{-6} \sum_{i} C_i \, \sin\Bigl( d_i D + m_i M + m'_i M' + f_i F \Bigr)
\end{equation}
where the coefficients $C_i$ are expressed in {microdegrees} ($10^{-6}$ deg).

Table~\ref{tab:lunar-primary} below lists the 24 dominant terms. This primary set captures the Moon's gross motion with a precision of approximately $\pm 2'$ (roughly $3$--$4$ minutes of time).

\begin{table}[H]
\caption{Primary lunar series (24 terms). Accuracy: $\sim 3$--$4$ min.}
\label{tab:lunar-primary}
\centering
\small
\renewcommand{\arraystretch}{1.1}
\setlength{\tabcolsep}{5pt}
\begin{tabular}{rrrrrl | rrrrr}
\toprule
$d$ & $m$ & $m'$ & $f$ & $C (\mu^\circ)$ & Physical Effect & $d$ & $m$ & $m'$ & $f$ & $C (\mu^\circ)$ \\
\midrule
 0 &  0 &  1 &  0 &  6288774 & Major Inequality &   0 &  1 &  1 &  0 &   -30383 \\
 2 &  0 & -1 &  0 &  1274027 & Evection &   0 &  0 &  0 & -2 &    15327 \\
 2 &  0 &  0 &  0 &   658314 & Variation &   0 &  0 &  1 &  2 &   -12528 \\
 0 &  0 &  2 &  0 &   213618 & 2nd Elliptic &   0 &  0 &  1 & -2 &    10980 \\
 0 &  1 &  0 &  0 &  -185116 & Annual Equation &   4 &  0 & -1 &  0 &    10675 \\
 0 &  0 &  0 &  2 &  -114332 & Red. to Ecliptic &   0 &  0 &  3 &  0 &    10034 \\
 2 &  0 & -2 &  0 &    58793 & Evection Harm. &   4 &  0 & -2 &  0 &     8548 \\
 2 & -1 & -1 &  0 &    57066 & &   2 &  1 & -1 &  0 &    -7888 \\
 2 &  0 &  1 &  0 &    53322 & &   2 &  1 &  0 &  0 &    -6766 \\
 2 & -1 &  0 &  0 &    45758 & &   1 &  0 & -1 &  0 &    -5163 \\
 0 &  1 & -1 &  0 &   -40923 & &   1 &  1 &  0 &  0 &     4987 \\
 1 &  0 &  0 &  0 &   -34720 & Parallactic Ineq. &   2 & -1 &  1 &  0 &     4036 \\
\bottomrule
\end{tabular}
\end{table}

\noindent
To achieve modern ``dynamical'' precision (better than 1 minute of time), the model must include the finer perturbations listed in Table~\ref{tab:lunar-supp}. 
Adding these supplementary terms reduces the maximum periodic error to roughly $\pm 15''$ (roughly $30$ seconds of time).

\begin{table}[H]
\caption{Supplementary series (40 terms). Improves accuracy to $\sim 30$ sec.}
\label{tab:lunar-supp}
\centering
\small
\renewcommand{\arraystretch}{1.1}
\setlength{\tabcolsep}{3pt}
\begin{tabular}{rrrrr | rrrrr | rrrrr | rrrrr}
\toprule
$d$ & $m$ & $m'$ & $f$ & $C$ & $d$ & $m$ & $m'$ & $f$ & $C$ & $d$ & $m$ & $m'$ & $f$ & $C$ & $d$ & $m$ & $m'$ & $f$ & $C$ \\
\midrule
 2 &  0 &  2 &  0 &   3994 &  2 & -2 & -1 &  0 &   2011 &  2 & -1 &  2 &  0 &  -811 &  2 &  2 &  0 &  0 &  -440 \\
 4 &  0 &  0 &  0 &   3861 &  2 &  0 &  1 & -2 &  -1977 &  0 &  0 &  4 &  0 &   769 &  0 &  1 &  3 &  0 &  -425 \\
 2 &  0 & -3 &  0 &   3665 &  4 &  0 & -3 &  0 &  -1736 &  2 &  0 & -2 &  2 &   717 &  4 &  0 &  1 &  0 &  -418 \\
 0 &  1 & -2 &  0 &  -2689 &  4 & -1 & -1 &  0 &  -1671 &  0 &  0 &  2 &  2 &  -712 &  0 &  0 &  2 & -2 &   386 \\
 2 &  0 & -1 &  2 &  -2602 &  2 &  1 &  1 &  0 &  -1557 &  1 &  0 &  2 &  0 &  -663 &  2 &  0 & -5 &  0 &   371 \\
 2 & -1 & -2 &  0 &   2390 &  2 &  1 & -2 &  0 &   1492 &  1 &  1 & -1 &  0 &  -565 &  2 &  2 & -2 &  0 &   362 \\
 1 &  0 &  1 &  0 &  -2348 &  2 &  0 & -4 &  0 &  -1422 &  1 &  0 & -2 &  0 &  -523 &  1 &  1 &  1 &  0 &   317 \\
 2 & -2 &  0 &  0 &   2236 &  4 & -1 & -2 &  0 &  -1205 &  4 &  0 & -4 &  0 &   492 &  2 &  0 & -3 &  2 &  -310 \\
 0 &  1 &  2 &  0 &  -2120 &  2 &  1 &  0 & -2 &  -1111 &  4 & -2 & -1 &  0 &  -488 &  0 &  2 & -1 &  0 &  -307 \\
 0 &  2 &  0 &  0 &  -2069 &  2 & -1 &  1 & -2 &  -1100 &  2 &  2 & -1 &  0 &  -469 &  2 &  0 &  3 &  0 &  -293 \\
\bottomrule
\end{tabular}
\end{table}

\begin{remark}\label{rem:solar-simplicity}
The solar longitude series (Appendix~\ref{app:solar-series}) is intentionally considerably simpler than the lunar series. The timing of a New Moon depends on the relative velocity between the Moon and the Sun, which is approximately $12.2^\circ/\mathrm{d}$ (or roughly $0.508''/\mathrm{s}$). Consequently, a spatial error of $20''$ in \textit{either} the lunar or the solar longitude shifts the calculated moment of conjunction by the exact same amount: approximately 39 seconds. The 3-term solar model maintains an accuracy of roughly $20''$, introducing a maximum timing error of about 40 seconds. This perfectly matches the precision class of the Primary and Supplementary lunar tables, making it sufficient to support the system without becoming a computational bottleneck or the limiting factor in accuracy.
\end{remark}

\subsection{Mean new moon}

While the \emph{true} New Moon fluctuates due to solar perturbations and orbital eccentricity, the \emph{mean} New Moon can be calculated precisely using the formula from Meeus \cite{Meeus}, 
which is derived from ELP-2000/82:
\begin{equation}
\label{eq:meeus_mean}
\JDE_{\mathrm{mean}} = 2451550.09766 + 29.530588861 k + 0.0001337 T^2,
\end{equation}
where $k$ is the integer lunation number relative to the first New Moon of 2000 (Jan 6), and $T \approx k/1236.85$. The result is the Julian Date in Terrestrial Time (TT).

\begin{example}
The deviation between the calculated mean conjunction \eqref{eq:meeus_mean} and the true conjunction can be several hours.
Below we list examples for Spring of 1807, 1927, 1987, and 2026.  
Mean times are in TT, 
while True times are listed in $\UTone$,
as tabulated by the U.S.\ Naval Observatory Moon Phases service \cite{usno-moonphases}.

\begin{center}
\small
\begin{tabular}{l l l l l}
\toprule
{Lunation ($k$)} & {Mean JD} & {Mean Time (TT)} & {True Time (UT1)} & {Deviation} \\
\midrule
\multicolumn{5}{l}{\textit{1. Spring 1807}} \\
$-2385$ & $2381119.64$ & Mar 9,\ \ 03:27 & Mar 9,\ \ 09:04 & Mean is $5.6\,\mathrm{h}$ early \\
$-2384$ & $2381149.17$ & Apr 7,\ \ 16:11 & Apr 8,\ \ 02:11 & Mean is $10.0\,\mathrm{h}$ early \\
\midrule
\multicolumn{5}{l}{\textit{2. Spring 1927}} \\
$-901$ & $2424943.04$ & Mar 3,\ \ 12:54 & Mar 3,\ \ 19:25 & Mean is $6.5\,\mathrm{h}$ early \\
$-900$ & $2424972.57$ & Apr 2,\ \ 01:38 & Apr 2,\ \ 04:24 & Mean is $2.8\,\mathrm{h}$ early \\
\midrule
\multicolumn{5}{l}{\textit{3. Spring 1987}} \\
$-158$ & $2446884.26$ & Mar 29, 18:21 & Mar 29, 12:46 & Mean is $5.6\,\mathrm{h}$ late \\
$-157$ & $2446913.80$ & Apr 28,\ \ 07:05 & Apr 28,\ \ 01:34 & Mean is $5.5\,\mathrm{h}$ late \\
\midrule
\multicolumn{5}{l}{\textit{4. Spring 2026}} \\
$+323$ & $2461088.48$ & Feb 17, 23:31 & Feb 17, 12:01 & Mean is $11.5\,\mathrm{h}$ late \\
$+324$ & $2461118.01$ & Mar 19, 12:14 & Mar 19, 01:23 & Mean is $10.8\,\mathrm{h}$ late \\
\bottomrule
\end{tabular}
\end{center}
In these examples the Mean New Moon may either lag or lead the True New Moon, depending on where the Moon is in its anomalistic cycle (faster-than-average motion near perigee tends to make the true conjunction occur \emph{earlier} than the mean, while slower-than-average motion near apogee tends to make it occur \emph{later}).
\end{example}

\section{Numerical Approximations and Portability}
\label{app:numerical}

For implementation in Level 3 through Level 5 models, where full-precision floating-point libraries may be unavailable or computationally expensive, we provide a hierarchy of numerical approximations. These range from discrete look-up tables for integer-only environments to minimax polynomials for higher-precision pipelines.

\subsection{Sine look-up table}
\label{ss:sine-table}

The table-based profile utilizes a fixed-point sine table with a binary amplitude of $1024$. By utilizing a power of two, normalization operations are reduced to efficient bitwise shifts rather than integer division. Exploiting the symmetry of the sine function ($\sin x = \sin(180^\circ - x) = -\sin(-x)$), we only require the first quadrant, and specifically, only 8 values to define a 28-point full-circle resolution ($360^\circ / 28 \approx 12.86^\circ$ per step).

\begin{table}[ht]
\centering
\begin{tabular}{lrr}
\hline
Step ($k$) & Angle ($\theta$) & Table Value ($1024 \sin \theta$) \\
\hline
0 & $0^\circ$ & 0 \\
1 & $12.86^\circ$ & 228 \\
2 & $25.71^\circ$ & 444 \\
3 & $38.57^\circ$ & 638 \\
4 & $51.43^\circ$ & 801 \\
5 & $64.29^\circ$ & 923 \\
6 & $77.14^\circ$ & 998 \\
7 & $90^\circ$ & 1024 \\
\hline
\end{tabular}
\caption{Primary quadrant values for the 28-point sine table.}
\end{table}

The primary source of inaccuracy in this approach is the linearization error inherent in approximating the sine curve with straight segments between the sample points. For a step size of $\approx 0.22$ radians, the maximum theoretical deviation is approximately $6.5$ units ($0.6\%$ of full scale), occurring most notably near the quadrature ($90^\circ$) where the function's curvature is maximized relative to the chord. While coarse by astronomical standards, this precision is sufficient for simplified solar altitude checks where atmospheric refraction introduces larger uncertainties.

\subsubsection{Inverse look-up ($\arcsin$ and $\arccos$)}

Calculating the exact time of sunrise involves solving for the solar hour angle, which necessitates the use of inverse trigonometric functions, specifically $\arcsin$ and $\arccos$. Rather than maintaining separate algorithms or expanding the memory footprint with new tables, we perform a reverse linear interpolation on the existing sine table. 

Given an input ratio $y$ scaled to our binary amplitude (where $0 \le |y| \le 1024$), the algorithm scans the primary quadrant values to find the index $k$ such that the interval $[y_k, y_{k+1}]$ bounds $|y|$. The fractional step position is then determined via linear interpolation:
\[
x \approx k + \frac{|y| - y_k}{y_{k+1} - y_k}.
\]
This yields the $\arcsin$ value in terms of fractional table steps. The $\arccos$ function is trivially derived using the complementary identity $\arccos(y) = 90^\circ - \arcsin(y)$. 

\begin{remark}
While a dedicated look-up table for the arctangent function ($\arctan$) could be similarly implemented to handle broader astronomical coordinate transformations, the specific dawn and syzygy criteria in our concrete proposals do not require one, allowing us to preserve a strictly minimal memory footprint.
\end{remark}

\subsubsection{Stability against LCM explosion}

When operating within an exact rational arithmetic framework, it is necessary to audit all division operations. If fractions are divided repeatedly inside recursive feedback loops—a challenge we will examine in detail during the discussion of root-finding iterations (Section~\ref{app:fix-pt})—their denominators compound and grow exponentially, causing a lowest common multiple (LCM) explosion that rapidly exhausts hardware integer limits.

For the direct sine look-up, the interpolation is strictly a multiply-and-add operation: $y = y_k + w(y_{k+1} - y_k)$. Because the bounding table values are integers, the output fraction inherits only the denominator of the input fractional weight $w$. For instance, evaluating an input position of $x = 11/7$ yields an interpolated sine value of $2460/7$. The denominator ($7$) passes through completely unaltered. Because no new prime factors are introduced by the trigonometric evaluation itself, the direct sine look-up is mathematically stable and can be safely executed repeatedly \emph{inside} iterative calendrical loops.

The inverse look-up, conversely, requires a division by the interval difference $(y_{k+1} - y_k)$. This operation forces new prime factors into the denominator of the resulting fraction. If this inverse look-up were embedded inside a multi-iteration numerical solver, these factors would compound and rapidly trigger an LCM explosion. 

Fortunately, in our concrete proposals for calculating dawn and sunrise, the inverse trigonometric function is evaluated sequentially. The algorithm uses the initial solar declination to compute the hour angle as a one-time temporal offset. While one could theoretically iterate this process to solve for the ``backreaction'' of the solar longitude (since the Sun's declination shifts slightly between midnight and dawn), the inverse look-up injects small enough prime factors that it would comfortably survive two to three iterations before threatening 64-bit integer limits. In practice, however, such iterative depth is physically unnecessary. Because our rational models specify a spherical sunrise, the sub-second adjustments yielded by further iterations are entirely dwarfed by the broader geometric approximations of the model itself. A single sequential evaluation is therefore both computationally optimal and physically sufficient, fully neutralizing the LCM explosion risk.

\subsection{Minimax polynomials}
\label{ss:minimax}

To minimize computational cost while maintaining strict deterministic reproducibility across platforms, we replace large look-up tables with minimax polynomials. By tuning these polynomials to evaluate directly in \textit{turns} (where 1 turn = $360^\circ = 2\pi$ radians) rather than standard radians, we eliminate the need for costly scaling multiplications inside the execution loops. The general form for the odd functions used here is
$$ 
P(x) = c_1 x + c_3 x^3 + c_5 x^5 + \dots + c_n x^n .
$$
The decimal values provided below are for mathematical reference; the software implementation strictly uses the IEEE 754 hexadecimal floating-point constants to guarantee bitwise reproducibility.

The sine approximation evaluates the function $\sin(2\pi x)$, taking an input phase in turns and outputting the standard amplitude. It is optimized for the primary quadrant interval $x \in [0, 0.25]$. The 5th-degree coefficients provide a maximum absolute error of $\approx 1.1 \times 10^{-4}$ (contributing fewer than 8 seconds of temporal error to the lunar longitude), while the 7th-degree version reduces the error to $\approx 1.1 \times 10^{-6}$ (astronomical precision).

\begin{table}[ht]
\centering
\caption{Coefficients for turn-based sine approximation}
\small
\begin{tabular}{l l l l l}
\hline
\textbf{Coeff} & \textbf{5th-deg} & \textbf{5th-deg (hex)} & \textbf{7th-deg} & \textbf{7th-deg (hex)} \\
\hline
$c_1$ & $+6.281548$  & \texttt{0x1.9204e06298ee5p+2}  & $+6.283171$  & \texttt{0x1.921f7b2a8caaap+2} \\
$c_3$ & $-41.133393$ & \texttt{-0x1.4911303618770p+5} & $-41.337984$ & \texttt{-0x1.4ab430edb388ep+5} \\
$c_5$ & $+74.171088$ & \texttt{0x1.28af31a2c633ep+6}  & $+81.371979$ & \texttt{0x1.457ce7f0c0b57p+6} \\
$c_7$ & ---          & ---                            & $-71.315978$ & \texttt{-0x1.1d438fda450c4p+6} \\
\hline
\end{tabular}
\end{table}

The arctangent approximation takes a raw geometric ratio and directly outputs the evaluated angle in turns. The polynomial is optimized over the continuous interval $x \in [0, 1]$. Inputs larger than 1 are handled using the mandatory geometric reduction identity $\arctan(x) = \frac{1}{4}\text{ turn} - \arctan(1/x)$, requiring a single floating-point division. Avoiding further argument reduction prevents pipeline-stalling branches, allowing the CPU to evaluate the curve using strictly sequenced multiplications and additions to guarantee FMA-free reproducibility. A 5th-degree polynomial yields a maximum error of $\approx 1.5 \times 10^{-4}$, while a 7th-degree extension improves the precision to $\approx 2.2 \times 10^{-5}$.

\begin{table}[ht]
\centering
\caption{Coefficients for turn-based arctangent approximation}
\small
\begin{tabular}{l l l l l}
\hline
\textbf{Coeff} & \textbf{5th-deg} & \textbf{5th-deg (hex)} & \textbf{7th-deg} & \textbf{7th-deg (hex)} \\
\hline
$c_1$ & $+0.158452$  & \texttt{0x1.4482618478638p-3}  & $+0.159056$  & \texttt{0x1.45befa84fe3f6p-3} \\
$c_3$ & $-0.046517$  & \texttt{-0x1.7d119bc1df0c0p-5} & $-0.051293$  & \texttt{-0x1.a431f6c59c177p-5} \\
$c_5$ & $+0.013212$  & \texttt{0x1.b0f17a7c7df9bp-7}  & $+0.023719$  & \texttt{0x1.849c4c475fd16p-6} \\
$c_7$ & ---          & ---                            & $-0.006503$  & \texttt{-0x1.aa34c2cb444dbp-8} \\
\hline
\end{tabular}
\end{table}

\subsection{Square root approximation}
\label{app:sqrt}

The floating-point modules rely on the minimax polynomials for sine and arctangent provided in the previous section. However, the complete physical model required by the civil-day triggers—encompassing both spherical dawn geometry and the exact Equation of Time—demands a full trigonometric suite. To avoid introducing redundant polynomial approximations, the additionally requisite cosine and arccosine components are derived analytically from the primary sine and arctangent basis using fundamental identities, notably computing cosine from sine via $\sqrt{1 - \sin^2(\theta)}$, and mapping arccosine through the arctangent function utilizing $\sqrt{1 - x^2}$. Consequently, a highly reproducible square root operation is structurally required to complete the execution.

While the native IEEE 754 square root operation is strictly standardized and guarantees exact reproducibility across modern hardware, invoking it typically requires linking against an external math library (e.g., \texttt{libm}) or relying on compiler-specific intrinsics. A core design intention of the high-precision floating-point reforms is to evaluate all transcendental functions using rigidly prescribed minimax polynomials. Under this architecture, the square root remains the \emph{single} operation that would otherwise necessitate an external mathematical dependency. To support the creation of completely self-contained reference engines, we record a standardized formulation of the \emph{Fast Inverse Square Root} algorithm below. Ultimately, the communities adopting the calendar will determine the final numerical contract, and they may reasonably choose to mandate the native IEEE 754 square root. However, by specifying this exact sequence of basic arithmetic and bitwise operations, we ensure that a completely independent, bit-exact, and reproducible alternative is available should they desire to avoid external linking requirements entirely.

The algorithm begins by reinterpreting the bits of the floating-point radicand $S$ as a 64-bit integer, $I_S$. We then generate an initial approximation $y_0$ for the inverse square root $1/\sqrt{S}$ by performing the integer subtraction $I_{y_0} = \texttt{0x5fe6eb50c7b537a9} - (I_S \gg 1)$. This operation, which computes a piecewise linear approximation of the logarithm, provides a deterministic starting point with a relative error of approximately $3.2\%$. This estimate is then refined using the division-free Newton-Raphson iteration $y_{k+1} = y_k (1.5 - 0.5 S y_k^2)$. Finally, since the iteration converges to the inverse root, the required square root is obtained by the multiplication $\sqrt{S} \approx S \cdot y_n$.

The choice of iteration count allows the implementation to be tuned to the specific precision requirements of the application profile. A single iteration reduces the relative error to approximately $1.7 \times 10^{-3}$, which is sufficient for table-based lookup methods where linear interpolation errors dominate. A second iteration yields a relative error of $4 \times 10^{-6}$, a necessary threshold for polynomial approximations to preserve the accuracy of high-degree coefficients. Finally, a third iteration achieves a relative error of $\approx 10^{-16}$, effectively reaching the machine epsilon limits of double-precision arithmetic. This three-pass configuration is the recommended standard for the Reference Implementation, providing convergence virtually indistinguishable from hardware intrinsics.

\subsection{Fixed point iterations}
\label{app:fix-pt}

Determining the exact moment of a solar term or lunar phase involves numerically solving a transcendental equation. For a deterministic and reproducible calendar, we specify fixed iteration counts for each method rather than relying on epsilon-convergence checks, which can vary across floating-point implementations.

Let $x(t)$ be the modeled phase quantity (e.g., lunar elongation $\lambda_\moon(t)-\lambda_\sun(t)$) whose threshold defines an event. We seek the solution to $x(t)=x_0$.
Assume $x(t)$ decomposes into a linear mean motion and a bounded periodic correction:
\[
x(t) = A + B t + C(t),
\]
where $B$ is the mean angular velocity and $C(t)$ is the sum of all anomaly terms.

\subsubsection{Picard iteration}
\label{app:picard}

This method generalizes the traditional Tibetan-style update. It approximates the local velocity using the constant mean velocity $B$.
We start with the mean conjunction time
\begin{equation}
t_0=\frac{x_0-A}{B} ,
\end{equation}
as the initial guess.
If $t^*$ is the true solution where $A + B t^* + C(t^*) = x_0$, then $B(t^* - t_0) = -C(t^*)$, and thus the maximum time error is
\[
|t^* - t_0| \approx \frac{\max |C(t)|}{B} .
\]
For the lunar synodic elongation, the combined amplitude of the primary anomalies is approximately $5.1^\circ + 1.3^\circ + 0.7^\circ \approx 7.1^\circ$. 
Given a mean synodic rate $B \approx 12.19^\circ/\mathrm{day}$, the maximum discrepancy is
\[
|t^* - t_0| \approx \frac{7.1^\circ}{12.19^\circ/\mathrm{day}} \approx 0.58 \text{ days} \approx 14 \text{ hours}.
\]
In practice, the actual error rarely reaches this theoretical maximum. 
A value of $10$ hours could serve as a good conservative estimate.

Now, we improve upon $t_0$, by using the recurrence relation
\begin{equation}\label{eq:picard_exact}
t_{n+1} = \frac{x_0 - A - C(t_n)}{B} = t_0 - \frac{C(t_n)}{B}.
\end{equation}
This is a contraction mapping $t_{n+1} = g(t_n)$ with convergence rate determined by the derivative of the iteration function:
\[
|g'(t)| = \left| \frac{d}{dt} \left( t_0 - \frac{C(t)}{B} \right) \right| = \frac{|C'(t)|}{B}.
\]
For the lunar synodic case, the maximum velocity variation $C'(t)$ is roughly $1.5^\circ/\mathrm{day}$. The mean rate is $B \approx 12.2^\circ/\mathrm{day}$.
Thus, the contraction factor is:
\[
k \approx \frac{1.5}{12.2} \approx 0.12.
\]
Since $k < 1$, the method converges linearly, gaining roughly one decimal digit of precision per step.
Starting with a maximum initial error of $\approx 10$ hours, the residuals are:
\begin{itemize}
    \item \emph{Iteration 1:} $10 \text{ h} \times 0.12 \approx 1.2$ hours. (Traditional approximation)
    \item \emph{Iteration 2:} $1.2 \text{ h} \times 0.12 \approx 8$ minutes.
    \item \emph{Iteration 3:} $8 \text{ min} \times 0.12 \approx 1$ minute.
\end{itemize}

\subsubsection{Rational preconditioning}

When implementing Equation~\ref{eq:picard_exact} using exact rational arithmetic, the exact mean velocity $B$ is typically a fraction with massive integer components to accurately capture long-term secular trends. Because the recurrence divides by $B$ (effectively multiplying by $B^{-1}$), the large \emph{numerator} of $B$ is injected into the denominator of the time step. Repeated iteration causes these large prime factors to compound, leading to an exponential growth in the denominators of the time variables—a phenomenon known as lowest common multiple (LCM) explosion—which severely degrades computational performance. 

To guarantee efficiency without losing exactness at the epoch, we introduce a decoupled rational preconditioner $\tilde{B}^{-1} \approx 1/B$. The preconditioner is chosen to have a highly composite, microscopic denominator (e.g., $295306/10000$ for the synodic month) to permanently bound the fraction sizes within fast hardware integer limits. 

The initial anchor $t_0$ is still computed using the exact $B$ to ensure absolute long-term secular stability. However, the iteration loop is decoupled as:
\begin{equation}
t_{n+1} = t_0 - \tilde{B}^{-1} C(t_n).
\end{equation}
Because the Picard iteration is self-correcting, substituting $1/B$ with $\tilde{B}^{-1}$ only slightly alters the slope of the contraction mapping. The sequence converges to a preconditioned fixed point $\tilde{t}^*$ satisfying $\tilde{B}(\tilde{t}^* - t_0) = -C(\tilde{t}^*)$. The structural error introduced by this decoupling is bounded by:
\[
|t^* - \tilde{t}^*| \approx \max|C(t)| \cdot |1 - B\tilde{B}^{-1}|.
\]
By choosing a preconditioner that closely matches the true mean rate, the quantity $|1 - B\tilde{B}^{-1}|$ acts as a severe attenuator. For our chosen constants, the theoretical maximum error introduced by the preconditioner is on the order of fractions of a second—well below any physical relevance—yielding the computational speed of floating-point math with the deterministic safety of exact rationals.

It is worth noting that while the advanced root-finding techniques detailed below offer mathematically superior convergence rates (quadratic rather than linear), practical implementation of the calendrical engines reveals that 3 to 5 iterations of the preconditioned Picard method are entirely sufficient to reach the noise floor of the astronomical models, even in high-precision settings. Consequently, the Newton-Raphson and Steffensen's methods are presented here primarily as high-performance alternatives tailored for floating-point environments. Because standard floating-point arithmetic operates within fixed memory bounds (e.g., IEEE-754 binary64), the rational lowest common multiple (LCM) explosion is structurally impossible in these contexts, rendering the preconditioning techniques discussed previously unnecessary.

\subsubsection{Newton-Raphson method}

When the instantaneous derivative $\dot{x}(t) = B + C'(t)$ is available, we may employ the Newton-Raphson method
\begin{equation}
t_{n+1} = t_n - \frac{x(t_n) - x_0}{\dot{x}(t_n)}.
\end{equation}

A common concern in numerical implementations is the requirement for a \emph{bracketed} fallback, such as bisection, to prevent divergence when the solver encounters local extrema or stationary points. However, employing such a fallback would break the fixed-iteration reproducibility guarantee essential for a standardized calendar. Because bisection converges linearly (halving the error per step), a fixed limit of two or three iterations would leave a significant residual error if the bisection branch were triggered. 

For the specific case of syzygy calculations, such safeguards are functionally {\em unnecessary}. 
Because the Moon's synodic angular velocity is strictly positive and bounded far from zero ($\approx 12.19^\circ/\mathrm{day}$), the elongation function $x(t)$ is strictly monotonic. The absence of local extrema ensures that the Newton step will always move toward the root without the risk of oscillation or divergence.

The efficiency of this approach is confirmed by the quadratic convergence profile, where the error $e_{n+1} \approx K e_n^2$ with $K \approx \frac{x''(t)}{2\dot{x}(t)}$. For lunar motion, $K \approx 0.01 \text{ day}^{-1}$. Starting from an initial error $e_0 \approx 0.4$ days (approx. 10 hours), the residuals are:
\begin{itemize}
    \item \emph{Iteration 1:} $0.01 \times (0.4)^2 = 0.0016$ days $\approx 2.3$ minutes.
    \item \emph{Iterations 2:} $0.01 \times (0.0016)^2 = 2.56 \times 10^{-8}$ days $\approx 0.002$ seconds.
\end{itemize}
While two iterations are more than sufficient to satisfy the physical accuracy requirements of the calendar, a third iteration remains a high-value option for modern implementations. By pushing the mathematical root-finding error below the machine epsilon for 64-bit floating-point numbers ($< 10^{-10}$ seconds), this extra step effectively exhausts the significand. This ensures bit-exact reproducibility, making the final output a deterministic function of the IEEE-754 binary64 environment and the hex-float constants, rather than being subject to the minute numerical jitter of the solver or the initial guess.

\subsubsection{Steffensen's method}

For high-precision lunar calculations, evaluating the analytical derivative of a complex multi-term anomaly series can be architecturally burdensome. \emph{Steffensen's method} provides a derivative-free alternative that preserves quadratic convergence by using the function's own value to approximate the derivative slope. Given the equation $f(t) = x(t) - x_0 = 0$, the iteration is defined as:
\[
t_{n+1} = t_n - \frac{f(t_n)^2}{f(t_n + f(t_n)) - f(t_n)}
\]
This approach avoids the sensitivity issues of fixed-step finite difference approximations by utilizing the residual $f(t_n)$ as the probe increment. 

As established in the Newton-Raphson discussion, the strict monotonicity of the synodic elongation ensures that the denominator never vanishes and the solver is globally convergent within the synodic window. This stability allows for a fixed-iteration execution model.
\begin{itemize}
    \item \emph{Iteration 1:} Reduces the initial $\sim 10$-hour error to approximately $2$--$4$ minutes. This matches the behavior of a single Newton step, where the linear drift is eliminated.
    \item \emph{Iteration 2:} The quadratic property reduces the residual to approximately $10^{-3}$ to $10^{-2}$ seconds (millisecond scale). This is already several orders of magnitude below the physical noise floor of the underlying astronomical series.
\end{itemize}
Following the same logic as the Newton-Raphson case, while two steps of Steffensen’s method achieve sub-second precision, a third iteration may be employed to guarantee bit-exact consistency across different 64-bit computing architectures by eliminating any residual mathematical error before the final rounding.

\subsection{Reproducible floating-point arithmetic}
\label{app:fp-reprod}

One of the central challenges in defining a computationally derived calendar is ensuring long-term stability across different computing epochs. Unlike integer-based civil calendars, which are exact by definition, astronomical calendars rely on floating-point approximations of celestial mechanics. A significant risk to the longevity of such a system is "platform drift," where the same algorithmic specification yields slightly different timestamps depending on the underlying hardware architecture, compiler optimizations, or standard library implementations. To mitigate this, future specifications of the reform tiers should prioritize numerical reproducibility over raw precision or performance.

The most fundamental requirement for a reproducible implementation is the strict definition of numerical constants. Standard decimal representations of irrational numbers (such as $\pi$ or the synodic month) are ambiguous; the conversion from a decimal string to a binary floating-point value depends on the specific parsing algorithm of the host language. A robust specification must therefore bypass this conversion entirely by defining all fundamental epochs and coefficients using hexadecimal floating-point notation (IEEE 754 \texttt{binary64} hex format). This ensures that every compliant machine, regardless of its vintage or manufacturer, initializes its memory with the exact same bit patterns.

Furthermore, because floating-point arithmetic is non-associative, the order of operations dictates the result. The specification cannot simply list a polynomial for the Mean Moon arguments; it must mandate the evaluation sequence. The preferred approach is to standardize on Horner's Method for all polynomial evaluations. This not only improves numerical stability but creates a rigid dependency chain that prevents compilers from reordering instructions for optimization, thereby guaranteeing that the rounding errors accumulate identically on all platforms. Similarly, the summation of the periodic terms in the lunar series must follow a fixed index order—typically summing from the largest coefficient to the smallest—to ensure that the partial sums are rounded consistently.

Finally, implementers must address the variability of hardware instructions, particularly the Fused Multiply-Add (FMA). While modern processors use FMA to compute $a \times b + c$ with a single rounding step, older architectures perform a multiply followed by an add, resulting in two rounding steps. To ensure that a calendar calculation performed on a high-performance cluster matches one performed on a microcontroller or a legacy system, the reference specification should likely prohibit the use of FMA. By mandating distinct multiply and add operations with intermediate rounding, the specification enforces a "lowest common denominator" of arithmetic behavior that remains stable across diverse hardware landscapes.

\section{Technical Specifications for Reform Modules} 
\label{app:reference_modules} 

This appendix records the reusable computational modules that serve as the building blocks for the reform standards L1 through L5. Its primary purpose is to provide the rigorous mathematical definitions for the month, day, and numerical components used throughout this paper. By centralizing these definitions here, we avoid repetitive technical detail in the main text. This allows Section 5 to focus on the logical assembly and rationale of the specific reform levels. While Section 5 specifies which modules are combined to create a given standard, this appendix records the exact internal logic and rational constants of each individual module. 

For the full machine-readable implementation, including high-precision floating-point tables, readers should consult the \texttt{specs.py} module of the accompanying software library. However, the foundational rational constants are recorded here to ensure the mathematical transparency of the reform framework and to elucidate the rationale behind their selection.

\subsection{Fundamental rational constants}
\label{app:shared_fundamentals}

At the heart of all rational reform proposals is the fundamental intercalation ratio $\ru/\rv = 1336/1377$. As derived in \S\ref{ss:arith-intercalation}, this mathematically rigorous fraction replaces the traditional $65/67$ ratio, establishing a precise 334-year leap cycle that restricts the calendar's seasonal drift to a negligible error of approximately $-2$ hours per millennium.

To synchronize the month cycle with the tropical year, the initial definition point is set to $\sgang_1 = 336^\circ$. In the tropical coordinate system, this longitude corresponds to exactly $24^\circ$ before the vernal equinox, or $66^\circ$ past the winter solstice. Anchoring the first month to this specific point guarantees a permanent alignment with the historical early-spring window, ensuring that the civil New Year (Losar / Tsagaan Sar) consistently falls between late January and late February.

\subsubsection{Mean motions and epoch anchors}

The reformed engines rely on a unified set of high-precision mean motions derived from the ELP-2000/82 and VSOP87 theories \cite{ELPMPP02,VSOP87}, anchored to the Julian Day base $\mathrm{JD}=2451545.0$ (J2000.0). To guarantee determinism, these continuous rates are expressed as exact fractions. We provide two interchangeable sets:

\emph{1. Optimized per-lunation rates (default).} 
Derived from highly optimized rational approximations of the exact mean rates, this set locks the continuous day-engine to the exact same $1336/1377$ constants as the discrete month-engine, ensuring perfect internal consistency:
\begin{itemize}\itemsep2pt 
    \item \emph{Mean synodic month:} $m_1 = 283346/9595 \approx 29.5305888$ days.
    \item \emph{Mean solar longitude:} $s_1 = 334/4131$ turns per lunation.
    \item \emph{Mean lunar anomaly:} $a_1 = 4583/63907$ turns per lunation.
    \item \emph{Mean solar anomaly:} $r_1 = 1689/20891$ turns per lunation.
    \item \emph{Mean lunar latitude:} $f_1 = 324/3803$ turns per lunation.
\end{itemize}

\emph{2. Exact J2000 harmonic rates (astronomically pure).}
Evaluated directly from the J2000 baseline, these denominators are strictly factorized into ``time-harmonic'' primes ($487, 2, 3, 5$) to prevent lowest common multiple (LCM) explosions. The base denominator of $1,314,900,000,000$ represents exactly $36525 \text{ days} \times 360^\circ \times 10^5$, factoring to $487 \cdot 2^8 \cdot 3^3 \cdot 5^8$:
\begin{itemize}\itemsep2pt 
    \item \emph{Solar mean motion $L$:} $3600076983 / 1314900000000$ turns per day.
    \item \emph{Lunar elongation rate $D$:} $4452671114 / 131490000000$ turns per day ($487 \cdot 2^7 \cdot 3^3 \cdot 5^7$); the denominator is reduced by a factor of 10 as the raw elongation rate is specified to 4 decimal places.
    \item \emph{Solar anomaly rate $M$:} $3599905029 / 1314900000000$ turns per day.
    \item \emph{Lunar anomaly rate $M'$:} $47719886751 / 1314900000000$ turns per day.
    \item \emph{Lunar latitude rate $F$:} $48320201752 / 1314900000000$ turns per day.
\end{itemize}

\emph{Epoch initialization (E1987):}
The absolute epoch offset is anchored to the true new moon preceding the E1987 calendar epoch. The exact fractional time is 
$$
m_0 = 244691379521131 / 100000000 \qquad \text{(Absolute Julian Date, TT)}. 
$$
The base $100,000,000$ strips wild primes (such as $11, 13, 23$) found in the pure rational evaluation, shifting the physical anchor by a negligible $\sim 43$ microseconds while safeguarding the LCM pool. 
The corresponding orbital phases at $m_0$ are mapped to a harmonic arcsecond grid with a universal denominator of $1,296,000$ ($2^7 \cdot 3^4 \cdot 5^3$):
\begin{itemize}\itemsep2pt 
    \item \emph{Mean solar longitude:} $s_0=128634 / 1296000$ turns.
    \item \emph{Mean lunar anomaly:} $a_0=389900 / 1296000$ turns.
    \item \emph{Mean solar anomaly:} $r_0=406845 / 1296000$ turns.
    \item \emph{Mean lunar latitude:} $f_0=91591 / 1296000$ turns.
\end{itemize}

\subsubsection{Anomaly series and the harmonic arcsecond grid}

To evaluate trigonometric anomaly terms exactly, the rational engines map amplitudes to a {\em harmonized arcsecond grid} with a universal denominator of $1,296,000$ ($360^\circ \times 60' \times 60'' = 2^7 \cdot 3^4 \cdot 5^3$). This means the fractional numerators represent exact integer arcseconds.

For the rational solar series, the 1-term solar equation of center is $6893$ arcseconds. This primary term includes a secular drift of $-1 / (487 \cdot 2^9 \cdot 3^7 \cdot 5)$ turns per day. The 2-term series corrects for elliptical flattening by adding $72$ arcseconds, dropping the solar position residual error to $\sim 26$ seconds. Note that the solar drift and the second elliptical term are generally reserved for the higher-precision reform proposals.

For the rational lunar series, the 1-term series uses the major equation of center ($22640$ arcseconds). The 3-term series adds Evection ($4586$ arcseconds) and Variation ($2370$ arcseconds). The 6-term series completes the high-precision historical emulation by adding the 2nd Elliptic ($769$ arcseconds), the Annual Equation ($-666$ arcseconds), and the Reduction to Ecliptic ($-412$ arcseconds). The Annual Equation incorporates a secular drift of $1 / (2^6 \cdot 3^2 \cdot 5^{11})$ turns per day, but we do not include it in the current proposals.

Finally, regarding the evaluation of long tables, implementations requiring the full 24-term or 64-term lunar expansions tabulate the base amplitudes as integer \emph{microdegrees} (e.g., $6288774$). To safely evaluate these within the exact rational framework without precision loss, they are simply divided by $360,000,000$.

\subsubsection{Secular accelerations and time parameters}

To guarantee millennial accuracy, the engines account for secular drifts and variations in the Earth's rotation. The secular accelerations (quadratic drifts) for the Sun and the Elongation are expressed using time-harmonic primes to prevent LCM bloat during iterations:
\[ \text{Acc}_{\text{Sun}} = \frac{1}{487 \cdot 2^8 \cdot 3^3 \cdot 5^8}, \quad \text{Acc}_{\text{Elong}} = \frac{-1}{487 \cdot 2^9 \cdot 3^7 \cdot 5^4} \]

For the $\Delta T$ models, implementers may choose between a constant offset (e.g., exactly $69$ seconds, reflecting the observed state in early 2026) or a quadratic model evaluated as $-20 + 32u^2$, where $u$ is the Julian century counted from the year 1820.

\subsubsection{Civil-day triggers}

Civil day transitions depend on the chosen definition of dawn. For a constant sunrise, a choice is provided between a standard 6:00 AM mean approximation ($1/4$ day fraction) and a 5:56 AM offset ($89/360$ day fraction). 
The four-minute offset natively accounts for the combined optical effects of standard atmospheric refraction ($\sim 34'$) and the solar semi-diameter ($\sim 16'$), approximating the moment the upper limb breaches the horizon. Alternatively, for true geometric transitions, the spherical sunrise model evaluates this visual depression dynamically at $h_0 = -1/432$ turns ($-50$ arcminutes) using a base-60 harmonious obliquity of $\varepsilon = 4219/64800$ turns ($\approx 23^\circ 26' 20''$). While the rational engines evaluate mean solar time, our exact floating-point engines rigorously apply the Equation of Time to capture the true apparent solar dawn.

Furthermore, the calendar architecture is entirely meridian-agnostic. Rather than being hard-coded to a single longitude, implementations can dynamically pass standard coordinate configurations such as Lhasa ($29.65^\circ$ N, $91.10^\circ$ E), Thimphu ($27.47^\circ$ N, $89.64^\circ$ E), or Ulaanbaatar ($47.92^\circ$ N, $106.92^\circ$ E) to exactly localize the civil triggers.

\subsection{Arithmetic month module} 
\label{app:arith_month_module} 

The arithmetic month module governs intercalation using a fixed rational cycle. Its inputs are a labeled calendar month $(Y,M)$, alongside a defining parameter set: an epoch label $(Y_0,M_0)$, a rational month ratio defined by $P$ lunations per $Q$ solar months, an intercalation shift parameter $\beta^*$, and a cycle placement marker $\tau$. The module outputs the absolute true-month index $n$, and determines whether the calendar label $(Y,M)$ designates a single regular month or an intercalary pair.

\subsubsection*{Forward lookup template (label to lunation)} 
The module evaluates the exact lunation indices for a given calendar label $(Y,M)$ through a deterministic sequence. First, define the cycle difference $\ell = Q - P$ and the internal phase shift $\beta_{\text{int}} = \beta^* + (P - \tau) \pmod P$.
\begin{enumerate}\itemsep2pt 
    \item Linearize the label: Form the elapsed solar-month count from the epoch as $M^* = 12(Y-Y_0) + (M-M_0)$.
    \item Evaluate the index: Compute the internal intercalation index $I_{\text{int}} \equiv \ell M^* + \beta_{\text{int}} \pmod P$.
    \item Identify triggers: The label $(Y,M)$ designates a trigger month (spanning two physical lunations) if and only if $I_{\text{int}} < \ell$.
    \item Compute the right-end index: Apply the exact floor division formula $n_+(M^*) = \lfloor (Q M^* + \beta_{\text{int}}) / P \rfloor$.
    \item Resolve lunations: If the label is non-trigger, it maps to the single physical lunation $n = n_+(M^*)$. If the label is a trigger, it maps to an intercalary pair spanning the chronological lunations $[n_+(M^*) - 1, n_+(M^*)]$. Note that the external human labeling convention dictates which of these two physical copies formally receives the leap designation.
\end{enumerate}

\subsubsection*{Reverse lookup template (lunation to label)}
To invert the mapping and find the human calendar label $(Y,M)$ for any absolute physical lunation $n$:
\begin{enumerate}\itemsep2pt
    \item Compute the right-end solar month: $M^*(n) = \lfloor (P n - \beta_{\text{int}} - 1) / Q \rfloor + 1$.
    \item Find the cumulative month count: Add the epoch offset to find the absolute month count $C = M^*(n) + M_0$.
    \item Extract the month label: The calendar month is $M = ((C - 1) \bmod 12) + 1$.
    \item Extract the year label: The calendar year is $Y = Y_0 + \lfloor (C - M) / 12 \rfloor$.
\end{enumerate}
Because a trigger label spans two physical lunations, mapping $n$ and $n-1$ through this inverse formula will natively yield the exact same $(Y,M)$ label whenever an intercalary pair occurs.

\subsubsection*{Epoch parameter deduction}
To ensure the discrete calendar arithmetic perfectly tracks the underlying astronomical reality, the epoch parameters $M_0$ and $\beta^*$ are systematically deduced by snapping the continuous astronomical phase at the epoch directly to the rational grid. Let $s_0$ be the mean solar longitude (in turns) at the exact moment of the epoch new moon ($n=0$). We compute the continuous solar month offset between the epoch sun and the chosen zodiacal anchor $\sgang_1$ as $x = 12(s_0 - \sgang_1) \pmod{12}$.

The integer part of this offset defines the human-readable epoch calendar month: $M_0 = 1 + \lfloor x \rfloor$. The remaining fractional phase dictates the intercalation shift, scaled by the cycle's solar month length $Q$, yielding exactly $\beta^* = \lfloor Q(x - \lfloor x \rfloor) \rfloor$. This precise extraction maps the continuous floating-point evaluation into strict integer algebra, structurally preventing the off-by-one boundary alignment errors that frequently plague epoch initialization logic.

\subsection{Rational month module} 
\label{app:rational_month_module} 

The rational month module determines the civil month structure astronomically by tracking the true solar transits across a chosen set of zodiacal definition points. Its inputs are an absolute physical lunation index $n$, alongside a defining parameter set: a continuous rational solar anomaly series, a continuous rational lunar anomaly series, a specified anchor longitude $\sgang_1$, and specific naming conventions for handling months with zero or two transits. The module outputs the human calendar label $(Y,M)$ for the lunation and dictates whether it represents a regular, intercalary, or skipped month.

\subsubsection*{Forward lookup template (lunation to label)} 
The module evaluates the civil label for an absolute lunation $n$ using a strictly continuous physical evaluation over the J2000.0 Terrestrial Time ($\mathrm{TT}$) coordinate system.

\begin{enumerate}\itemsep2pt 
    \item Evaluate true date: Calculate the true physical time $t_{\text{TT}}$ of the new moon $n$ using a specified fixed-step Picard iteration. Because this finite iteration serves as the strict mathematical definition of the calendar's astronomical time, it does not merely approximate the root of $E_{\text{true}}(t_{\text{TT}}) = n$. Paradoxically, this means the exact continuous elongation series acts as an approximate inverse to this formally defined time function.
    \item Evaluate true solar phase: At this exact moment $t_{\text{TT}}$, evaluate the true solar longitude $S_{\text{true}}(t_{\text{TT}})$ using the rational solar anomaly series.
    \item Compute the absolute transit index: Subtract the normalized zodiacal anchor offset $\sgang_1$ from the true solar longitude. The floor of this shifted solar phase multiplied by $12$ yields the absolute civil transit name $Z_n$.
    \item Determine transit containment: Compare $Z_n$ to the previous new moon's transit index $Z_{n-1}$. The difference $\Delta Z = Z_n - Z_{n-1}$ indicates exactly how many definition points the mean sun crossed during the chronological time interval between new moon $n-1$ and new moon $n$, which constitutes the physical lunation $n$.
    \item Resolve the civil label: 
    \begin{itemize}
        \item If $\Delta Z = 1$, the lunation contains exactly one transit and is an ordinary month.
        \item If $\Delta Z = 0$, the lunation contains no transits and is designated as a leap month. The civil month number is borrowed from either the previous or following transit according to the chosen naming convention.
        \item If $\Delta Z = 2$, the lunation contains two transits, requiring a month label to be skipped. The civil number is assigned by keeping either the first or second transit name, according to the chosen skipped-month convention.
    \end{itemize}
\end{enumerate}

\subsubsection*{Reverse lookup template (label to lunation)}
Because the rational month engine assigns names dynamically based on physical transits, the mapping from a human label $(Y,M)$ back to a physical lunation $n$ cannot use a closed-form arithmetic floor formula. Instead, it utilizes an informed bounded search.
\begin{enumerate}\itemsep2pt
    \item Compute the target absolute transit index $N_{\text{target}}$ corresponding to the requested year and month, accounting for the epoch anchor $Y_0$.
    \item Generate an initial physical guess $t_{\text{guess}}$ by inverting the linear mean solar motion to find the approximate time of the requested transit.
    \item Map $t_{\text{guess}}$ through the mean elongation series to find an approximate starting lunation index $n_{\text{guess}}$.
    \item Evaluate the true civil label for $n_{\text{guess}}$ and execute a bidirectional monotonic search to locate all contiguous physical lunations $n$ that bear the exact target label $N_{\text{target}}$. This search seamlessly handles cases where a label spans two lunations (leap month) or zero lunations (skipped month).
\end{enumerate}

\subsection{Unified day-module template} 
\label{app:day_module_template} 

For the principal reform proposals, the day layer is best described by a single algorithmic template with several interchangeable components. The traditional rational day rules and the modern floating-point day engines differ less in overall structure than in the physical fidelity and numerical contract of their subcomponents. 

\emph{Inputs.} An absolute lunar-day index $x$ (derived from the active lunation index and the lunar day $d$), a geographic location specification (latitude $\phi$ and longitude $L$), and a parameter bundle consisting of: 
\begin{itemize}\itemsep2pt 
    \item a continuous solar anomaly series, 
    \item a continuous lunar elongation series, 
    \item a $\Delta T$ time-correction package, 
    \item a civil-day trigger (sunrise) package, 
    \item and a numerical backend specification (exact rational or floating-point). 
\end{itemize} 

\subsubsection*{Forward lookup template (lunar-day to civil boundary)}
The engine translates the absolute lunar-day index into a discrete civil Julian Day Number (JDN) through the following sequential mapping:
\begin{enumerate}\itemsep2pt 
    \item Define the target phase: The exact astronomical target for the boundary of lunar-day $x$ corresponds to a continuous lunar elongation of exactly $x/30$ turns.
    \item Evaluate true physical time: Solve the inverse kinematic problem using a fixed-step Picard iteration to find the exact terrestrial time $t_{\text{TT}}$ where the true elongation series satisfies $E_{\text{true}}(t_{\text{TT}}) = x/30$.
    \item Apply secular time corrections: Evaluate the chosen $\Delta T$ model for the computed epoch to dynamically translate the physical terrestrial time $t_{\text{TT}}$ into universal time $t_{\text{UTC}}$.
    \item Evaluate the civil-day trigger: Calculate the true and mean solar longitudes near the target dawn. Pass these continuous coordinates to the sunrise module to evaluate the exact local mean time of dawn for the specified geographic location.
    \item Compute the absolute civil date: Align the UTC lunar-day boundary with the computed local dawn fraction. Applying the mathematical floor to this shifted continuous coordinate yields the discrete absolute civil JDN, denoted $J(x)$, for the lunar-day $x$.
\end{enumerate} 

\subsubsection*{Reverse lookup template (Julian day to calendar label)}
Because the civil anomalies (skipped and repeated days) emerge dynamically from the varying speeds of the Sun and Moon, the mapping from a continuous Julian Day Number back to a calendar label relies on a monotonic boundary search rather than a closed-form division.
\begin{enumerate}\itemsep2pt
    \item Initial guess: Estimate the continuous time from the requested JDN and map it through the mean elongation series to find an approximate absolute lunar-day index $x_{\text{guess}}$.
    \item Monotonic search: Evaluate the exact discrete civil boundary $J(x)$ for the guess. A civil day definitively belongs to lunar-day $x$ if and only if it falls strictly after the boundary of the previous lunar-day and on or before the boundary of the current lunar-day, such that $J(x-1) < \text{JDN} \le J(x)$. Iterate $x$ backward or forward until this enclosure is strictly satisfied.
    \item Label decomposition: Decompose the resolved absolute lunar-day $x$ into the absolute lunation index $n$ and the lunar day $d$, where $d = ((x-1) \bmod 30) + 1$. The lunation $n$ is then passed to the month module to resolve the calendar year and month.
    \item Anomaly extraction: The mathematical boundaries natively encode the calendar anomalies without requiring supplementary rules. If $\text{JDN} - J(x-1) > 1$, the civil day is the second occurrence of a duplicated lunar-day. Similarly, if $J(x-1) = J(x-2)$, the preceding lunar-day was skipped.
\end{enumerate}

\subsubsection*{Numerical backends and subcomponents}
The execution of the template depends strictly on the selected numerical backend and the physical subcomponents provided by the parameter bundle:
\begin{itemize}\itemsep2pt
    \item Numerical backends: The rational engine evaluates all continuous time and trigonometric functions using exact integer arithmetic, a harmonized arcsecond grid, and discrete sine and arctangent lookup tables, structurally guaranteeing identical deterministic execution across all hardware. Conversely, the float engine executes via high-performance 64-bit precision, replacing the discrete lookup tables with continuous quarter-wave polynomial approximations.
    \item Sunrise models: The civil trigger supports three distinct definitions of dawn. The constant model applies a fixed daily fraction (e.g., exactly 6:00 AM LMT). The spherical model evaluates the geometric hour angle dynamically via the spherical law of cosines and a specified visual depression angle. The true physical model goes a step further, perfectly correcting local apparent solar time to local mean solar time by continuously evaluating the exact equation of time. All spherical models natively handle polar day and polar night boundary conditions.
    \item Time correction models: The $\Delta T$ package bridges absolute physical time and the variable rotation of the Earth, utilizing either a constant temporal offset or a dynamic quadratic model mapping historical deceleration.
\end{itemize}

\end{document}